\documentclass[a4]{article}

\usepackage[utf8]{inputenc}

\usepackage{caption}
\usepackage{graphicx}
\usepackage{hyperref}

\usepackage{amsmath,amsthm}
\usepackage{amssymb,mathrsfs}
\usepackage{a4wide}
\usepackage{physics}
\usepackage{color}
\usepackage{enumerate}
\usepackage[normalem]{ulem}
\usepackage{cancel}
\usepackage{bbm}
\usepackage{booktabs}
\usepackage{tikz}
\usepackage{nccmath}
\usepackage{mathtools}
\usepackage{relsize}
\usepackage{subcaption}
\usepackage{pgfplots}

\usetikzlibrary{shapes.misc}
\usetikzlibrary{calc}

\tikzset{cross/.style={cross out, draw=black, minimum size=2*(#1-\pgflinewidth), inner sep=0pt, outer sep=0pt},
cross/.default={1pt}}

\pgfplotsset{compat=1.18} 

\mathtoolsset{showonlyrefs=true}

\newcommand{\cL}{\mathcal{L}}
\renewcommand{\d}{\mathrm{d}}
\newcommand{\e}{\mathrm{e}}
\newcommand{\E}{\mathbb{E}}
\renewcommand{\P}{\mathbb{P}}
\renewcommand{\div}{\mathrm{div}}
\newcommand{\R}{\mathbb{R}}
\newcommand{\n}{\mathrm{n}}
\newcommand{\cD}{\mathcal{D}}
\newcommand{\fa}{\mathfrak{a}}
\newcommand{\fb}{\mathfrak{b}}

\newcommand{\DR}{\mathrm{DR}}
\newcommand{\ER}{\mathrm{ER}}
\newcommand{\MT}{\mathrm{MT}}
\newcommand{\reg}{\mathrm{reg}}

\newcommand{\N}{\mathbb N}
\newcommand{\1}{\mathbbm 1}
\newcommand{\supp}{\mathrm{supp}}
\newcommand{\basin}[1]{\mathcal A(#1)}
\newcommand{\testfuncs}{\mathcal C^\infty_{\mathrm c}}
\newcommand{\dsint}{\displaystyle\int}

\newcommand{\Winf}{{{\cW}^{1,\infty}}}
\newcommand{\Id}{\mathrm{Id}}
\newcommand{\epsLimit}[1]{\alpha^{(#1)}}

\newcommand{\hessEigvec}[2]{v^{(#1)}_{#2}} 
\newcommand{\hessEigval}[2]{\nu^{(#1)}_{#2}} 
\newcommand{\Vstar}{V^*}
\newcommand{\Lmu}{L_\beta^2}
\newcommand{\cW}{\mathcal{W}}

\newcommand{\bigo}{\mathcal{O}}
\newcommand{\smallo}{\scalebox{0.7}{$\mathcal O$}}
\DeclareMathOperator{\co}{co}

\newcommand{\corr}{\mathrm{corr}}
\newcommand{\phase}{\mathrm{phase}}

\newtheorem{lemma}{Lemma}

\newtheorem{theorem}{Theorem}
\newtheorem{remark}{Remark}
\newtheorem{corollary}{Corollary}
\newtheorem{proposition}{Proposition}

\newtheorem{hypothesis}{Assumption}
\newtheorem{algorithm}{Algorithm}

\title{Shape optimization of metastable states}
\author{No\'e Blassel$^{1,2}$, Tony Leli\`evre$^{1,2}$, Gabriel Stoltz$^{1,2}$\\
   \small 1: CERMICS, \'Ecole des Ponts, Institut Polytechnique de Paris, Marne-la-Vall\'ee, France  \\
   \small 2: MATHERIALS project-team, Inria Paris, France}

\begin{document}
\maketitle
\begin{abstract}
    The definition of metastable states is a ubiquitous task in the design and analysis of molecular simulations, and is a crucial input in a variety of acceleration methods for the sampling of long configurational trajectories.
    Although standard definitions based on local energy minimization procedures can sometimes be used, these definitions are typically suboptimal, or entirely inadequate when entropic effects are significant, or when the lowest energy barriers are quickly overcome by thermal fluctuations.
    In this work, we propose an approach to the definition of metastable states, based on the shape-optimization of a local separation of timescale metric directly linked to the efficiency of a family of accelerated molecular dynamics algorithms.
    To realize this approach, we derive analytic expressions for shape-variations of Dirichlet eigenvalues for a class of operators associated with reversible elliptic diffusions, and use them to construct a local ascent algorithm, explicitly treating the case of multiple eigenvalues.
    We propose two methods to make our method tractable in high-dimensional systems: one based on dynamical coarse-graining, the other on recently obtained low-temperature shape-sensitive spectral asymptotics.
    We validate our method on a benchmark biomolecular system, showcasing a significant improvement over conventional definitions of metastable states.
\end{abstract}
\tableofcontents

\section{Introduction}
    \label{sec:intro}
    Molecular Dynamics (MD)~\cite{AT17,LM16} is one of the workhorses of modern computational statistical physics, enabling the exploration of complex biomolecular systems at atomistic resolution.
    By numerically integrating equations of motion, MD generates trajectories that sample the system's configuration space according to target statistical ensembles, typically the Boltzmann-Gibbs distribution relevant to canonical (NVT) or isothermal-isobaric (NPT) conditions.
    Understanding phenomena such as protein folding or conformational transitions between functional states hinges on accurately capturing these dynamics over biologically relevant timescales.
    However, the inherent separation of timescales characterizing transitions between metastable states often presents significant computational challenges, motivating the development of enhanced sampling and analysis methodologies to efficiently probe rare events.

    In this work, we are concerned with the definition of these metastable states.
    It is often convenient to associate with a given local minimum of the energy function its basin of attraction for a zero-temperature dynamics. Although this procedure, which provides a natural and numerically convenient definition of metastable states, is often unsatisfactory, for instance in many biological applications where the energy landscape displays many local minima separated by shallow energy barriers.
    In this setting, one seeks alternative, better descriptions, often by replacing the energy with the free energy associated with a given reaction coordinate. In this work, we provide a general and principled approach to define ``good'' metastable states, using techniques of shape optimization originally developed for problems in continuum mechanics.
    More precisely, we optimize the boundary of configurational domains in phase-space, with respect to a certain spectral criterion relating the shape of the domain with so-called quasistationary timescales within the state.
    One of the motivations of this work is to maximize the efficiency of a class of algorithms aimed at sampling long, unbiased molecular trajectories, an example of which is discussed in detail in Appendix~\ref{sec:parrep} below.

    \paragraph{Dynamical setting.}
    To formalize this problem, we first specify the class of models we consider for conformational molecular dynamics, namely reversible elliptic diffusions.
    More precisely, we consider in this work strong solutions to the stochastic differential equation (SDE)
    \begin{equation}
        \label{eq:overdamped_langevin}
        \d X_t = \left[-a(X_t)\nabla V(X_t) +\frac1\beta \div\,a(X_t)\right]\,\d t + \sqrt{\frac2\beta}a(X_t)^{1/2}\,\d W_t,
    \end{equation}
    where~$a:\R^d\to\R^{d\times d}$ is a symmetric positive-definite matrix field,~$\nabla V:\R^d\to\R^d$ is a locally Lipschitz vector field which is the gradient of a potential~$V:\R^d\to\R$,~$\div\,a$ denotes the row-wise divergence operator, and~$W$ is a standard~$d$-dimensional Brownian motion.
    The usefulness of the dynamics~\eqref{eq:overdamped_langevin} comes from the fact that it is reversible, and thus invariant, for the Gibbs probability measure
    \begin{equation}
        \label{eq:gibbs_measure}
        \mu(\d x) = \frac1{\mathcal Z_\beta}\e^{-\beta V(x)}\,\d x,\qquad \mathcal Z_\beta = \int_{\R^d} \e^{-\beta V},
    \end{equation}
    which is the configurational marginal of the canonical (NVT) ensemble at inverse temperature~$\beta=(k_{\mathrm{B}}\theta)^{-1}$ (where~$k_{\mathrm{B}}$ is Boltzmann's constant and~$\theta$ is the temperature)-- provided~$\mathcal Z_\beta$ is finite, which we will always assume.
    As such, it may be used to sample the NVT ensemble. The case~$a=\mathrm{Id}$ corresponds to what is known as the overdamped Langevin equation.
    As all the dynamics~\eqref{eq:overdamped_langevin} sample the same target measure, the free parameter~$a$ can be optimized to accelerate various metrics associated to the efficiency of MCMC samplers, see~\cite{LPRSS24,LSS24,CTZ24}.
    In this work, we consider the problem of sampling trajectories of~\eqref{eq:overdamped_langevin}, with~$a$ fixed.
    The dynamics~\eqref{eq:overdamped_langevin} also arises as the Kramers--Smoluchowski approximation, or so-called overdamped limit, of the kinetic Langevin dynamics, defined by the SDE
    \begin{equation}
        \label{eq:underdamped_langevin}
        \left\{\begin{aligned}
            \d q^\gamma_t &= M^{-1}p^\gamma_t\,\d t,\\
            \d p^\gamma_t &= -\nabla V(q^\gamma_t)\,\d t - \gamma \Gamma(q^\gamma_t)M^{-1}p^\gamma_t\,\d t + \sqrt{\frac{2\gamma}{\beta}}\Sigma(q^\gamma_t)\,\d W^\gamma_t,
        \end{aligned}
        \right.
    \end{equation}
    where the momentum process~$p^\gamma_t$ takes values in~$\R^{d}$,~$W_t^\gamma$ is a standard~$d$-dimensional Brownian motion,~$V$ is as in~\eqref{eq:overdamped_langevin}, and~$M\in\R^{d\times d}$ is a positive-definite mass matrix (typically a diagonal matrix with entries equal to the atomic masses in the system).
    The matrix fields~$\Gamma,\Sigma:\R^d\to \R^{d\times d}$ define fluctuation and dissipation profiles. They are assumed to be non-degenerate, and to satisfy the fluctuation-dissipation condition~$ \Sigma \Sigma^\top  = \Gamma$, which ensures that the Boltzmann--Gibbs distribution with density proportional to~$\e^{-\beta(\frac12p^\top M^{-1}p + V(q))}\,\d p\,\d q$ is invariant under the dynamics.
    The friction parameter~$\gamma>0$ modulates the rate of momentum dissipation, and in this context, the matrix field~$a = \Gamma^{-1}$ arises naturally as the limiting diffusion matrix in the large friction regime.
    More exactly, it can be shown that the finite-time trajectories of the time-rescaled position process~$(q^\gamma_{\gamma t})_{0\leq t\leq T}$ converge to solutions~$(X_t)_{0\leq t \leq T}$ of~\eqref{eq:overdamped_langevin} in the limit~$\gamma\to +\infty$, see for example~\cite{B26}, with~$a=\Gamma^{-1}=\left(\Sigma\Sigma^\top\right)^{-1}$.

    In most MD packages, the Langevin dynamics~\eqref{eq:underdamped_langevin} is implemented with~$\Gamma = M$, in which case~$a=M^{-1}$ in~\eqref{eq:overdamped_langevin}. We therefore use~\eqref{eq:overdamped_langevin} as a model of the underlying underdamped Langevin dynamics with which simulations are typically run, keeping in mind that any timescale inferred at the level of the dynamics~\eqref{eq:overdamped_langevin} should be divided by a factor~$\gamma$ to obtain the corresponding timescale for the underdamped dynamics, in order to account for the rescaling involved in the Kramers--Smoluchowski approximation.

    The infinitesimal generator of the evolution semigroup for the dynamics~\eqref{eq:overdamped_langevin} is the operator
    \begin{equation}
        \label{eq:generator}
        \cL_\beta = \left(-a\nabla V + \frac1\beta\div\,a\right)^\top\nabla + \frac1\beta a:\nabla^2 = \frac1\beta\e^{\beta V}\div\left(\e^{-\beta V} a \nabla \cdot \right).
    \end{equation}
    In an appropriate functional setting (see Section~\ref{subsec:framework} below), it can be shown to be self-adjoint with pure point spectrum.

    \paragraph{Local metastability and quasi-stationary timescales.}
    The main difficulty in sampling long trajectories from the process~\eqref{eq:overdamped_langevin} (as well as from~\eqref{eq:underdamped_langevin}, for that matter) is the phenomenon of metastability, which often arises from the presence of energy wells separated by high-energy barriers (relative to the characteristic thermal fluctuation scale~$\beta^{-1}$), or from entropic traps, see~\cite[Section 1.2.3]{LS16}.
    More generically, this phenomenon can be understood as the presence of subsets of the configuration space in which the dynamics resides for long times before abruptly transitioning and settling in the next metastable state.
    This property is characterized by the    existence of a separation of timescales between intra-state fluctuations and inter-state transitions. In full generality, there may be a hierarchy of timescales, corresponding to states, superstates (energy superbasins), etc.
    In the local approach to metastability, one fixes such a subset~$\Omega\subset\R^d$, and studies local dynamical properties of the system inside~$\Omega$.
    A central object of interest in this study is the quasi-stationary distribution (QSD) of the dynamics inside~$\Omega$, which formalizes the notion of the local equilibrium that the dynamics reaches inside~$\Omega$, provided it remains trapped for a sufficiently long time.
    More formally, the QSD inside~$\Omega$ for the dynamics~\eqref{eq:overdamped_langevin} is defined as a probability measure~$\nu \in \mathcal P_1(\Omega)$ such that, for any~$A\subset \Omega$ measurable,
    \begin{equation}
        \label{eq:qsd}
        \nu(A) = \int_\Omega \mathbb P_x\left(X_t\in A\middle|\,\tau>t\right)\, \nu(\d x),\qquad \tau = \inf\left\{t\geq 0: X_t\not\in \Omega\right\}.
    \end{equation}
    Under mild assumptions on~$\Omega$,~$V$ and~$a$ (see~\cite{LBLLP12} and Assumptions~\eqref{eq:a_ellipticity},~\eqref{eq:coeff_regularity} below), the QSD is unique, and coincides with the Yaglom limit:
    \begin{equation}
        \label{eq:yaglom_limit}
        \nu(A) =\underset{t\to\infty}{\lim}\,\mu_{t,x}(A),\qquad \mu_{t,x}(A) := \P_x\left(X_t\in A\,\middle|\,\tau >t\right),
    \end{equation}
    for an arbitrary initial condition~$x\in \Omega$.

    From this definition alone, it is not entirely clear which domains~$\Omega$ correspond to metastable states.
    A natural albeit informal answer to this question is to require that for most visits in~$\Omega$, convergence to the QSD in~\eqref{eq:yaglom_limit} occurs much faster than the typical metastable exit time~$\E_\nu[\tau]$.
    This definition suggests a quantitative measure of the local metastability of a given domain~$\Omega$, namely the ratio between the metastable exit time and the convergence time to the QSD. Moreover, these timescales can be analyzed by relating them to the eigenvalues of the operator~\eqref{eq:generator}, endowed with Dirichlet boundary conditions on~$\partial\Omega$.
    Indeed, on the one hand it is shown in~\cite[Propositions 2 \& 3]{LBLLP12} that the QSD in~$\Omega$ has an explicit density in terms of the principal Dirichlet eigenfunction~$u_1(\Omega)$ of~$\cL_\beta$ in~$\Omega$:
    \begin{equation}
        \label{eq:qsd_spectral_link}
        \nu(\d x) = Z_{\beta,\Omega}^{-1}\e^{-\beta V(x)}u_{1}(\Omega)(x)\,\d x,\qquad\left\{\begin{aligned}\cL_\beta u_1(\Omega) &= -\lambda_1(\Omega)u_1(\Omega)&\text{ in $\Omega$}, \\ u_1(\Omega) &= 0&\text{on $\partial\Omega$},\end{aligned}\right.
    \end{equation}
    and that the exit time starting from the QSD is an exponential random variable with rate~$\lambda_1(\Omega)$ and independent from the exit point: for all Borel sets~$A\subset\partial\Omega$, it holds
    \begin{equation}
        \label{eq:exit_event}
        \mathbb P_\nu(\tau > t,X_{\tau}\in A) = \e^{-t\lambda_1(\Omega)}\mathbb P_\nu(X_\tau\in A).
    \end{equation}
    In particular, the mean exit time under the QSD (or metastable exit time) is given by~$\E_\nu[\tau] = 1/\lambda_1(\Omega)$.
    In fact, for regular domains, the law of~$X_\tau$ starting under~$\nu$ is also explicit in terms of the normal derivative of the density~$\frac{\d\nu}{\d x}$ on~$\partial\Omega$, see the proof of Proposition 3 in~\cite{LBLLP12}.
    
    Moreover, on the other hand, bounds on the total variation distance between~$\mu_{t,x}$ and~$\nu$ are also available in terms of the spectral gap~$\lambda_{2}(\Omega)-\lambda_{1}(\Omega)$. Namely, a spectral expansion argument (see the proof of~\cite[Theorem~1.1]{SL13}) shows that there exists~$C(x),t(x)>0$ such that
    \begin{equation}
        \label{eq:decorrelation_rate}
        \d_{\mathrm{TV}}\left(\mu_{t,x},\nu\right) \leq C(x)\e^{-t(\lambda_2(\Omega)-\lambda_1(\Omega))},\qquad \forall\,t>t(x),
    \end{equation} 
    where~$\d_{\mathrm{TV}}$ denotes the total variation distance between two probability measures:~$\d_{\mathrm{TV}}(\pi,\rho) = \underset{\|f\|_{\infty}\leq 1}{\sup}\,|\pi(f)-\rho(f)|$.
    The restriction of the estimate~\eqref{eq:decorrelation_rate} to times larger than~$t(x)$ is technical, and is related to the lack of regularity of~$\mu_{0,x}=\delta_x$.
    If one considers initial conditions with sufficient regularity, a similar estimate holds for all~$t>0$.
    It can be shown, e.g. by taking~$X_0\sim C\e^{-\beta V}\left(u_1(\Omega)+\varepsilon u_2(\Omega)\right)\,\d x$ for some appropriate~$C,\varepsilon>0$, that the rate~$\lambda_2(\Omega)-\lambda_1(\Omega)$ in~\eqref{eq:decorrelation_rate} is sharp, and therefore corresponds to the asymptotic rate of convergence of~$\mu_{t,x}$ to~$\nu$.
    
    In view of the above discussion on the exit rate~$\lambda_1(\Omega)$ and the convergence rate to the QSD~$\lambda_2(\Omega)-\lambda_1(\Omega)$, a natural measure of the metastability of the dynamics inside~$\Omega$ is given by the ratio:
    \begin{equation}
        \label{eq:separation_of_timescales}
        N^*(\Omega) = \frac{\lambda_{2}(\Omega)-\lambda_1(\Omega)}{\lambda_1(\Omega)}.
    \end{equation}
    In this work, our aim is to optimize the shape of the domain~$\Omega$ in order to make~$N^*(\Omega)$ as large as possible, see problem~\eqref{eq:shape_optimization_problem} below. The quantity~$N^*(\Omega)$ has been identified in~\cite{V98,PUV15} as a ``scalability metric'' associated with a given definition of metastable state~$\Omega$, which quantifies the efficiency of a class of accelerated MD algorithms, the so-called ``Parallel Replica'' methods.
    We discuss the link between the separation metric~\eqref{eq:separation_of_timescales} and the Parallel Replica method in Appendix~\ref{sec:parrep} below.
    
    Beyond the family of Parallel Replica methods, the other accelerated MD methods developed by Arthur Voter~(see~\cite{V97,SV00}) also rely on definitions of metastable states, and a separation of timescales hypothesis within these states.
    Although our main motivation stems from algorithmic efficiency concerns, we stress that other, more theoretical motivations lead one to consider the problem~\eqref{eq:shape_optimization_problem}.
    It is indeed expected that identifying highly locally metastable domains (in the sense of a large separation of timescales) leads to configurational dynamics amenable to approximation by various simpler, discrete-space dynamics, such as Markov jump processes.  The quantity~\eqref{eq:separation_of_timescales} has for instance been identified as the key approximation parameter in  an approach to reduced-state dynamics using Markov renewal processes~(see~\cite{AJP23}).
    It is therefore of more general interest to investigate how much freedom one has in defining more general states than simple energy basins, and how to ensure a large separation of timescales.
    Let us finally mention that the case~$V=0$, which amounts to maximizing the ratio of the first two Dirichlet eigenvalues of the Laplacian, also arises in the field of spectral geometry as the Payne--Polya--Weinberger conjecture, see~\cite{PPW56,AB92}.
    
    We consider the shape-optimization problem
    \begin{equation}
        \label{eq:shape_optimization_problem}
        \underset{\Omega\in \mathcal S}{\max}\,N^*(\Omega),\qquad \mathcal S=\left\{\Omega\subset \R^d\,\text{ bounded, Lipschitz and connected}\right\}.
    \end{equation}
    The optimization problem as formulated in~\eqref{eq:shape_optimization_problem} is typically not well-posed. Whenever the operator~\eqref{eq:generator} acting on~$L^2(\R^d,\nu(\d x))$ has compact resolvent, a simple argument involving the sequence of domains~$\Omega_n = B_{\R^d}(0,n)$ shows that~$\lambda_1(\Omega_n)\xrightarrow{n\to\infty}0$ and~$\lambda_2(\Omega_n)\xrightarrow{n\to\infty}\lambda_2(\R^d)>0$, so that there is generically no bounded domain maximizing~\eqref{eq:separation_of_timescales}.
    This situation is somewhat standard in the numerical optimization of eigenvalue functionals, and well-posedness is generally only obtained upon imposing normalizing constraints on the design variable, or geometric penalization terms in the objective function. In shape optimization of eigenvalue functionals, these are typically related to the measure of the domain.

    The goal of this work is to develop a method to identify several metastable states, understood as critical shapes of the spectral functional~$N^*$. As such, we are not overly concerned with finding a unique global optimum, nor indeed showing its existence. Rather, we address practical methods to numerically optimize~$N^*$ \textit{locally} around a candidate state definition~$\Omega_0$.
    
    It has been observed (see~\cite{PUV15} or Appendix~\ref{sec:shallow_states} for simple examples of this) that the shape optimization landscape for the separation of timescales displays local maxima around single energy wells (which we define loosely as domains containing a local energy minimum~$z_0$), i.e. domains for which arbitrary perturbations of the boundary locally decrease the separation of timescales, even when these wells are separated by shallow energy barriers.
    The existence of locally optimal domains around wells surrounded by high energy barriers is also supported by theoretical results, see~\cite[Section 3.3]{BLS24} or Section~\ref{sec:semiclassic} below.

    \paragraph{Main contributions of this work.}
    In this work, we introduce a novel and principled approach to the definition of metastable states in MD. In so doing, we make several methodological advances.
\begin{itemize}
    \item{We introduce the spectral criterion~\eqref{eq:separation_of_timescales} and link it to the efficiency of Parallel Replica dynamics.}
    \item{We provide in Theorem~\ref{thm:gateaux_differentiability} and Corollary~\ref{cor:boundary_expression} explicit expressions for shape variations of Dirichlet eigenvalues of a large class of diffusions. These formulas also cover the case of degenerate eigenvalues.}
    \item{We define a robust steepest ascent method (Algorithm~\ref{alg:ascent}) to optimize~$N^*(\Omega)$ in low dimension, taking in particular account the degeneracy of the eigenvalues, and adaptively selecting an ascent direction accordingly.}
    \item{We propose two projection techniques to adapt the algorithm to high-dimensional problems. One is based on a coarse-graining strategy, using a collective variable. The other is based on exact, shape-sensitive spectral asymptotics obtained in the recent work~\cite{BLS24}.}
    \item{We validate our methods with numerical experiments, which demonstrate the interest of the approach on various problems of increasing complexity, including a biomolecular system.}
\end{itemize}

    \paragraph{Outline of the work.}
    In Section~\ref{sec:main_result} we present our main theoretical results,~Theorem~\ref{thm:gateaux_differentiability} and Corollary~\ref{cor:boundary_expression}, which form the basis of our numerical method.
    In Section~\ref{sec:ascent}, we describe the ascent method using the results of Section~\ref{sec:main_result}.
    In Section~\ref{sec:practical_opt}, we discuss two practical methods to approach the shape-optimization problem in high-dimensional systems, which is the standard setting in MD.
    In Section~\ref{sec:numerical}, we present various numerical results to validate our methods.
    Some conclusions and perspectives are gathered in Section~\ref{sec:ccl}. Finally, we conclude this work with two appendices: Appendix~\ref{sec:proof}, in which we give a full proof of Theorem~\ref{thm:gateaux_differentiability}, and Appendix~\ref{sec:parrep}, in which we discuss the relevance to the Parallel Replica algorithm.

    \section{Main results}
    \label{sec:main_result}
    In this section, we present the main theoretical results which form the basis of our optimization method. In Section~\ref{subsec:framework}, we introduce various notation and useful notions. In Section~\ref{subsec:shape_perturbation_formulas}, we state our main result, before proving a reformulation in~\ref{subsec:boundary_form}.

    \subsection{Framework and notation}
    \label{subsec:framework}
        \paragraph{Assumptions on~$V$ and~$a$.}
        We assume that the diffusion matrix~$a$ is locally elliptic: for any compact set~$K\subset \R^d$,
        \begin{equation}
            \label{eq:a_ellipticity}
            \tag{\bf Ell}
            \exists\,\varepsilon_a(K)>0:\qquad u^\top a(x) u \geq \varepsilon_a(K) |u|^2\qquad\forall u\in \R^d,\text{ for almost all~$x\in K$}.
        \end{equation}
        We also assume that~$V$ and~$a$ have locally bounded derivatives up to order~$2$:
        \begin{equation}
            \label{eq:coeff_regularity}
            \tag{\bf Reg}
            V\in \cW_{\mathrm{loc}}^{2,\infty}\left(\R^d\right),\qquad a\in \cW_{\mathrm{loc}}^{2,\infty}(\R^d;\mathcal M_d).
        \end{equation}
        \paragraph{Functional spaces.}
                Throughout this work, we consider the following Hilbert spaces, defined for an open Lipschitz domain~$\Omega\subset \R^d$ by
                    \begin{equation}
                        \label{eq:sobolev_spaces}
                        \begin{aligned}
                        &\Lmu(\Omega) = \left\{u\text{ measurable }\middle|\,\|u\|^2_{\Lmu(\Omega)} :=\int_{\Omega} u^2 \,\e^{-\beta V} < +\infty\right\},\\
                        &H^{k}_\beta(\Omega) = \left\{u\in \Lmu(\Omega)\middle|\,\partial^{\alpha}u\in \Lmu(\Omega),\,\forall\, |\alpha|\leq k\right\},
                        \end{aligned}
                    \end{equation}
                    where~$\partial^\alpha = \partial_{x_1}^{\alpha_1}\dots\partial_{x_d}^{\alpha_d}$ denotes the weak differentiation operator associated to a multi-index~$\alpha = (\alpha_1,\dots,\alpha_d)\in \R^d$. For the flat case (i.e. when~$V\equiv 0$), we simply write~$L^2(\Omega)$ and~$H^k(\Omega)$. As in the flat case,~$H_{0,\beta}^k(\Omega)$ denotes the~$H_\beta^k(\Omega)$-norm closure of~$\testfuncs(\Omega)$.

                    If~$\Omega$ is bounded (which will be the case in the following) and for any~$k\in\N$, the sets~$H^{k}_\beta(\Omega)$ and~$H^{k}(\Omega)$ are equal as Banach spaces, but are endowed with different inner products.
        \paragraph{Lipschitz shape perturbations.}
        For the purpose of studying shape perturbations of eigenvalues, we introduce an appropriate Banach space of deformation fields.
        We denote by~$\Winf(\R^d;\R^d)$ (or simply~$\Winf$) the set of essentially bounded vector fields with essentially bounded weak differential:
        \begin{equation}
            \label{eq:lipschitz_functions}
            \Winf(\R^d;\R^d) = \left\{\theta:\,\R^d\to\R^d\text{ measurable } \middle|\,\left\|\theta\right\|_{\Winf} :=\left\|\theta\right\|_{L^{\infty}\left(\R^d;\R^d\right)} + \left\|\nabla\theta\right\|_{L^{\infty}\left(\R^d;\mathcal M_d\right)} < +\infty\right\},
        \end{equation}
        where~$\R^d$ is endowed with the Euclidean norm and where~$\mathcal M_d$ denotes the space of~$d\times d$ matrices, which is endowed with the induced operator norm.
        For any finite-dimensional vector space~$E$ and~$\theta\in \Winf(\R^d;E)$,~$\theta$ has a Lipschitz-continuous representative (see for example~\cite[Section 5.8.2.b, Theorem 4]{E22}). We will therefore identify throughout this work elements of~$\Winf(\R^d;E)$ with their Lipschitz representatives.
        The normed vector space~$(\Winf,\|\cdot\|_{\Winf})$ is a Banach space, and due to Rademacher's theorem,~$\nabla\theta\in \Winf$ is differentiable almost everywhere.
        We use the convention~$(\nabla \theta)_{ij} = \partial_i \theta_j$, so that~$\nabla \theta = D\theta^\top \in \R^{d\times d}$ is the transpose of the Jacobian matrix.

        The interest of this class of perturbations is the stability of the class of Lipschitz domains under~$\Winf$ shape perturbations, as formalized by the following result.
        \begin{proposition}
            \label{prop:domain_stability}
            Let~$\Omega\subset \R^d$ be a bounded, open Lipschitz domain, and~$k\geq 1$. There exists~$h_0>0$ such that, for all~$\theta\in B_{\Winf}(0,h_0)$,
            \begin{equation}
                \label{eq:stable_perturbation}
                \Omega_\theta := (\Id+\theta)\Omega = \left\{x+\theta(x),x\in\Omega\right\}
            \end{equation}
            is still a bounded, open Lipschitz domain.
        \end{proposition}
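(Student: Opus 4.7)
The plan is to deduce the proposition from the fact that, for $\|\theta\|_\Winf$ small enough, the map $F_\theta := \Id + \theta$ is a bi-Lipschitz homeomorphism of $\R^d$ onto itself. Once this is established, every feature of $\Omega$ preserved by such homeomorphisms transfers to $\Omega_\theta$: openness, connectedness and boundedness are immediate, and only the Lipschitz regularity of the boundary requires a genuine argument.

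First, I would set up the bi-Lipschitz property via a contraction argument. Whenever $\|\nabla\theta\|_{L^\infty} < 1$, the map $T_y: x \mapsto y - \theta(x)$ is a strict contraction of $\R^d$ for every $y \in \R^d$, hence admits a unique fixed point $F_\theta^{-1}(y)$ by the Banach fixed point theorem, with global Lipschitz constant $(1-\|\nabla\theta\|_{L^\infty})^{-1}$. Any choice of $h_0 < 1$ therefore ensures that $F_\theta$ is a bi-Lipschitz self-homeomorphism of $\R^d$ for all $\theta \in B_\Winf(0,h_0)$. Openness of $\Omega_\theta = F_\theta(\Omega)$ is then immediate (image of an open set by a homeomorphism), connectedness follows since continuous images of connected sets are connected, and boundedness follows from $\|F_\theta - \Id\|_{L^\infty} \leq h_0$ and the boundedness of $\Omega$.

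The core of the proof is to preserve the Lipschitz regularity of the boundary. I would argue locally: by compactness of $\partial\Omega$, fix finitely many balls $(B_i)_{i=1,\ldots,N}$ covering $\partial\Omega$, rotations $R_i \in \mathrm{O}(d)$, and Lipschitz functions $\varphi_i: \R^{d-1} \to \R$ (with a common Lipschitz constant $L$) such that, in the rotated coordinates $(x',x_d) = R_i x$, the set $\Omega \cap B_i$ is the Lipschitz epigraph $\{x_d > \varphi_i(x')\}$ inside $R_i(B_i)$. To show that $\partial\Omega_\theta$ is locally a Lipschitz graph near $F_\theta(B_i \cap \partial \Omega)$, I would write $F_\theta$ in the rotated coordinates, split it into horizontal and vertical components, and establish that the horizontal map $x' \mapsto \pi_{d-1}\bigl(R_i F_\theta(R_i^\top(x', \varphi_i(x')))\bigr)$ is bi-Lipschitz for $h_0$ small enough depending only on $L$. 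Since this map is a Lipschitz perturbation of $\Id_{\R^{d-1}}$ of size $\bigl(1+L\bigr)\|\theta\|_\Winf$, the same Banach fixed point argument as above yields a Lipschitz inverse. Composing with the vertical component of $F_\theta$ then expresses the perturbed boundary as the graph of a Lipschitz function in the same rotated frame, with Lipschitz constant controlled in terms of $L$ and $h_0$.

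The main obstacle will be to perform this chart-by-chart argument uniformly in $i$, so that a single $h_0 > 0$ works for all $N$ charts at once and guarantees that the perturbed local graphs still cover all of $\partial\Omega_\theta$. This requires shrinking the covering a bit so that the preimages $F_\theta^{-1}(B_i)$ still cover $\partial \Omega$ uniformly in $\theta \in B_\Winf(0,h_0)$, which follows from the uniform Lipschitz bound on $F_\theta^{-1}$ and the finiteness of the cover. Everything else is a careful but classical bookkeeping.
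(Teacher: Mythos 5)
Your proof takes a genuinely different route from the paper's. The paper reduces the whole statement to a single abstract fact cited from Chenais: bounded Lipschitz domains are exactly those satisfying a uniform $\varepsilon$-cone condition, and this geometric criterion is manifestly stable under bi-Lipschitz homeomorphisms close to the identity. You instead argue directly at the level of local Lipschitz-graph charts, showing by hand (via a Banach fixed point argument on the horizontal projection $\Psi'$) that each chart is transported by $F_\theta = \Id + \theta$ to a new Lipschitz chart in the same rotated frame, with constants uniform over the finite cover. Your route is more elementary and self-contained; the paper's is shorter once the cone characterization is taken as a black box, and the cone condition also sidesteps a subtlety that your chartwise argument has to confront explicitly.

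That subtlety is the one genuine gap in what you wrote. You establish that $F_\theta(\partial\Omega\cap B_i)$ is a Lipschitz graph $\{x_d = g(x')\}$ in the rotated frame, but a Lipschitz domain must satisfy more than a Lipschitz-graph boundary condition: in each chart the domain itself must be the epigraph $\{x_d > g(x')\}$. Your argument does not verify that $\Omega_\theta$ lies on exactly one side of the perturbed graph, which is not automatic from "the boundary is a graph" (a slit disk in $\R^2$ has Lipschitz-graph boundary everywhere yet fails to be a Lipschitz domain). This does hold in your setting and is fillable: for instance, pull back a short vertical segment $t\mapsto (x',g(x')+t\delta)$ through $F_\theta^{-1}$, observe that it starts on $\partial\Omega$ and never crosses $\partial\Omega$ again, and that for $\|\nabla\theta\|_{L^\infty}$ small enough its initial tangent $(\Id+\nabla\theta)^{-1}e_d$ points strictly into $\{x_d>\varphi_i(x')\}$, so the pulled-back point lies in $\Omega$ and hence the original point lies in $\Omega_\theta$. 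Without some such step the argument only yields a graph boundary, which is strictly weaker than the claim.
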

        We depict schematically the perturbed domain~\eqref{eq:stable_perturbation} in Figure~\ref{fig:shape_perturbation}.
        The proof of Proposition~\ref{prop:domain_stability} relies on the fact that bounded Lipschitz domains are characterized by a geometric condition in the class of so-called uniform~$\varepsilon$-cone conditions, which is stable under bi-Lipschitz homeomorphisms. We refer to~\cite[Section III]{C75} for a proof of this result.
        Another straightforward but important property of this class of perturbations is that the composition mapping
        \begin{equation}
            \label{eq:composition}
            \left\{\begin{aligned}
                H_0^1(\Omega)&\to H_0^1(\Omega_\theta)\\
                v&\mapsto v\circ \Phi_\theta,
            \end{aligned}\right.
        \end{equation}
        where~$\Phi_\theta(x) = x + \theta(x)$, is a Banach space isomorphism for~$\|\theta\|_{\Winf}$ sufficiently small, with inverse~$v_\theta\mapsto v_\theta\circ \Phi_\theta^{-1}$.
        \begin{figure}
            \center
            \begin{tikzpicture}[x=0.75pt,y=0.75pt,yscale=-0.7,xscale=0.7]
                \draw  [dash pattern={on 1.69pt off 2.76pt}][line width=1.5]  (133.33,43.67) .. controls (171.33,-34.33) and (201.33,62.67) .. (337.33,69.67) .. controls (473.33,76.67) and (526.33,151.67) .. (493.33,229.67) .. controls (460.33,307.67) and (529.33,360.67) .. (475.33,398.67) .. controls (421.33,436.67) and (396.33,276.67) .. (303.33,278.67) .. controls (210.33,280.67) and (220.33,379.17) .. (167.33,373.67) .. controls (114.33,368.17) and (142.33,328.67) .. (109.33,286.67) .. controls (76.33,244.67) and (11.33,245.67) .. (33.33,189.67) .. controls (55.33,133.67) and (95.33,121.67) .. (133.33,43.67) -- cycle ;
                \draw  [line width=1.5]  (133.33,36.67) .. controls (203.33,17.67) and (175.33,91.67) .. (308.33,120.67) .. controls (441.33,149.67) and (548.33,87.67) .. (552.33,167.67) .. controls (556.33,247.67) and (442.33,194.67) .. (462.33,300.67) .. controls (482.33,406.67) and (404.33,323.67) .. (311.33,325.67) .. controls (218.33,327.67) and (216.33,433.17) .. (163.33,427.67) .. controls (110.33,422.17) and (159.33,293.67) .. (126.33,251.67) .. controls (93.33,209.67) and (50.33,227.67) .. (42.33,171.67) .. controls (34.33,115.67) and (63.33,55.67) .. (133.33,36.67) -- cycle ;
                \draw [color={rgb, 255:red, 255; green, 0; blue, 0 }  ,draw opacity=1 ]   (393.33,129.67) -- (398.13,81.66) ;
                \draw [shift={(398.33,79.67)}, rotate = 95.71] [color={rgb, 255:red, 255; green, 0; blue, 0 }  ,draw opacity=1 ][line width=0.75]    (10.93,-3.29) .. controls (6.95,-1.4) and (3.31,-0.3) .. (0,0) .. controls (3.31,0.3) and (6.95,1.4) .. (10.93,3.29)   ;
                \draw [color={rgb, 255:red, 255; green, 0; blue, 0 }  ,draw opacity=1 ]   (552.33,167.67) -- (506.82,172.78) ;
                \draw [shift={(504.83,173)}, rotate = 353.59] [color={rgb, 255:red, 255; green, 0; blue, 0 }  ,draw opacity=1 ][line width=0.75]    (10.93,-3.29) .. controls (6.95,-1.4) and (3.31,-0.3) .. (0,0) .. controls (3.31,0.3) and (6.95,1.4) .. (10.93,3.29)   ;
                \draw [color={rgb, 255:red, 255; green, 0; blue, 0 }  ,draw opacity=1 ]   (308.33,120.67) -- (320.85,70.61) ;
                \draw [shift={(321.33,68.67)}, rotate = 104.04] [color={rgb, 255:red, 255; green, 0; blue, 0 }  ,draw opacity=1 ][line width=0.75]    (10.93,-3.29) .. controls (6.95,-1.4) and (3.31,-0.3) .. (0,0) .. controls (3.31,0.3) and (6.95,1.4) .. (10.93,3.29)   ;
                \draw [color={rgb, 255:red, 253; green, 0; blue, 0 }  ,draw opacity=1 ]   (239.33,95.67) -- (250.29,56.92) ;
                \draw [shift={(250.83,55)}, rotate = 105.79] [color={rgb, 255:red, 253; green, 0; blue, 0 }  ,draw opacity=1 ][line width=0.75]    (10.93,-3.29) .. controls (6.95,-1.4) and (3.31,-0.3) .. (0,0) .. controls (3.31,0.3) and (6.95,1.4) .. (10.93,3.29)   ;
                \draw [color={rgb, 255:red, 255; green, 0; blue, 0 }  ,draw opacity=1 ]   (178.33,43.67) -- (182.96,19.63) ;
                \draw [shift={(183.33,17.67)}, rotate = 100.89] [color={rgb, 255:red, 255; green, 0; blue, 0 }  ,draw opacity=1 ][line width=0.75]    (10.93,-3.29) .. controls (6.95,-1.4) and (3.31,-0.3) .. (0,0) .. controls (3.31,0.3) and (6.95,1.4) .. (10.93,3.29)   ;
                \draw [color={rgb, 255:red, 255; green, 0; blue, 0 }  ,draw opacity=1 ]   (60.33,86.67) -- (89.66,105.9) ;
                \draw [shift={(91.33,107)}, rotate = 213.26] [color={rgb, 255:red, 255; green, 0; blue, 0 }  ,draw opacity=1 ][line width=0.75]    (10.93,-3.29) .. controls (6.95,-1.4) and (3.31,-0.3) .. (0,0) .. controls (3.31,0.3) and (6.95,1.4) .. (10.93,3.29)   ;
                \draw [color={rgb, 255:red, 255; green, 0; blue, 0 }  ,draw opacity=1 ]   (103.33,231.67) -- (84.42,260.99) ;
                \draw [shift={(83.33,262.67)}, rotate = 302.83] [color={rgb, 255:red, 255; green, 0; blue, 0 }  ,draw opacity=1 ][line width=0.75]    (10.93,-3.29) .. controls (6.95,-1.4) and (3.31,-0.3) .. (0,0) .. controls (3.31,0.3) and (6.95,1.4) .. (10.93,3.29)   ;
                \draw [color={rgb, 255:red, 255; green, 0; blue, 0 }  ,draw opacity=1 ]   (163.33,427.67) -- (167.19,375.66) ;
                \draw [shift={(167.33,373.67)}, rotate = 94.24] [color={rgb, 255:red, 255; green, 0; blue, 0 }  ,draw opacity=1 ][line width=0.75]    (10.93,-3.29) .. controls (6.95,-1.4) and (3.31,-0.3) .. (0,0) .. controls (3.31,0.3) and (6.95,1.4) .. (10.93,3.29)   ;
                \draw [color={rgb, 255:red, 255; green, 0; blue, 0 }  ,draw opacity=1 ]   (241.33,356.33) -- (223.08,333.56) ;
                \draw [shift={(221.83,332)}, rotate = 51.29] [color={rgb, 255:red, 255; green, 0; blue, 0 }  ,draw opacity=1 ][line width=0.75]    (10.93,-3.29) .. controls (6.95,-1.4) and (3.31,-0.3) .. (0,0) .. controls (3.31,0.3) and (6.95,1.4) .. (10.93,3.29)   ;
                \draw [color={rgb, 255:red, 255; green, 0; blue, 0 }  ,draw opacity=1 ]   (311.33,325.67) -- (307.05,285.99) ;
                \draw [shift={(306.83,284)}, rotate = 83.84] [color={rgb, 255:red, 255; green, 0; blue, 0 }  ,draw opacity=1 ][line width=0.75]    (10.93,-3.29) .. controls (6.95,-1.4) and (3.31,-0.3) .. (0,0) .. controls (3.31,0.3) and (6.95,1.4) .. (10.93,3.29)   ;
                \draw [color={rgb, 255:red, 255; green, 0; blue, 0 }  ,draw opacity=1 ]   (450.33,353.67) -- (468.61,401.13) ;
                \draw [shift={(469.33,403)}, rotate = 248.94] [color={rgb, 255:red, 255; green, 0; blue, 0 }  ,draw opacity=1 ][line width=0.75]    (10.93,-3.29) .. controls (6.95,-1.4) and (3.31,-0.3) .. (0,0) .. controls (3.31,0.3) and (6.95,1.4) .. (10.93,3.29)   ;
                \draw [color={rgb, 255:red, 255; green, 0; blue, 0 }  ,draw opacity=1 ]   (466.33,322.67) -- (488.34,323.58) ;
                \draw [shift={(490.33,323.67)}, rotate = 182.39] [color={rgb, 255:red, 255; green, 0; blue, 0 }  ,draw opacity=1 ][line width=0.75]    (10.93,-3.29) .. controls (6.95,-1.4) and (3.31,-0.3) .. (0,0) .. controls (3.31,0.3) and (6.95,1.4) .. (10.93,3.29)   ;
                \draw [color={rgb, 255:red, 255; green, 0; blue, 0 }  ,draw opacity=1 ]   (460.33,264.67) -- (481.42,271.08) ;
                \draw [shift={(483.33,271.67)}, rotate = 196.93] [color={rgb, 255:red, 255; green, 0; blue, 0 }  ,draw opacity=1 ][line width=0.75]    (10.93,-3.29) .. controls (6.95,-1.4) and (3.31,-0.3) .. (0,0) .. controls (3.31,0.3) and (6.95,1.4) .. (10.93,3.29)   ;
                \draw [color={rgb, 255:red, 255; green, 0; blue, 0 }  ,draw opacity=1 ]   (105.33,46.67) -- (112.82,75.07) ;
                \draw [shift={(113.33,77)}, rotate = 255.23] [color={rgb, 255:red, 255; green, 0; blue, 0 }  ,draw opacity=1 ][line width=0.75]    (10.93,-3.29) .. controls (6.95,-1.4) and (3.31,-0.3) .. (0,0) .. controls (3.31,0.3) and (6.95,1.4) .. (10.93,3.29)   ;
                \draw [color={rgb, 255:red, 255; green, 0; blue, 0 }  ,draw opacity=1 ]   (41.33,141.67) -- (57.84,142.86) ;
                \draw [shift={(59.83,143)}, rotate = 184.12] [color={rgb, 255:red, 255; green, 0; blue, 0 }  ,draw opacity=1 ][line width=0.75]    (10.93,-3.29) .. controls (6.95,-1.4) and (3.31,-0.3) .. (0,0) .. controls (3.31,0.3) and (6.95,1.4) .. (10.93,3.29)   ;
                \draw [color={rgb, 255:red, 255; green, 0; blue, 0 }  ,draw opacity=1 ]   (54.33,201.67) -- (31.05,215.64) ;
                \draw [shift={(29.33,216.67)}, rotate = 329.04] [color={rgb, 255:red, 255; green, 0; blue, 0 }  ,draw opacity=1 ][line width=0.75]    (10.93,-3.29) .. controls (6.95,-1.4) and (3.31,-0.3) .. (0,0) .. controls (3.31,0.3) and (6.95,1.4) .. (10.93,3.29)   ;

            \draw (45,43.4) node [anchor=north west,scale=1.2][inner sep=0.75pt]    {$\partial \Omega $};
            \draw (289,39.4) node [anchor=north west,scale=1.2][inner sep=0.75pt]    {$\partial \Omega _{\theta }$};
            \draw (283,301.4) node [anchor=north west,scale=1.2][inner sep=0.75pt]    {$\textcolor[rgb]{1,0,0}{\theta }$};
            \end{tikzpicture}
            \caption{\label{fig:shape_perturbation} The standard framework of the Hadamard shape derivative: a reference domain~$\Omega$ is deformed into~$\Omega_\theta$ defined in~\eqref{eq:stable_perturbation} following a perturbation field~$\theta\in\Winf$. Regularity properties of a shape functional~$J(\Omega)$ are studied via those of the map~$\theta\mapsto J(\Omega_\theta)$.}
        \end{figure}
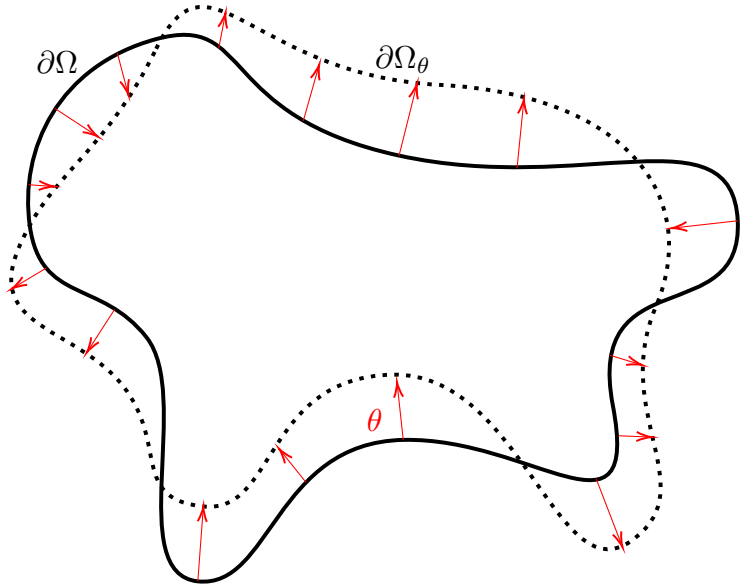
        
        \paragraph{Spectral properties of the Dirichlet generator.}
        We recall that the evolution semigroup associated with the diffusion~\eqref{eq:overdamped_langevin} is generated by the operator~\eqref{eq:generator}.
        Given a bounded open domain~$\Omega$, the Dirichlet realization of~$-\cL_\beta$ on~$L^2_\beta(\Omega)$, also denoted by~$-\cL_\beta$, is defined as the Friedrichs extension (see~\cite{T14}) of the positive quadratic form
        \begin{equation}
            \mathcal C_{\mathrm{c}}^\infty(\Omega)\ni u\mapsto \frac1\beta\int_{\Omega}\nabla u^\top a \nabla u\,\e^{-\beta V}.
        \end{equation}
        It is a self-adjoint operator with domain~$\mathcal D(\cL_\beta) = \left\{u\in H_{0,\beta}^1(\Omega):\, \cL_\beta u\in L^2_\beta(\Omega)\right\}$.
       If~$\Omega$ is a smooth domain, we simply have~$\mathcal D(\cL_\beta)=H_{2,\beta}(\Omega)\cap H_{0,\beta}^1(\Omega)$.

        Since~$\mathcal D(\cL_\beta)\subset H_{0,\beta}^1(\Omega)$ is compactly embedded in~$L^2_\beta(\Omega)$, $-\cL_\beta$ has compact resolvent, and its spectrum is composed of a sequence
        $$0<\lambda_1(\Omega) \leq \lambda_2(\Omega)\dots $$
        of eigenvalues with finite multiplicities tending to~$+\infty$. We enumerate the spectrum with multiplicity, and consider the following normalization for eigenvectors: for any integers~$i,j\geq 1$,
        \begin{equation}
            \label{eq:eigfunc_normalization}
            \int_{\Omega}u_{i}(\Omega)u_{j}(\Omega) \e^{-\beta V} = \delta_{ij},
        \end{equation}
        where for any~$k\geq 1$,~$u_{k}(\Omega)\in L^2_\beta(\Omega)$ satisfies the eigenrelation~$-\cL_\beta u_k(\Omega) = \lambda_k(\Omega)u_k(\Omega)$. It can also be shown that the eigenfunction associated with~$\lambda_1(\Omega)$ is a signed function~$u_1(\Omega)$ (since~$|u_1(\Omega)|\in Q(\cL_\beta)$ is also a minimizer of the quadratic form), which is unique up to normalization, and the Harnack inequality implies that~$u_1(\Omega)$ does not vanish inside~$\Omega$.
        Therefore, the orthogonality constraint~\eqref{eq:eigfunc_normalization} forces the principal eigenvalue to be simple, i.e.~$0<\lambda_1(\Omega)<\lambda_2(\Omega)$. Moreover, one can choose~$u_1(\Omega)$ to be positive in~$\Omega$, which will be our convention throughout this work.

        Precise statements regarding the spectral properties of~$\cL_\beta$ will be given in the proof of Theorem~\ref{thm:gateaux_differentiability} below.

    \paragraph{Shape perturbation analysis.}
        In Section~\ref{sec:main_result}, we derive regularity results (Theorem~\ref{thm:gateaux_differentiability}) for the Dirichlet eigenvalues of the generator~$-\cL_\beta$ with respect to Lipschitz shape perturbations.
        To do so, we adopt the standard framework of shape calculus, considering mappings from perturbations of the domain to eigenvalues
        \begin{equation}
            \theta\mapsto \lambda_k(\Omega_\theta),\qquad \forall\,k\geq 1,
        \end{equation}
        and obtain regularity results with respect to~$\theta\in\Winf$ with explicit first-order formulas.

        To illustrate the main difficulty when dealing with eigenvalues, we consider the following two-dimensional example, which already gives insight into the infinite-dimensional situation.
        Consider the following matrix-valued map~$A:\R^2\to\R^{2\times 2}$ (which depends on two independent parameters, and therefore lies outside the scope of analytic perturbation theory): 
        $$A(\theta) = \begin{pmatrix}
            -\theta_1& \theta_2\\ \theta_2& \theta_1
        \end{pmatrix},\qquad \mathrm{Spec}\,A(\theta) = \left\{\pm \sqrt{\theta_1^2+\theta_2^2}\right\}.$$
        Simple eigenvalues remain Fréchet-differentiable with respect to~$\theta$.
        One does not however have Fréchet differentiability for~degenerate eigenvalues (as~$0$ for~$\theta=0$ above), even if one is free to choose the ordering of the eigenvalues. Indeed, there is no local parametrization of~$\mathrm{Spec}\,A(\theta)$ as the union of two differentiable surfaces in a neighborhood of~$\theta=0$: geometrically, it is a double cone in~$\R^3$ with a vertex at~$\theta=0$.  However, it is simple to see that, for a fixed perturbation direction~$\theta\in \R^2$, the set $\mathrm{Spec}\,A(t\theta)$ may be parametrized as the union of two differentiable graphs, namely~$t\mapsto \pm t|\theta|$, and in this sense the degenerate eigenvalue is Gateaux-differentiable.
        If one moreover orders the eigenvalues, one gets the parametrization~$t\mapsto \pm|t\theta|$, and the eigenvalues are again non-differentiable at~$t=0$ (even in the sense of Gateaux), but only semi-differentiable, with well-defined left and right derivatives. This is simply an artifact of the non-differentiability of the ordering map, which nevertheless is semi-differentiable on the diagonal~$\{(x,y)\in \R^2:\, x=y\}$.
        
        The case of the Dirichlet eigenvalues of~$-\cL_\beta$ is similar.
        Namely, for a degenerate eigenvalue~$\lambda_k(\Omega)$ of multiplicity~$m$ and a fixed perturbation direction~$\theta\in\Winf$, the spectral cluster
        $$\left\{\lambda_{k+\ell}(\Omega_{t\theta}),0\leq \ell < m,\,|t|\text{ small}\right\}$$
         around~$\lambda_k(\Omega)$ depends differentiably on~$t$, in a sense made precise in Theorem~\ref{thm:gateaux_differentiability} below.
        It is also the case that, if~$\lambda_k(\Omega)$ is simple, then~$\theta\mapsto \lambda_k(\Omega_\theta)$ has~$\mathcal C^1(\Winf)$-regularity in a neighborhood of~$0$, a property known as shape-differentiability.
        In both the simple and degenerate cases, explicit formulas~for the directional one-sided derivatives (and thus also the Fréchet derivative in the simple case) of the ordered eigenvalues~$\lambda_{k+\ell}(\Omega_\theta)$ with respect to~$\theta$ at~$\theta=0$ are available for~$0\leq \ell<m$.
        These results justify formal computations~(see~Corollary~\ref{cor:boundary_expression} below), generalizing those of Hadamard~\cite{H08} for the Laplacian, and allowing for the identification of shape-ascent directions for smooth functionals of the Dirichlet spectrum. This forms the crux of our numerical method, see Section~\ref{sec:ascent} below.
        The general strategy we follow was proposed by Haug and Rousselet in~\cite{HR80a,HR80b,R83,HR83} for problems in structural mechanics.
        
        However, besides the fact that the operators we consider here are different from those in~\cite{R83,HR83}, the regularity results we prove are stronger than those derived in~\cite{HR80a,HR80b,R83,HR83} (for instance, we show Fréchet-differentiability of simple eigenvalues in a~$\Winf$-neighborhood of~$\theta=0$). These results require locally uniform-in-$\theta$ estimates throughout the proof, and we therefore give a self-contained derivation.
        
        Let us also mention the books~\cite[Section 5.7]{HP05} and~\cite[Section 2.5]{H06} for a more pedagogical and somewhat less technical approach than our proof in the case of the Laplacian, but which only applies to the case of simple eigenvalues.

    \subsection{Shape perturbation formulas}
    \label{subsec:shape_perturbation_formulas}
    Our main result is the following theorem, which summarizes the regularity properties for the Dirichlet ordered eigenvalue maps~$\theta\mapsto \lambda_k(\Omega_\theta)$, with explicit expressions for the directional derivatives at~$\theta=0$ in terms of a~$L^2_\beta(\Omega)$-orthonormal basis of eigenvectors.
    Crucially, formulas are still available in the case of degenerate eigenvalues.
\begin{theorem}
    \label{thm:gateaux_differentiability}
    Let~$\Omega\subset \R^d$ be a bounded Lipschitz domain, and~$\lambda_k(\Omega)=\lambda_{k+\ell}(\Omega)$ for~$0\leq \ell < m$ be a  multiplicity~$m\geq 1$ eigenvalue for the operator~$-\cL_\beta$ on~$\Omega$ with Dirichlet boundary conditions. Let~$\left(u_k^{(i)}(\Omega)\right)_{1\leq i\leq m}$ be a basis of eigenvectors for the associated invariant subspace of~$L^2_\beta(\Omega)$, satisfying the normalization convention~\eqref{eq:eigfunc_normalization}.
    We recall that, for~$\theta\in\Winf\left(\R^d,\R^d\right)$, the transported domain is denoted $\Omega_\theta = (\Id+\theta)\Omega$. The following properties hold.
        \begin{enumerate}[i)]
        \item{
        The map~$\theta\mapsto \left(\lambda_{k+\ell}(\Omega_\theta)\right)_{0\leq \ell < m}$
        is Lipschitz in a~$\Winf$-neighborhood of~$\theta=0$.
        }
        \item{
        Fix~$\theta\in\Winf\left(\R^d,\R^d\right)$. There exist~$t_\theta>0$ and~$m$ differentiable maps
        \begin{equation}
            (-t_\theta,t_\theta)\ni t \mapsto \mu_\ell(t),\qquad 1\leq \ell\leq m
        \end{equation}
        such that~
        \begin{equation}
            \label{eq:multiset}
            \left\{\mu_\ell(t),\,1\leq \ell\leq m\right\} = \left\{ \lambda_{k+\ell}(\Omega_{t\theta}),\,0\leq \ell<m\right\}
        \end{equation}
        for all~$t\in(-t_\theta,t_\theta)$.

        Moreover, the set~$\left\{\mu_\ell'(0),\,1\leq \ell\leq m\right\}$ of derivatives at~$t=0$ is the spectrum of the symmetric matrix~$M^{\Omega,k}(\theta)$ with entries, for~$1\leq i,j\leq m$:
        \begin{equation}
            \begin{aligned}
                \label{eq:derivative_volume_expression}
            M^{\Omega,k}_{ij}(\theta) = &\frac1\beta\int_\Omega \nabla u_k^{(i)}(\Omega)^\top \left(\nabla a^\top \theta -a\nabla \theta -\nabla\theta^\top a\right)\nabla u_k^{(j)}(\Omega)\e^{-\beta V}
            \\&+ \frac1\beta \int_\Omega \nabla u_k^{(i)}(\Omega)^\top a \nabla u_k^{(j)}(\Omega)\div\left(\theta\e^{-\beta V}\right)\\
            &-\lambda_k(\Omega)\int_\Omega u_k^{(i)}(\Omega)u_k^{(j)}(\Omega)\div\left(\theta\e^{-\beta V}\right).
        \end{aligned}
        \end{equation}
        }
        \item{
            If~$\lambda_k(\Omega)$ is a simple eigenvalue, i.e.~$m=1$, then the map~$\theta\mapsto \lambda_k(\Omega_\theta)$ is~$\mathcal C^1(\Winf;\R)$ in a~$\Winf$-neighborhood of~$\theta=0$.
        }
    \end{enumerate}
    In the expression~\eqref{eq:derivative_volume_expression} above, we use the shorthand~$\nabla a^\top \theta$ for the matrix with entries~$\sum_{\alpha=1}^d\partial_{\alpha}a_{ij}\theta_\alpha$.
\end{theorem}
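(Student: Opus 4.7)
My plan is to use the standard shape calculus approach of pulling back the problem to the fixed reference domain $\Omega$, so that the shape perturbation becomes a perturbation of coefficients in a generalized eigenvalue problem on the fixed Hilbert space $H_0^1(\Omega)$. For $\theta \in \Winf$ small, $\Phi_\theta = \Id + \theta$ is a bi-Lipschitz homeomorphism of $\Omega$ onto $\Omega_\theta$, and the composition map~\eqref{eq:composition} identifies $H_0^1(\Omega_\theta)$ with $H_0^1(\Omega)$. Changing variables in the weighted Dirichlet form and in the $\Lmu$ inner product, the Dirichlet eigenvalues $\lambda_k(\Omega_\theta)$ become the generalized eigenvalues of the bilinear forms
\begin{equation}
A_\theta(v,w) = \frac{1}{\beta}\int_\Omega \nabla v^\top (D\Phi_\theta)^{-1} a(\Phi_\theta) (D\Phi_\theta)^{-\top} \nabla w \, \e^{-\beta V \circ \Phi_\theta} \lvert \det D\Phi_\theta\rvert,
\end{equation}
\begin{equation}
B_\theta(v,w) = \int_\Omega v w \, \e^{-\beta V \circ \Phi_\theta} \lvert \det D\Phi_\theta\rvert,
\end{equation}
posed on the fixed space $H_0^1(\Omega)$. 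Under the regularity assumption~\eqref{eq:coeff_regularity}, both $\theta \mapsto A_\theta$ and $\theta \mapsto B_\theta$ are $\mathcal C^1$ from a $\Winf$-ball around $0$ into the space of bounded bilinear forms on $H_0^1(\Omega)$, and $A_\theta$ remains uniformly coercive on such a ball.

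I would then introduce the compact operator $K_\theta : \Lmu(\Omega) \to \Lmu(\Omega)$, self-adjoint with respect to $B_\theta$, defined variationally by $A_\theta(K_\theta f, \cdot) = B_\theta(f, \cdot)$, whose positive eigenvalues are the $1/\lambda_k(\Omega_\theta)$. The smooth dependence of $A_\theta, B_\theta$ transfers to Lipschitz dependence of $K_\theta$ in operator norm; standard min-max or Lidskii-type inequalities for compact self-adjoint operators then yield assertion (i). A first-order expansion in $\theta$, using $D\Phi_\theta = \Id + D\theta$ together with the identity $\div(\theta\,\e^{-\beta V}) = \e^{-\beta V}(\div\theta - \beta \theta \cdot \nabla V)$, gives the derivative of the pulled-back coefficient matrix as $\nabla a^\top \theta - a \nabla\theta - \nabla\theta^\top a$ and the derivative of the weight $\e^{-\beta V \circ \Phi_\theta}\lvert\det D\Phi_\theta\rvert$ as $\div(\theta\,\e^{-\beta V})\,\d x$, which matches term-by-term the three integrals in~\eqref{eq:derivative_volume_expression}.

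For assertion (ii), I would fix a direction $\theta$ and apply Kato--Rellich perturbation theory to the one-parameter family $t \mapsto K_{t\theta}$ of compact self-adjoint operators. The total spectral projector onto the cluster around $1/\lambda_k$ is constructed by a contour integral and inherits the smooth dependence on $t$; a suitable basis of the moving finite-dimensional range can be followed smoothly, giving the $m$ differentiable branches $\mu_\ell(t)$ parametrizing the eigenvalue cluster as in~\eqref{eq:multiset} (without imposing an ordering). Classical degenerate first-order perturbation theory then identifies $\{\mu_\ell'(0)\}$ with the spectrum of the $m \times m$ symmetric matrix representing $\partial_t(A_{t\theta} - \lambda_k B_{t\theta})|_{t=0}$ in the $\Lmu$-orthonormal eigenbasis $(u_k^{(i)}(\Omega))$, which by the first-order expansion above is exactly $M^{\Omega,k}(\theta)$. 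For assertion (iii), I would apply the implicit function theorem in $\Winf$ to the map
\begin{equation}
F(\theta, \lambda, v) = \bigl(A_\theta v - \lambda B_\theta v,\, B_\theta(v,v) - 1\bigr)
\end{equation}
from a neighborhood of $(0, \lambda_k, u_k)$ in $\Winf \times \R \times H_0^1(\Omega)$ into $H_0^1(\Omega)^* \times \R$. When $\lambda_k$ is simple, $D_{(\lambda, v)} F(0, \lambda_k, u_k)$ is an isomorphism by the Fredholm alternative (its kernel and cokernel both reduce to $\mathrm{span}(u_k)$, which is compensated by the free parameter $\mu$ and the normalization constraint), yielding a $\mathcal C^1$ branch $\theta \mapsto (\lambda_k(\Omega_\theta), u_k(\Omega_\theta))$ in a $\Winf$-neighborhood of $0$.

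The hard part will be carrying out this perturbation analysis uniformly in $\theta \in \Winf$ rather than in a more convenient Hilbertian topology, while simultaneously handling degenerate eigenvalues. All the ingredients---the pull-back matrix $(D\Phi_\theta)^{-1} a(\Phi_\theta)(D\Phi_\theta)^{-\top}$, the pushed weight $\e^{-\beta V\circ\Phi_\theta}\lvert\det D\Phi_\theta\rvert$, their joint differentiation at arbitrary small $\theta$, and the splitting of the eigenvalue cluster---must be controlled consistently in the Banach-space topology of $\Winf$, which complicates the application of standard analytic perturbation theory tailored to a single real parameter. In particular, no single labeling $\ell \mapsto \lambda_{k+\ell}(\Omega_\theta)$ can be expected to be differentiable across the degeneracy, which is why the statement in (ii) only asserts differentiability of the unordered multiset~\eqref{eq:multiset}; this is the source of the distinction between the semi-differentiable ordered eigenvalues and the differentiable spectral cluster alluded to earlier in the excerpt.
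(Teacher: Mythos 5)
Your proposal follows the same overall strategy as the paper: pull the eigenproblem back to the fixed domain via $\Phi_\theta$, realize the Dirichlet eigenvalues of $-\cL_\beta$ on $\Omega_\theta$ as generalized eigenvalues of a perturbed pair of bilinear forms $(\fa_\theta,\fb_\theta)$ on $H_0^1(\Omega)$, and apply Kato-style perturbation theory (Weyl inequality for (i), Riesz projector plus finite-dimensional degenerate perturbation theory for (ii)), with the explicit first-order coefficient expansions yielding the matrix $M^{\Omega,k}(\theta)$. This is exactly the route taken in Appendix~\ref{sec:proof}.

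The one genuine divergence is item (iii): you propose an implicit-function-theorem argument on the map $F(\theta,\lambda,v)=(A_\theta v-\lambda B_\theta v,\,B_\theta(v,v)-1)$, whereas the paper obtains (iii) as a corollary of the $\mathcal C^1$ regularity of the conjugated operator $\widehat S(\theta)=U(\theta)S(\theta)U(\theta)^{-1}$ already constructed for (ii), simply because a rank-one projector reduces the eigenvalue to $\langle\widehat S(\theta)u_k,u_k\rangle_{L^2_\beta}$. The paper explicitly remarks that the IFT route is a viable and less technical alternative (adapting~\cite[Section 5.7]{HP05}), but that it only covers the simple case — your split approach (Kato machinery for (ii), IFT for (iii)) is therefore correct, if slightly redundant once the $\mathcal C^1$ regularity of $\widehat S$ is in hand.

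Two smaller points worth making precise if you were to write this out. First, $K_{t\theta}$ is self-adjoint only with respect to the $t$-dependent inner product $\fb_{t\theta}(\cdot,\cdot)$, not with respect to a fixed inner product on $\Lmu(\Omega)$; to invoke Kato's reduction and the Weyl/Lidskii inequality one should either conjugate to the genuinely self-adjoint $B_\theta^{1/2}A_\theta^{-1}B_\theta^{1/2}$ on $L^2(\Omega)$ (as the paper does for (i)) or carry the weighted inner product through the argument. Second, the statement that $DS$ (or $D\widehat S$) is continuous, hence $S$ is $\mathcal C^1$ rather than merely Fréchet-differentiable at the origin, hinges on the second-order remainder estimates for $\fa_\theta,\fb_\theta$ holding \emph{uniformly} over a $\Winf$-ball, not just at $\theta=0$; you correctly flag this as "the hard part", and the paper's device for it is to prove every estimate "uniformly inside $\mathcal D$" for a fixed enlarged reference set $\mathcal D\supset\Omega_\theta$, which is the ingredient your outline names but leaves to be supplied.
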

\begin{remark}
    Note that, from the second item in Theorem~\ref{thm:gateaux_differentiability}, the Gateaux right-derivatives of the ordered eigenvalues can be deduced from the ordering of the eigenvalues of the matrix~$M^{\Omega,k}$ defined in~\eqref{eq:derivative_volume_expression}.
    Namely, for any~$0\leq \ell< m$, the right-derivative~$\frac{\d}{\d t}\lambda_{k+\ell}(\Omega_{t\theta})\big|_{t=0^+}$ is given by the~$\ell$-th smallest eigenvalue of~$M^{\Omega,k}(\theta)$, counted with multiplicity.
    This simply follows by comparing the first-order expansions of the eigenvalues given in~\eqref{eq:multiset}.
    It may happen that~$M^{\Omega,k}(\theta)$ has degenerate eigenvalues, in which case some eigenvalue branches are tangent to one another, and~$\lambda_{k}(\Omega_{t\theta})$ remains degenerate to first-order in~$t$ around~$t=0$.
    Such a situation is depicted in Figure~\ref{fig:multiple_eigenvalues} below.
\end{remark}
As the proof of Theorem~\ref{thm:gateaux_differentiability} is somewhat lengthy, it is postponed to Appendix~\ref{sec:proof} below.
    \begin{figure}
    \center
\begin{tikzpicture}
    \begin{axis}[
        axis lines=center,
        xtick=\empty, ytick=\empty,
        xlabel={$t$},
        ylabel={$\mathrm{Spec(-\cL_\beta(\Omega_{t\theta}))}$},
        domain=-0.25:0.5, 
        samples=100, 
        clip=false,
        width=0.8\textwidth]
        
        \addplot[red, thick] {max(max(x^2-x,x^3-x),2*x-x^2)} node[red, anchor=west] {$\lambda_{k+2}(\Omega_{t\theta})$};
        \addplot[blue, thick] {min(min(x^2-x,x^3-x),2*x-x^2)} node[blue, anchor=west] {$\lambda_{k}(\Omega_{t\theta})$};
        \addplot[brown, thick] {x^3-max(max(x^2-x,x^3-x),2*x-x^2)-min(min(x^2-x,x^3-x),2*x-x^2)} node[brown, anchor=west] {$\lambda_{k+1}(\Omega_{t\theta})$};
        
        \addplot[black, dashed, domain = 0:0.5] {-x};
        \addplot[black, dashed, domain = 0:0.5] {-x};
        \addplot[black, dashed, domain = 0:0.5] {2*x};
    \end{axis}
\end{tikzpicture}
\caption{Directional shape perturbation of the triple Dirichlet eigenvalue~$\lambda_k(\Omega)$ in the direction~$\theta$. The slopes of the Gateaux right-tangents (in black dashed lines) correspond to the eigenvalues of the matrix~$M^{\Omega,k}(\theta)$ (counted with multiplicity). In this case, the bottom eigenvalue has multiplicity two, and two half-tangents coincide.}
\label{fig:multiple_eigenvalues}
\end{figure}
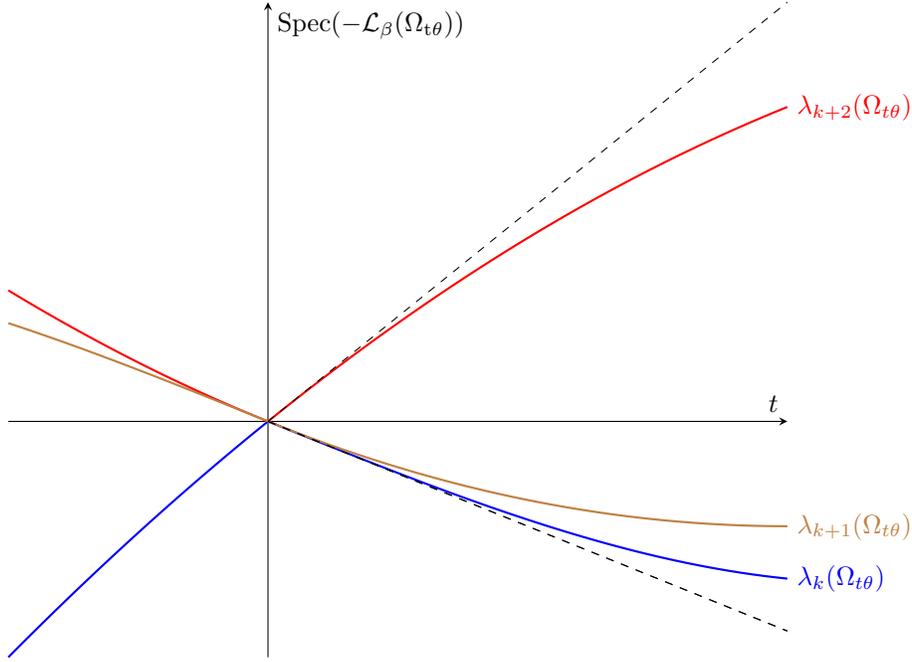

\subsection{Revisiting eigenvalue derivatives as boundary integrals}
\label{subsec:boundary_form}
The next result states that the components of the matrix~\eqref{eq:derivative_volume_expression} defining the directional derivatives of a multiple eigenvalue have a simpler form, provided that the boundary has sufficient regularity.
\begin{corollary}
    \label{cor:boundary_expression}
    Assume~that $\Omega$ is convex or has a~$\mathcal C^{1,1}$ boundary.
    Then the components~\eqref{eq:derivative_volume_expression} can be rewritten as the following boundary integrals for~$1\leq i,j\leq m$:  
    \begin{equation}
        \label{eq:boundary_form}
        M^{\Omega,k}_{ij}(\theta) = -\frac1\beta\int_{\partial\Omega} \frac{\partial u_k^{(i)}}{\partial\n}\frac{\partial u_k^{(j)}}{\partial\n}\left(\n^\top a \n\right) \left(\theta^\top\n\right)\e^{-\beta V},
    \end{equation}
    where~$\n$ denotes the unit outward normal to~$\partial\Omega$ and~$\frac{\partial u}{\partial \n} =\nabla u^\top\n$ denotes the normal derivative.
\end{corollary}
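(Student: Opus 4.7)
The approach is to work backward from \eqref{eq:boundary_form} via the divergence theorem, and to verify that one recovers \eqref{eq:derivative_volume_expression}. The key analytical input is $H^2$-regularity of the Dirichlet eigenfunctions, which holds under either hypothesis on $\partial\Omega$ by classical elliptic regularity (Grisvard's theory for convex domains; Agmon--Douglis--Nirenberg in the $\mathcal{C}^{1,1}$ case) applied to the equation $\div(\e^{-\beta V} a\nabla u_k^{(i)}) = -\beta\lambda_k u_k^{(i)} \e^{-\beta V}$, whose right-hand side lies in $\Lmu(\Omega)$. This yields $u_k^{(i)} \in H^2_\beta(\Omega)\cap H^1_{0,\beta}(\Omega)$, so that $\nabla u_k^{(i)}$ has an $L^2$-trace on $\partial\Omega$; the Dirichlet condition forces the tangential component of this trace to vanish, giving the pointwise identity $\nabla u_k^{(i)} = (\partial u_k^{(i)}/\partial\n)\,\n$ on $\partial\Omega$. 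In particular, the integrand in \eqref{eq:boundary_form} rewrites as $-\frac{1}{\beta}(\nabla u_k^{(i)\top} a\,\nabla u_k^{(j)})(\theta^\top\n)\e^{-\beta V}$ on $\partial\Omega$.

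Applying the divergence theorem to the vector field $(\nabla u_k^{(i)\top} a\nabla u_k^{(j)})\theta\, \e^{-\beta V}$ (which lies in $W^{1,1}(\Omega;\R^d)$ thanks to the above regularity together with \eqref{eq:coeff_regularity}), and expanding with the product rule, one obtains three groups of contributions: (a) $-\frac{1}{\beta}\int_\Omega(\nabla u_k^{(i)\top} a\nabla u_k^{(j)})\div(\theta \e^{-\beta V})$, which matches the second integral in \eqref{eq:derivative_volume_expression} up to sign; (b) $-\frac{1}{\beta}\int_\Omega \nabla u_k^{(i)\top}(\nabla a^\top\theta)\nabla u_k^{(j)}\e^{-\beta V}$, arising from differentiating $a$, which matches the $\nabla a^\top\theta$ part of the first integral; and (c) two Hessian terms $-\frac{1}{\beta}\int_\Omega \e^{-\beta V}\bigl[\theta^\top H_i a\nabla u_k^{(j)} + \nabla u_k^{(i)\top} a H_j \theta\bigr]$, with $H_\ell = \nabla^2 u_k^{(\ell)}$, which do not appear as such in \eqref{eq:derivative_volume_expression} and must be processed further.

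To dispose of the Hessian terms I would test the eigenvalue equation $\div(\e^{-\beta V} a\nabla u_k^{(j)}) = -\beta\lambda_k u_k^{(j)}\e^{-\beta V}$ against the $H^1_0$-function $\theta^\top\nabla u_k^{(i)}$. Integration by parts produces a boundary contribution $\int_{\partial\Omega}(\theta^\top\nabla u_k^{(i)})(\n^\top a\nabla u_k^{(j)})\e^{-\beta V}$, which by the boundary identity of paragraph one equals the target integral \eqref{eq:boundary_form} (up to the $-1/\beta$ factor), together with an interior term $\int_\Omega[\theta^\top H_i a\nabla u_k^{(j)} + \nabla u_k^{(i)\top} a\nabla\theta\nabla u_k^{(j)}]\e^{-\beta V}$. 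Symmetrizing in $(i,j)$ recovers both Hessian contributions as well as the full $-(a\nabla\theta + \nabla\theta^\top a)$ part of the first integral in \eqref{eq:derivative_volume_expression}. A final integration by parts of $\lambda_k\int_\Omega \nabla(u_k^{(i)} u_k^{(j)})\cdot\theta\,\e^{-\beta V}$, whose boundary term vanishes by the Dirichlet condition, provides exactly the missing $-\lambda_k\int_\Omega u_k^{(i)}u_k^{(j)}\div(\theta\e^{-\beta V})$ contribution. Collecting everything reconstructs \eqref{eq:derivative_volume_expression}.

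The main obstacle is not conceptual but book-keeping: signs, transpositions ($\nabla\theta^\top a$ versus $a\nabla\theta$, $\nabla a^\top\theta$ versus the directional derivative $(\theta^\top\nabla)a$), and the correct symmetrization in $(i,j)$ of the identity obtained from the eigenvalue equation must all be tracked with care. No new analytic input beyond $H^2$-regularity, the weak eigenvalue equation, and the $\cW^{2,\infty}_{\mathrm{loc}}$-bounds of \eqref{eq:coeff_regularity} is needed to justify that all integrations by parts above are licit.
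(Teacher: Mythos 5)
Your proof is correct and is essentially the paper's own argument run in reverse: both rely on the same three ingredients --- $H^2$-regularity plus the boundary identity $\nabla u = (\partial u/\partial\n)\n$, the Green-type identity obtained by testing the eigenvalue equation for $u_k^{(j)}$ against $\theta^\top\nabla u_k^{(i)}$, and the pointwise divergence identity for $(\nabla u^\top a\nabla v)\theta\,\e^{-\beta V}$ --- combined in the same way, with the $\lambda$-term disposed of by the same Green--Ostrogradski argument. One small slip in your write-up: $\theta^\top\nabla u_k^{(i)}$ is \emph{not} an $H^1_0$-function (its boundary trace is $(\partial u_k^{(i)}/\partial\n)\,\theta^\top\n$, which is exactly what produces the boundary contribution you rely on in the very next sentence); it is merely in $H^1(\Omega)$, which is what the Green identity requires, so the argument as described goes through unchanged once the label is corrected.
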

Compared to~\eqref{eq:derivative_volume_expression}, the form~\eqref{eq:boundary_form} is useful from the numerical point of view, since it does not involve any derivative of the diffusion tensor~$a$ or of the perturbation field~$\theta$. As such, it is the one we use for the purpose of numerical shape optimization, see Section~\ref{sec:ascent} below.

\begin{proof}[Proof of Corollary~\ref{cor:boundary_expression}.]
    We fix~$1\leq i,j\leq m$ and for simplicity, we denote by~$u_k^{(i)}(\Omega)=u$,~$u_k^{(j)}(\Omega)=v$ and~$\lambda_k(\Omega)=\lambda$. By standard results of elliptic regularity (see~\cite[Theorems 2.4.2.5 and~3.2.1.3]{G11}), the regularity of~$\partial\Omega$ or the convexity of~$\Omega$ ensure that~$u$ and $v$ belong to~$H^2(\Omega)$, so that~$\nabla u,\nabla v,\theta \in L^2(\partial\Omega)$ by the Sobolev trace theorem,
    with furthermore, since~$u,v\in H_0^1(\Omega)$,
    \begin{equation}
        \label{eq:boundary_gradients}
        \nabla u = \frac{\partial u}{\partial \n}\n,\quad \nabla v = \frac{\partial v}{\partial\n}\n\qquad \text{in } L^2(\partial\Omega)^d,
    \end{equation}
    where~$\nabla u,\nabla v$ are defined in~$L^2(\partial\Omega)$ in the sense of the trace.
    We recall a Green-like identity for~$f\in H^1(\Omega)$ and~$g\in\mathcal D(\cL_\beta)$. In view of the following equality in~$L^1(\Omega)$
    \begin{equation}
        \begin{aligned}
            \frac1\beta\div\left(f\e^{-\beta V}a\nabla g\right) &= \frac1\beta f\div\left(\e^{-\beta V}a\nabla g\right) + \frac1\beta \nabla f^\top a \nabla g\e^{-\beta V}\\
            &= \left(f\cL_\beta g + \frac1\beta\nabla f^\top a \nabla g\right)\e^{-\beta V},
        \end{aligned}
    \end{equation}
    the Green--Ostrogradski formula gives
    \begin{equation}
        \label{eq:green_formula}
        \frac1\beta\int_{\partial \Omega} f\n^\top a\nabla g\,\e^{-\beta V} = \int_{\Omega}f\cL_\beta g\,\e^{-\beta V} + \frac1\beta\int_\Omega\nabla f^\top a \nabla g\,\e^{-\beta V}.
    \end{equation}
    Applying~\eqref{eq:green_formula} with~$f=\theta^\top \nabla u$ and~$g=v$, observing that~$\theta^\top\nabla u\in H^1(\Omega)$ and using~\eqref{eq:boundary_gradients} as well as the eigenrelation~$\cL_\beta v = -\lambda v$, we obtain
    \begin{equation}
        \begin{aligned}
        \frac1\beta\int_{\partial\Omega}\frac{\partial u}{\partial \n}\frac{\partial v}{\partial\n}\n^\top a\n \theta^\top\n\,\e^{-\beta V} &= -\lambda\int_{\Omega}\theta^\top\nabla u v\,\e^{-\beta V} + \frac1\beta\int_{\Omega}\nabla\left(\theta^\top\nabla u\right)^\top a \nabla v\,\e^{-\beta V}\\
        &= -\lambda\int_\Omega \theta^\top \nabla u v\,\e^{-\beta V} + \frac1\beta\int_\Omega \nabla u^\top \nabla\theta^\top a \nabla v\,\e^{-\beta V}+\frac1\beta \int_\Omega \theta^\top \nabla^2 u a \nabla v\,\e^{-\beta V}.
        \end{aligned}
    \end{equation}
    Applying this identity to~\eqref{eq:derivative_volume_expression} twice (exchanging the roles of~$u$ and~$v$ the second time), we get
    \begin{equation}
        \begin{aligned}
            M_{ij}^{\Omega,k}(\theta) &= \frac1\beta\int_\Omega \nabla u^\top \nabla a^\top \theta\nabla v\,\e^{-\beta V}-\frac2\beta\int_{\partial\Omega}\frac{\partial u}{\partial \n}\frac{\partial v}{\partial\n}\n^\top a\n \theta^\top\n\,\e^{-\beta V}\\
            &-\lambda\int_{\Omega}\theta^\top \nabla(uv)\,\e^{-\beta V}-\lambda\int_{\Omega}uv\,\div\left(\theta\e^{-\beta V}\right)\\
            &+\frac1\beta\int_\Omega\theta^\top\left(\nabla^2 u a\nabla v + \nabla^2 v a \nabla u\right)\e^{-\beta V} + \frac1\beta\int_\Omega \nabla u^\top a \nabla v\,\div\left(\theta\e^{-\beta V}\right).
        \end{aligned}
    \end{equation}
    Note that the second line is equal to
    \[-\lambda\int_\Omega \div\left(uv \theta \e^{-\beta V}\right) = 0,\]
    by the Green--Ostrogradski formula and the boundary condition~$u,v\in H_0^1(\Omega)$.
    It then suffices to notice that
    \begin{equation}
        \div\left(\nabla u^\top a \nabla v \theta \e^{-\beta V}\right) = \theta^\top\left(\nabla^2 u a\nabla v + \nabla^2 v a \nabla u\right)\e^{-\beta V} + \nabla u^\top \nabla a^\top \theta \nabla v\,\e^{-\beta V} + \nabla u^\top a \nabla v\,\div\left(\theta\e^{-\beta V}\right),
    \end{equation}
    to conclude that
    \begin{equation}
        \begin{aligned}
            M_{ij}^{\Omega,k}(\theta) &= \frac1\beta \int_{\partial\Omega}\nabla u^\top a \nabla v \theta^\top \n\,\e^{-\beta V} - \frac2\beta \int_{\partial\Omega}\frac{\partial u}{\partial \n}\frac{\partial v}{\partial\n}\n^\top a\n \theta^\top\n\,\e^{-\beta V}\\
            &=-\frac1\beta\int_{\partial\Omega}\frac{\partial u}{\partial \n}\frac{\partial v}{\partial\n}\n^\top a\n \theta^\top\n\,\e^{-\beta V}
        \end{aligned}
    \end{equation}
    as claimed.
\end{proof}

\section{Numerical optimization}
\label{sec:ascent}
Using the results of Section~\ref{sec:main_result}, we describe in this section an ascent algorithm to numerically optimize smooth functionals of the eigenvalues of the Dirichlet generator~$\cL_\beta$.
We first present in Section~\ref{subsec:fem} the discretization procedure used to solve the Dirichlet eigenproblem. In Section~\ref{subsec:local_opt}, we describe the local ascent method we use, and detail the choice of ascent direction in Section~\ref{subsec:ascent_direction}. 

Throughout this section, we fix a smooth function~$J$ of~$k\in \N^*$ ordered Dirichlet eigenvalues, which we seek to maximize:
\begin{equation}
    \label{eq:functional}
        \underset{\Omega\subset\R^d}{\max}\, J\left(\lambda_1(\Omega),\dots,\lambda_{k}(\Omega)\right),\qquad J \in \mathcal C^{\infty}\left(\left(\R_+^*\right)^k,\R\right).
\end{equation}
By an abuse of notation, we also write the shorthands $J(\Omega):=J\left(\lambda_1(\Omega),\dots,\lambda_{k}(\Omega)\right)$,~$\partial_{\lambda_i} J(\Omega):= \partial_{\lambda_i} J(\lambda_1(\Omega),\dots,\lambda_k(\Omega))$ for~$1\leq i\leq k$~and denote by $DJ(\Omega;\theta)$ the Gateaux right-derivative at point~$\theta$ of the map~$\theta\mapsto J\left(\lambda_1(\Omega_\theta),\dots,\lambda_k(\Omega_\theta)\right)$, which exists by the third item in Theorem~\ref{thm:gateaux_differentiability}, or its Fr\'echet derivative whenever it is defined.

\subsection{Finite-element discretization of the eigenproblem}
\label{subsec:fem}
The numerical method we propose is based on a finite-element (FEM) approximation of the spectrum. As such, it is computationally affordable in the low-dimensional setting~$d\leq 3$.
For higher dimensional systems, one may resort to a low-dimensional representation of the dynamics, see Section~\ref{sec:coarse_graining} below where this is illustrated in a case when a good low-dimensional collective variable is available.
\paragraph{Finite-element meshes.}
All the shapes we consider in this work are parametrized by simplicial meshes. A mesh~$\Sigma$ for a given polyhedral domain~$\Omega$ consists for our purposes of the data
\begin{equation}
    \label{eq:mesh}
    \Sigma = (\mathcal V,\mathcal T),\qquad\mathcal V = \left(x_i\right)_{1\leq i\leq N_V},\qquad \mathcal T=\left(T_i\right)_{1\leq i\leq N_T},
\end{equation}
where~$\mathcal V \in \left(\R^d\right)^{N_V}$ is the set of $0$-cells or vertices, and~$\mathcal T \in \left(\mathcal V^{d+1}\right)^{N_T}$ defines the set of~$d$-cells, namely triangles for~$d=2$ or tetrahedra for~$d=3$.
We assume the usual finite-element method (FEM) conditions on~$\mathcal T$:
\begin{equation}
    \label{eq:mesh_conditions}
    \overline{\Omega} = \bigcup_{i=1}^{N_T} \co\,T_i,\qquad \forall\,1\leq i < j \leq N_T,\qquad\, \overset{\circ}{\co}\,T_i\cap \overset{\circ}{\co}\,T_j= \varnothing,
\end{equation}
where~$\co$~(resp.~$\overset{\circ}{\co}$) denotes the closed (resp. open) convex hull. The set~$\co T$ is the (closed) $d$-cell associated with any~$T\in\mathcal T$.

Dirichlet eigenvalues are approximated using the following procedure.
\paragraph{Rayleigh--Ritz approximation of the Dirichlet spectrum.}
    Given a mesh~$\Sigma$, the Rayleigh--Ritz method for the Dirichlet eigenproblem consists in performing the following steps.
    \label{alg:rayleigh_ritz}
    \begin{enumerate}[A.]
        \item{Fix a finite-dimensional subspace~$E_{0}(\Sigma)\subset H_{0}^1(\Omega)$, spanned by a set of basis functions~$\Phi(\Sigma)=\left(\phi_i\right)_{1\leq i\leq d_{\Sigma}}$. A typical choice is the set of~$\P_1$ elements for the interior vertices~$\mathcal V\cap \Omega$.
        Another approach is to take~$E_0(\Sigma) \subset H^1(\Omega)$ and enforce the Dirichlet boundary condition by adding a penalization term to the weak formulation. We use the latter method, which is implemented by default in FreeFem++~\cite{freefem} (with the default value of the penalization parameter).
        }
        \item{Form the matrices
        \begin{equation}
            \label{eq:fem_matrices}
            A(\Sigma) = \left(\int_{\Omega} \nabla\phi_i^\top a \nabla\phi_j\,\e^{-\beta V}\right)_{1\leq i,j\leq d_{\Sigma}},\qquad B(\Sigma) = \left(\int_{\Omega} \phi_i\phi_j\,\e^{-\beta V}\right)_{1\leq i,j\leq d_{\Sigma}}.
        \end{equation}
        In practice the integrals can be restricted to the set~$\supp\,\phi_i \cap \supp\,\phi_j = \cup_{n\in \mathcal N_{ij}}\co\, T_n$, where~$\mathcal N_{ij}$ is a set indexing the cells on which both~$\phi_i$ and~$\phi_j$ are non-zero. Generally, the integrals in~\eqref{eq:fem_matrices} consist in the sum of integrals over only a handful of cells in~$\mathcal T$, which are approximated by quadrature rules.
        The resulting matrices are sparse, which makes the computation of the bottom eigenvalues tractable with iterative methods.
        }
        \item{Solve the generalized eigenvalue problem (e.g. using a Lanczos algorithm) for~$1\leq\ell\leq k$:
        \begin{equation}
            \label{eq:fem_eigenvalues}
            A(\Sigma)w_\ell(\Sigma) = -\lambda_\ell(\Sigma)B(\Sigma)w_\ell(\Sigma),\qquad w_\ell(\Sigma)\in \R^{d_{\Sigma}}.
        \end{equation}
        The Rayleigh--Ritz eigenpair~$\left(\lambda_\ell(\Sigma),w_\ell(\Sigma)^\top \Phi(\Sigma)\right)$ can then be used as an approximation of the Dirichlet eigenpair~$(\lambda_\ell(\Omega),u_\ell(\Omega))$. We denote by~$u_\ell(\Sigma) = w_\ell(\Sigma)^\top \Phi(\Sigma)$ the approximated eigenfunction, and convene that the eigenvalues are listed in increasing order.
        }
    \end{enumerate}

We also select the shape perturbation~$\theta$ (see Figure \ref{fig:shape_perturbation}) in a finite-dimensional space~$W(\Sigma)\subset \Winf(\R^d;\R^d)$. In practice, we take~$W(\Sigma) \subset H^1(\Omega)^d$ to be the finite-dimensional space spanned by the set of~$\P_1$ vector-valued elements associated with~$\Sigma$.

We finally introduce the following notion of numerical degeneracy for Rayleigh--Ritz eigenvalues:
we say that~$\lambda_\ell(\Sigma)$ has~$\varepsilon$-multiplicity~$m \geq 1$ if
    \begin{equation}
        \label{eq:degeneracy}
        \frac{\lambda_{\ell}(\Sigma)-\lambda_{\ell-1}(\Sigma)}{\lambda_{\ell-1}(\Sigma)} > \varepsilon,\qquad \frac{\lambda_{\ell+m-1}(\Sigma)-\lambda_\ell(\Sigma)}{\lambda_\ell(\Sigma)} \leq \varepsilon < \frac{\lambda_{\ell+m}(\Sigma)-\lambda_\ell(\Sigma)}{\lambda_\ell(\Sigma)}.
    \end{equation}

\subsection{Local optimization procedure.}
\label{subsec:local_opt}
The algorithm starts from the choice of some initial mesh-like open domain~$\Omega_0$, with an underlying mesh~$\Sigma_0$.
The ascent algorithm used to solve~\eqref{eq:shape_optimization_problem} takes the following parameters as input.

\noindent
{
\begin{center}
\begin{tabular}{ll}
\toprule
\textbf{Parameter} & \textbf{Description} \\
\midrule
$\Omega_0$, $\Sigma_0 = (\mathcal{V}_0, \mathcal{T}_0)$ & Initial polyhedral domain and its mesh \\
$\varepsilon_{\mathrm{degen}} > 0$ & Degeneracy tolerance parameter \\
$m_{\mathrm{max}} \geq 2$ & Maximal degeneracy rank \\
$\eta_{\mathrm{max}} > 0$ & Maximal step size \\
$0 < \alpha < 1$ & Step size multiplier \\
$\varepsilon_{\mathrm{term}} > 0$ & Termination criterion tolerance \\
$M_{\mathrm{grad}}>0$ & Gradient normalization parameter \\
$N_{\mathrm{search}}>0$ & Number of search points in the degenerate case\\
\bottomrule
\end{tabular}

\noindent
Input parameters for Algorithm~\ref{alg:ascent}.
\end{center}
}

We proceed by iterating the following steps.
\begin{algorithm}[Ascent iteration.]
\label{alg:ascent}
At step~$n\geq 0$:
\begin{enumerate}[A.]
    \item{Approximate the~$k+m_{\mathrm{max}}+1$ first eigenpairs for~$\Sigma_n$ using the finite-element Rayleigh--Ritz procedure from Section~\ref{subsec:fem} above.}
    \item{Identify an ascent direction $\theta_n\in \mathcal W(\Sigma_n)$ such that~$\widehat{DJ}(\Sigma_n;\theta_n)>0$, where
    \begin{equation}
        \label{eq:directional_derivative}
        \widehat{DJ}(\Sigma;\theta) = \nabla J(\lambda_1(\Sigma),\dots,\lambda_k(\Sigma))^\top \widehat{D\Lambda}(\Sigma;\theta),\qquad \widehat{D\Lambda}(\Sigma;\theta) = \left(\widehat{D\lambda_i}(\Sigma;\theta)\right)_{1\leq i\leq k}
    \end{equation}
    and~where~$\widehat{D\lambda_i}(\Sigma;\theta)$ is the approximation of the right-Gateaux derivative of~$\lambda_i(\Omega)$ in the direction~$\theta$ from step A., i.e.
    \begin{equation}
        \label{eq:directional_derivative_approx_simple}
        \widehat{D\lambda_i}(\Sigma;\theta) = 
            -\frac1\beta\int_{\partial\Omega} \left(\frac{\partial u_i(\Sigma)}{\partial\n}\right)^2 \n^\top a\n \e^{-\beta V}\theta^\top\n \text{ if }\lambda_i(\Sigma)\text{ has~$\varepsilon_{\mathrm{degen}}$-multiplicity } 1,
    \end{equation}
    and otherwise is given by the~$\ell$-th smallest eigenvalue of the matrix
    \begin{equation}
        \label{eq:directional_derivative_approx_multiple}
        \left(-\frac1\beta\int_{\partial\Omega} \frac{\partial u_\sigma(\Sigma)}{\partial\n}\frac{\partial u_\tau(\Sigma)}{\partial\n} \n^\top a\n\theta^\top\n\,\e^{-\beta V}\right)_{i-\ell+1\leq \sigma,\tau\leq i-\ell+m}
    \end{equation}
    if~$\lambda_{i-\ell+1}(\Sigma)$ has~$\varepsilon_{\mathrm{degen}}$-multiplicity $m\geq \ell$ for some~$2\leq\ell\leq m_{\max}$. If~$\lambda_{i-\ell+1}(\Sigma)$ has~$\varepsilon_{\mathrm{degen}}$ greater than~$m_{\max}$, the iteration fails. The choice of~$\theta_n$ and its discretization are the crucial features of the algorithm, and are made precise in Section~\ref{subsec:ascent_direction} below.}
    \item{Set the step size~$\eta_n =\eta_{\max}$, and displace the vertices of the mesh via~$\widetilde{ \mathcal V}_{n+1} = \mathcal{V}_n + \eta_n \theta_n(\mathcal V_n)$.
    The geometry of the mesh~$\widetilde{\Sigma}_{n+1}$ is defined by the set of new vertices~$\widetilde{\mathcal V}_{n+1}$, inheriting its combinatorial structure from~$\Sigma_n$.
    If~$\widetilde{\Sigma}_{n+1}$ is a valid mesh for a domain~$\Omega_{n+1}$, i.e. satisfies the FEM conditions~\eqref{eq:mesh_conditions}, set~$\Sigma_{n+1} = \mathcal A\left(\widetilde{\Sigma}_{n+1}\right)$, where~$\mathcal A$ is a local mesh refinement procedure designed to preserve meshing quality, namely the {\fontfamily{ccr}\selectfont adaptmesh} function from FreeFem++.
    Otherwise, set~$\eta_n \leftarrow \alpha\eta_n$ and repeat this step. For the sake of computational efficiency and simplicity, we limit ourselves to a fixed maximal step size $\eta_{\max}$, although various other strategies to select~$\eta_n$ are a classical topic in numerical optimization, see~\cite[Chapter 3]{NW99}.
    }
    \item{Set $n\leftarrow n+1$ and proceed from step A., unless the termination condition
    $$\widehat{DJ}(\Sigma_n;\theta_n)< \varepsilon_{\mathrm{term}}$$
    is met. Other termination criteria are possible and are again a classical topic, see~\cite{NW99}.
    }
\end{enumerate}
\end{algorithm}

\subsection{Choice of ascent directions}
\label{subsec:ascent_direction}
We now detail how to find ascent directions~$\theta_n$ in step B. of Algorithm~\ref{alg:ascent}. Following the standard reading on numerical shape optimization (see for instance~\cite[Section 6.5]{AS07}), we take a ``solve-then-discretize'' approach. We first describe how to identify steepest ascent directions at the continuous level (for both simple and multiple eigenvalues), and then make precise the discretization procedure.
For the purpose of this discussion, we assume to avoid undue technical difficulties that~$\Omega$ is a smooth domain and the coefficients~$a,V$ are smooth, ensuring by elliptic regularity that the Dirichlet eigenfunctions are smooth on~$\overline{\Omega}$, and therefore smooth and bounded on~$\partial\Omega$.

\paragraph{Case of simple eigenvalues.}
We first handle the case where each of the~$\lambda_i(\Omega)$ have multiplicity 1.
In this case, according to Corollary~\ref{cor:boundary_expression}, the differential of~$J$ with respect to the perturbation~$\theta$ can be expressed as a continuous linear form of the normal perturbation~$\theta^\top\n$ on~$\partial\Omega$, i.e.
\begin{equation}
    D J(\Omega;\theta) = \int_{\partial\Omega}\phi_J(\Omega)\theta^\top\n,
\end{equation}
for the scalar-valued map~$\phi_J(\Omega)$ defined on~$\partial\Omega$ by
\begin{equation}
    \label{eq:shape_gradient_explicit}
    \phi_J(\Omega) = -\frac1\beta \left[\sum_{i=1}^k\partial_i J(\lambda_1(\Omega),\dots,\lambda_k(\Omega))\left(\frac{\partial u_i(\Omega)}{\partial\n}\right)^2\right]\n^\top a\n \e^{-\beta V}.
\end{equation}

The vector field~$\phi_J(\Omega) \n$ is therefore the~$L^2(\partial\Omega)$-gradient of~$J$ with respect to~$\theta$, which is why~$\phi_J(\Omega)$ is also called the shape-gradient of~$J$ at~$\Omega$.
A natural approach to shape-optimization is to approximate the $L^2(\partial\Omega)$-gradient flow by an explicit Euler discretization, setting $\widetilde\Omega = (\Id+\eta\theta)\Omega$, where~$\theta$ is chosen so that~$\left.\theta^\top\n\right|_{\partial\Omega}=\phi_J(\Omega)$.
When using mesh-discretizations of~$\Omega$, two difficulties arise with this approach. Firstly, one must specify how to displace the internal vertices of the mesh, or in other words how to extend~$\phi_J(\Omega)\n$ to~$\Omega$.
Secondly, the normal derivative~$\n$ is an irregular field on the boundary of a mesh. In practice, we observe that displacements of the boundary vertices along the mesh normal field leads to rapid collapse in the quality of the boundary mesh, which prevents the naive algorithm from converging.

To overcome both difficulties, a standard approach~(see for instance~\cite[Section 5.2.2]{ADJ21} for an extensive discussion) is to resort to an extension-regularization procedure, seeking a Riesz representative of~$\theta \mapsto D J(\Omega;\theta)$ in a Hilbert space~$\mathcal H(\Omega)\subset L^2(\Omega)$ consisting of more regular shape-perturbations, defined on the whole of~$\Omega$.
To ensure that this is possible,~$\mathcal H(\Omega)$ should be continuously embedded in~$L^2(\partial\Omega)$. 
A standard choice, which we also use in this work, is to take
\begin{equation}
    \label{eq:ext_reg_space}
    \mathcal H(\Omega) = H^1(\Omega)^d,\qquad \left\langle\theta,\psi\right\rangle_{\mathcal H(\Omega)} = \int_{\Omega} \left(\varepsilon_\reg^2\nabla\theta :\nabla\psi + \theta^\top\psi\right),
\end{equation}
where~$\varepsilon_\reg>0$ is a regularization scale, which is chosen of the order of a few cell widths for the underlying mesh. Therefore, the problem of finding a Riesz representative of~$\theta\mapsto DJ(\Omega;\theta)$ amounts to solving the problem
\begin{equation}
    \label{eq:ext_reg_relation}
    \left\langle \theta_\reg,\theta\right\rangle_{\mathcal H(\Omega)} = \int_{\partial\Omega}\phi_J(\Omega)\theta^\top\n,\qquad\forall\,\theta\in \mathcal H(\Omega)
\end{equation}
for~$\theta_\reg\in \mathcal H(\Omega)$.

Solving~\eqref{eq:ext_reg_relation}, and choosing~$\theta=\theta_\reg$, one finds that~$DJ(\Omega;\theta_\reg)=\|\theta_\reg\|_{\mathcal H(\Omega)}^2$, so that~$\theta_\reg$ is indeed a valid ascent direction defined on the whole of~$\Omega$, and moreover $\theta_\reg=0$ if and only if $\Omega$ is a critical shape of $J$. Note that this approach is still valid whenever~$\phi_J(\Omega)\n\in H^{-1/2}(\partial\Omega)^d$ and~$\Omega$ is a Lipschitz domain, since the Sobolev trace theorem then gives the continuity of the trace~$\gamma:\mathcal H(\Omega)\to H^{1/2}(\partial\Omega)^d$. In practice, the problem~\eqref{eq:ext_reg_relation} is solved by a Galerkin method, which we discuss below.

For our choice of~$\mathcal H(\Omega)$, the requirement~\eqref{eq:ext_reg_relation} is the weak formulation of the following Neumann boundary value problem:
\begin{equation}
    \label{eq:ext_reg_pde}
    \left\{
        \begin{aligned}
            -\varepsilon_\reg^2\Delta \theta_\reg + \theta_\reg &= 0\text{ in }\Omega,\\
            \varepsilon_\reg^2\nabla\theta_\reg\n &= \phi_J(\Omega)\n\text{ on }\partial\Omega.
        \end{aligned}
    \right.
\end{equation}
where~$\Delta$ is the component-wise Laplace operator. Let us denote by
\begin{equation}
    \label{eq:ext_reg_op}
    R_{\varepsilon_\reg}:\left\{
    \begin{aligned}
        H^{-1/2}(\partial\Omega)^d &\to \mathcal H(\Omega),\\
        \phi_J(\Omega)\n &\mapsto\,\theta_\reg \quad \text{solution to equation~\eqref{eq:ext_reg_pde}}
    \end{aligned}
    \right.
\end{equation}
the operator which maps the boundary data $\Phi_J(\Omega)\n$ to the solution~$\theta$ of the above Neumann problem.

Various other approaches to the extension-regularization procedure, tailored to preserve mesh quality over many iterations, are sometimes preferred, see for instance~\cite[Section 3.5]{DFOP18} or~\cite[Section 5.2.2]{ADJ21}. They simply correspond to other choices of~$\mathcal H(\Omega)$ and the associated inner product.

\paragraph{Case of multiple eigenvalues.}
The case of multiple eigenvalues is more challenging. To simplify the presentation, and motivated by the maximization of~\eqref{eq:separation_of_timescales}, we focus on the case where~$J$ depends only on the first two Dirichlet eigenvalues, and~$\lambda_2(\Omega)$ has multiplicity~$m = 2$ ($\lambda_1(\Omega)$ is always simple by theory), and~$\partial_{\lambda_2} J(\Omega)\geq 0$. The generalization to more eigenvalues and/or other local monotonicity properties of~$J$ is straightforward, although the computational cost of the method increases with the total multiplicity.

According to the third item in Theorem~\ref{thm:harmonic}, multiple eigenvalues are no longer Fréchet-differentiable, and one therefore loses any natural notion of shape gradient. However, the objective is still directionally differentiable. The natural counterpart to the shape gradient is given by the steepest ascent perturbation
\begin{equation}
    \label{eq:steepest_ascent_problem}
    \theta^*\in\underset{\|\theta\|_{\mathcal H(\Omega)=1}}{\mathrm{Argmax}}\,DJ(\Omega;\theta).
\end{equation}
Note that one seeks a steepest ascent perturbation in the space~$\mathcal H(\Omega)$ of regular perturbations defined in~\eqref{eq:ext_reg_space}. This is done to ultimately preserve mesh quality, just as in the case of simple eigenvalues. It is however not immediately clear that the problem~\eqref{eq:steepest_ascent_problem} is well-posed or tractable. Fortunately, this turns out to be the case in our setting. First, we write
\begin{equation}
    \begin{aligned}
    DJ(\Omega;\theta) &= \partial_{\lambda_1} J(\Omega)D\lambda_1(\Omega;\theta)+\partial_{\lambda_2} J(\Omega)\underset{|u|=1}{\min}\,u^\top M^{\Omega,2}(\theta)u\\
    &= \underset{|u|=1}{\min}\,u^\top\left[\partial_{\lambda_1} J(\Omega)D\lambda_1(\Omega;\theta)\mathrm{I}_{2}+\partial_{\lambda_2} J(\Omega)M^{\Omega,2}(\theta)\right] u,
    \end{aligned}
\end{equation}
using~$\partial_{\lambda_2}J(\Omega)\geq 0$ and the fact that~$D\lambda_2(\Omega;\theta)$ is the smallest eigenvalue of the~$2\times2$ matrix~$M^{\Omega,2}(\theta)$ defined in~\eqref{eq:derivative_volume_expression}. The problem~\eqref{eq:steepest_ascent_problem} is therefore to maximize with respect to~$\theta$ the smallest eigenvalue of the symmetric matrix~$Q^\Omega(\theta)$ whose~$(i,j)$-th component is given by
\begin{equation}
    \label{eq:perturbation_matrix_J_boundary_form}
    Q_{ij}^{\Omega}(\theta) = \langle \phi^{ij}_{J}(\Omega)\n,\theta\rangle_{L^2(\partial\Omega)},\qquad \phi^{ij}_{J}(\Omega) = -\frac1\beta\left[\partial_{\lambda_2} J(\Omega)\frac{\partial u_2^{(i)}}{\partial\n}\frac{\partial u_2^{(j)}}{\partial\n} +\delta_{ij}\partial_{\lambda_1} J(\Omega)\left(\frac{\partial u_1}{\partial\n}\right)^2\right]\n^\top a\n\e^{-\beta V},
\end{equation}
where we write~$u_1=u_1(\Omega)$,~$u_2^{{(i)}}=u_2^{(i)}(\Omega)$ for~$i=1,2$, and use the formula~\eqref{eq:boundary_form}. Crucially, this matrix depends linearly on~$\theta$, although its smallest eigenvalue does not.

By the regularization procedure detailed in the previous paragraph, we may also write
\begin{equation}
    \label{eq:perturbation_matrix_J_volume_form}
    Q_{ij}^{\Omega}(\theta) = \langle R_{\varepsilon_\reg}\phi^{ij}_{J}(\Omega)\n,\theta\rangle_{\mathcal{H}(\Omega)},\qquad \forall\,1\leq i,j\leq 2.
\end{equation}
Let us denote by~$\psi_{ij}:= R_{\varepsilon_\reg}\phi^{ij}_{J}(\Omega)\n\in \mathcal H(\Omega)$,~$G:=\mathrm{Span}_{\mathcal H(\Omega)}\{\psi_{ij},\,1\leq i\leq j\leq 2\}$, and~$\Pi_G$ the~$\mathcal H(\Omega)$-orthogonal projector onto~$G$.

To solve~\eqref{eq:steepest_ascent_problem}, we distinguish between two cases.
\begin{itemize}
    \item{If~$\underset{\|\theta\|_{\mathcal H(\Omega)}=1}{\sup}\, DJ(\Omega;\theta)\leq 0$, then the~$\sup$ is equal to~$0$, and is attained for any~$\theta\in G^\perp$ with unit norm. Note this is a first-order optimality condition: to first order, any shape perturbation can only decrease the value of $J$.}
    \item{If~$\underset{\|\theta\|_{\mathcal H(\Omega)}=1}{\sup}\, DJ(\Omega;\theta)>0$ then by positive homogeneity of the smallest eigenvalue with respect to~$\theta$ and the identity~$Q^\Omega(\theta)=Q^\Omega(\Pi_G\theta)$, we rewrite 
    \begin{equation}
    \underset{\|\theta\|_{\mathcal H(\Omega)=1}}{\mathrm{sup}}\,DJ(\Omega;\theta) = \underset{\|\theta\|_{\mathcal H(\Omega)\leq 1}}{\mathrm{sup}}\,DJ(\Omega;\theta) = \underset{\|\theta\|_{\substack{\mathcal H(\Omega)\leq 1\\\theta \in G}}}{\mathrm{max}}\,DJ(\Omega;\theta).
    \end{equation}}
\end{itemize}
In the second case, the~$\sup$ is replaced by a~$\max$, since the supremum is taken over the compact set~$\overline{B_{\mathcal H(\Omega)}(0,1)}\cap G$. Hence, in both cases, a maximizer for~\eqref{eq:steepest_ascent_problem} is attained.
In fact, we can check that, in the second case, the maximizer is unique, as implied by the following elementary lemma.
\begin{lemma}
    \label{lemma:unique_maximizer}
    Let~$B$ be the closed unit ball in a finite-dimensional Hilbert space $E$, and let $f:B\to\R$ be a concave function such that~$\underset{\theta\in B}{\sup}\, f(\theta)>0$, and is furthermore positively homogeneous of degree~$\alpha>0$. Then, there exists a unique maximizer $\theta^*\in \partial B$ for the problem
    \begin{equation}
        \underset{\theta\in B}{\sup}\,f(\theta).
    \end{equation}
\end{lemma}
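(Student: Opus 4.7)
The plan is to handle existence on $\partial B$ and uniqueness separately, with the positive homogeneity doing the essential work in both parts. First, I would observe that $\alpha > 0$ forces $f(0) = 0$, since $f(0) = f(t \cdot 0) = t^\alpha f(0)$ for any $t \neq 1$. The non-triviality hypothesis, combined with concavity and homogeneity, then yields $\sup_B f > 0$ in the regime in which the lemma is actually invoked: if $f(\theta_0) > 0$ somewhere in $B$, the identity $f(\theta_0 / \|\theta_0\|) = \|\theta_0\|^{-\alpha} f(\theta_0)$ lifts a strictly positive value onto $\partial B$. Continuity of $f$ on the open interior of $B$ is standard for concave functions in finite dimension, and is extended up to $\partial B$ through the homogeneity identity $f(\theta) = \lim_{t \to 1^-} t^{-\alpha} f(t\theta)$; compactness of $B$ then gives a maximizer.

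To see that any maximizer $\theta^*$ must lie on $\partial B$, suppose instead that $\|\theta^*\| < 1$. Then $\hat{\theta}^* := \theta^* / \|\theta^*\| \in \partial B$, and positive homogeneity yields $f(\hat{\theta}^*) = \|\theta^*\|^{-\alpha} f(\theta^*) > f(\theta^*)$, using $f(\theta^*) = \sup_B f > 0$, which contradicts maximality. Hence every maximizer is on $\partial B$.

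For uniqueness, suppose $\theta_1^*, \theta_2^* \in \partial B$ are two distinct maximizers with common value $M := f(\theta_1^*) = f(\theta_2^*) > 0$, and set $\theta := \tfrac{1}{2}(\theta_1^* + \theta_2^*)$. Concavity gives $f(\theta) \geq M$. The crucial geometric input is strict convexity of the Hilbert norm: two distinct unit vectors have midpoint of norm strictly less than $1$, unless $\theta_2^* = -\theta_1^*$, in which case $\theta = 0$ and $f(0) = 0 < M$ already contradicts concavity. Otherwise write $\theta = r\hat{\theta}$ with $r \in (0, 1)$ and $\hat{\theta} \in \partial B$; positive homogeneity together with maximality of $M$ on $\partial B$ gives $f(\theta) = r^\alpha f(\hat{\theta}) \leq r^\alpha M < M$, contradicting $f(\theta) \geq M$. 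Hence $\theta_1^* = \theta_2^*$.

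The principal subtlety I expect is justifying continuity of $f$ up to $\partial B$ to secure attainment: in general, concave functions on a closed convex body in finite dimension need not be upper semi-continuous at boundary points, so positive homogeneity is genuinely required to transport continuity from the interior. In the application this point is essentially transparent, since the functional is the smallest eigenvalue of a symmetric matrix depending linearly (hence continuously) on $\theta$, but for the abstract statement one really does need to lean on the homogeneity identity to rule out downward jumps at $\partial B$.
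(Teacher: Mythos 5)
Your proof is correct and follows the same essential strategy as the paper's: homogeneity lifts interior values to the boundary, and the combination of concavity, strict convexity of the Hilbert ball, and homogeneity forbids two distinct boundary maximizers. You are more careful on three points worth noting. First, the paper asserts existence of a maximizer purely from compactness of $B$; as you flag, a concave function on a compact convex body need not attain its supremum (e.g.\ $f(x)=x$ on $[0,1)$, $f(1)=0$ is concave on $[0,1]$ with unattained supremum~$1$), so the homogeneity-based extension of interior continuity to $\partial B$ is a genuine necessity, not a formality. Second, the paper's uniqueness argument evaluates $f(\theta/\|\theta\|_E)$ without addressing the antipodal case $\theta_1=-\theta_2$, where $\theta=0$ and the expression is undefined; your dispatch of that case via $f(0)=0<M$ is correct. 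Third, and most substantively, your aside about ``the regime in which the lemma is actually invoked'' identifies a genuine imprecision in the lemma's hypothesis: ``not identically~$0$'' alone does not suffice. The function $f(\theta)=-\|\theta\|^{\alpha}$ on $B$ (for $\alpha\geq 1$) is concave, positively homogeneous of degree~$\alpha$, and not identically zero, yet attains its maximum only at $\theta=0\notin\partial B$. What the proof really uses---and what is in force in the paper's application, since it is explicitly assumed there that the supremum of $DJ(\Omega;\cdot)$ over the unit sphere is positive---is that $f$ take a strictly positive value somewhere on~$B$.
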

\begin{proof}
    Since $B$ is compact, there exists a maximizer. Assume for the sake of contradiction the existence of two distinct maximizers~$\theta_1\neq \theta_2$.
    We let~$\theta = \frac12(\theta_1+\theta_2)$, and note that~$\|\theta\|_E<1$.

    First,
    \begin{equation}
        \begin{aligned}
            f\left(\theta\right)&\geq \frac12\left(f(\theta_1)+f(\theta_2)\right)\\
            &=\max_B\,f>0,
        \end{aligned}
    \end{equation}
    since~$f$ is concave, then
    \begin{equation}
        \begin{aligned}
        f(\theta/\|\theta\|_{E})&=\|\theta\|^{-\alpha}_{E}f(\theta)>f(\theta)\\
        &\geq \max_B\,f
        \end{aligned}
    \end{equation}
    using homogeneity. We have reached a contradiction, therefore there exists a unique maximizer $\theta^*$, which necessarily satisfies~$\|\theta^*\|=1$, since~$f(\theta/\|\theta\|_E)>f(\theta)$ whenever~$f(\theta)>0$, using homogeneity once again.
\end{proof}

In our setting, we let~$E=G$, and notice that, since~$\theta\mapsto u^\top Q^\Omega(\theta)u$ is linear for any~$u\in \R^2$, the map
\begin{equation}
    \theta\mapsto DJ(\Omega;\theta)=\underset{|u|=1}{\min}\,u^\top Q^{\Omega}(\theta)u
\end{equation}
is concave and positively homogeneous of degree~$\alpha=1$. Under the additional assumption
$$\underset{\|\theta\|_{\mathcal H}=1}{\sup}\,DJ(\Omega;\theta)=\underset{\|\theta\|_{\mathcal H}=1,\theta\in G}{\sup}\,DJ(\Omega;\theta)>0,$$
 the conditions of Lemma~\ref{lemma:unique_maximizer} are satisfied, which proves the existence of a unique~$\theta^*$ solving~\eqref{eq:steepest_ascent_problem}.

In practice, finding~$\theta^*$ is tractable by a direct search method. Letting~$g=(g_1,g_2,g_3)\in \mathcal H(\Omega)^3$ be a~$\mathcal H(\Omega)$-orthonormal basis for $G$, obtained by applying a Gram--Schmidt procedure to the~$\{\psi_{ij},1\leq i\leq j\leq 2\}$, the problem~\eqref{eq:steepest_ascent_problem} reduces to an optimization with respect to a parameter~$\alpha$ on the unit sphere~$\mathbb S^2\subset\R^3$. 
If we fail to find~$\alpha\in\mathbb S^2$ such that~$DJ(\Omega;\alpha^\top g)>0$, we deduce that~$\Omega$ satisfies a first-order optimality condition, although this case never came up in our examples.

\begin{remark}
    We note that, even for objectives $J$ involving several eigenvalues with multiplicities greater than $2$, the optimization problem~\eqref{eq:steepest_ascent_problem} can still be reduced to a finite-dimensional optimization problem.
    However, the dimensionality of the problem may be large, and is related to the number of linearly independent components of the perturbation matrix~\eqref{eq:perturbation_matrix_J_volume_form}, namely
    $$ \dim\,G \leq \sum_{j=1}^{\ell} \frac{m_j(m_j+1)}{2},$$
    where~$\ell$ denotes the number of distinct degenerate eigenvalue involved in the definition of~$J$, and the set~$\{m_j,1\leq j\leq \ell\}$ enumerates their respective multiplicities.
    Moreover, the finite-dimensional problem will generally not be concave, in which case the optimum~$\theta^*$ may not be unique, and the problem may be itself hard to solve, especially if~$\dim\, G$ is large.
\end{remark}

\paragraph{Discretization of ascent directions.}
We now explain how we discretize the choice of ascent direction at the~$k$-th iteration of Algorithm~\ref{alg:ascent}. The domain~$\Omega_k$ is approximated by a mesh~$\Sigma_k =(\mathcal V_k,\mathcal T_k)$, and the extension-regularization operator~$R_{\varepsilon_\reg}$ is replaced by a Galerkin approximation $\widehat{R}_{\varepsilon_\reg}$

We consider the subspace $W(\Sigma_k)$ spanned by the basis~$\Theta_k$ of~$\mathbb P_1$ vector-valued elements associated with~$\Sigma_k$, and compute its Gram matrix $G_\reg(\Sigma_k)$ with respect to the~$\mathcal H(\Omega_k)$-inner product~\eqref{eq:ext_reg_space} for the basis~$\Theta_k$. This costly step only needs to be performed once, regardless of the number of extension-regularization calls (which is determined by the degeneracy of the eigenvalues, as~\eqref{eq:perturbation_matrix_J_volume_form} needs to be computed).
For any~$f\in H^{-1/2}(\partial\Omega_k)^d$, we compute the components~$b_f(\Sigma_k)$ of~$\theta\mapsto\langle f,\theta\rangle_{L^2(\partial\Omega_k)^d}$ in the basis~$\Theta_k$, solve~$G_\reg(\Sigma_k)\alpha = b_f(\Sigma_k)$ for~$\alpha\in\R^{|\Theta_k|}$, and take~$\widehat{R}_{\varepsilon_\reg}(f):=\Theta_k^\top\alpha$. In practice, the components of~$b_f(\Sigma_k)$ are further approximated by quadrature rules.
All spectral quantities, namely the eigenvalues $\lambda_j(\Omega)$ and the eigenvectors~$u_j(\Omega)$ for~$1\leq j\leq k$, are replaced by their Rayleigh--Ritz counterparts $\lambda_j(\Sigma)$ and~$u_j(\Sigma)$, as well as the corresponding normal derivatives.

Numerically, exactly degenerate eigenvalues are never encountered. However, when~$\lambda_2$ is almost degenerate, i.e.~$(\lambda_3(\Sigma_k)-\lambda_2(\Sigma_k))/\lambda_2(\Sigma_k) \ll 1$, the displacement in step~$C.$ of the ascent algorithm~\ref{alg:ascent} may lead to the crossing of the eigenvalue branches, in such a way that it leads in fact to a local decrease in the value of~$J$.
This manifests itself through local oscillations in the eigenvalues and objective functions throughout the ascent algorithm, see Figure~\ref{fig:oscillations} below.
This is a well-known problem in the numerical optimization of non-smooth objective functions, and decreasing the step size~$\eta_k$ to ensure local ascent is not a viable solution, as it may lead to very slow convergence to a local minimum, or altogether prevent it.
In the context of numerical optimization of eigenvalues, this behavior has been for example observed in~\cite{CTZ24}, where Nesterov-type acceleration techniques are suggested.

We follow another approach, assuming exact degeneracy when detecting~$\varepsilon_{\mathrm{degen}}$-degeneracy, and choosing an ascent direction within a low-dimensional space of perturbations, according to the analytical prescription of the previous paragraph.
See the work~\cite{DG06} for a closely related method applied to an exactly degenerate eigenvalue problem.

More precisely, in the case of~$\varepsilon_{\mathrm{degen}}$-simple eigenvalues, we set
\begin{equation}
    \theta_k := \widehat{R}_{\varepsilon_\reg}(\phi_J(\Sigma_k)\n) / \max\left(M_{\mathrm{grad}},\|\widehat{R}_{\varepsilon_\reg}(\phi_J(\Sigma_k)\n)\|_{\mathcal H(\Sigma_k)}\right),
\end{equation}
where~$\phi_J(\Sigma_k)$ is obtained by substituting Rayleigh--Ritz approximations in the definition~\eqref{eq:shape_gradient_explicit} of the shape gradient~$\phi_J(\Omega_k)$, and we recall~$M_{\mathrm{grad}}>0$ is a hyperparameter.
In other words, if the shape gradient is larger than~$M_{\mathrm{grad}}$ in the~$\mathcal H(\Sigma_k)$-norm, the ascent perturbation is normalized. This procedure is equivalent to step size adaptation in an explicit Euler discretization of the underlying geometric flow, and corresponds to some time reparameterization (in the limit~$\eta_{\max}\to 0$) of the trajectories generated by Algorithm~\ref{alg:ascent}.
We found this choice convenient to stabilize the numerical flow, since the gradient varies by several orders of magnitude throughout the numerical trajectories for the problem we considered. To ensure convergence near local maxima, this normalization is capped at~$M_{\mathrm{grad}}>0$.

For the case of~$\varepsilon_{\mathrm{degen}}$-degenerate eigenvalues, we first solve
\begin{equation}
    \label{eq:ascent_direction_multiple}
   \forall\,1\leq i,j\leq 2,\qquad\psi_{ij}(\Sigma_k) = \widehat{R}_{\varepsilon_{\mathrm{reg}}}(\phi_{J}^{ij}(\Sigma_k)\n),
\end{equation}
where the~$\phi_{J}^{ij}(\Sigma_k)$ are obtained from~\eqref{eq:perturbation_matrix_J_boundary_form} by substituting Rayleigh--Ritz approximations in place of exact eigenelements. We then apply the Gram--Schmidt algorithm (for the~$\mathcal H(\Sigma_k)$-scalar product~\eqref{eq:ext_reg_space}) to this set of perturbations, yielding a basis
$g(\Sigma_k)=(g_1(\Sigma_k),g_2(\Sigma_k),g_3(\Sigma_k))\in\mathcal H(\Sigma_k)^3$ of regular perturbations defined on~$\Omega_k$. We then solve
\begin{equation}
    \alpha^* = \underset{\alpha\in L_{N_{\mathrm{search}}}}{\max}\,\widehat{DJ}(\Sigma_k;\alpha^\top g(\Sigma_k)),
\end{equation}
where~$\widehat{DJ}(\Sigma_k;\cdot)$ is defined in~\eqref{eq:directional_derivative_approx_multiple}, and~$L_{N_{\mathrm{search}}}\subset\mathbb S^2$ is a set of~$N_{\mathrm{search}}$ points on the sphere. In practice, we use a Fibonacci lattice (see~\cite{G10}), which is simple to implement and distributes points quasi-uniformly.
This optimization step is extremely cheap, after having precomputed the matrix elements
\begin{equation}
    \langle\phi_{J}^{ij}(\Sigma_k)\n,g_k(\Sigma_k)\rangle_{L^2(\partial\Omega_k)},\qquad 1\leq i,j\leq 2,\,1\leq k\leq 3.
\end{equation}
Note that one could use the equivalent volume form~\eqref{eq:perturbation_matrix_J_volume_form}, but since boundary integrals are cheaper to compute and give good results in practice, we work with the latter instead.
After this precomputation step, the cost of evaluating the value of~$DJ(\Omega_k;\alpha^\top g)$ for $\alpha\in \mathbb S^2$ becomes negligible, and one can deduce the optimal perturbation~$\theta^*(\Sigma)=\alpha^{*\top}g(\Sigma)$ at virtually no cost.
We set~$\theta_k = \theta^*(\Sigma)$, which is by construction normalized in~$\mathcal H(\Sigma)$.

It would be of interest to obtain rigorous consistency results in the regimes~$\varepsilon_{\mathrm{degen}}\to 0$ and~$|\mathcal T|,N_{\mathrm{search}}\to+\infty$, as well as proving local convergence results for the algorithm and/or the underlying geometric flow. We leave this delicate question up for future work.

\section{Practical methods for high-dimensional systems}
\label{sec:practical_opt}
Although Theorem~\ref{thm:gateaux_differentiability} is interesting from a theoretical perspective, its applicability to the numerical shape optimization of spectral functionals is limited to settings for which the eigenelements of~$\cL_\beta$ are available.
For high-dimensional systems, which are typical in molecular simulation, this is hardly the case. It is therefore necessary to provide alternative numerical approaches.
In this section, we discuss such methods. The first one, discussed in Section~\ref{sec:coarse_graining}, relies on optimizing the separation of timescales for an effective dynamics through a given collective variable. 
The second one, discussed in Section~\ref{sec:semiclassic}, relies on the optimization of asymptotic expressions derived in the low-temperature regime, in the recent results of~\cite{BLS24}.
\subsection{Coarse graining of dynamical rates}
    \label{sec:coarse_graining}
    In this section, we propose a numerical strategy based on a Galerkin method and Theorem~\ref{thm:gateaux_differentiability}, after projecting the infinitesimal generator onto a collective variable (CV) or reaction coordinate.

    In practical cases from molecular dynamics, the process~\eqref{eq:overdamped_langevin} evolves in a high-dimensional space~$\R^d$ with~$d\gg 1$. In order to interpret trajectories in configurational space, it is often useful to view them through a low-dimensional map~$\xi:\R^d\to\R^m$, also known as a collective variable or reaction coordinate.
    Classical examples include geometric quantities such as dihedral angles, well-chosen interatomic distances, coordination numbers, path collective variables, which all derive from chemical intuition, and thus generally have a good physical interpretation.
    In recent years, machine learning techniques have been applied to the automatic construction of CVs optimized for a variety of purposes, see for instance~\cite{F18,GSal20,C21,GHRCNL21} for a review of recent approaches.

    Here we assume that a collective variable~$\xi$ is given, and consider the new problem of optimizing the effective separation of timescales with respect to a domain defined in collective variable space.
    The effective objective is defined with respect to a surrogate dynamics~(see~\eqref{eq:effective_dynamics}), which is already studied in~\cite{LL10,ZHS16,NKBC21}, although the methodology could in principle be applied to other reduced order models of the dynamics as well (see Remark~\ref{rem:other_eff_diffusion} below).

    \paragraph{Assumptions on the collective variable.}
    From now on, we assume that~$\xi$ is smooth, with~$\nabla \xi$ of full rank~$m$ everywhere. In particular, the Gram matrix~$G_\xi = \nabla\xi^\top\nabla\xi\in \R^{m\times m}$ is everywhere invertible. This condition ensures, by the implicit function theorem, that~$\xi$ foliates~$\R^d$ into a disjoint union of smooth submanifolds, which are given by the level sets~$\Sigma_z :=\xi^{-1}(z)$, for~$z\in\R^m$.
    We denote by~$\mu_z$ the canonical measure conditioned on~$\Sigma_z$. It corresponds to the probability measure defined by
    \begin{equation}
        \label{eq:conditional_measure}
        \mu_z\in \mathcal M_1(\Sigma_z),\qquad\frac{\d \mu_z}{\d \mathcal H_{\Sigma_z}} = \e^{-\beta V}\left(\det G_\xi\right)^{-1/2}\e^{\beta F_\xi(z)},
    \end{equation}
    where~$\mathcal H_{\Sigma_z}$ is the~$(d-m)$-dimensional Hausdorff measure on the submanifold~$\Sigma_z$. The factor~$\e^{-\beta F_\xi(z)}$ is a normalization constant expressed in terms of the free energy~$F_\xi:\R^m\to\R$ defined as
    \begin{equation}
        \label{eq:free_energy}
        F_\xi(z) :=-\frac1\beta\log\dsint_{\Sigma_z}\e^{-\beta V}\left(\det G_\xi\right)^{-1/2}\,\d\mathcal H_{\Sigma_z}.
    \end{equation}

    The collective variable will serve two purposes. Firstly, states will be defined in collective variable space, i.e. by fixing~$\Omega_\xi\subset \R^m$, and considering the preimage~$\xi^{-1}(\Omega_\xi)$. Secondly, the variational principle defining Dirichlet eigenvalues for the generator~$-\cL_\beta$ will be restricted to functions which are only a function of the collective variable~$\xi$. This will define, for each domain~$\Omega_\xi$, a set of Rayleigh--Ritz eigenvalues which will serve as effective eigenvalues associated with~$\xi^{-1}(\Omega_\xi)$.

    Introduce the weighted space~$L_{\beta}^2(\Omega_\xi) = L^2(\Omega_\xi,\e^{-\beta F_\xi(z)}\,\d z)$, and the associated weighted Sobolev spaces as in~\eqref{eq:sobolev_spaces}. We denote, for~$\Omega_\xi\subset\R^m$ and~$\varphi\in H_{0,\beta}^1(\Omega_\xi)$,
    \begin{equation}
        R_\xi\left(\varphi;\Omega_\xi\right) = R\left(\varphi\circ\xi;\xi^{-1}(\Omega_\xi)\right),
    \end{equation}
    where~$R(\cdot;\Omega)$ is the Rayleigh quotient associated with the Dirichlet realization of~$\cL_\beta$ on~$\Omega$, i.e.
    \begin{equation}
        R(\psi;\Omega) = \frac1\beta\frac{\dsint_{\Omega}\nabla\psi^\top a \nabla\psi\e^{-\beta V}}{\dsint_{\Omega}\psi^2\e^{-\beta V}}\qquad\forall\psi\in H_{0,\beta}^1(\Omega).
    \end{equation}
    Then, the coarea formula (see~\cite[Corollary 5.2.6]{KP08}) allows us to write
    \begin{equation}
        \label{eq:coarea_computation}
        \begin{aligned}
            R_\xi\left(\varphi;\Omega_\xi\right) &= \frac1\beta\frac{\dsint_{\xi^{-1}(\Omega_\xi)}\nabla(\varphi\circ\xi)^\top a\nabla(\varphi\circ\xi)\e^{-\beta V}}{\dsint_{\xi^{-1}(\Omega_\xi)}(\varphi\circ\xi)^2\e^{-\beta V}}\\
            &=\frac1\beta\frac{\dsint_{\Omega_\xi}\dsint_{\Sigma_z}{\left[\nabla\varphi\circ\xi\right]^\top\nabla\xi^\top a\nabla\xi\left[\nabla\varphi\circ\xi\right]}\,\e^{-\beta F_\xi\circ\xi}\,\d\mu_z\,\d z}{\dsint_{\Omega_\xi}\dsint_{\Sigma_z}(\varphi\circ \xi)^2\,\e^{-\beta F_\xi\circ \xi}\,\d\mu_z\,\d z}\\
            &=\frac1\beta\frac{\dsint_{\Omega_\xi}\nabla\varphi^\top a_\xi \nabla \varphi\,\e^{-\beta F_\xi}}{\dsint_{\Omega_\xi}\varphi^2\,\e^{-\beta F_\xi}}\\
        \end{aligned}
    \end{equation}
    where~$a_\xi$ denotes the symmetric, positive-definite matrix-valued map
    \begin{equation}
        \label{eq:diffusion_tensor}
        a_\xi(z) = \dsint_{\Sigma_z}{\nabla\xi^\top a\nabla\xi}\d\mu_z \in \R^{m\times m}.
    \end{equation}
    In practice, both~$F_\xi$ and~$a_\xi$ have to be estimated from samples of the reference dynamics. See~\cite{LRS10} for a review of methods to estimate~$F_\xi$, and~\eqref{eq:approx_diffusion_tensor} in Section~\ref{subsec:diala} below for a sample-based approximation of~$a_\xi$.

    It follows that~$R_\xi$, which we interpret as a family of coarse-grained Rayleigh quotients on the lower-dimensional space~$\R^m$, has the same basic structure as~$R$.
    Indeed, it corresponds to the family of Dirichlet Rayleigh quotients associated with a reversible diffusion on~$\R^m$ of the form~\eqref{eq:overdamped_langevin}, where the potential~$V$ and diffusion matrix~$a$ have been replaced by their lower-dimensional analogs defined in terms of conditional expectations with respect to the reference dynamics:
    \begin{equation}
        \label{eq:effective_dynamics}
        \d Z_t^\xi = \left(-a_\xi(Z_t^\xi)\nabla F_\xi(Z^\xi_t) + \frac1\beta\div\,a_\xi(Z_t^\xi)\right)\,\d t + \sqrt{\frac{2}\beta}a_\xi(Z_t^\xi)^{1/2}\,\d B_t,
    \end{equation}
    where~$B$ is a~$m$-dimensional standard Brownian motion. The dynamics~\eqref{eq:effective_dynamics} can be understood as a Markovian model for the dynamics of~$\xi(X_t)$, which is also reversible with respect to the Gibbs measure associated with the free energy.
    We refer to~\cite{LL10,ZHS16} for additional details on the mathematical properties of the effective dynamics.

    A natural question is whether one can hope to approximate the true dynamical rates with those predicted by the effective dynamics~\eqref{eq:effective_dynamics}.
    The answer has practical implications, since in the case where~$m$ is sufficiently low-dimensional, the eigenvalue problem associated with~$R_\xi$ becomes numerically tractable, and one may then optimize the separation of timescales~$N^*(\Omega)$ with respect to domains~$\Omega$ defined in terms of the CV.
    It should be noted that it is anyway common practice to define configurational states in terms of a CV.

    These considerations motivate the following Galerkin approach, already discussed in~\cite[Section 3.3.2]{ZHS16} for the case~$\Omega_\xi=\R^m$.
    We introduce the following linear subspace
    \begin{equation}
        \mathcal \mathcal V_\xi = \{\varphi\circ \xi,\varphi\in H_{0,\beta}^1(\Omega_\xi)\} \subset H_{0,\beta}^1\left(\xi^{-1}(\Omega_\xi)\right),
    \end{equation}
    and define the local coarse-grained rates
    \begin{equation}
        \label{eq:variational_principle_xi}
        \lambda^\xi_k(\Omega_\xi) := \underset{E_\xi}{\min}\,\underset{\varphi\in E_\xi}{\max}\,R_\xi(\varphi,\varphi;\Omega_\xi) = \underset{E}{\min}\,\underset{\varphi\in E}{\max}\,R\left(\varphi,\varphi;\xi^{-1}(\Omega_\xi)\right),
    \end{equation}
    where~$E_\xi$ ranges over the set of~$k$-dimensional subspaces of~$H_{0,\beta}^1(\Omega_\xi)$ in the first equality, and~$E$ ranges over the set of~$k$-dimensional subspaces of~$\mathcal E_\xi$ in the second.
    In other words,~$\lambda^\xi_k$ is the~$k$-th eigenvalue of the following operator acting on the weighted space~$L^2(\Omega_\xi,\e^{-\beta F_\xi}\,\d z)$ with Dirichlet boundary conditions:
    \[-\cL^\xi_\beta\varphi = -\frac{1}{\beta}\e^{\beta F_\xi}\div\left(\e^{-\beta F_\xi}a_\xi\nabla\varphi\right).\]
    It follows easily from the Courant--Fischer principle that~$\lambda^\xi_k(\Omega_\xi)\geq \lambda_k(\xi^{-1}(\Omega_\xi))$, and moreover that if~$\{u_1(\xi^{-1}(\Omega_\xi)),\dots,u_k(\xi^{-1}(\Omega_\xi))\}\subset \mathcal V_\xi$ for some~$k\leq m$, it holds that~$\lambda_k^\xi(\Omega_\xi) = \lambda_k(\xi^{-1}(\Omega_\xi))$.
    Thus, the dynamical rates associated with the effective dynamics will systematically overestimate the true rates. However, these will still be accurate if the Dirichlet eigenfunctions for~$\cL_\beta$ on~$\xi^{-1}(\Omega_\xi)$ can be well approximated in the class~$\mathcal V_\xi$.

    More precisely, we have the following result, adapted from~\cite[Proposition 5]{ZHS16}.
    \begin{proposition}
        \label{prop:rates_estimates}
        Let~$k\geq 1$ and~$\lambda^\xi_k$ (respectively, ~$\lambda_k$) be the~$k$-th principal eigenvalue of~$-\cL^\xi_\beta$ (resp. $-\cL_\beta$) in~$\Omega_\xi$ (resp.~$\xi^{-1}(\Omega_\xi)$), with associated eigenfunction~$u_k$ (resp. $u_k^\xi$), with the normalization~\eqref{eq:eigfunc_normalization}.
        Then,
        \begin{equation}
            \label{eq:rates_inequalities}
            \lambda_k \leq \lambda^\xi_k \leq \lambda_k + \frac1\beta\int_{\xi^{-1}(\Omega_\xi)}\nabla\left[u_k-u_k^\xi\circ\xi\right]^\top a\nabla\left[u_k-u_k^\xi\circ\xi\right]\e^{-\beta V}.
        \end{equation}
    \end{proposition}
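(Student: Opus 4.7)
The plan is to prove the two inequalities of~\eqref{eq:rates_inequalities} separately. The lower bound $\lambda_k \leq \lambda_k^\xi$ is immediate from the Courant--Fischer characterization recalled in~\eqref{eq:variational_principle_xi}: since $\mathcal V_\xi$ is a linear subspace of $H^1_{0,\beta}(\xi^{-1}(\Omega_\xi))$, restricting the minimization over $k$-dimensional subspaces to those lying in $\mathcal V_\xi$ can only increase the value of the resulting infimum.

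For the upper bound, the key idea is to use $w := u_k^\xi \circ \xi$ as a test function for the full Rayleigh quotient $R(\,\cdot\,;\xi^{-1}(\Omega_\xi))$. The coarea computation~\eqref{eq:coarea_computation} gives simultaneously $R(w;\xi^{-1}(\Omega_\xi)) = R_\xi(u_k^\xi;\Omega_\xi) = \lambda_k^\xi$ and, applied to the denominator, $\|w\|_{L^2_\beta(\xi^{-1}(\Omega_\xi))}^2 = \|u_k^\xi\|_{L^2_\beta(\Omega_\xi)}^2 = 1$, so that $\beta \lambda_k^\xi = \int \nabla w^\top a \nabla w\,\e^{-\beta V}$. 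Setting $r := u_k - w$ and expanding $w = u_k - r$ inside the quadratic form, the Green-type identity~\eqref{eq:green_formula} applied with $f = r$ and $g = u_k$, together with the eigenrelation $-\cL_\beta u_k = \lambda_k u_k$, turns the resulting cross term into $-2\lambda_k \int u_k r\,\e^{-\beta V}$. The common normalization $\int u_k^2\,\e^{-\beta V} = \int w^2\,\e^{-\beta V} = 1$ combined with $w = u_k - r$ forces $2\int u_k r\,\e^{-\beta V} = \int r^2\,\e^{-\beta V}$, so the identity collapses to
\begin{equation}
    \lambda_k^\xi - \lambda_k = \frac{1}{\beta}\int_{\xi^{-1}(\Omega_\xi)} \nabla r^\top a \nabla r\,\e^{-\beta V} - \lambda_k \int_{\xi^{-1}(\Omega_\xi)} r^2\,\e^{-\beta V}.
\end{equation}
Dropping the manifestly non-positive second term on the right yields the claimed estimate.

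The main technical point is justifying~\eqref{eq:green_formula} in its $H^1_0$ form, namely checking that $r \in H^1_{0,\beta}(\xi^{-1}(\Omega_\xi))$, so that no boundary contribution is picked up. Since $u_k$ lies in this space by construction, this reduces to verifying that $w = u_k^\xi \circ \xi$ does as well. The full-rank assumption on $\nabla \xi$ makes $\xi$ a submersion, so composition by $\xi$ maps $\testfuncs(\Omega_\xi)$ into $\testfuncs(\xi^{-1}(\Omega_\xi))$; moreover, the coarea formula together with the local ellipticity~\eqref{eq:a_ellipticity} shows that this composition is bounded from $H^1_{0,\beta}(\Omega_\xi)$ into $H^1_{0,\beta}(\xi^{-1}(\Omega_\xi))$. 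Density of $\testfuncs$ then transfers the vanishing trace property from $u_k^\xi$ to $w$, closing the argument.
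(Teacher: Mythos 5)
Your proof is correct, and both halves are handled as one would expect. The lower bound is indeed immediate from Courant--Fischer (the paper itself notes this just before the statement). For the upper bound you correctly observe that the lifted eigenfunction $w = u_k^\xi\circ\xi$ satisfies $R(w;\xi^{-1}(\Omega_\xi)) = \lambda_k^\xi$ and $\|w\|_{L^2_\beta} = 1$ \emph{exactly} by the coarea identity~\eqref{eq:coarea_computation}, so that expanding the quadratic form about $u_k$, killing the cross term with the Green identity~\eqref{eq:green_formula} and the eigenrelation, and using the common normalization to convert $2\int u_k r\,\e^{-\beta V}$ into $\int r^2\,\e^{-\beta V}$ gives the exact identity $\lambda_k^\xi - \lambda_k = \frac1\beta\fa_0(r,r) - \lambda_k\|r\|_{L^2_\beta}^2$, from which the bound follows by dropping the nonpositive term. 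This is essentially the adaptation of the argument in~\cite[Proposition 5]{ZHS16} that the paper declares but omits; the only new ingredient is precisely the point you isolate, namely that $r\in H^1_{0,\beta}(\xi^{-1}(\Omega_\xi))$ so that the boundary term in~\eqref{eq:green_formula} vanishes.

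On that last point, your density argument is right in spirit but a little loose: for $\varphi\in\testfuncs(\Omega_\xi)$, the set $\supp(\varphi\circ\xi) \subset \xi^{-1}(\supp\varphi)$ is a \emph{closed} subset of $\xi^{-1}(\Omega_\xi)$ but need not be compact unless $\xi^{-1}(\Omega_\xi)$ itself is bounded (or $\xi$ is proper on that set); similarly, the boundedness of $\varphi\mapsto\varphi\circ\xi$ from $H^1_{0,\beta}(\Omega_\xi)$ to $H^1_{0,\beta}(\xi^{-1}(\Omega_\xi))$ uses that $\nabla\xi$ is bounded on $\xi^{-1}(\Omega_\xi)$ (or, equivalently, that $a_\xi$ dominates a multiple of the identity on $\Omega_\xi$, which is part of the regularity assumed in Proposition~\ref{prop:directional_derivative_xi}). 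These are implicitly granted by the framework (discrete spectrum of the Dirichlet realization of $-\cL_\beta$ on $\xi^{-1}(\Omega_\xi)$ already presupposes some confinement), but if you want the proof to stand alone it would be worth stating that $\xi^{-1}(\Omega_\xi)$ is bounded, or that $\xi$ is proper, so that the composition genuinely maps test functions to test functions and density carries the zero-trace property over.
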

    The proof of Proposition~\ref{prop:rates_estimates} is a straightforward adaptation of~\cite[Proposition 5]{ZHS16} to the case of absorbing Dirichlet boundary conditions on~$\partial\Omega_\xi$ and is therefore omitted.

    A useful corollary of Theorem~\ref{thm:gateaux_differentiability} is the following result.
    \begin{proposition}
        \label{prop:directional_derivative_xi}
        Let~$\Omega_\xi\subset\R^m$ be a bounded open domain which is convex or has a~$\mathcal C^{1,1}$ boundary.
        Assume that~$\xi$ is such that~Assumptions~\eqref{eq:a_ellipticity} and~\eqref{eq:coeff_regularity} are satisfied with~$d=m$,~$V=F_\xi$ and~$a=a_\xi$.
        Let~$\lambda_k^\xi=\lambda_k^\xi(\Omega_\xi)$ be an eigenvalue for~$\cL_\beta^\xi$ of multiplicity~$m_k^\xi\geq 1$, satisfying the normalization
        \[\int_{\Omega_\xi} u_k^{(i),\xi}(\Omega_\xi)u_k^{(j),\xi}(\Omega_\xi)\e^{-\beta F_\xi} = \delta_{ij},\qquad 1\leq i,j\leq m_k^\xi,\]
        where the~$u_k^{(i),\xi}(\Omega_\xi)$ for~$1\leq i\leq m_k^\xi$ are a basis of corresponding eigenvectors in~$L^2(\Omega_\xi,\e^{-\beta F_\xi})$.
        Then, for~$\theta\in \Winf(\R^m;\R^m)$ and~$0\leq \ell<m$, the map~$t\mapsto \lambda_{k+\ell}^{\xi}((\Id + t\theta)\Omega_\xi)$ is semi-differentiable at~$t=0$, and the right-differential is the~$(\ell+1)$-th smallest eigenvalue of the matrix
        \begin{equation}
        \label{eq:shape_derivative_xi}
        M_{ij}^{\xi}(\theta) = -\frac1\beta\int_{\partial \Omega_\xi}\frac{\partial u_k^{(i),\xi}(\Omega_\xi)}{\partial\n}\frac{\partial u_k^{(j),\xi}(\Omega_\xi)}{\partial\n} n^\top a_\xi\n\theta^\top\n\,\e^{-\beta F_\xi}\qquad 1\leq i,j\leq m.
        \end{equation}
    \end{proposition}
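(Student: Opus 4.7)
The proof is essentially a direct application of Theorem~\ref{thm:gateaux_differentiability} and Corollary~\ref{cor:boundary_expression} to the coarse-grained operator $-\cL_\beta^\xi$, after observing that this operator has exactly the same structure as $-\cL_\beta$, but with the substitutions $d\leftarrow m$, $V\leftarrow F_\xi$ and $a\leftarrow a_\xi$. The plan is therefore to make this substitution rigorous and then invoke the already-proven results.

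First, I would recall from the computation~\eqref{eq:coarea_computation} that the Rayleigh quotient $R_\xi(\cdot;\Omega_\xi)$ is, up to the identification $\varphi\mapsto\varphi\circ\xi$, the Rayleigh quotient on $L^2(\Omega_\xi,\e^{-\beta F_\xi}\,\d z)$ associated with the operator
\[
-\cL_\beta^\xi\varphi = -\frac{1}{\beta}\e^{\beta F_\xi}\div\bigl(\e^{-\beta F_\xi}a_\xi\nabla\varphi\bigr),
\]
which is exactly the generator of the reversible elliptic diffusion~\eqref{eq:effective_dynamics} on $\R^m$. Under the assumption that \eqref{eq:a_ellipticity} and \eqref{eq:coeff_regularity} hold with $d=m$, $V=F_\xi$, $a=a_\xi$, the abstract setting of Section~\ref{subsec:framework} applies verbatim to $-\cL_\beta^\xi$ on $\Omega_\xi$, so that its Dirichlet realization is self-adjoint with compact resolvent, and the $\lambda_k^\xi(\Omega_\xi)$ defined by~\eqref{eq:variational_principle_xi} are indeed its ordered Dirichlet eigenvalues with the stated normalization for a basis of eigenvectors.

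Next, I would apply Theorem~\ref{thm:gateaux_differentiability} with the coefficients $(F_\xi,a_\xi)$ on the domain $\Omega_\xi\subset\R^m$, for the perturbation field $\theta\in\Winf(\R^m;\R^m)$. The theorem yields the semi-differentiability of $t\mapsto\lambda_{k+\ell}^\xi((\Id+t\theta)\Omega_\xi)$ at $t=0$ for each $0\le\ell<m_k^\xi$, together with the identification of the right-derivatives as the ordered eigenvalues (counted with multiplicity) of the symmetric matrix $M^{\Omega_\xi,k}(\theta)$ given by the volume expression~\eqref{eq:derivative_volume_expression} written in terms of the eigenfunctions $u_k^{(i),\xi}(\Omega_\xi)$ and of $(F_\xi,a_\xi)$.

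Finally, since $\Omega_\xi$ is convex or has a $\mathcal{C}^{1,1}$ boundary, Corollary~\ref{cor:boundary_expression} applies, and transforms the volume expression into the boundary integral
\[
M_{ij}^{\Omega_\xi,k}(\theta) = -\frac{1}{\beta}\int_{\partial\Omega_\xi}\frac{\partial u_k^{(i),\xi}(\Omega_\xi)}{\partial\n}\frac{\partial u_k^{(j),\xi}(\Omega_\xi)}{\partial\n}\,\bigl(\n^\top a_\xi\n\bigr)\bigl(\theta^\top\n\bigr)\,\e^{-\beta F_\xi},
\]
which is exactly~\eqref{eq:shape_derivative_xi}. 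The main potential obstacle is purely bookkeeping: one must check that the elliptic regularity statements invoked in the proof of Corollary~\ref{cor:boundary_expression} (giving $u_k^{(i),\xi}\in H^2(\Omega_\xi)$ and the boundary identity $\nabla u_k^{(i),\xi}=(\partial_\n u_k^{(i),\xi})\n$ in the trace sense) remain valid under the substituted coefficients, but this is ensured by the regularity hypothesis~\eqref{eq:coeff_regularity} assumed for $(F_\xi,a_\xi)$ and the geometric assumption on $\partial\Omega_\xi$, so no new argument is needed.
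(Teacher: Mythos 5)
Your proposal matches the paper's own proof exactly: the paper simply states that the result is a direct application of Theorem~\ref{thm:gateaux_differentiability} and Corollary~\ref{cor:boundary_expression} under the substitutions $d\leftarrow m$, $V\leftarrow F_\xi$, $a\leftarrow a_\xi$. Your elaboration on verifying that the abstract framework and the elliptic-regularity hypotheses carry over is a correct (and more explicit) rendering of the same argument.
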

    \begin{proof}
        The result is a direct application of Theorem~\ref{thm:gateaux_differentiability} and Corollary~\ref{cor:boundary_expression}.
    \end{proof}
    We discuss sufficient conditions for the assumptions of Proposition~\ref{prop:directional_derivative_xi} in Appendix~\ref{sec:tech_reg} below.

    Proposition~\ref{prop:directional_derivative_xi} suggests a practical approach to the shape optimization of spectral functionals~$\mathcal F(\lambda_1(\Omega),\dots,\lambda_k(\Omega))$ in a high-dimensional setting, replacing the original objective with the coarse-grained objective~$\mathcal F(\lambda_1^\xi(\Omega_\xi),\dots,\lambda^\xi_k(\Omega_\xi))$.
    The computational implementation of this approach however requires access to the free-energy~$F_\xi$ and the matrix~$a_\xi$, for which a number of sampling methods are available, see~\cite{LRS10} for an overview.
    Due to the approximation error in~\eqref{eq:rates_inequalities}, we cannot expect the resulting shapes to be optimal for the original objective in the class of domains defined in CV space. They can nevertheless be used as input in acceleration methods such as ParRep, since this algorithm is dynamically unbiased by construction (in the limit of long decorrelation times).

    \begin{remark}
        \label{rem:other_eff_diffusion}
        The quality of the approximation~\eqref{eq:rates_inequalities} is quite sensitive to the choice of collective variable~$\xi$,
        and so, for a poor choice of~$\xi$, the effective dynamics~\eqref{eq:effective_dynamics} and its associated eigenvalues may give little insight into the original timescales (see Section~\ref{subsec:num_coarse_graining} below for an example).

        However, one could in principle apply the same methodology to other reversible, elliptic diffusions in~$\R^m$ apart from~\eqref{eq:effective_dynamics}, designed to better replicate the dynamical properties of~$\xi(X_t)$.
        In particular, instead of directly measuring~$F_\xi$ and~$a_\xi$ using thermodynamic averages, one can use a parametric approach to fit drift and diffusion coefficients of a dynamical model in~$\R^m$ directly from trajectories of $\xi(X_t)$ in CV space, see for instance~\cite{KPPK15}.
        This option has the advantage of being available even when the underlying dynamics in configurational space is not of the form~\eqref{eq:overdamped_langevin}, as long as the model enforces the form of a reversible diffusion~\eqref{eq:overdamped_langevin} in~$\R^m$.
        We leave this line of investigation to future work.
    \end{remark}
    This method is numerically validated in Section~\ref{subsec:num_coarse_graining} below, and is applied to a molecular system in Section~\ref{subsec:diala}.
    
    \subsection{Optimization in the semiclassical limit}
        \label{sec:semiclassic}
    In this section, we briefly summarize a second approach to make the shape optimization problem tractable, based on low-temperature spectral asymptotic results obtained in~\cite[Section 2.5]{BLS24}.
    These results are proved in~\cite{BLS24} under a set of assumptions which we simplify here for the sake of clarity, while keeping the main ideas intact.
    We restrict ourselves to the case of a constant diffusion coefficient~$a = \mathrm{Id}$. The dynamics follows therefore a standard overdamped Langevin equation:
    \begin{equation}
        \label{eq:overdamped_langevin_sc}
        \d X_t^\beta = -\nabla V\left(X_t^\beta\right)\,\d t + \sqrt{\frac{2}{\beta}}\,\d W_t.
    \end{equation}
    We additionally note the dependence of the dynamics on $\beta$, which is inversely proportional to the temperature. In this section, it is an asymptotic parameter considered in the limit~$\beta\to+\infty$.

    The use of semiclassical techniques to approximate spectral properties of metastable diffusions is a well-established topic in the probabilistic literature, see for example~\cite{HS85,BGK05,HN04,LN15,DGLLPN19,DGLLPN20,LPN21,LRS24,BLS24} and references therein.
    \paragraph{Asymptotic shape optimization of eigenvalue functionals.}
    We consider the general problem of maximizing with respect to a shape~$\Omega$ a functional of the Dirichlet eigenvalues of~$-\cL_\beta$ on~$\Omega$:
    \begin{equation}
        \label{eq:shape_functional}
        J(\Omega) = \mathcal F(\lambda_{1}(\Omega),\dots,\lambda_{k}(\Omega)),
    \end{equation}
    where~$\mathcal F:\R^k\to\R$ is continuous. When~$d\gg 1$, the numerical optimization of~$J$ is generally numerically intractable, since the objective involves solving a high-dimensional boundary eigenvalue problem. 

    The low-temperature asymptotic approach to this problem consists in fixing a family of domains~$(\Omega_{\alpha,\beta})_{\beta>0,\alpha\in\mathcal S}$, whose boundary geometry is jointly parametrized by the asymptotic parameter~$\beta$, and a shape parameter~$\alpha$ in the design space~$\mathcal S$.
    Assume that the asymptotic behavior of~$J(\Omega_{\alpha,\beta})$ is dictated, at dominant order, only by~$\beta$ and~$\alpha$ in the limit~$\beta\to+\infty$:
            \begin{equation}
                \label{eq:shape_sensitive_asymptotics}
            \mathcal F\left(\lambda_1(\Omega_{\alpha,\beta}),\dots,\lambda_k(\Omega_{\alpha,\beta})\right)=\mathcal F_{\infty}(\alpha,\beta)(1+\smallo(1))
        \end{equation}
        for some function~$\mathcal F_\infty:\mathcal S \times \R_+^*\to \R$.
    At fixed~$\beta>0$, we say the domain~$\Omega_{\alpha^*_\beta,\beta}$ is asymptotically optimal if
    \begin{equation}
        \label{eq:asymptotic_opt_problem}
        \alpha^*_\beta \in \underset{\alpha \in \mathcal S}{\mathrm{Argmax}}\, \mathcal F_\infty(\alpha,\beta).
    \end{equation}

    The difficulty in this approach lies in computing spectral asymptotics for domains with temperature-dependent boundaries.
    In~\cite{BLS24}, we define a set of geometric assumptions under which these spectral asymptotics can be derived, computed in practice, and ultimately optimized to solve the asymptotic problem~\eqref{eq:asymptotic_opt_problem}.
    The derivation of these shape-sensitive asymptotic formulas relies on the construction of approximate eigenmodes (or quasimodes in the semiclassical terminology) for~$\cL_\beta$, which form the crux of identifying~$\mathcal F_\infty$ in~\eqref{eq:shape_sensitive_asymptotics}.

    \paragraph{Geometrical setting.}
        We now present a slightly simplified version of the geometrical setting used in~\cite{BLS24}, which will allow to express the asymptotic results as clearly as possible. We refer to~\cite{BLS24} for a weaker set of assumptions for which the asymptotic results remain valid.
        Throughout this section, we assume that~$V$ is a~$\mathcal C^\infty$ Morse function over~$\R^d$. This means that at each point~$z\in \R^d$ such that~$\nabla V(z)=0$, the Hessian matrix~$\nabla^2 V(z)$ is non-degenerate.
        For~$0\leq i < N$, we denote the eigenvalues of the Hessian~$\nabla^2 V(z_i)$ by
        \begin{equation}
            \label{eq:eigvals_hessian}
            \mathrm{Spec}(\nabla^2 V(z_i)) = \left\{\hessEigval{i}{1}, \hessEigval{i}{2}, \dotsm, \hessEigval{i}{d}\right\}.
        \end{equation}
        We make no assumption on the ordering of these eigenvalues, except that, if~$z_i$ is an index-1 saddle point, meaning that~$\nabla^2 V(z_i)$ has a unique negative eigenvalue, one has~$\hessEigval{i}{1}<0$ (i.e. the negative eigenvalue is the first one).
        
        The Morse property implies that $V$ has finitely many critical points in~$\mathcal K_\alpha$, which we enumerate as
        $(z_i)_{0\leq i < N}$ for some~$N>0$.
        Among the critical points of~$V$ in~$\mathcal K$, we distinguish the local minima and index-1 saddle points, respectively given by the sets
        \begin{equation}
            \label{eq:minima_saddles}
            \left\{z_i,\,0\leq i < N_0\right\},\qquad \left\{z_i,\,N_0\leq i<N_0+N_1\right\}.
        \end{equation}
        For a given~$x\in\R^d$, we denote by~$\basin{x}$ the basin of attraction of~$x$ for the steepest descent dynamics, i.e.
        \begin{equation}
            \label{eq:basin}
            \basin{x} = \left\{z\in\R^d:\,X(t)\xrightarrow{t\to\infty} x,\,X'(t)=-\nabla V(X(t)),\,X(0)=z\right\}.
        \end{equation}
        The set~$\basin{x}$ is non-empty if and only if~$\nabla V(x)=0$, and in this case~$\basin{x}$ is a~$d$-dimensional subset of~$\R^d$, where~$d$ is the number of positive eigenvalues of~$\nabla^2 V(x)$.

        We now introduce the parameter~$\alpha=\left(\epsLimit{i}\right)_{0\leq i < N}\in(-\infty,+\infty]^N := \mathcal S$, which controls the asymptotic geometry of the domains near critical points of $V$.
        The value of the parameter~$\alpha\in\mathcal S$ is fixed, its link with the domain geometry will be made explicit in Assumption~\ref{hyp:locally_flat} below. We first assume that the domains~$\Omega_{\alpha,\beta}$ are smooth, and uniformly bounded, i.e. there exists a compact set~$\mathcal K_\alpha\subset \R^d$ such that,~$\Omega_{\alpha,\beta} \subset \mathcal K_\alpha$ for all~$\beta>0$.
 
        \begin{hypothesis}
            \label{hyp:locally_flat}
            In a small neighborhood of each critical point~$z_i$ and for~$\beta$ sufficiently large, the domain~$\Omega_{\alpha,\beta}$ is shaped like a half-space:
            \begin{equation}
                \Omega_{\alpha,\beta} \cap B(z_i,\varepsilon) = z_i + \left\{x\in\R^d:\,(x-z_i)^\top\hessEigvec{i}{1} < \alpha^{(i)}/\sqrt\beta\right\},
            \end{equation}
            where~$\varepsilon>0$ is a fixed parameter which depends only on~$V$, and~$\hessEigvec{i}{1}$ is a unit eigenvector of~$\nabla^2 V(z_i)$ for the eigenvalue~$\hessEigval{i}{1}$ (pointing outward of~$\Omega_{\alpha,\beta}$ for~$\epsLimit{i}<+\infty$).
            \end{hypothesis}
        When~$\alpha^{(i)}<+\infty$, the orientation convention for~$\hessEigvec{i}{1}$ ensures that decreasing~$\alpha^{(i)}$ locally retracts the domain.
        When~$z_i$ is an index-1 saddle point, Assumption~\eqref{hyp:locally_flat} is physically motivated by the fact that~$\hessEigvec{i}{1}$ gives the direction of the minimum energy path through~$z_i$ connecting a local minimum in the domain with a local minimum outside the domain (that is, the gradient flow lines joining the minima lying on both sides of the saddle point~$z_i$). Informally, the parameter~$\alpha$ encodes the position of the boundary along these paths, on the length scale~$1/\sqrt\beta$.
        
        The second assumption is that there is only one local minimum far from the boundary, in the following sense.
        \begin{hypothesis}
            \label{hyp:one_min}
            The point~$z_0$ is the only local minimum of~$V$ in~$\mathcal K$ such that~$\epsLimit{0} = +\infty$.
        \end{hypothesis}
        Informally, this assumption forces the QSD inside~$\Omega_{\alpha,\beta}$ to be unimodal, and to concentrate around~$z_0$ in the limit~$\beta\to\infty$.
            
        In order to state the last hypothesis, we introduce the sets 
        \begin{equation}
            \label{eq:low_energy_saddles}
            \begin{aligned}
                &\mathrm{SSP}(z_0) = \left\{z_i: N_0\leq i < N_0+N_1,\,\exists{m\neq z_0}\text{ a local minimum of }V\text{ s.t. } z_i\in \overline{\basin{z_0}}\cap\overline{\basin{m}}\right\},\\
                &I_{\min} = \left\{N_0\leq i < N_0+N_1:\, z_i\in\underset{i\in \mathrm{SSP}(z_0)}{\mathrm{Argmin}}\, V\right\},\qquad \Vstar = \underset{i\in \mathrm{SSP}(z_0)}{\min}\, V(z_i).
            \end{aligned}
        \end{equation}
        The set~$\mathrm{SSP}(z_0)$ corresponds to so-called separating saddle points, which lie on the boundary of the basin of attraction of~$z_0$, and the boundary of the basin of attraction of some other local minimum. Physically, these points correspond to the lowest-energy transition states on the boundary of~$\basin{z_0}$.
        The set~$I_{\min}$ contains the indices of these low-energy separating saddle points, and the associated minimal energy is given by~$\Vstar$.

        The final assumption is the following.
        \begin{hypothesis}
            \label{hyp:energy_condition}
            There exists~$c>0$ such that, for~$\beta$ sufficiently large, it holds
        \begin{equation}
            \basin{z_0}\cap \left\{V < \Vstar + c\right\}\subset \Omega_{\alpha,\beta} \setminus \bigcup_{i\in I_{\min}} B(z_i,\varepsilon).
        \end{equation}
        \end{hypothesis}
        This assumption ensures that the boundary of~$\Omega_{\alpha,\beta}$ does not enter below the energy level~$\Vstar$, except perhaps near low-energy separating saddle points. The role of this assumption is to avoid the introduction of spurious low-energy transition states, corresponding to local minima of~$V$ on the boundary which have no relation to the physically relevant transition pathways.
        Assumption~\ref{hyp:energy_condition} ensures that these so-called generalized saddle points are higher in energy than the low-energy transition states, and do not pollute the dominant asymptotic behavior of the metastable exit time. This assumption is crucial in ensuring that the asymptotics are, at dominant order, only a function of~$\beta$ and~$\alpha$, as in the desideratum~\eqref{eq:shape_sensitive_asymptotics}.
        However, it is expected in~\cite{BLS24} that a similar analysis can be performed even if Assumption~\eqref{hyp:energy_condition} does not hold, but at the cost of introducing a global counterpart to the local geometric Assumption~\eqref{hyp:locally_flat}. Relaxing Assumption~\ref{hyp:energy_condition} therefore leads once again to a high-dimensional (if not infinite-dimensional) design space~$\mathcal S$, and besides cannot improve upon the maximizers of~\eqref{eq:asymptotic_opt_problem} in the case~$\mathcal F(\lambda_1,\lambda_2)=(\lambda_2-\lambda_1)/\lambda_1$, which is why we enforce it.

    \paragraph{Harmonic approximation of the spectral gap.}
   The first main result of~\cite{BLS24} gives a quantitative and computable estimate of the spectral gap of the Dirichlet generator on~$\Omega_{\alpha,\beta}$, in the limit~$\beta\to+\infty$. In fact, it more generally shows that, for each~$k\geq 1$, the $k$-th eigenvalue $\lambda_{k,\beta}(\Omega_{\alpha,\beta})$ converges to the $k$-th eigenvalue of a temperature-independent operator, the so-called harmonic approximation.
    \begin{theorem}
        \label{thm:harmonic}
        Under Assumption~\ref{hyp:locally_flat}, it holds
        \begin{equation}
            \label{eq:harmonic_approx}
            \lambda_{k,\beta}(\Omega_{\alpha,\beta})\xrightarrow{\beta\to+\infty} \lambda_{k,\alpha}^{\mathrm{H}},
        \end{equation}
        where~$\lambda_{k,\alpha}^{\mathrm{H}}$ is the~$k$-th eigenvalue of the operator
        \begin{equation}
            \label{eq:harmonic_approximation}
            K_\alpha = \bigoplus_{i=0}^N K^{(i)}_{\alpha^{(i)}},\qquad K_{\alpha^{(i)}}^{(i)} = -\Delta + \frac14x^\top \mathcal D^{(i)}x -\frac{\Delta V(z_i)}2,
        \end{equation}
        with~$\mathcal D^{(i)} = \mathrm{diag}\left(\nu_j^{(i)2}\right)_{j=1}^d$, and where the operator~$K_{\alpha^{(i)}}^{(i)}$ is the Dirichlet realization of a quantum harmonic oscillator acting on the half-space~$(-\infty,\alpha^{(i)})\times \R^{d-1}$.
    \end{theorem}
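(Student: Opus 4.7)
The strategy is a semiclassical reduction via the Witten conjugation, followed by a blow-up at each critical point and a min-max argument. The first step is to conjugate $-\cL_\beta$ on $\Lmu(\Omega_{\alpha,\beta})$ by the multiplication operator $u\mapsto \e^{-\beta V/2}u$, which produces a unitarily equivalent Dirichlet realization on $L^2(\Omega_{\alpha,\beta},\d x)$ of the Witten-type Schrödinger operator
\begin{equation}
    \frac1\beta\Bigl(-\Delta + \frac{\beta^2}{4}|\nabla V|^2 - \frac{\beta}{2}\Delta V\Bigr).
\end{equation}
All spectral statements can then be phrased for this semiclassical operator, bringing us inside the well-developed framework of~\cite{HS85,HN04,LPN21}.

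Around each critical point $z_i\in\mathcal K_\alpha$ I would perform the blow-up $\tilde x = \sqrt\beta(x-z_i)$. Expanding $V$ yields $|\nabla V(x)|^2 = \beta^{-1}|\nabla^2 V(z_i)\tilde x|^2 + \bigo(\beta^{-3/2}|\tilde x|^3)$ and $\Delta V(x) = \Delta V(z_i) + \bigo(\beta^{-1/2}|\tilde x|)$. Combined with $-\Delta_x = -\beta\Delta_{\tilde x}$, the conjugated operator becomes, on any compact set in $\tilde x$,
\begin{equation}
    -\Delta_{\tilde x} + \frac14\tilde x^\top\bigl(\nabla^2V(z_i)\bigr)^2\tilde x - \frac12\Delta V(z_i) + \bigo(\beta^{-1/2}),
\end{equation}
which in the eigenbasis $\hessEigvec{i}{1},\dots,\hessEigvec{i}{d}$ of $\nabla^2V(z_i)$ is exactly $K^{(i)}_{\alpha^{(i)}}$ up to a vanishing perturbation. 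Assumption~\ref{hyp:locally_flat} guarantees that the local Dirichlet half-space $(x-z_i)^\top\hessEigvec{i}{1}<\alpha^{(i)}/\sqrt\beta$ maps under the blow-up to the fixed half-space $\{\tilde x_1<\alpha^{(i)}\}$ independent of $\beta$, so the boundary condition passes cleanly to the limit.

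For the upper bound I would construct quasimodes: take an orthonormal family of the first $k$ eigenfunctions of $K_\alpha=\bigoplus_i K^{(i)}_{\alpha^{(i)}}$, pull back each component via the inverse blow-up centered at the appropriate $z_i$, and multiply by a smooth cutoff supported in $B(z_i,\varepsilon)$. The resulting functions have disjoint supports, hence form an approximately orthonormal family for the conjugated operator on $\Omega_{\alpha,\beta}$, and a direct computation of the Rayleigh quotient gives $\lambda_{k,\beta}(\Omega_{\alpha,\beta})\leq \lambda^{\mathrm H}_{k,\alpha}+\smallo(1)$ by min-max. The lower bound is obtained via an IMS localization: pick a partition $\sum_{i=0}^{N-1}\chi_i^2 + \chi_\infty^2 = 1$ with $\chi_i$ supported in $B(z_i,\varepsilon)$ and $\chi_\infty$ supported away from all critical points. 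The IMS formula yields
\begin{equation}
    \langle u,(\Witten)u\rangle = \sum_i\langle \chi_i u,(\Witten)\chi_i u\rangle + \langle \chi_\infty u,(\Witten)\chi_\infty u\rangle - \frac1\beta\sum\|\nabla\chi_\bullet\|_\infty^2\|u\|^2.
\end{equation}
On $\supp\chi_\infty$, the confinement $|\nabla V|^2\geq c>0$ makes the corresponding quadratic form diverge like $\beta$, while on each $\supp\chi_i$ the previous blow-up identifies the restricted form with that of $K^{(i)}_{\alpha^{(i)}}$ up to $\smallo(1)$. Applying min-max to this direct sum structure gives $\lambda_{k,\beta}(\Omega_{\alpha,\beta})\geq\lambda^{\mathrm H}_{k,\alpha}-\smallo(1)$.

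The main obstacle is the lower bound, specifically the justification that using Dirichlet brackets at the artificial radius-$\varepsilon$ boundaries where the cutoffs live does not create spurious low-lying eigenvalues. This requires Agmon-type exponential decay estimates showing that true low-lying eigenfunctions of the Witten Laplacian concentrate in $O(\beta^{-1/2})$-neighborhoods of the $z_i$, so that cutting at scale $\varepsilon$ is harmless up to errors that are exponentially small in $\beta$. A secondary technical point is handling uniformly the cases $\alpha^{(i)}<+\infty$ and $\alpha^{(i)}=+\infty$ (the latter corresponding to $z_i$ lying in the interior, where the model operator lives on all of $\R^d$) and, for saddle points, confirming that the Dirichlet half-space truly realizes a discrete spectrum despite the negative eigenvalue $\hessEigval{i}{1}$ (the square in $\mathcal D^{(i)}$ restores positivity of the harmonic potential, which is the key reason this works).
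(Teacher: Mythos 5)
Your proposal is essentially the same approach that the paper sketches: the paper explicitly says the proof relies on ``a variational argument similar to the one used in~\cite[Theorem 11.1]{CFKS87} or~\cite{S83}'', i.e.\ a Courant--Fischer argument with quasimodes built from eigenmodes of~$K_\alpha$ and localization estimates, which is precisely the Witten-conjugation $+$ blow-up $+$ IMS route you lay out. The upper-bound construction you describe, localized pulled-back eigenmodes of $K_\alpha$ near each $z_i$, is exactly what the paper calls ``harmonic quasimodes'', and the lower bound via IMS and confinement away from the critical set matches the cited CFKS strategy.

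One small correction to your diagnostic of where the difficulty lies in the lower bound: inserting Dirichlet conditions on the artificial radius-$\varepsilon$ spheres can only \emph{raise} eigenvalues by domain monotonicity, so it cannot create spurious low-lying modes. The genuine issue is that the cubic Taylor remainder $\bigo(\beta^{-1/2}|\tilde x|^3)$ in the blown-up potential is not uniformly small over the growing domain $|\tilde x|\lesssim\varepsilon\sqrt\beta$ and can push the potential \emph{below} the harmonic one. Two standard remedies: (i) Agmon decay of the low-lying eigenfunctions, as you suggest, showing they live in $\bigo(1)$ rescaled balls; or (ii) the CFKS trick of absorbing the cubic remainder into a slightly shrunk harmonic potential, $|\tilde x|^3\beta^{-1/2}\leq\varepsilon|\tilde x|^2$ on the support, and then sending $\varepsilon\to 0$ after $\beta\to\infty$. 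Either closes the gap; the second avoids Agmon estimates altogether and is the one CFKS use.
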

    The local operators~$K_{\alpha^{(i)}}^{(i)}$ serve as (appropriately rescaled) local models for the action of~$-\cL_\beta$ near critical points of $V$.
    The proof of Theorem~\ref{thm:harmonic} relies on a variational argument similar to the one used in~\cite[Theorem 11.1]{CFKS87} or~\cite{S83}.
    Using the eigenmodes of~$K_\alpha$, we construct variational test families for $\cL_\beta$, so-called harmonic quasimodes. The convergence~\eqref{eq:harmonic_approx} follows from localization estimates on these quasimodes and the Courant--Fischer principle.

    Crucially, the geometric assumptions outlined in the previous paragraph ensure that the eigenvalues~$\lambda_{k,\alpha}^{\mathrm{H}}$ can be explicitly computed, as they belong to the spectrum of one of the local oscillators~$K_{\alpha^{(i)}}^{(i)}$ for some~$0\leq i<N$.
    Indeed (see~\cite[Section 4.2]{BLS24}), the spectrum of $K_{\alpha^{(i)}}^{(i)}$ can be enumerated (with multiplicities) by
    \begin{equation}
        \label{eq:harmonic_l2}
        \mathrm{Spec}\,K_{\alpha^{(i)}}^{(i)}=\left(|\hessEigval{i}{1}|\mu_{n_1,\alpha^{(i)}\left(|\hessEigval{i}{1}|/2\right)^{1/2}}-\frac{\hessEigval{i}{1}}{2}+\sum_{j=2}^d\left[|\hessEigval{i}{j}|(n_j+1/2)-\frac{\hessEigval{i}{j}}{2}\right]\right)_{n\in\N^d},
    \end{equation}
    where~$\mu_{n,a}$ is the~$(n+1)$-th eigenvalue of the one-dimensional quantum oscillator~$\frac12(-\partial_x^2-x^2)$ acting on~$L^2(-\infty,a)$ with Dirichlet boundary conditions.
    The particular values~$\mu_{n,\infty}=n+1/2$ and~$\mu_{n,0}=2n+3/2$ are well-known, so that the spectrum of the harmonic approximation is fully explicit in terms of eigenvalues of the Hessian~$\nabla^2 V(z_i)$ in the case all the critical points~$z_i$ of~$V$ in~$\mathcal K$ lie either on the boundary (i.e.~$\epsLimit{i}=0$) or~$\varepsilon$-inside~$\partial\Omega_{\alpha,\beta}$ (i.e.~$\epsLimit{i}=+\infty$) for all~$\beta>0$. Otherwise, one generally has to compute the values of~$\mu_{n,a}$ numerically. The (nonincreasing) functions~$a\mapsto \mu_{n,a}$ can be computed once and for all with high precision for a range of integers~$n$.

    The value of~$\lambda_{k,\alpha}^{\mathrm{H}}$ can then be easily obtained by taking the~$k$-th largest element from the union with multiplicity (i.e. the multiset union) of each of the sets~$\mathrm{Spec}\,K_{\alpha^{(i)}}^{(i)}$.
    
    \paragraph{Modified Eyring--Kramers formula for the metastable exit rate.}
    When~$z_i$ is a local minimum of~$V$ such that~$\epsLimit{i}=+\infty$, the bottom eigenvalue of~$K_{\epsLimit{i}}^{(i)}$ is~$0$. Thus, the harmonic approximation predicts a metastable exit rate of~$0$, which calls for finer asymptotics.
    The following result fulfills this need.
    \begin{theorem}
        \label{thm:eyring_kramers}
        Under Assumptions~\ref{hyp:locally_flat},~\ref{hyp:one_min} and~\ref{hyp:energy_condition},
        it holds, in the limit~$\beta\to\infty$:
        \begin{equation}
            \label{eq:eyring_kramers}
            \lambda_{1,\beta}(\Omega_{\beta}) = \e^{-\beta\left(\Vstar-V(z_0)\right)} \left[\sum_{i\in I_{\min}} \frac{|\hessEigval{i}{1}|}{2\pi\Phi\left(\sqrt{|\hessEigval{i}{1}|}\alpha_i\right)}\sqrt{\frac{\det \nabla^2 V(z_0)}{\left|\det \nabla ^2 V(z_i)\right|}}\right]\left(1 +\bigo(\beta^{-\frac12})\right),
        \end{equation}
        where~$\Phi(x)=(2\pi)^{-\frac12}\int_{-\infty}^x\e^{-\frac{t^2}2}\,\d t$, and~$\hessEigval{i}{1}$ is the unique negative eigenvalue of the Hessian matrix~$\nabla^2 V(z_i)$ at the saddle point~$z_i$.
    \end{theorem}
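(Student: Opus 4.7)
The plan is to prove matching upper and lower bounds of the form~\eqref{eq:eyring_kramers}. The upper bound follows from a variational argument using a well-chosen quasimode for~$-\cL_\beta$, while the matching lower bound requires either capacity estimates in the spirit of~\cite{BGK05} or a refined spectral analysis via Agmon-type decay, in both cases adapted to handle a Dirichlet boundary crossing saddle points at the scale~$1/\sqrt{\beta}$.

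For the upper bound, I would construct a test function~$u_\beta\in H_0^1(\Omega_{\alpha,\beta})$ equal to~$1$ on a neighborhood of~$\basin{z_0}\cap\{V<\Vstar\}$ away from the low-energy saddles, smoothly cut off to~$0$ elsewhere, and interpolating in the unstable direction through each saddle~$z_i$ for~$i\in I_{\min}$. In Hessian-eigenbasis coordinates~$(y_1,\dots,y_d)$ near such a saddle one has~$V(z_i+y)=V(z_i)-\frac12|\hessEigval{i}{1}|y_1^2+\frac12\sum_{j\geq 2}\hessEigval{i}{j}y_j^2+\bigo(|y|^3)$, and I would take~$u_\beta$ to depend only on~$y_1$ there. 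The optimal one-dimensional profile, obtained by solving the Euler--Lagrange equation~$\partial_{y_1}(\e^{\beta|\hessEigval{i}{1}|y_1^2/2}\partial_{y_1}u)=0$ with~$u\to 1$ deep inside the well and~$u=0$ at the Dirichlet boundary~$y_1=\alpha^{(i)}/\sqrt{\beta}$, reads
\begin{equation}
u_\beta(y) = \frac{\dsint_{y_1}^{\alpha^{(i)}/\sqrt{\beta}}\e^{-\beta|\hessEigval{i}{1}|s^2/2}\,\d s}{\dsint_{-\infty}^{\alpha^{(i)}/\sqrt{\beta}}\e^{-\beta|\hessEigval{i}{1}|s^2/2}\,\d s}.
\end{equation}
After rescaling, the denominator equals~$\sqrt{2\pi/(\beta|\hessEigval{i}{1}|)}\,\Phi(\sqrt{|\hessEigval{i}{1}|}\,\alpha^{(i)})$, which is precisely how the Gaussian cumulative distribution function will enter the final prefactor.

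The next step is to substitute~$u_\beta$ into the Rayleigh quotient and perform Laplace asymptotics. The numerator~$\frac1\beta\int|\nabla u_\beta|^2\e^{-\beta V}$ is supported in thin neighborhoods of the low-energy saddles, and Assumption~\ref{hyp:energy_condition} will ensure that no other saddle or boundary point contributes at the exponential scale~$\e^{-\beta\Vstar}$. At each~$z_i$ with~$i\in I_{\min}$, combining a Gaussian expansion in the~$(d-1)$ stable directions with the exact one-dimensional computation above yields a contribution proportional to~$\e^{-\beta V(z_i)}\beta^{-d/2}|\hessEigval{i}{1}|/\bigl(\sqrt{|\det\nabla^2 V(z_i)|}\,\Phi(\sqrt{|\hessEigval{i}{1}|}\,\alpha^{(i)})\bigr)$. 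Meanwhile the denominator~$\int u_\beta^2\e^{-\beta V}$ concentrates at~$z_0$ by Assumption~\ref{hyp:one_min} and admits a standard Laplace expansion~$(2\pi/\beta)^{d/2}\e^{-\beta V(z_0)}/\sqrt{\det\nabla^2 V(z_0)}$. Dividing and summing over~$i\in I_{\min}$ recovers the right-hand side of~\eqref{eq:eyring_kramers} as an upper bound, with relative error~$\bigo(\beta^{-1/2})$ from subleading Laplace terms.

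The hard part will be the matching lower bound. Since~$\lambda_{1,\beta}=1/\E_\nu[\tau]$ by the QSD property, a natural route is through capacity estimates in the style of~\cite{BGK05}: express the mean exit time via a Green's identity in terms of a capacity~$\mathrm{cap}(B(z_0,r),\partial\Omega_{\alpha,\beta})$, and invoke the Dirichlet principle, which reduces the lower bound to a minimization of the very quadratic form used for the upper bound. The main technical obstacle is to control the equilibrium potential near each saddle, uniformly in~$\beta$, in a regime where the boundary lies at distance~$\alpha^{(i)}/\sqrt{\beta}$ from the saddle: standard full-well asymptotics must then be replaced by their truncated counterparts, producing exactly the same~$\Phi$ factor and no better. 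Combined with Agmon-type exponential decay estimates controlling contributions from high-energy regions (where Assumption~\ref{hyp:energy_condition} plays its essential role of ruling out spurious low-energy boundary saddles), this should close the gap and yield the stated relative error~$\bigo(\beta^{-1/2})$.
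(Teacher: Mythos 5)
The paper does not prove Theorem~\ref{thm:eyring_kramers} here; it cites~\cite[Theorem 5]{BLS24}, whose proof builds a single precise quasimode~$\psi_\beta$ and then closes the estimate by projecting~$\psi_\beta$ onto~$\mathrm{Span}(u_1(\Omega_{\alpha,\beta}))$ and bounding the projection error via a resolvent estimate. Your upper bound is, in substance, the same quasimode: a smoothed indicator of~$\basin{z_0}\cap\{V<\Vstar\}$ glued to the truncated one-dimensional profile~$u_\beta(y_1)=\int_{y_1}^{\alpha^{(i)}/\sqrt\beta}\e^{-\beta|\hessEigval{i}{1}|s^2/2}\,\d s\big/\int_{-\infty}^{\alpha^{(i)}/\sqrt\beta}\e^{-\beta|\hessEigval{i}{1}|s^2/2}\,\d s$ along the unstable direction; this is exactly how the~$\Phi$ factor enters, and the ensuing Laplace expansions are handled correctly, including the role of Assumption~\ref{hyp:energy_condition}. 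That half of your argument matches the cited route.

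Your lower bound is where you diverge genuinely, and also where the sketch has a gap. The cited proof does not use capacities: it derives both directions from the same quasimode by a resolvent bound, in Helffer--Sj\"ostrand style. Your capacity route \`a la~\cite{BGK05} is a legitimate alternative in principle, but as written it is not closed. Recall~$\lambda_1(\Omega_{\alpha,\beta})=1/\E_\nu[\tau]$, so a lower bound on~$\lambda_1$ is an \emph{upper} bound on~$\E_\nu[\tau]$, which by the Green's-identity relation~$\E_\cdot[\tau]\sim\mu(\Omega)/\mathrm{cap}$ requires a \emph{lower} bound on the capacity. The Dirichlet principle, being an infimum, only furnishes \emph{upper} bounds on the capacity by plugging in trial functions; to lower-bound it you need a dual variational principle (Thomson/Berman--Konsowa flows) or a domain-restriction argument, and you would then have to re-derive the truncated-Gaussian~$\Phi$ factor for the flow construction in the~$1/\sqrt\beta$-boundary regime. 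Separately, the BGK-type identity controls mean exit times from a point (or from the harmonic-measure distribution on a small ball), not directly~$\E_\nu[\tau]$ starting from the QSD; relating the two with relative error~$\bigo(\beta^{-1/2})$ is an additional step you do not address. None of this is fatal to the strategy, but the ``invoke the Dirichlet principle'' step as stated would give a bound in the wrong direction, and the sketch should be reworked around the dual principle or around the quasimode-projection argument the paper actually cites.
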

    For a full proof of this result, see the proof of~\cite[Theorem 5]{BLS24}. It relies on the construction of a precise approximation~$\psi_\beta$ of the principal Dirichlet eigenmode~$u_1(\Omega_\beta)$.
    
    Roughly speaking,~$\psi_\beta$ is constructed by combining a smoothed indicator of the set~$\basin{z_0}\cap \{V<\Vstar\}$ with, near each low-energy saddle points~$(z_i)_{i\in I_{\min}}$, a finer construction based on formal eigenmodes for the linearization of the dynamics~\eqref{eq:overdamped_langevin_sc}, which corresponds to an unstable Ornstein--Uhlenbeck process.
    Due to the geometric structure of the domain~$\Omega_{\alpha,\beta}$ given by Assumption~\ref{hyp:locally_flat}, one can separate variables in the unstable direction, leading to an explicit expression for these formal eigenmodes in terms of the unstable coordinate
    $$\xi_\beta^{(i)}(x) = \sqrt{\beta}(x-z_i)^\top \hessEigvec{i}{1}.$$
    The quasimode is then projected onto the principal eigenspace~$\mathrm{Span}(u_1(\Omega_{\alpha,\beta}))$, yielding~$\lambda_{1,\beta}(\Omega_{\alpha,\beta})$ as a Rayleigh quotient associated with the projected quasimode. Quantitative estimates based on a modification of Laplace's method and a resolvent estimate then allows to bound the projection error, which is sufficiently small to give sharp estimates on~$\lambda_{1,\beta}(\Omega_{\alpha,\beta})$.

    \paragraph{Application to the separation of timescales.}
    We briefly discuss the implications of Theorems~\ref{thm:harmonic} and~\ref{thm:eyring_kramers} for the problem of maximizing the separation of timescales~\eqref{eq:separation_of_timescales}. We refer to~\cite[Section 3.3]{BLS24} for additional details.

    The first point of interest is that Theorem~\ref{thm:harmonic} gives a quantitative estimate of the spectral gap~$\lambda_2(\Omega_{\alpha,\beta})-\lambda_1(\Omega_{\alpha,\beta})$ for large~$\beta$, and, in view of~\eqref{eq:decorrelation_rate}, of the asymptotic rate of convergence to the QSD.
    This estimate is solely a function of the asymptotic shape parameter~$\alpha$, and of the eigenvalues of the Hessian~$\nabla^2 V$ of the potential at some critical points.
    As such, it can be used to choose decorrelation times in Algorithm~\ref{alg:parrep} for highly metastable systems. Explicitly, the second harmonic eigenvalue is given by:
    \begin{equation}
        \label{eq:lambda_2}
        \lambda_2^{\mathrm{H}}=\min\left\{\underset{1\leq j\leq d}{\min}\,\hessEigval{0}{j},\underset{1\leq i<N}{\min}\,\left[\abs{\hessEigval{i}{1}}\mu\left(\alpha^{(i)}\sqrt{\hessEigval{i}{1}/2}\right)-\frac{\hessEigval{i}{1}}2 + \sum_{2\leq j\leq d}\abs{\hessEigval{i}{j}}\1_{\hessEigval{i}{j}<0}\right]\right\},
    \end{equation}
    where we set~$\mu(\theta):=\mu_{1,\theta}$ to be the principal eigenvalue of the Dirichlet harmonic oscillator on~$(-\infty,\theta)$. The limiting eigenvalue~$\lambda_2^{\mathrm{H}}$ is positive under Assumption~\ref{hyp:one_min}.
    Interestingly, this estimate is not always in agreement with standard numerical practice, which relies on a harmonic approximation of the energy basin at the local minimum to set the decorrelation time~(see for instance~\cite{PUV15}). This approximation neglects the possible effect of higher-order saddle points. It can be shown to fail when the Hessian~$\nabla^2 V$ has sufficiently soft modes around such critical points, and these are low enough in energy to be visited during decorrelation to the QSD.

    Theorem~\ref{thm:eyring_kramers} provides a quantitative estimate of the exit time starting from the QSD as a function of~$\alpha$. Combined with the previous estimate, we therefore obtain an estimate for the separation of timescales~\eqref{eq:separation_of_timescales} as a function of~$\alpha$ and~$\beta$. In view of~\eqref{eq:harmonic_l2} and~\eqref{eq:eyring_kramers}, one finds that the asymptotic separation of timescales, for the class of domains satisfying Assumptions~\ref{hyp:locally_flat}--\ref{hyp:energy_condition}, is of order~$\e^{\beta(\Vstar-V(z_0))}$, with a~$\beta$-independent prefactor~$C(\alpha)$ (neglecting error terms).
    Therefore, the asymptotic shape optimization problem~\eqref{eq:asymptotic_opt_problem} for the separation of timescales can only hope to improve on the prefactor.
    
    We show in~\cite[Section 3.3]{BLS24} that there exist asymptotically optimal domains, and indeed many in general. Qualitatively, these optimal domains are found to have the following properties. Firstly, they spill out beyond low-energy separating saddle points, on geometric scales of the order~$1/\sqrt\beta$ in the unstable direction. This means that one should wait for the system to reach an energy level lower than~$\Vstar$ (in the next basin of attraction) by a multiple of the characteristic thermal fluctuation~$\beta$ before declaring that a transition has occurred. The value of the multiplicative constant depends on the geometry of the energy landscape, but can be computed numerically.
    Secondly, one can show that they can never absorb other local energy minima, in the sense that the asymptotic separation of timescales necessarily decreases when continuously growing the domain so as to include any other minima far (i.e. at distances~$\gg1/\sqrt\beta$) inside the domain. This gives some theoretical indication that there indeed exist local shape optima surrounding basins of attraction of local minima for the steepest descent dynamics.

    We present validations of Theorems~\ref{thm:harmonic} and~\ref{thm:eyring_kramers} in Section~\ref{subsec:semiclassical_val} below, and connect the asymptotic problem~\eqref{eq:asymptotic_opt_problem} to a shape-optimization problem in one spatial dimension.
\section{Numerical experiments}
\label{sec:numerical}
In this section, we present various numerical experiments to illustrate and validate the results and methodology presented in Sections~\ref{sec:practical_opt}, \ref{sec:coarse_graining}, and~\ref{sec:semiclassic}. In Section~\ref{subsec:num_coarse_graining}, we verify, on a model two-dimensional situation that, given a suitable choice of CV, the coarse-grained Dirichlet eigenvalues provide a good approximation for the true eigenvalues of the Dirichlet generator.
In Section~\ref{subsec:semiclassical_val}, we show how the results of Section~\ref{sec:semiclassic} can be used to approximate the shape optimization problem in the semiclassical limit, and verify in particulars the spectral asymptotics given by Theorems~\ref{thm:eyring_kramers} and~\ref{thm:harmonic}.
In Section~\ref{subsec:diala}, we finally apply the coarse-grained shape optimization methodology to a realistic molecular system, and estimate the gain in the separation of timescales in the practical setting of underdamped Langevin dynamics.

The code used to generate the numerical results of this paper are publicly available in the paper repository~\cite{github}. Data generated from the various simulations and optimization runs can moreover be obtained from the repository~\cite{data}.

\subsection{Validation of the coarse-graining approximation}
\label{subsec:num_coarse_graining}
In this section, we demonstrate numerically that, for an appropriate choice of CV, the coarse-grained Dirichlet eigenvalues defined in Section~\ref{sec:coarse_graining} provide a good approximation for the lowest eigenvalues of the Dirichlet generators. As such, they can be used as a proxy to optimize the separation of timescales.
\paragraph{Two-dimensional system and collective variables.}
We consider, for a parameter~$\varepsilon>0$, the following family of potential functions defined on the configurational space~$\R^2\setminus\{0\}$:
\begin{equation}
    \label{eq:two_d_pot}
    V_{\varepsilon}(x,y) = (x^2-1)^2 + \frac{1}{\varepsilon}(x^2+y^2-1)^2 + \frac1{\sqrt{x^2+y^2}}.
\end{equation}
The potential~$V_{\varepsilon}$ is the sum of a quartic double-well potential in the variable~$x$, and of a harmonic energy in the squared radial coordinate $r^2=x^2+y^2$, whose sharpness is modulated by~$\varepsilon$, confining the dynamics to the unit circle. The additional repulsion term~$1/r$ ensures that the effective diffusion coefficient~$a_{\xi_1}$ is well-defined, as discussed below.
The potential is depicted in Figure~\ref{fig:pot2d} for the three values of~$\varepsilon$ we consider in this experiment.
\begin{figure}
    \begin{subfigure}{\linewidth}
    \center
        \includegraphics[width=0.49\linewidth]{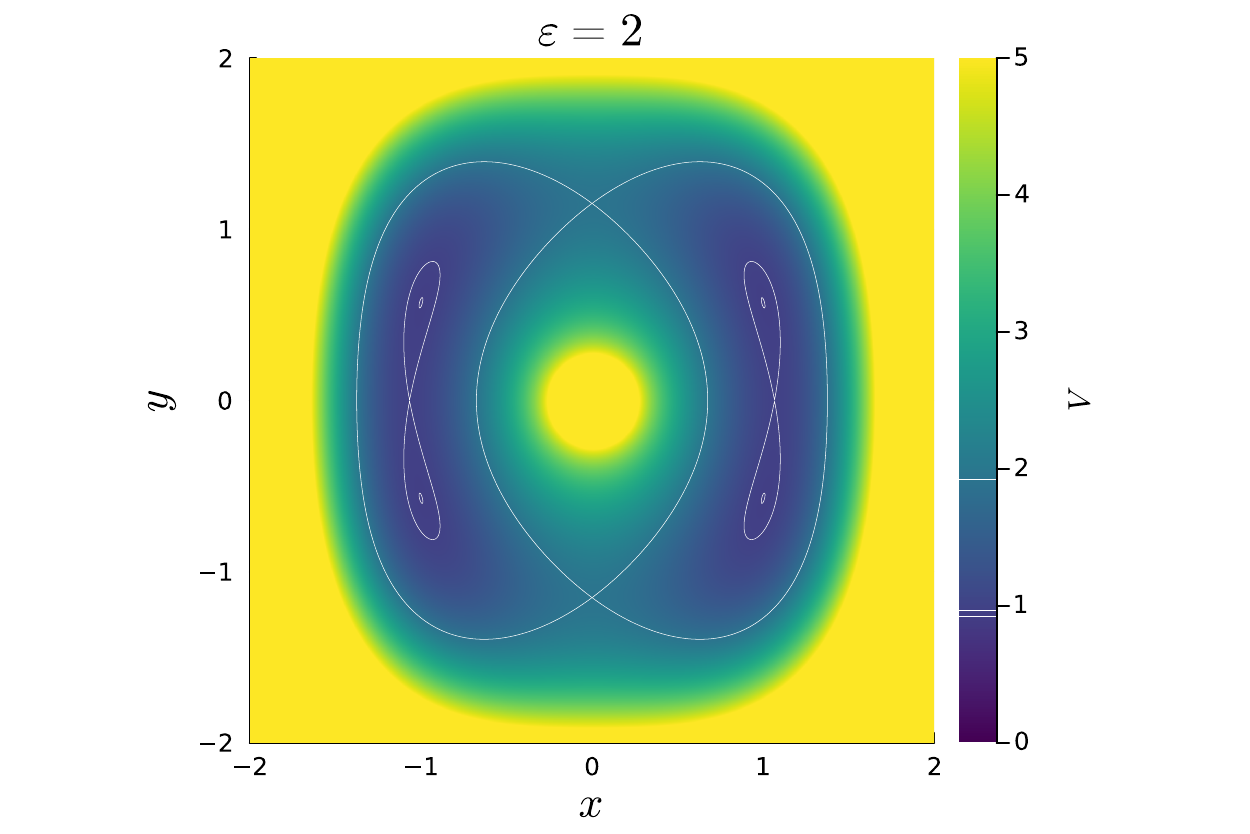}
        \includegraphics[width=0.49\linewidth]{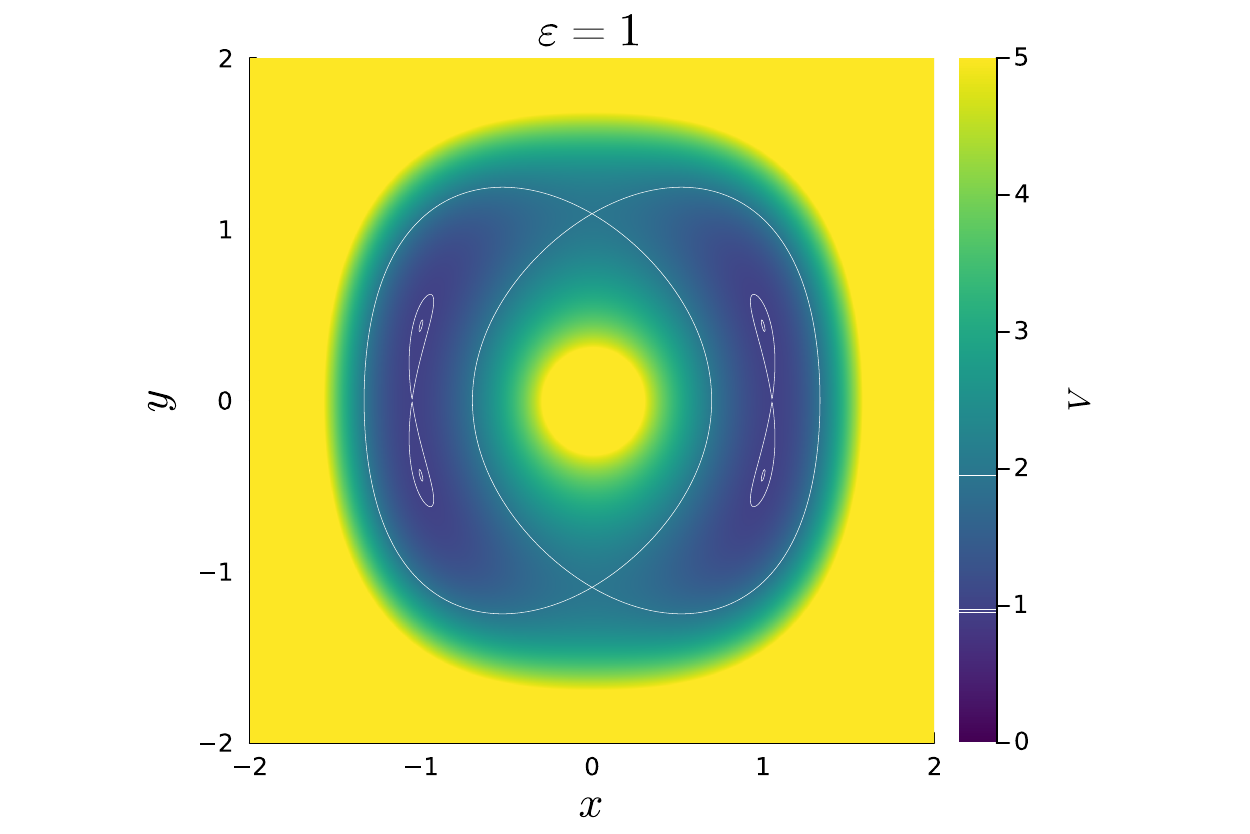}
    \end{subfigure}
    \begin{subfigure}{\linewidth}
    \center
        \includegraphics[width=0.49\linewidth]{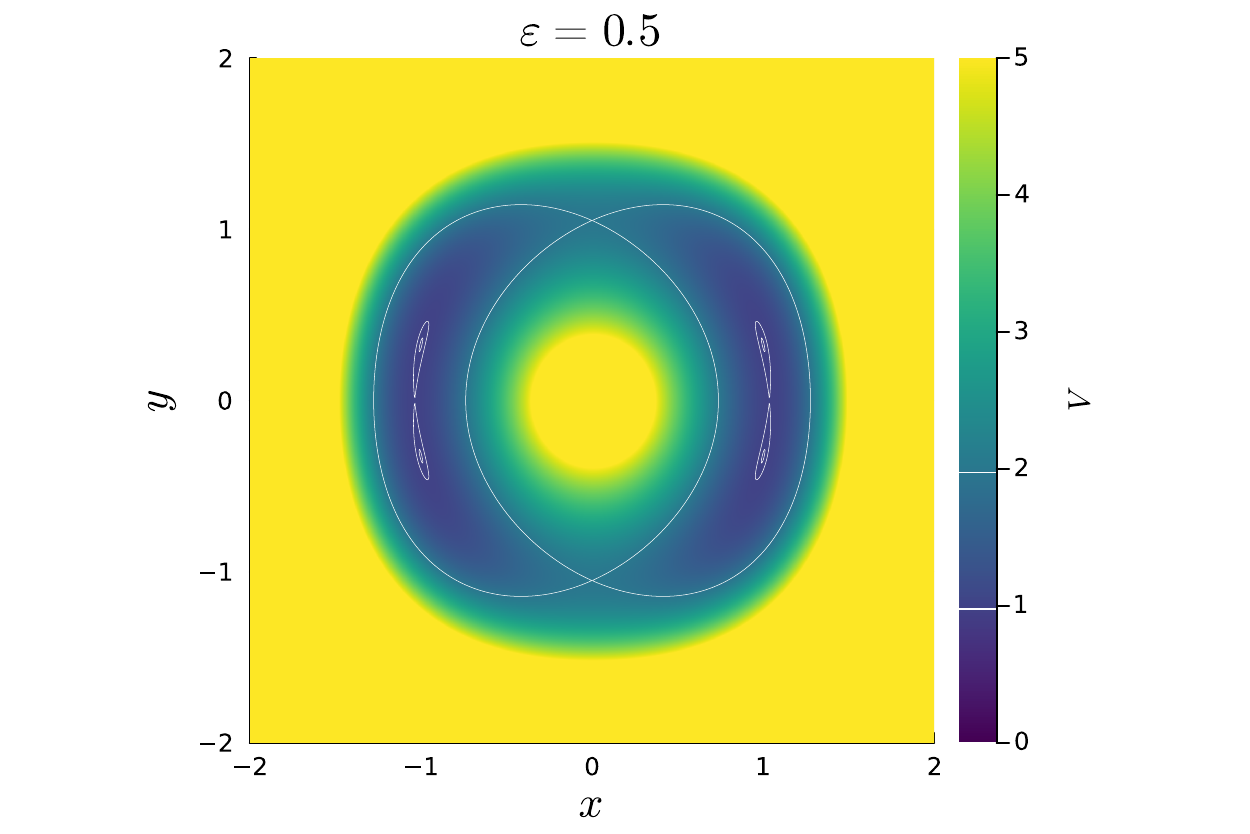}
    \end{subfigure}
\caption{Two-dimensional potentials~\eqref{eq:two_d_pot}, for decreasing values of the parameter~$\varepsilon$. In each case, the potential has a local minimum in each quadrant of the plane, and two saddles on each axis. The saddles on the~$y$-axis separate two deep energy basins, while the saddle points on the~$x$-axis form shallow energy barriers inside these basins. Some energy level sets, in thin white lines, highlight the well structure.} 
\label{fig:pot2d}
\end{figure}

We compare  the two following CVs:
\begin{equation}
    \label{eq:reaction_coordinates}
    \xi_1(x,y) = \frac 2\pi \atan\left(\frac{y}{x+\sqrt{x^2+y^2}}\right),\qquad \xi_2(x,y)=x.
\end{equation}
The variable~$\xi_1$ is equal to~$\theta/\pi$, where~$(r,\theta)$ is the image of~$(x,y)$ via a polar change of variables. In particular, the CV~$\xi_1$ takes values in the compact interval~$[-1,1]$, while~$\xi_2$ is unbounded. In the limit~$\varepsilon\to 0$, we expect the effective dynamics through~$\xi_1$ to provide a good one-dimensional description of the original dynamics, and~$\xi_2$, while able to resolve the main energy barrier, is blind to the shape of the energy minima (e.g. the shallow energy barriers separating the two rightmost local minima), leading to a poor model for the local decorrelation inside the rightmost well. 

For each of these functions, the value of the free-energy and diffusion coefficient, given respectively by~\eqref{eq:free_energy} and~\eqref{eq:diffusion_tensor}, are computed at values of~$z$ corresponding to~$N=1000$ points on a regular grid (on the interval~$[-2,2]$ for~$\xi_2$),  by numerical quadrature (using the Gauss--Kronrod rule as implemented in the Julia package Cubature.jl) on the manifold~$\Sigma_z$, which for both our choices of CV~\eqref{eq:reaction_coordinates} have a simple linear parametrization. The resulting free-energy and diffusion profiles are presented in Figure~\ref{fig:fd_plot}.

\begin{figure}
    \centering
    \includegraphics[width=0.49\textwidth]{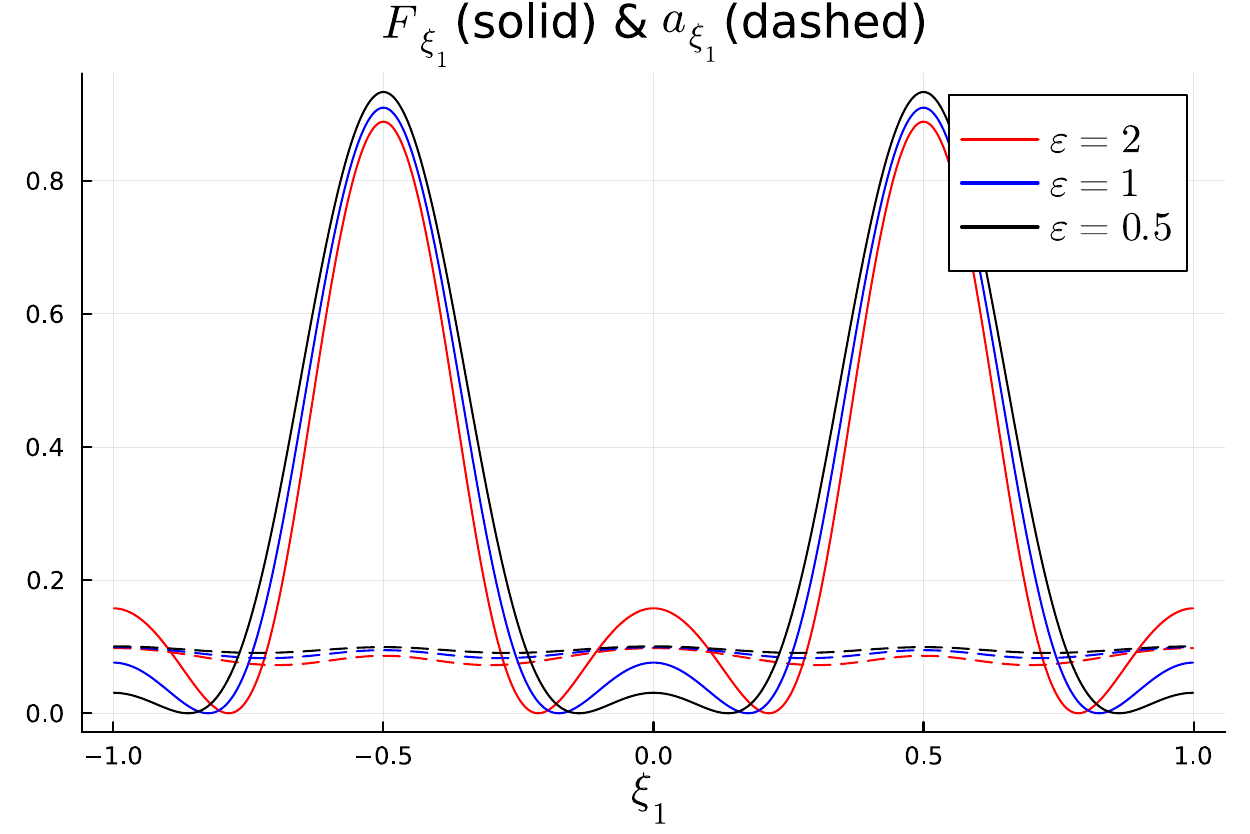}
    \includegraphics[width=0.49\textwidth]{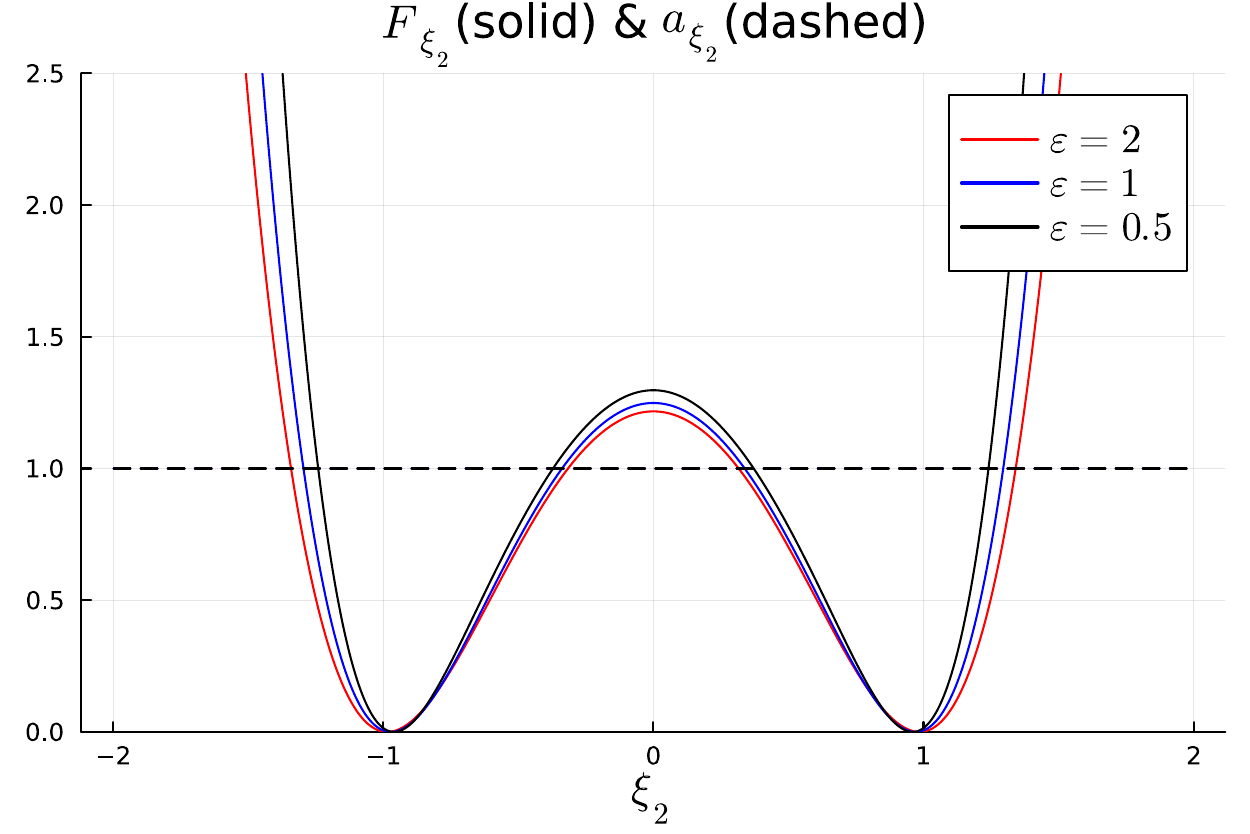}
    \caption{Free energy profiles and effective diffusion coefficients for the CVs~$\xi_1$ and~$\xi_2$ defined in~\eqref{eq:reaction_coordinates}, and for the potential~\eqref{eq:two_d_pot}. Various values of the parameter~$\varepsilon$ are color-coded. Free energy profiles are depicted in solid lines and effective diffusion coefficients are plotted in dashed lines.}
    \label{fig:fd_plot}
\end{figure}

\paragraph{Computation of the coarse-grained Dirichlet eigenvalues.}
For~$\xi\in\left\{\xi_1,\xi_2\right\}$, we discretize the effective generator~$\cL_{\beta}^\xi$ as the generator of a reversible jump process on a regular grid~$(z_i)_{i\in \mathbb L_N}$ in collective variable space, where~$\mathbb L_N$ is either a periodic lattice~$\mathbb{L}_N=\mathbb Z/N\mathbb Z$ if~$\xi=\xi_1$ or~$\mathbb{L}_N=\left\{0,\dots,N-1\right\}$ if~$\xi=\xi_2$.
In both cases, we set~$N=1000$.
The grid points are defined by~$z_i=\xi_{\max}\left(\frac{2i+1}{N}-1\right)$, where~$\xi_{\max}=1$ for~$\xi=\xi_1$, and~$2$ for~$\xi=\xi_2$.
The jump rates are only positive for nearest neighbors:
\begin{equation}
    \label{eq:lattice_jump_generator}
    \cL_{N,\beta,ij}^\xi = \left(\beta(2\xi_{\max}/N)^2\right)^{-1}\e^{-\frac{\beta}2\left(F_{j,\xi}-F_{i,\xi}\right)}\left(\frac{a_{i,\xi}+a_{j,\xi}}{2}\right),\qquad \forall\,|i-j|_{\mathbb{L}_N}=1,
\end{equation}
where~$F_{i,\xi} = F_\xi(z_i)$~$a_{i,\xi}=a_\xi(z_i)$ for any~$i\in \mathbb Z^m$, and~$|\cdot|_{\mathbb L_N}$ is the nearest-neighbor graph metric on~$\mathbb L_N$.
A simple computation shows that the jump process~\eqref{eq:lattice_jump_generator} is reversible for the on-site Boltzmann distribution, defined by
\begin{equation}\mu_{N}\left(\{z_i\}\right) =\frac{\e^{-\beta F_{i,\xi}}}{\sum_{j\in\mathbb L}\e^{-\beta F_{j,\xi}}},\qquad \forall\,i\in\mathbb L.\end{equation}

The factor~$\left(\beta(2\xi_{\max}/N)^2\right)^{-1}$ ensures that~\eqref{eq:lattice_jump_generator} is a consistent approximation of the generator associated with the SDE~\eqref{eq:effective_dynamics}.
Given an open domain~$\Omega\subset[-\xi_{\max},\xi_{\max}]$, the effective eigenvalues are approximated by computing the eigenvalues of the generator for the process killed outside~$\Omega$:
\begin{equation}
    \label{eq:submarkovian_generator}
    \left(\cL_{N,\beta,ij}^\xi\right)_{i,j\in I_{\Omega}},\qquad I_{\Omega}=\left\{i\in\mathbb L:\,z_i\in\Omega\right\}.
\end{equation}
The eigenvalues of the sparse matrix~\eqref{eq:submarkovian_generator} were numerically computed using the Julia interface to the Arpack module~\cite{LSY98}.
\paragraph{Validation of the approximation.}
In Figure~\ref{fig:coarse_grained_val}, we compare the approximation obtained from~\eqref{eq:submarkovian_generator} with Dirichlet eigenvalues of the generator~$\cL$ approximated using Algorithm~\ref{alg:rayleigh_ritz} in FreeFem++~\cite{freefem}.
The FreeFem++ implementation, including the parameters we used for geometry parametrization and meshing (which are the default parameters in the provided code), are available on GitHub~\cite{github}.
We compute the values of the first four Dirichlet eigenvalues for domains of the form~$\Omega(b)=\left(a,b\right)$, for a fixed value of~$a$ and for a range of values of~$b$, and for several values of the parameter~$\varepsilon$ (see Figure~\ref{fig:pot2d}).
We compare these eigenvalues to those of the effective generator, using the jump-process approximation~\eqref{eq:lattice_jump_generator}. We observe that, for~$\xi_1$, even for relatively large values of~$\varepsilon$, the effective eigenvalues give a good approximation to the true eigenvalues of the generator, across a wide range of boundary conditions.
The error appears to decrease for low values of~$\varepsilon$, as expected. However, the effective eigenvalues for~$\xi_2$ significantly depart from the true eigenvalues. This is especially true for the higher eigenvalues, confirming that the effective diffusion through~$\xi_2$ is unable to correctly model the decorrelation inside the energy wells.

These results suggest that~$\xi_1$ may be used for the purpose of shape optimization of the separation of timescales~$N^*$ defined in~\eqref{eq:separation_of_timescales}.
In Figure~\ref{fig:opt_2D}, we compare the (locally) optimal domain of the effective generator~\eqref{eq:lattice_jump_generator} with the (locally) optimal domain of the true generator in the class of domains defined in terms of the CV~$\xi_1$, and for the value~$\varepsilon=0.5$.
These optima were found by a full grid-search over the set of domains of the form~$(a,b)$ for~$-\xi_{\mathrm{max}}<a<b<\xi_{\mathrm{max}}$ in the case of the effective generator, and an iteratively refined grid search over domains of the form~$\xi_1^{-1}(a,b)$ for the case of the FEM generator, for the same range of~$a$ and~$b$.
The iterative refinement procedure consisted in searching for optimal domains for values~$(a,b)$ on a regularly spaced~$6\times 6$ grid, and iterating this procedure, restricting the search at the next iteration to the cell of the maximizer and its nearest-neighbors. The procedure stopped once a target grid resolution of~$\delta\xi=0.01$ was reached. 
We find the result of both these optimization procedures to give almost indistinguishable optimal values of~$a$ and~$b$, showcasing the usefulness of the effective generator, whose Dirichlet eigenvalues are significantly cheaper to compute.
\begin{figure}
    \centering
    \begin{subfigure}{\linewidth}
    \includegraphics[width=0.49\textwidth]{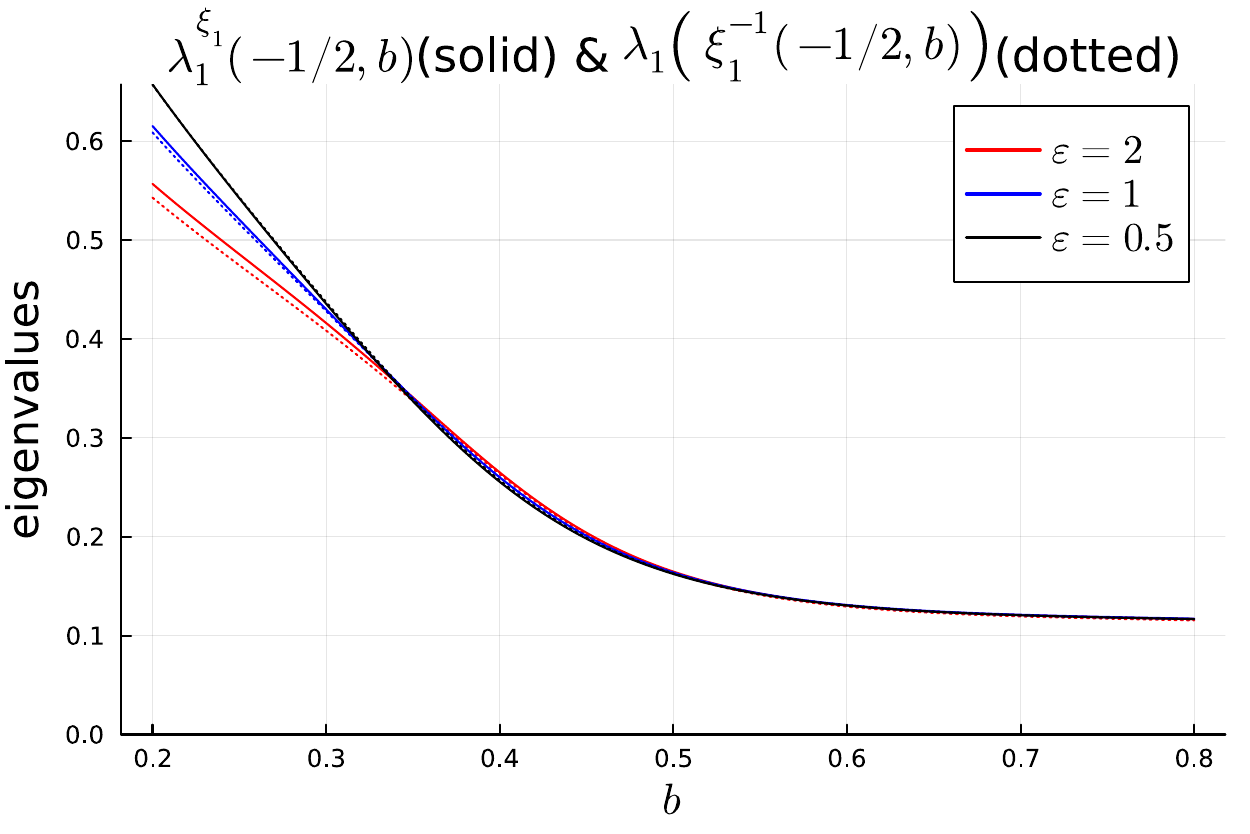}
    \includegraphics[width=0.49\textwidth]{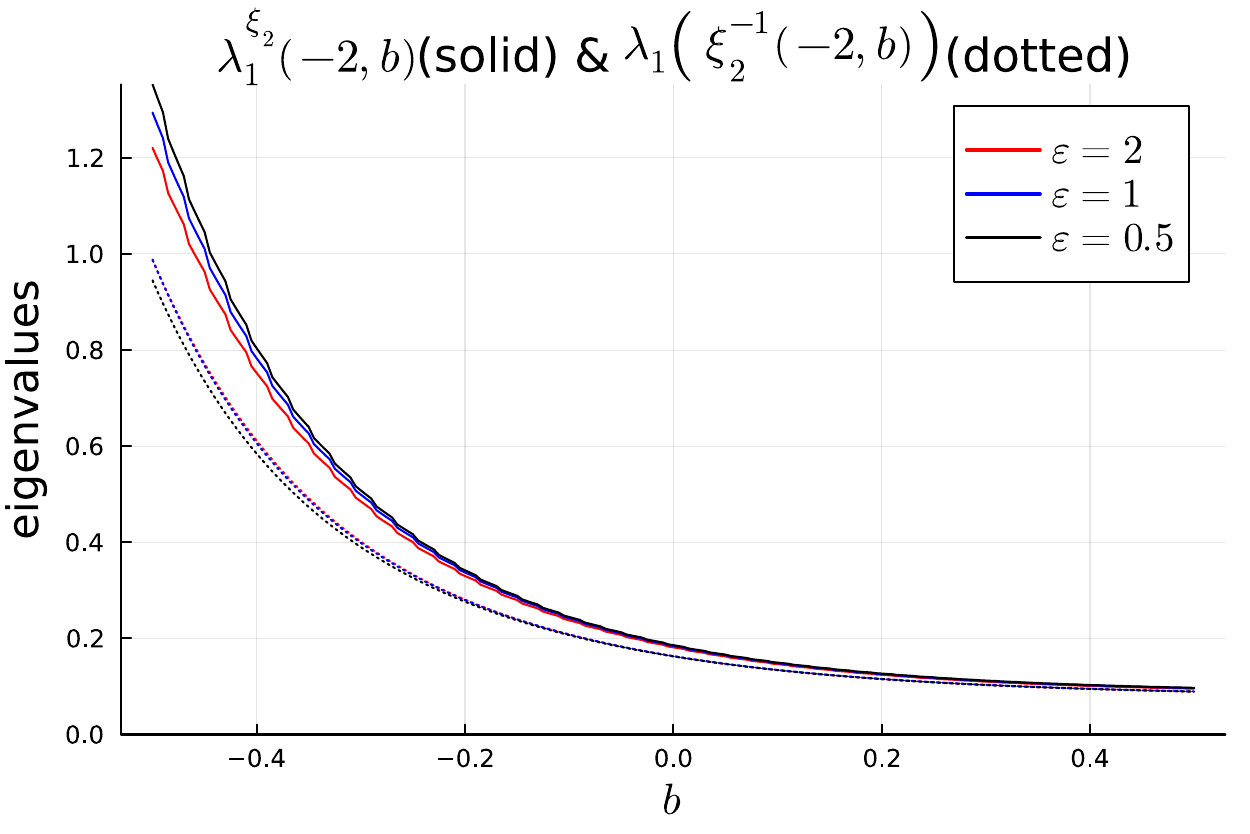}
    \caption{Approximation of the first Dirichlet eigenvalue.}
    \end{subfigure}
\begin{subfigure}{\linewidth}
    \includegraphics[width=0.49\textwidth]{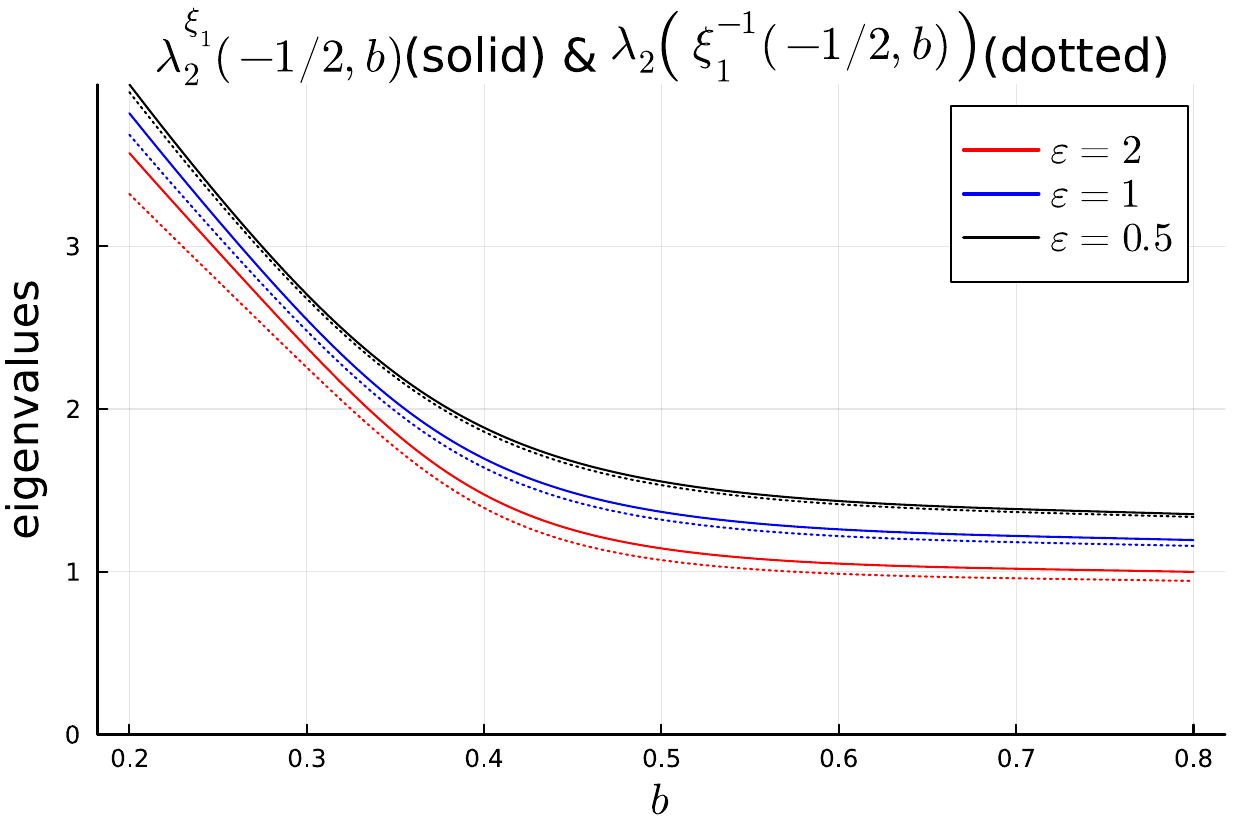}
    \includegraphics[width=0.49\textwidth]{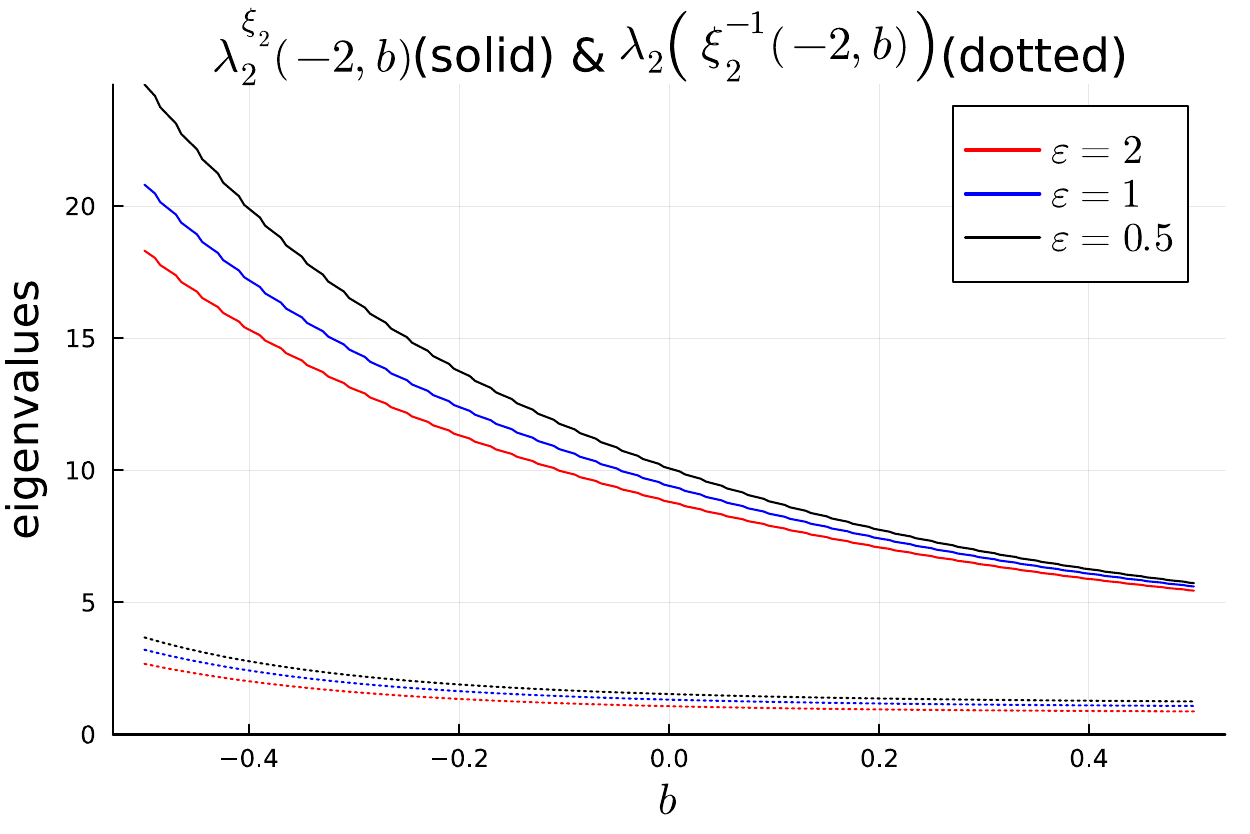}
    \caption{Approximation of the second Dirichlet eigenvalue.}
\end{subfigure}
\begin{subfigure}{\linewidth}
    \includegraphics[width=0.49\textwidth]{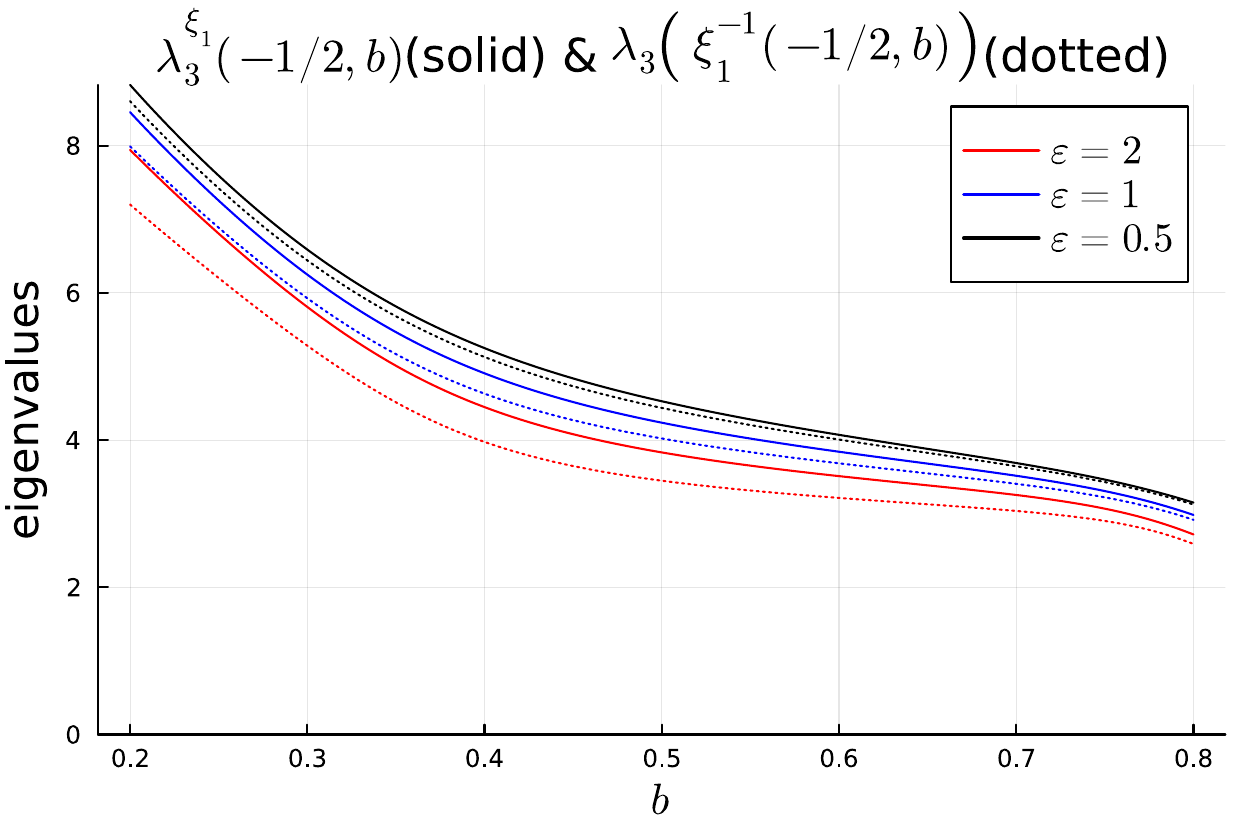}
    \includegraphics[width=0.49\textwidth]{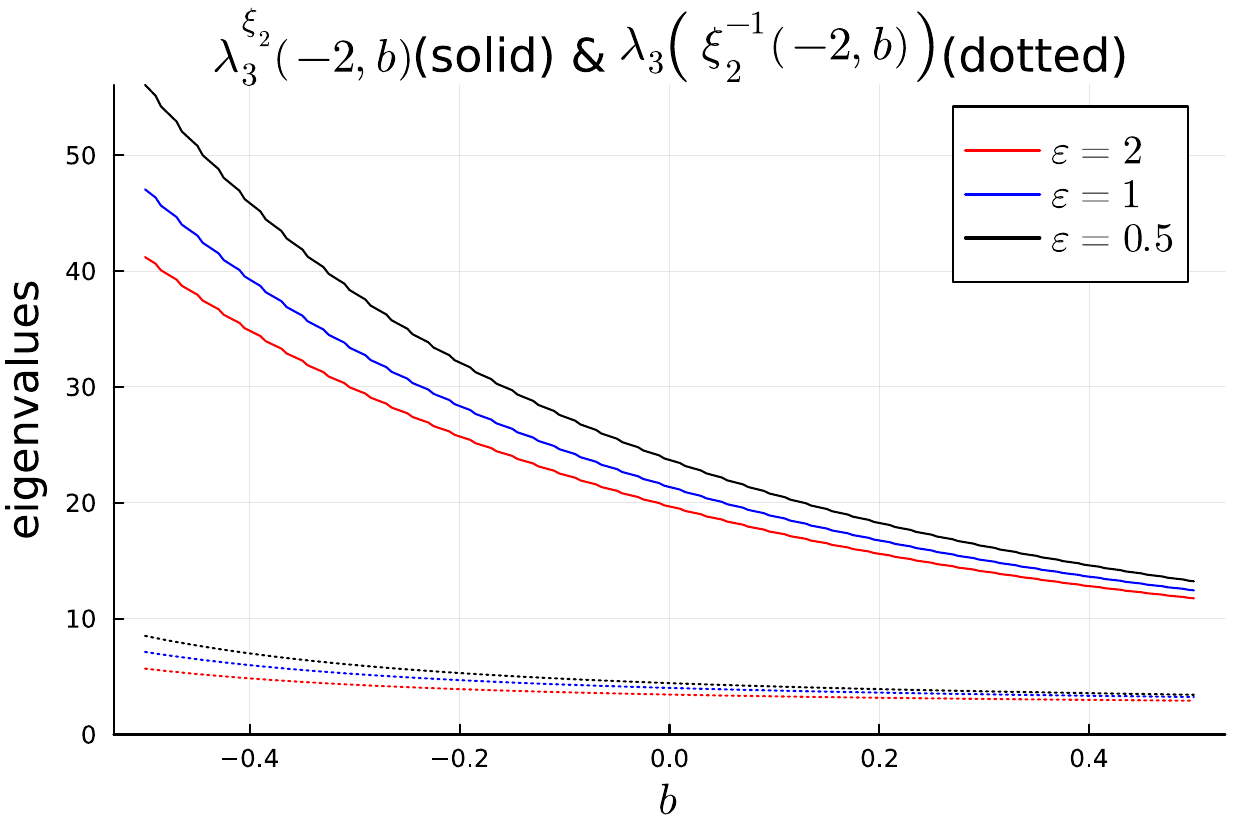}
    \caption{Approximation of the third Dirichlet eigenvalue.}
\end{subfigure}
\begin{subfigure}{\linewidth}
    \includegraphics[width=0.49\textwidth]{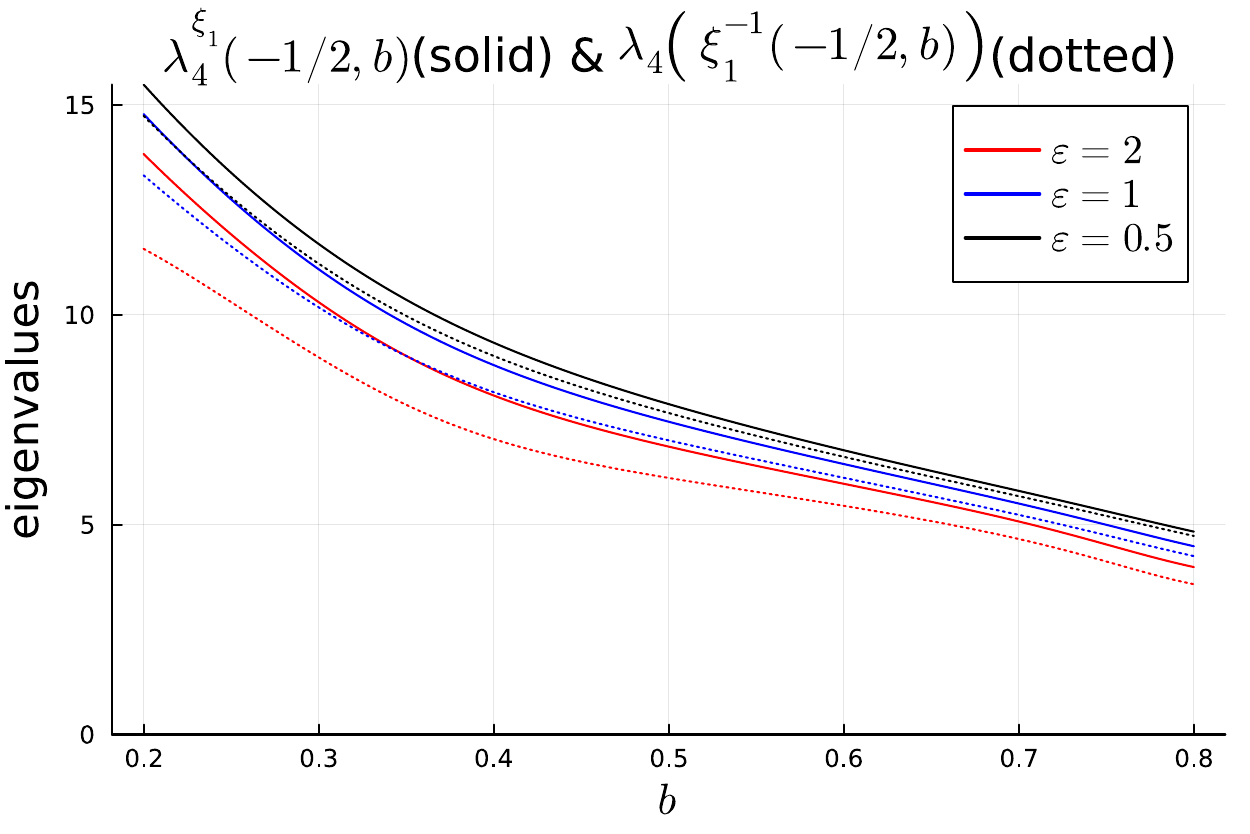}
    \includegraphics[width=0.49\textwidth]{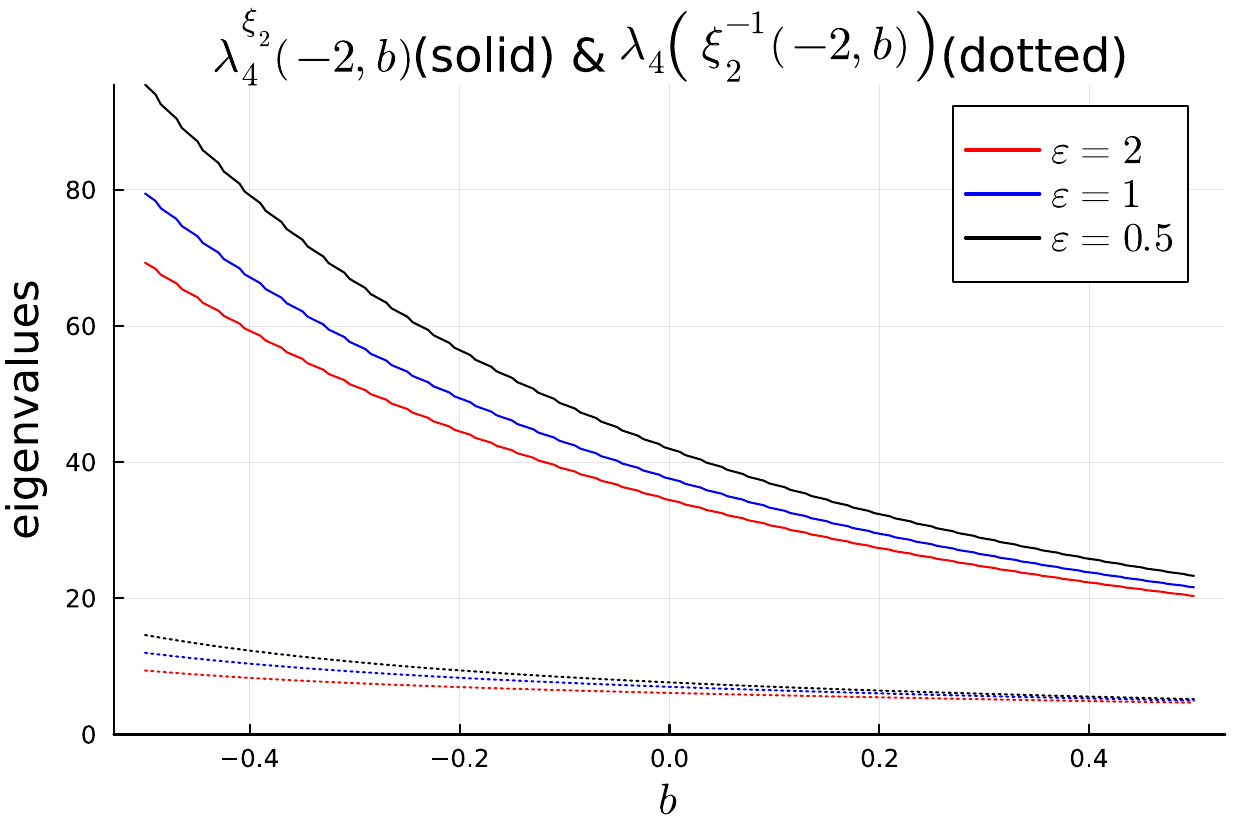}
    \caption{Approximation of the fourth Dirichlet eigenvalue.}
\end{subfigure}
\caption{Domain-dependent eigenvalues (dotted lines) and their coarse-grained approximations (dashed lines), for parametric families of domains defined in CV space.}
\label{fig:coarse_grained_val}
\end{figure}

\begin{figure}
    \centering
    \includegraphics[width=0.8\textwidth]{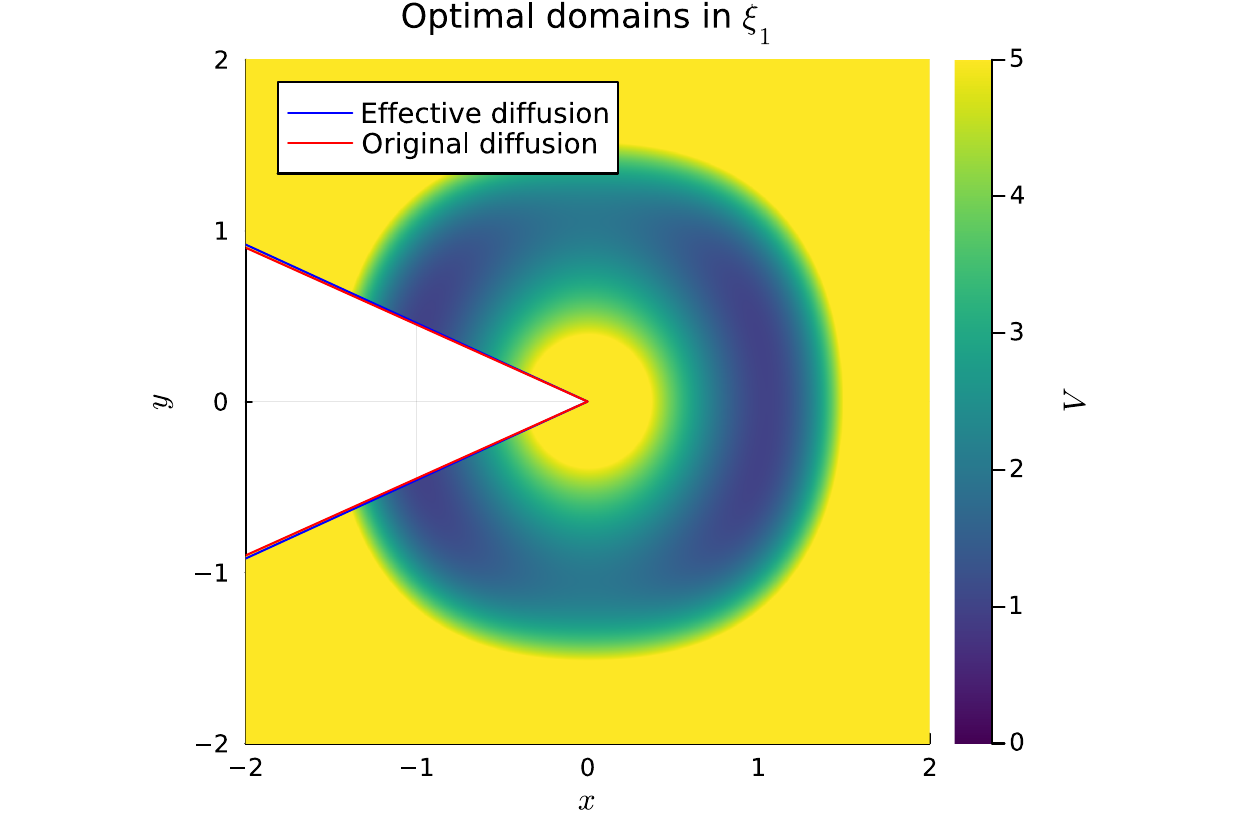}
    \caption{Optimal domain for the effective dynamics, and optimal domain for the original generator, in the class of domains defined in terms of~$\xi_1$, for the value of the parameter $\varepsilon=0.5$.  Points outside both of these domains lie in the white region. The optimized domains are almost indistinguishable.}
    \label{fig:opt_2D}
\end{figure}

\subsection{Validation of the semiclassical asymptotics}
\label{subsec:semiclassical_val}
In this section, we give a numerical verification of the semiclassical results obtained in~\cite{BLS24} (which corresponds to Theorems~\ref{thm:harmonic} and~\ref{thm:eyring_kramers} here), and assess their usefulness for the state definition problem, in a model one-dimensional situation.
\paragraph{Definition of the toy system.}
The potential~$V$ is defined by
\begin{equation}
    \label{eq:one_d_pot}
    V(x) = \epsilon\left(1-\cos\frac{x}{\sigma} + \exp\left(-\frac12\left(\frac{x}{\sigma}-1\right)^2\right) + \ell x\right),
\end{equation}
where~$(\epsilon,\sigma) = (0.7,1/4)$ are energy and scale parameters, and~$\ell \approx 0.01293$ is a constant factor chosen so that~$V$ has two index-one saddle points at~$z_1 \approx -0.7824$ and~$z_2\approx 0.8286$, satisfying~$V(z_1)=V(z_2)=\Vstar$, so that~$I_{\min}=\{1,2\}$.
Additionally~$V$ admits a local minimum at~$z_0\approx 0.1166$. The corresponding eigenvalues of the Hessian are given by~$(\nu^{(0)}_1,\nu^{(1)}_1,\nu^{(2)}_1)\approx (16.9532,-11.2348,-14.3845)$.
We consider, for a parameter~$\alpha = (\epsLimit{1},\epsLimit{2})\in \R^2$, temperature-dependent domains defined by
\begin{equation}
    \label{eq:temperature_dependent_domains_1D}
    \Omega_{\alpha,\beta} = \left(z_1-\frac{\epsLimit{1}}{\sqrt\beta},z_2+\frac{\epsLimit{2}}{\sqrt\beta}\right),
\end{equation}
which satisfy the assumptions of Theorems~\ref{thm:harmonic} and~\ref{thm:eyring_kramers}. The potential and domains (for a fixed value of~$\beta$) are depicted in Figure~\ref{fig:pot}.
\begin{figure}
    \includegraphics[width=0.8\textwidth]{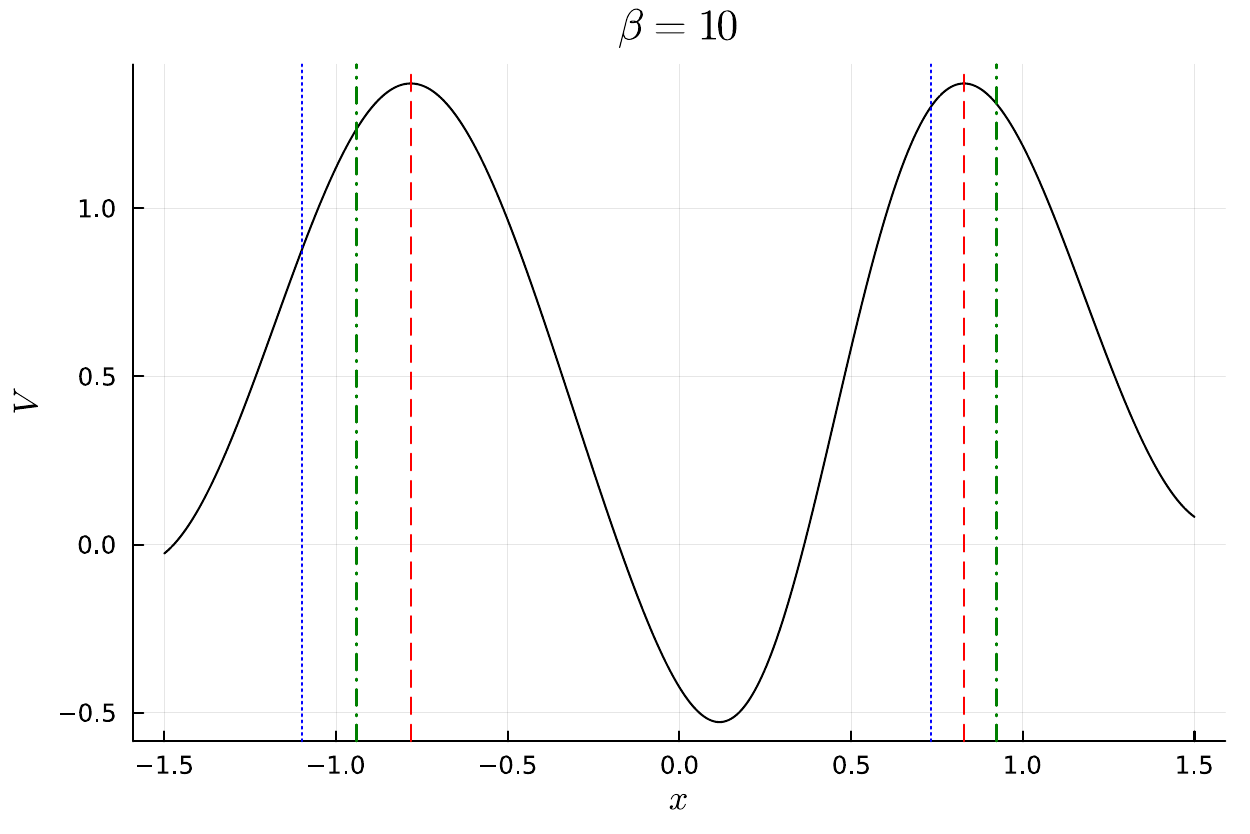}
    \caption{Potential landscape and domains~$\Omega_{\alpha,\beta}$ used in Figures~\ref{fig:ek} and~\ref{fig:harm}, as defined by~\eqref{eq:temperature_dependent_domains_1D}, at the fixed value of the temperature parameter~$\beta=10$. The color coding is the same as that used in Figure~\ref{fig:semiclassic}, i.e.~$\alpha=(0.5,0.3)$ in green,~$\alpha=(1.0,-0.3)$ in blue and~$\alpha=(0.0,0.0)$ in red, which correponds to the basin of attraction~$\basin{z_0}$.}
    \label{fig:pot}
\end{figure}

We aim to maximize
$$\frac{\lambda_{2,\beta}(\Omega_{\alpha,\beta})-\lambda_{1,\beta}(\Omega_{\alpha,\beta})}{\lambda_{1,\beta}(\Omega_{\alpha,\beta})}$$ with respect to~$\alpha$.
Fixing~$\beta$, it is equivalent to maximize the quantity
\begin{equation}
    \label{eq:reduced_obj}
    J_\beta(\alpha) = \frac{\lambda_{2,\beta}(\Omega_{\alpha,\beta})\lambda_{1,\beta}(\basin{z_0})}{\lambda_{1,\beta}(\Omega_{\alpha,\beta})\lambda_{2,\beta}(\basin{z_0})},
\end{equation}
where we recall~$\basin{z_0} = \Omega_{0,\beta}$ is the basin of attraction for the local minimum~$z_0$, see~\eqref{eq:basin}. The interest of considering this objective~$J_\beta$ is that, according to Theorems~\ref{thm:harmonic} and~\ref{thm:eyring_kramers},~$J_\beta~\xrightarrow{\beta\to+\infty}J_\infty$ pointwise, where
\begin{equation}
    \label{eq:lim_obj}
        J_\infty(\alpha) = \frac{\lambda_{2,\alpha}^{\mathrm{H}}C(0)}{\lambda_{2,0}^{\mathrm{H}}C(\alpha)},\qquad C(\alpha)=\sum_{i\in I_{\min}}\frac{|\hessEigval{i}{1}|}{2\pi\Phi\left(\sqrt{|\hessEigval{i}{1}|}\alpha_i\right)}\sqrt{\frac{\det \nabla^2 V(z_0)}{\left|\det \nabla ^2 V(z_i)\right|}},
\end{equation}
where~$C(\alpha)$ is the pre-exponential factor in~\eqref{eq:eyring_kramers}. Substituting the expression~\eqref{eq:lambda_2} in~\eqref{eq:lim_obj}, we find explicitly:
\begin{equation}
    \label{eq:lim_obj_explicit}
    J_\infty(\alpha)=\resizebox{.9\hsize}{!}{$2\displaystyle\frac{\min\left\{\hessEigval{0}{1},|\hessEigval{1}{1}|\left(\mu\left(\sqrt{|\hessEigval{1}{1}|/2}\epsLimit{1}\right)+\displaystyle\frac12\right),|\hessEigval{2}{1}|\left(\mu\left(\sqrt{|\hessEigval{2}{1}|/2}\epsLimit{2}\right)+\displaystyle\frac12\right)\right\}\left(\sqrt{|\hessEigval{1}{1}|}+\sqrt{|\hessEigval{2}{1}|}\right)}{\min\left\{\hessEigval{0}{1},2|\hessEigval{1}{1}|,2|\hessEigval{2}{1}|\right\}\left(\displaystyle\frac{\sqrt{|\hessEigval{1}{1}|}}{\Phi\left(\sqrt{|\hessEigval{1}{1}|}\epsLimit{1}\right)}+\displaystyle\frac{\sqrt{|\hessEigval{2}{1}|}}{\Phi\left(\sqrt{|\hessEigval{2}{1}|}\epsLimit{2}\right)}\right)}$},
\end{equation}
where we recall that~$\mu(\theta)$ is the principal Dirichlet eigenvalue of the one-dimensional Dirichlet harmonic oscillator~$\frac12(-\partial_x^2+x^2)$ on~$(-\infty,\theta)$.

\paragraph{Numerical results.}
We approximate the generator~$\cL_\beta$ using the same procedure as for the effective generator in Section~\ref{subsec:num_coarse_graining}.
In Figure~\ref{fig:ek}, we illustrate the validity of the modified Eyring--Kramers formula. The~$\alpha$-dependent prefactor correctly predicts fine effects of the boundary geometry near the saddle points. The asymptotic regime is reached for relatively small values of~$\beta$.
In Figure~\ref{fig:harm}, we illustrate the harmonic approximation of Theorem~\ref{thm:harmonic}. Eigenvalues appear to converge to the prediction of the harmonic approximation in the limit~$\beta\to\infty$. For domains in which the second harmonic eigenvalue corresponds to a local model around an index-1 saddle point (the blue and green domains in Figure~\ref{fig:pot}), this convergence appears to occur faster, though we have no explanation for why this should be the case.

\begin{figure}
    \centering
    \begin{subfigure}{0.9\textwidth}
        \includegraphics[width=0.9\textwidth]{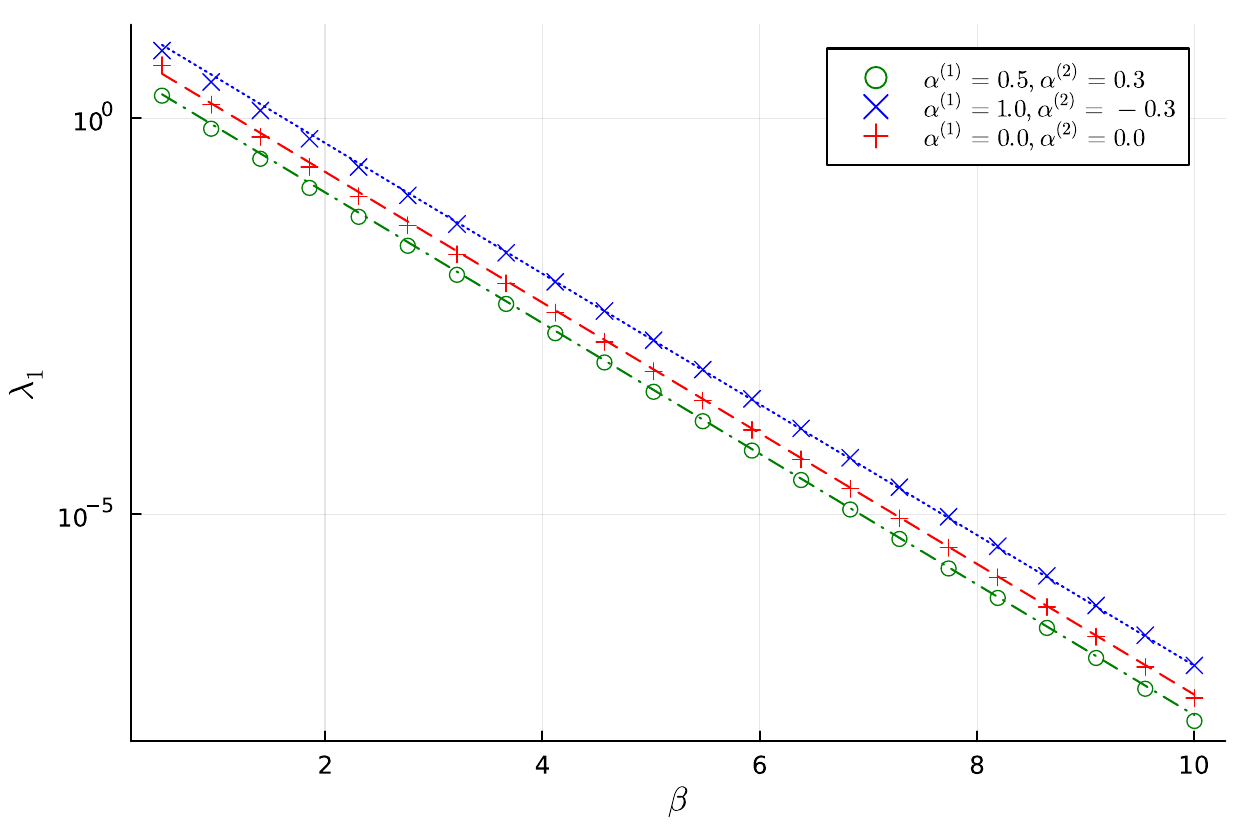}
        \caption{Principal Dirichlet eigenvalue of~$\cL_\beta$ on~$\Omega_{\alpha,\beta}$, for various values of the shape parameter~$\alpha$. The theoretical leading-order asymptotic of Theorem~\ref{thm:eyring_kramers} is represented with a dotted line.}
        \label{fig:ek}
    \end{subfigure}
    \hfill
\begin{subfigure}{0.9\textwidth}
    \includegraphics[width=0.9\textwidth]{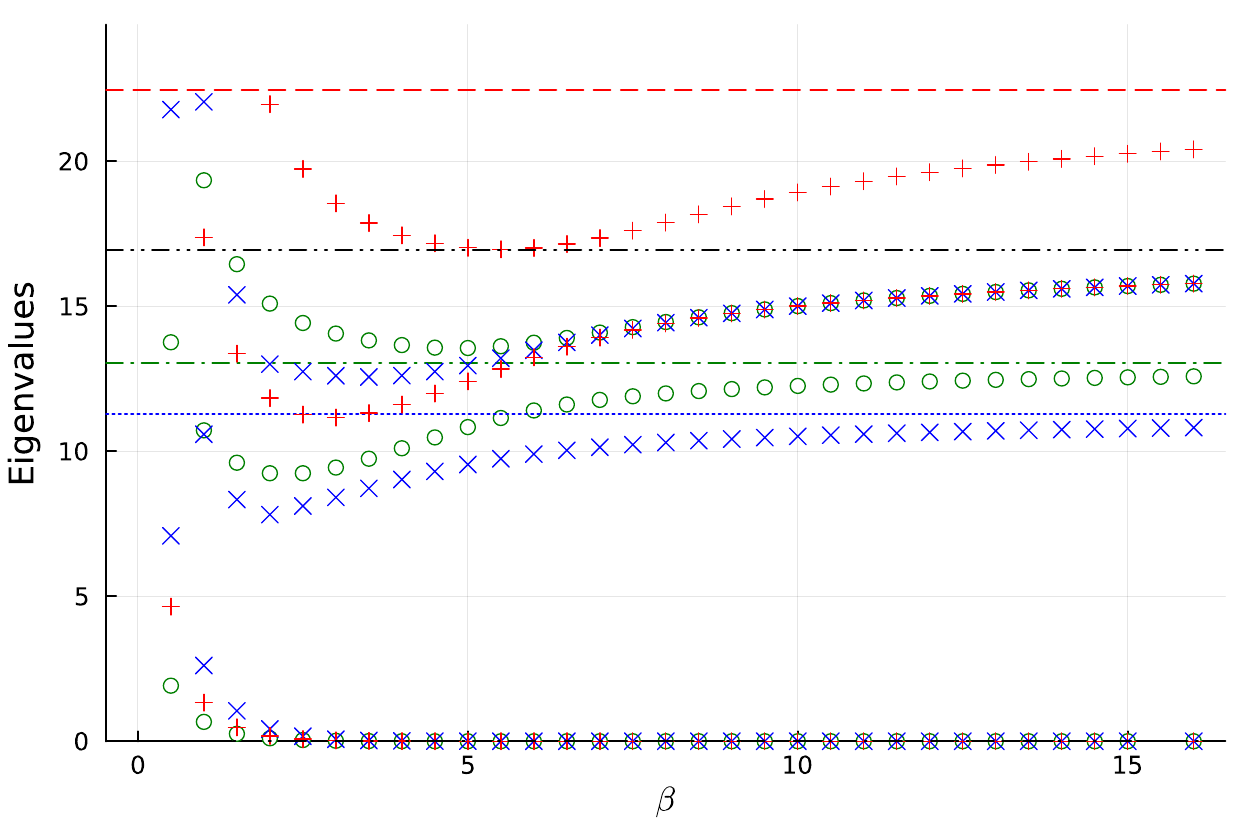}
    \caption{First three Dirichlet eigenvalues of~$-\cL_\beta$ on~$\Omega_{\alpha,\beta}$, for the three values of~$\alpha$ from Figure~\ref{fig:ek}. Horizontal lines correspond to the theoretical limiting values from Theorem~\ref{thm:harmonic}. The black line (--$\cdot\cdot$--) corresponds to a harmonic eigenvalue shared between all the domains. Missing values failed to converge. We observe convergence to the limiting regime, with eigenvalues corresponding to a lower asymptotic value appearing to converge faster.}
    \label{fig:harm}

\end{subfigure}
\caption{Numerical validation of the low-temperature asymptotics of Theorems~\ref{thm:eyring_kramers} and~\ref{thm:harmonic} from~\cite{BLS24}, for the one-dimensional potential depicted in Figure~\ref{fig:pot}.}
\label{fig:semiclassic}
\end{figure}
In Figure~\ref{fig:opt_1D}, we compare two quantities, for a fixed value of~$\beta=10$ (see Figure~\ref{fig:pot} for examples of corresponding domains): the low-temperature approximation~$J_\infty$ to the shape-optimization landscape defined in~\eqref{eq:lim_obj_explicit}, and the actual optimization landscape obtained by numerically approximating the reduced objective~\eqref{eq:reduced_obj}.
The low-temperature approximation and the true objective agree, making the low-temperature approximation an acceptable surrogate objective in the low-temperature regime, which can be maximized at a much smaller computational cost.
\begin{figure}
    \centering
    \begin{subfigure}{0.9\textwidth}
    \includegraphics[width=0.9\textwidth]{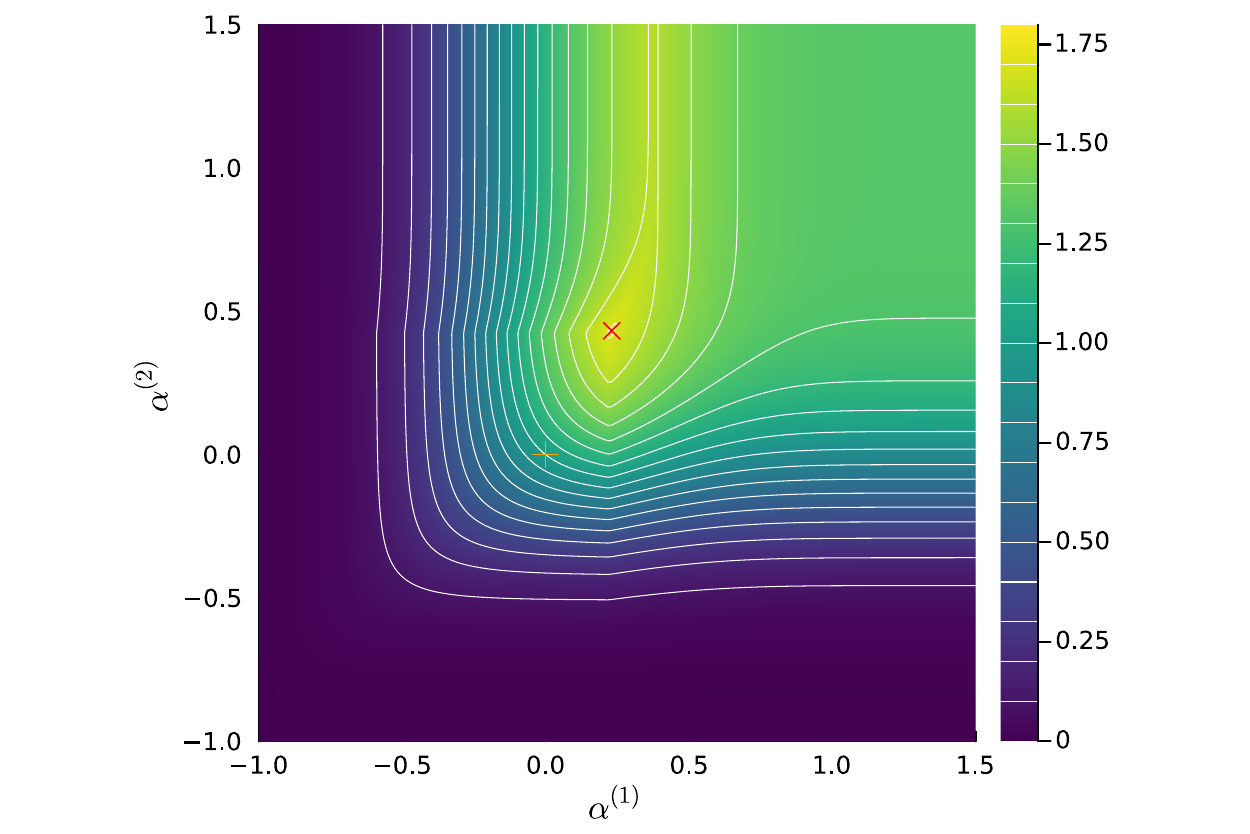}
        \caption{Semiclassical approximation of the shape optimization landscape. The limiting objective~$J_\infty(\alpha)$ defined in~\eqref{eq:lim_obj} is plotted for the potential~$V$ defined in~\eqref{eq:one_d_pot} and depicted in Figure~\ref{fig:pot}. The optimal~$\alpha^\star_\infty$ is marked by~$\boldsymbol{\color{red}{\times}}$, and the basin of attraction~$\basin{z_0}$ is marked by~$\boldsymbol{\color{orange}{+}}$. Ridge-like features are discernible, and correspond to the loci of eigenvalue crossings for the harmonic approximation~$K_\alpha$ defined in~\eqref{eq:harmonic_approximation}. The optimal value is attained for~$\alpha^\star_{\infty}\approx(0.23116,0.43216)$ with~$J_\infty(\alpha^\star_\infty)\approx 1.71$.}
        \label{fig:shape_landscape_sc}
    \end{subfigure}
    \begin{subfigure}{0.9\textwidth}
        \includegraphics[width=0.9\textwidth]{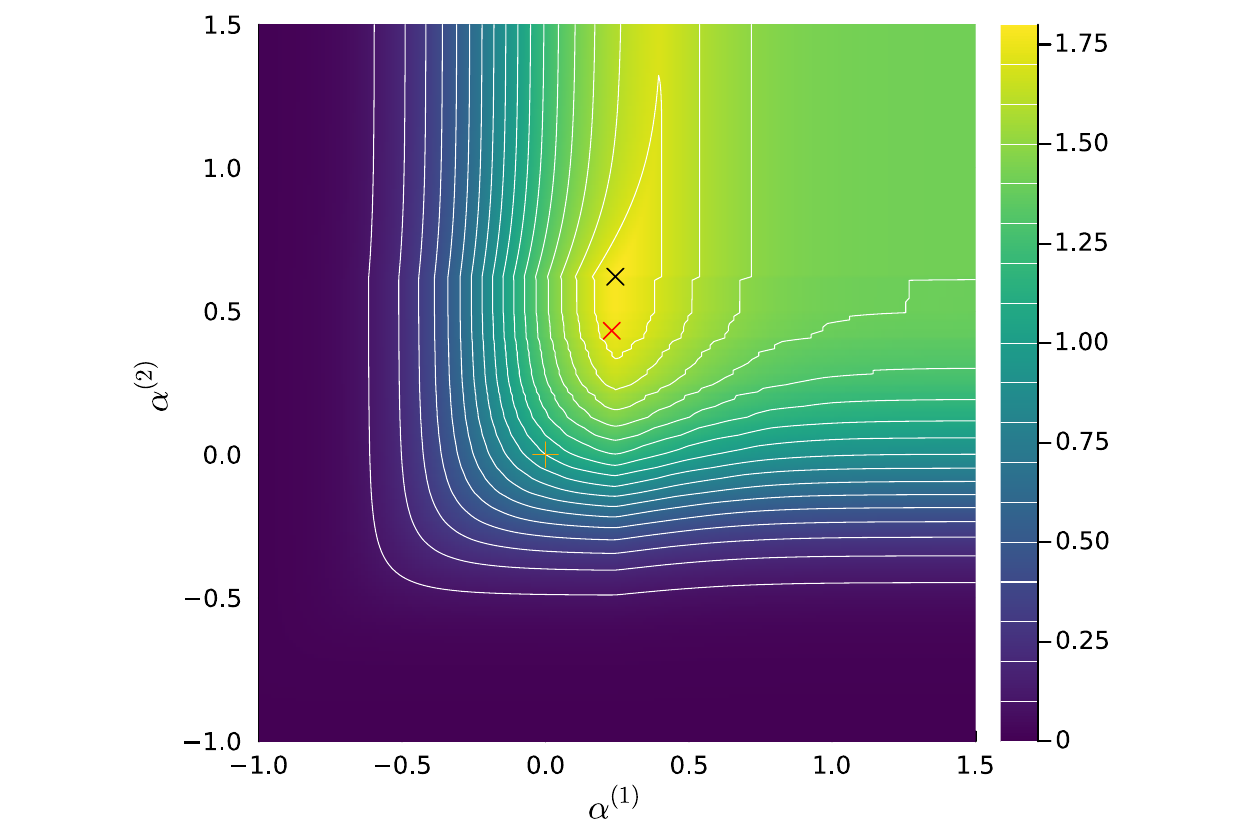}
            \caption{Shape-optimization landscape for the reduced objective~$J_\beta(\alpha)$ defined in~\eqref{eq:reduced_obj} for the value~$\beta=10$. The optimal shape~$\alpha^\star_\beta$ is marked by~$\boldsymbol{\times}$, the basin of attraction~$\basin{z_0}$ is marked by~$\boldsymbol{\color{orange}{+}}$ and the semiclassical prescription~$\alpha^\star_\infty$ is marked by~$\boldsymbol{\color{red}{\times}}$. The optimal value is attained for~$\alpha^\star_{\beta}\approx(0.24372,0.6206)$ with~$J_{\beta}(\alpha^\star_{\beta})\approx 1.81$. By comparison~$J_\beta(\alpha^\star_\infty)\approx 1.76$.}
            \label{fig:shape_landscape_10}
        \end{subfigure}
        \caption{Asymptotic approach to the shape optimization problem for the potential~\eqref{eq:one_d_pot} and the objective~\eqref{eq:reduced_obj}. At low temperature, the semiclassical approximation~(Figure~\ref{fig:shape_landscape_sc}) faithfully captures the features of the true optimization landscape~(Figure~\ref{fig:shape_landscape_10}). In particular, the semiclassical optimizer is close, both in argument and value of the objective function, to the true optimizer.}
    \label{fig:opt_1D}
\end{figure}

When optimizing functionals of eigenvalues, degenerate eigenvalues are commonly encountered at the optimal value of the design parameter, see~\cite[Theorems 8.4.11 and Theorem 8.4.14]{H06} or~\cite[Section 3.2.1]{LPRSS24} for examples of this phenomenon. Moreover, in view of the expression~\eqref{eq:harmonic_l2} for the second harmonic eigenvalue as the minimum of finitely many functions of~$\alpha$, one expects to see degenerate asymptotic eigenvalues for the optimizer of the asymptotic problem. Indeed, this is the case in our toy example, as seen in Figure~\ref{fig:shape_landscape_sc}, where the optimizer lies at the intersection of three ridges, corresponding to the loci of eigenvalue crossings for the harmonic approximation.
We question whether this degeneracy of eigenvalues at the optimum also holds at the level of the finite-temperature shape optimization problem. In Figure~\ref{fig:ev_degen}, we compute the second and third Dirichlet eigenvalues, for local optimizers of the reduced objective~\eqref{eq:reduced_obj}, and plot their relative difference.
We observe that~$\lambda_2$ is close to~$\lambda_3$ at the optimal domain, but not degenerate for finite values of~$\beta>0$. The relative difference converges to zero as~$\beta\to+\infty$, in accordance with the semiclassical prediction.

\begin{figure}
\centering
\includegraphics[width=0.75\textwidth]{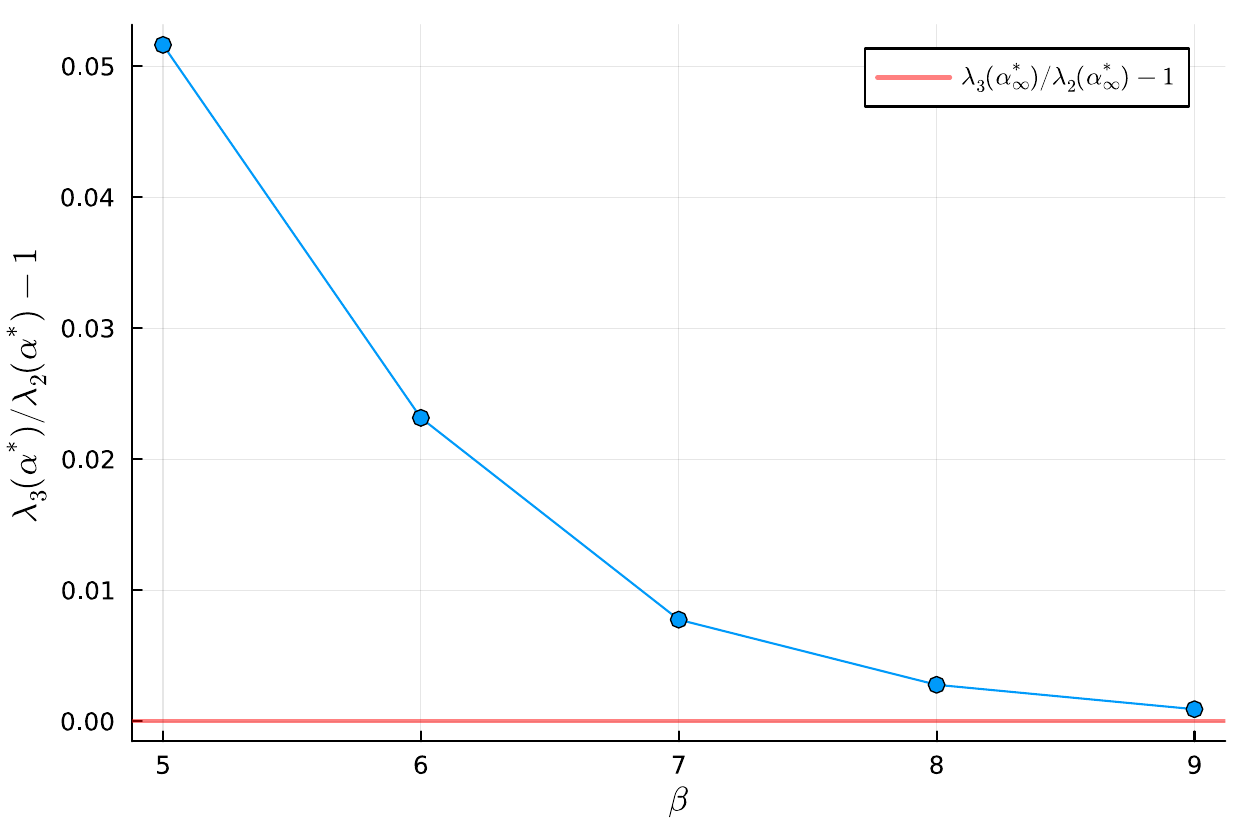}
\caption{Relative difference between the second and third Dirichlet eigenvalues at the optimum, for the reduced optimization problem defined by~\eqref{eq:reduced_obj}, as a function of the temperature parameter~$\beta$.}
\label{fig:ev_degen}
\end{figure}

\subsection{Application to a molecular system}
\label{subsec:diala}
In this section, we apply our shape-optimization method to the energy landscape of a small molecule commonly used to benchmark methods in MD, namely alanine dipeptide solvated in water. The system is composed of~$N=d/3=619$ atoms, in fact~$22$ atoms in the peptide chain and~$199$ water molecules. Atomic positions are restricted to a periodic cubic box of length~$L= 18.643\,\mathring{\mathrm{A}}$. As a collective variable, we use the dihedral angles (a standard choice, see~\cite{BDC00}),
$$\xi=(\phi,\psi).$$
The values and gradients of~$\phi$ and~$\psi$ are available through the Tinker-HP~\cite{Tinker18} interface to the Colvars library~\cite{FKH13}.

\paragraph{Simulation parameters.}
All simulation runs were performed using a modified version of Tinker-HP~\cite{Tinker18} allowing to simulate the Fleming--Viot process (see~\ref{alg:fv} below) inside an arbitrary domain defined in CV space.

Unless otherwise specified, simulations of the underdamped Langevin dynamics~\eqref{eq:underdamped_langevin} (with~$\Gamma=M$) were performed at~$T=300\,\mathrm{K}$ ($\beta=1.677\,\mathrm{mol}\cdot\mathrm{kcal}^{-1}$) and discretized with the BAOAB scheme, setting the time step to~$\Delta t = 2$ fs, using the AMBER-ff99 interaction potential, and the SHAKE method~\cite{RCB77} to fix the geometry of the solvent molecules.

Experiments were performed across a range of friction parameters,~$\gamma\in\{1,2,5,10\}\,\mathrm{ps}^{-1}$, to assess the effectiveness of the methodology in various dynamical settings.
Since our methodology requires a low-dimensional reversible diffusion~\eqref{eq:overdamped_langevin} as input, we use the effective dynamics~\eqref{eq:effective_dynamics} associated with the Kramers--Smoluchowski approximation~\eqref{eq:overdamped_langevin} of the underdamped Langevin dynamics (where~$a=M^{-1}$). In other words, the (rescaled by~$\gamma$) effective generator whose eigenvalues we optimize is given by
\begin{equation}
    \label{eq:eff_generator_num}
    \cL_\beta^\xi=\frac1{\gamma\beta}\e^{\beta F_\xi}\div\left(\e^{-\beta F_\xi}a_\xi\nabla\cdot\right),\qquad a_{\xi}(z) = \int_{\Sigma_z}\nabla\xi^\top M^{-1}\nabla\xi\,\d\mu_z.
\end{equation}

It has been observed in previous studies of realistic molecular systems (see for instance~\cite[Sections 4.2.2 and 4.3.2]{NN24}) that the dynamical rates inferred by the Kramers--Smoluchowski approximation often differ greatly from those associated with the underlying underdamped Langevin dynamics, even when accounting for rescaling by the friction parameter~$\gamma$. Therefore we shall not use our reduced model to directly infer timescales for the original dynamics, but merely as a proxy to define good metastable states.
The effectiveness of these states, in the sense of maximizing the separation of timescales, will therefore be assessed at the level of the original dynamics, and not of the reduced model.

\paragraph{Free energy landscape and effective diffusion.}
We first compute the free energy~$F_\xi$ and effective diffusion tensor~$a_\xi$ entering in the definition of the effective dynamics~\eqref{eq:effective_dynamics}. The free-energy landscape is represented in Figure~\ref{fig:pmf}, and was precomputed using a multiple-replica adaptive biasing force dynamics~(see~\cite{CGHLPC15}), with four replicas, for a total of~$t=600$ ps of simulation time.
\begin{figure}
    \center
    \includegraphics[width=0.75\linewidth]{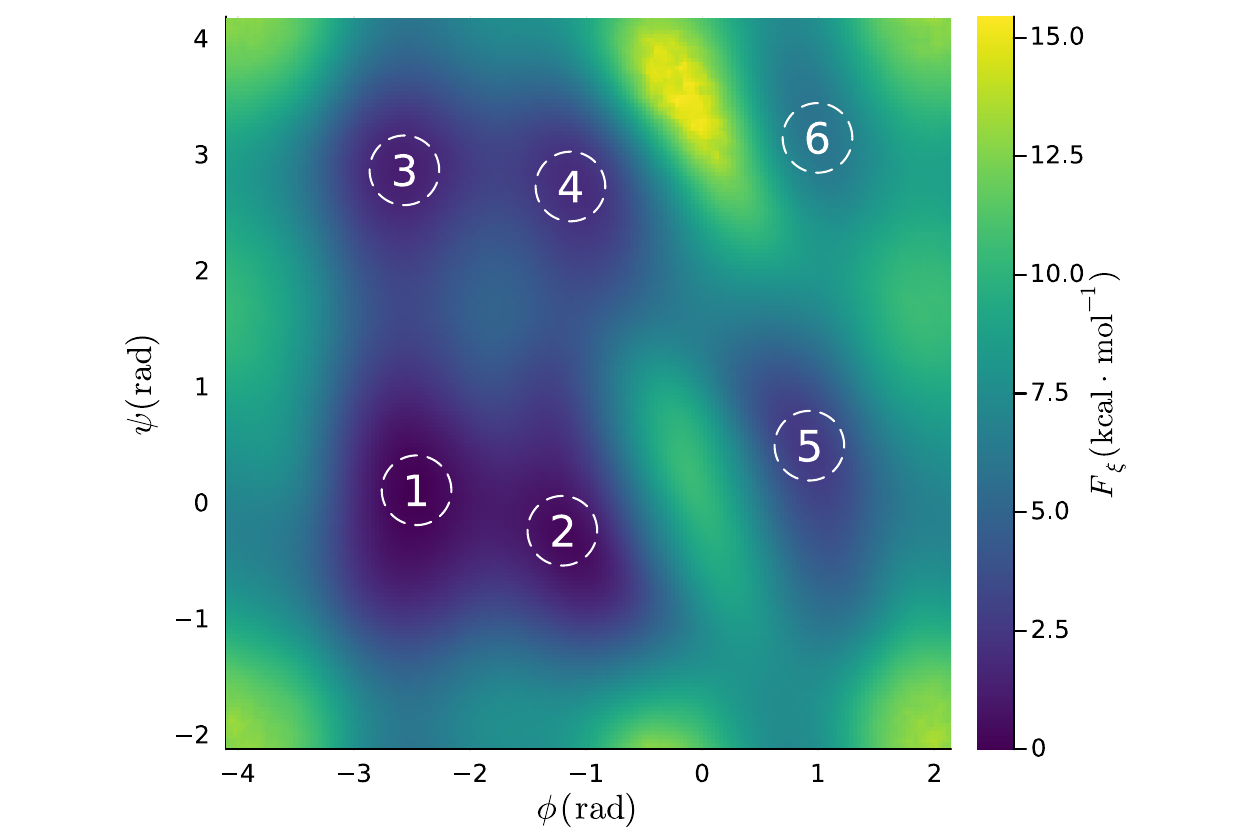}
    \caption{Free energy landscape in the dihedral angles~$(\phi,\psi)$. We identify and label six local minima.}
    \label{fig:pmf}
\end{figure}
The effective diffusion tensor was estimated using an importance sampling scheme using a family of harmonically biased potentials.
More precisely, the collective variable space~$(-\pi,\pi)^2$ was divided into a set~$\mathcal W$ of square-shaped windows of side-length~$\Delta\phi_{\mathcal W}=\Delta\psi_{\mathcal W}=\pi/36\,\mathrm{rad}$.
For each window~$w\in\mathcal W$, we performed a biased simulation of the underdamped Langevin dynamics~\eqref{eq:underdamped_langevin} using a harmonic biasing potential
\begin{equation}
    \label{eq:umbrella_potential}
   V^w = V+U^w,\qquad U^w(q) = \frac1{2\eta}|\xi(q)-z_w|^2,
\end{equation} 
where~$z_w$ is the center of the window~$w$, and~$\eta = 40\,\mathrm{mol}\cdot\mathrm{kcal}^{-1}$ is the inverse force constant.
For~$z\in(-\pi,\pi)^2$, we use the estimator
\begin{equation}
    \label{eq:approx_diffusion_tensor}
\widehat{a}_\xi(z) = \sum_{w\in\mathcal W}\rho_w(z)\frac{\displaystyle\sum_{k=1}^{N_{\mathrm{sim}}}\nabla\xi(X_k^w)^\top M^{-1}\nabla\xi(X_k^w)\e^{\beta U^w(X_k^w)}\1_{|\xi(X_k^w)-z|_{\infty}<h/2}}{\displaystyle\sum_{k=0}^{N_{\mathrm{sim}}}\e^{\beta U^w(X_k^w)}\1_{|\xi(X_k^w)-z|_\infty <h/2}},
\end{equation}
where~$(X_k^w)_{k=1,\dots,N_{\mathrm{sim}}}$ are sample points of the numerical trajectory for the biased dynamics in the window~$w\in\mathcal W$,~$h=\pi/90\,\mathrm{rad}$ is the histogram resolution and~$\rho_w$ is a weighting function chosen so that~$\sum_{w\in\mathcal W}\rho_w(z)=1$ for all~$z$. For simplicity, we chose~$\rho_w(z)$ to give uniform weight to each window for which the ratio in~\eqref{eq:approx_diffusion_tensor} was well-defined.

The initial condition~$X_0^w$ was prepared by running a harmonically steered-MD simulation from a reference configuration toward the value~$\xi=z_w$, followed by a~$5\,\mathrm{ps}$ equilibration run, both with a value of the friction parameter~$\gamma=1\,\mathrm{ps}^{-1}$. The values of the CV, biasing energy and instantaneous tensor~$\nabla\xi^\top M^{-1}\nabla\xi$ were recorded every~$10\,\mathrm{fs}$. The overall computation can be straightforwardly parallelized, as the estimators within each window are independent of one another.
The results of the computation of the effective diffusion tensor are shown in Figure~\ref{fig:diffusion_tensor_diala}.
\begin{figure}
    \begin{subfigure}{\linewidth}
        \includegraphics[width=0.49\linewidth]{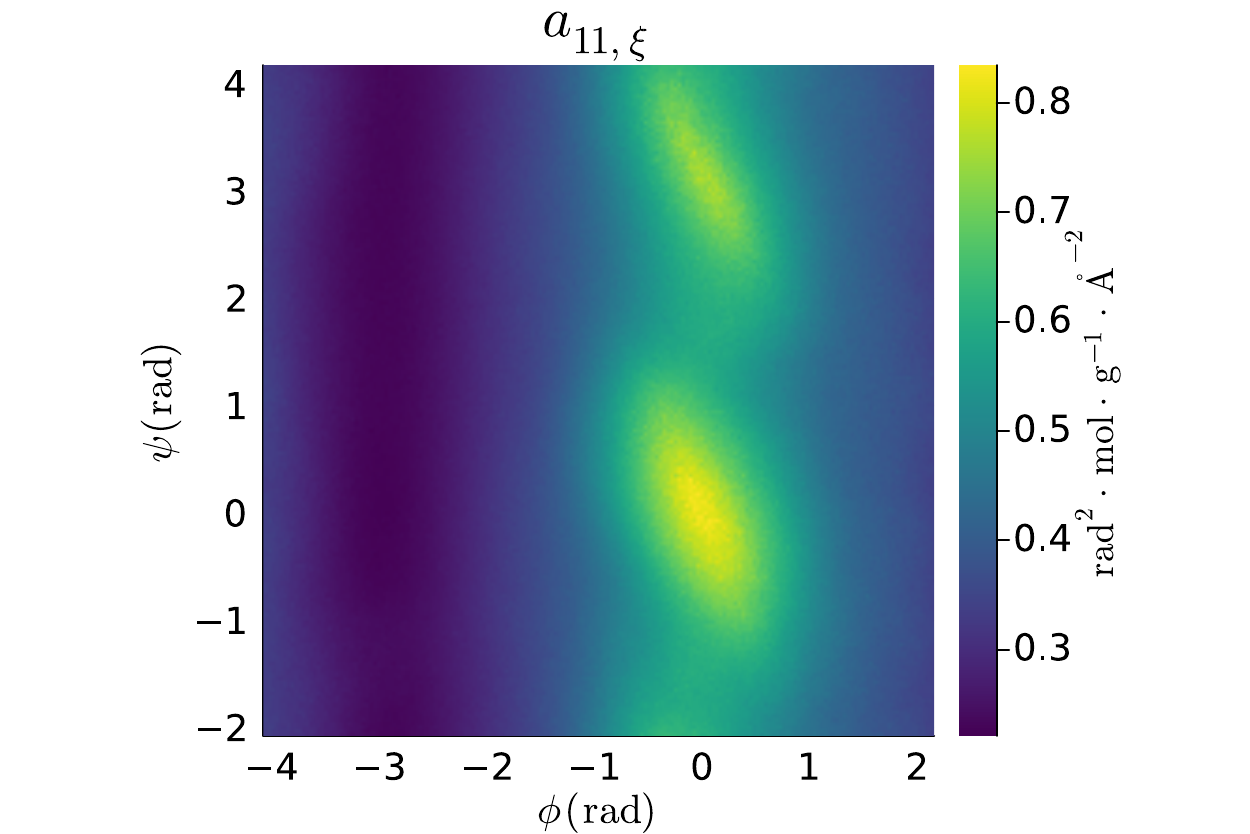}
        \includegraphics[width=0.49\linewidth]{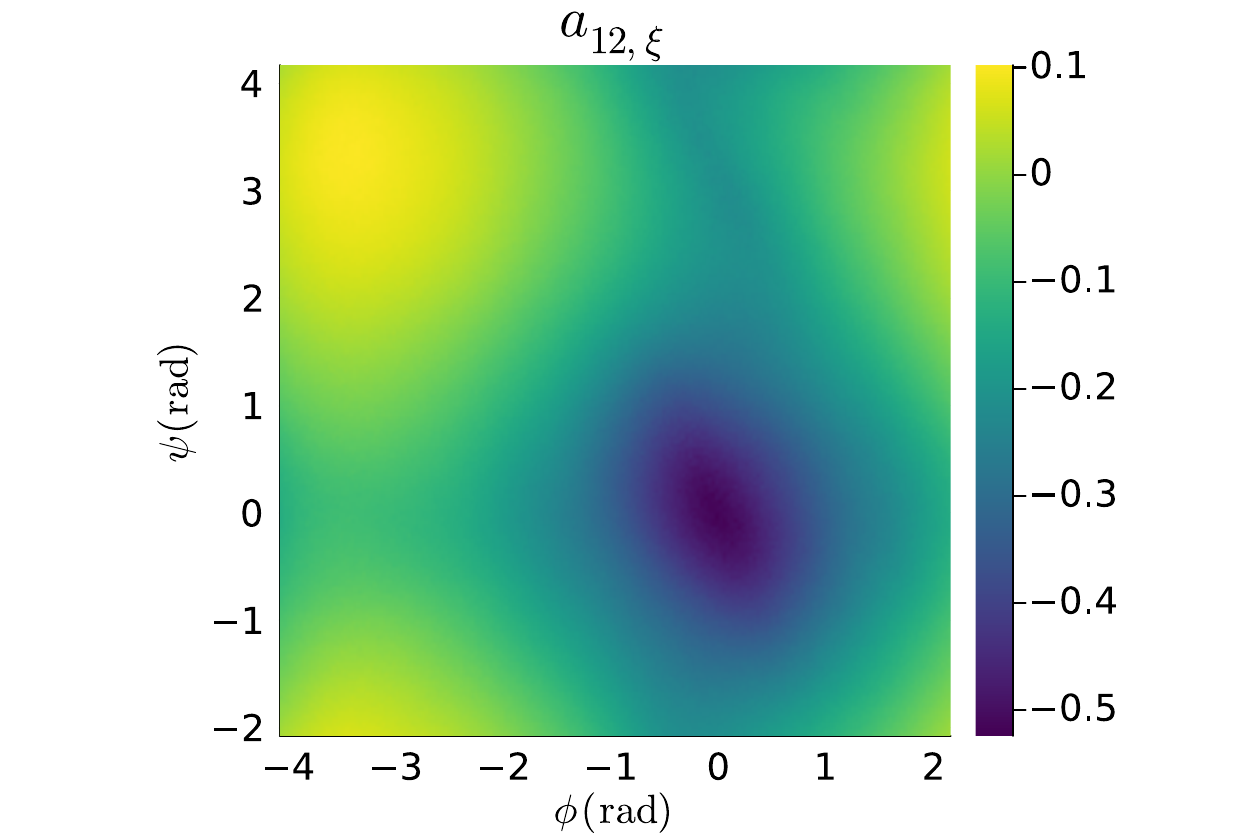}
        \includegraphics[width=0.49\linewidth]{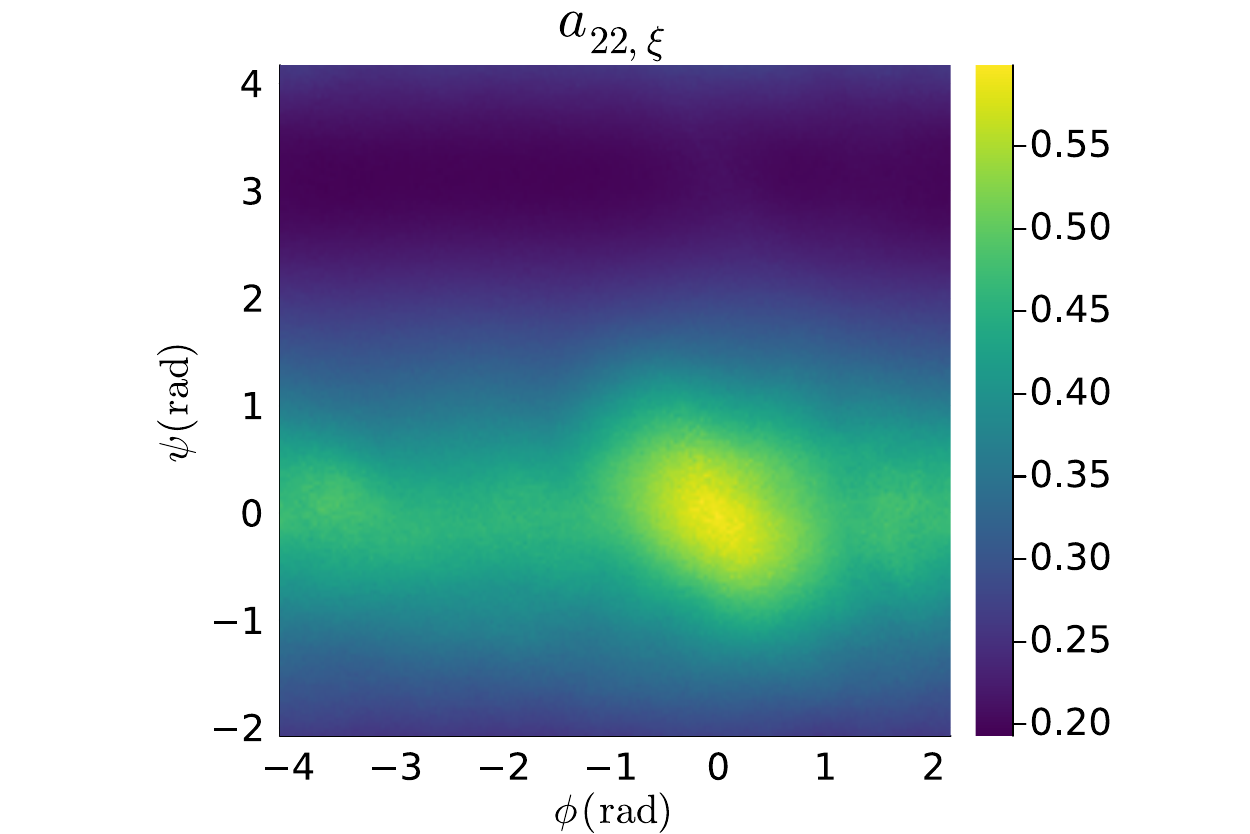}
        \includegraphics[width=0.49\linewidth]{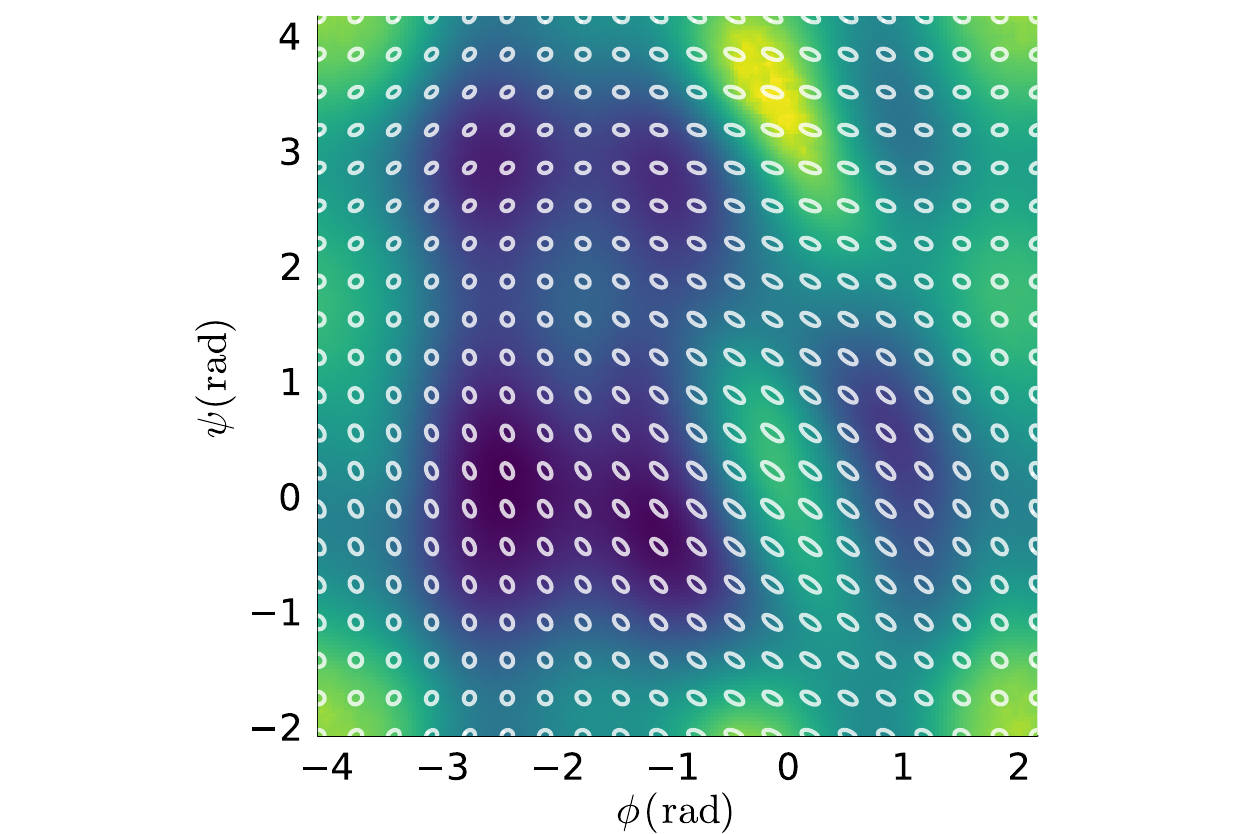}
    \end{subfigure}
    \caption{Components of the effective diffusion tensor~$a_\xi$ (top row and left-bottom row), and corresponding ellipsoid glyph representation (right-bottom row).}
    \label{fig:diffusion_tensor_diala}
\end{figure}

\paragraph{Shape optimization of eigenvalues for the effective dynamics.}
We apply Algorithm~\ref{alg:ascent} to obtain optimized domains in the two-dimensional space of dihedral angles~$(\phi,\psi)$, using the thermodynamic quantities computed in the previous paragraph and~Corollary~\ref{prop:directional_derivative_xi} for shape-variation formulas.
Algorithm~\ref{alg:ascent} was implemented in FreeFem++. Its code is available in the paper repository~\cite{github}.

The algorithm was run six times, each time initialized with~$\Omega_0 = B\left((\phi_0,\psi_0),0.3\right)$ in CV space, where~$(\phi_0,\psi_0)$ ranged across the six free-energy local minima displayed in Figure~\ref{fig:pmf}. All optimization runs were performed with the parameters~$\varepsilon_\reg=\sqrt{0.1}$,~$\varepsilon_{\mathrm{degen}}=0.01$,~$m_{\max}=2$,~$\eta_{\max}=0.004$,~$\alpha=0.8$,~$\varepsilon_{\mathrm{term}}=0.005$,~$M_{\mathrm{grad}}=2$ and~$N_{\mathrm{search}}=1000$, except for the optimization of state 2, for which a value~$\eta_{\max}=0.001$ was necessary to achieve convergence.
The mesh adaptation procedure~$\mathcal A$ from step C. of Algorithm~\ref{alg:ascent} enforced a maximal cell width of~$h_{\max}=0.03$ throughout the runs.

The initial domains are plotted alongside the corresponding numerically optimized domains in Figure~\ref{fig:opt_domains_diala}, together with the associated QSDs for the effective dynamics~\eqref{eq:effective_dynamics} in Figure~\ref{fig:eff_qsds}.
\begin{figure}
    \begin{subfigure}{0.5\linewidth}
        \center
            \includegraphics[width=\linewidth]{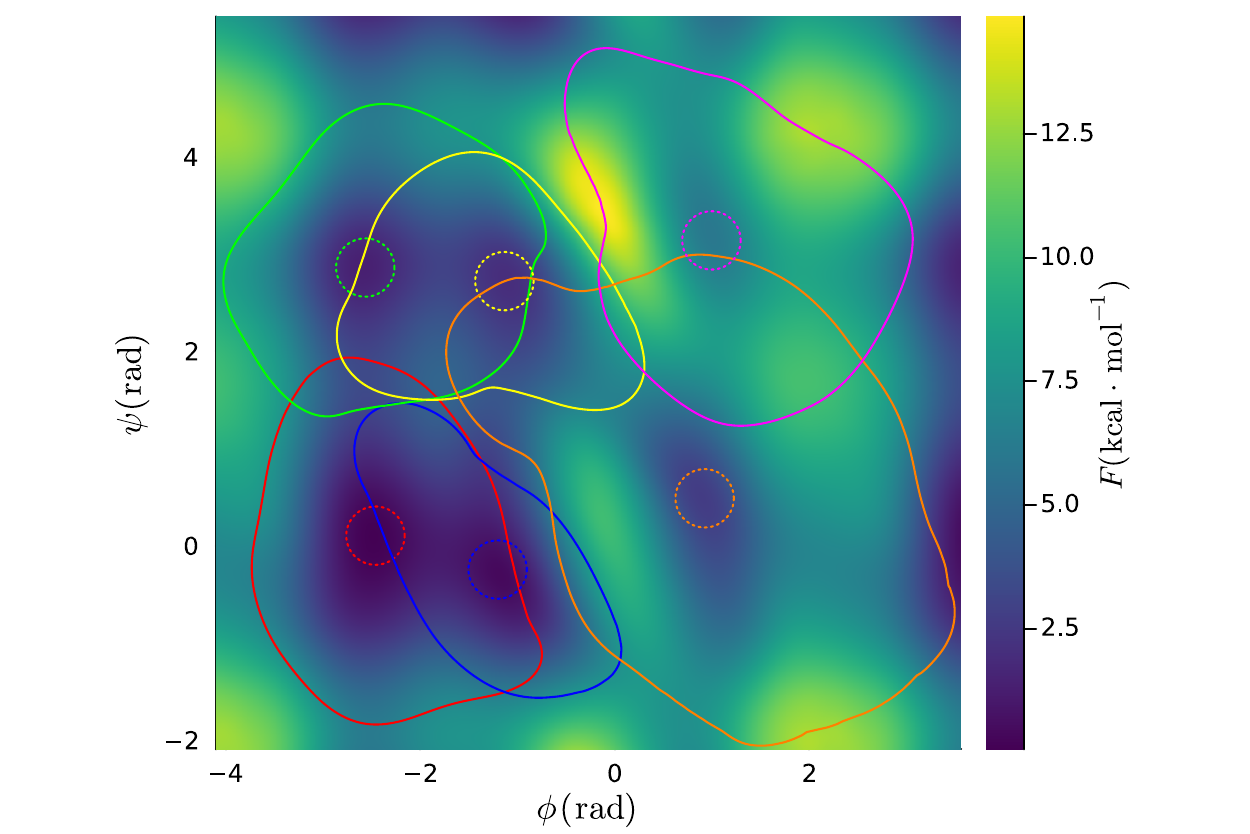}
            \caption{Numerically optimized metastable states for alanine dipeptide using Algorithm~\ref{alg:ascent}.}
            \label{fig:opt_domains_diala}
        \end{subfigure}  
        \begin{subfigure}{0.5\linewidth}  
            \includegraphics[width=\linewidth]{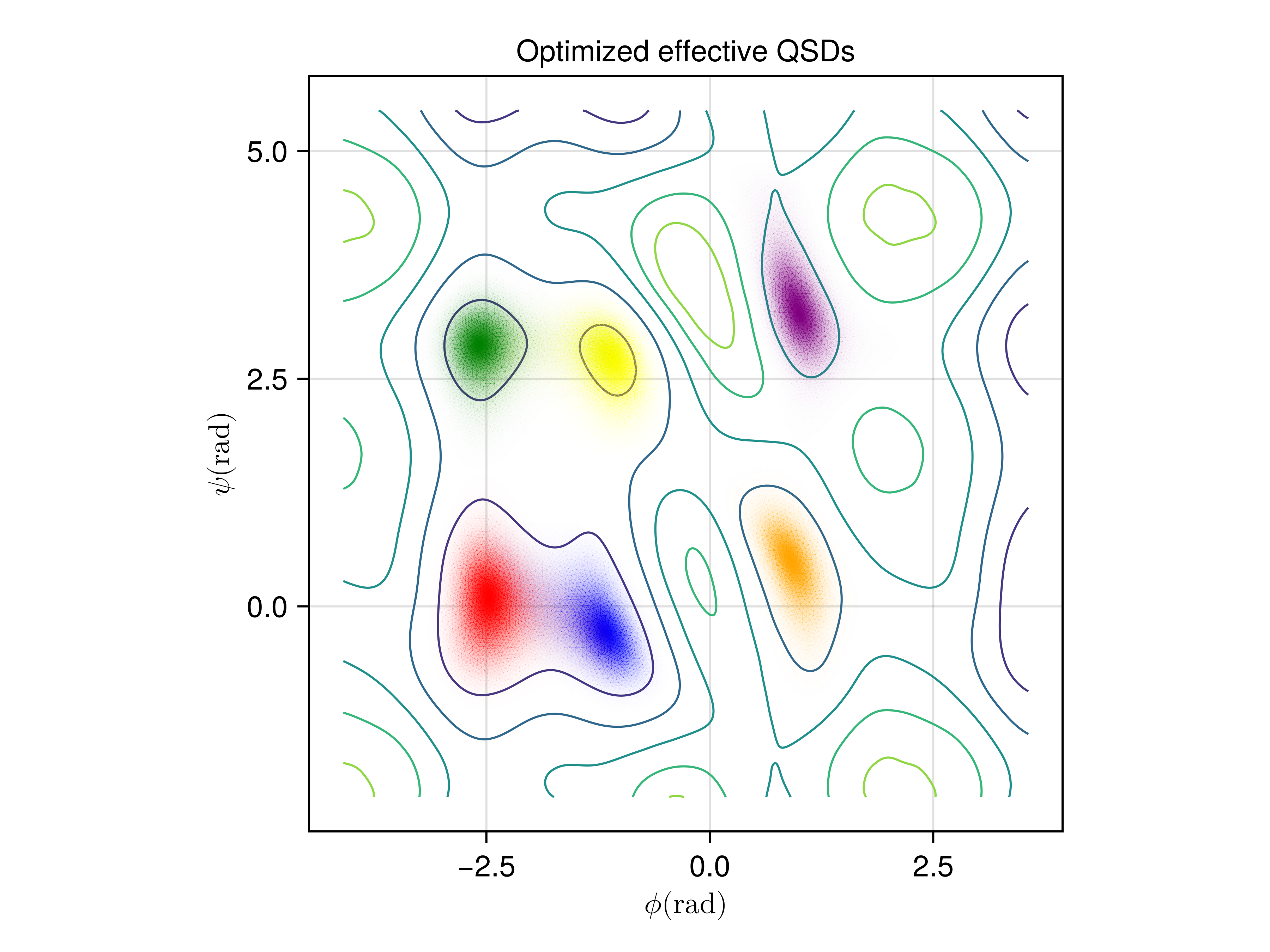}
            \caption{QSDs for the effective dynamics~\eqref{eq:effective_dynamics}. }
            \label{fig:eff_qsds}
        \end{subfigure}
        \caption{In Figure~\ref{fig:opt_domains_diala}, solid lines correspond to the boundaries of the optimized domains, with corresponding initial domains in dotted lines. In Figure~\ref{fig:eff_qsds}, higher densities map to lower transparency values, with the same color-coding as in Figure~\ref{fig:opt_domains}. QSDs have been normalized in~$L^\infty$. In both figures, the free-energy landscape from Figure~\ref{fig:pmf} is plotted for reference.}
        \label{fig:opt_domains}
\end{figure}

In Figure~\ref{fig:obj_conv_diala}, we plot the evolution of the effective separation of timescales during the optimization process, for the six runs of Algorithm~\ref{alg:ascent}. State 5 is the most locally metastable state for the effective diffusion, with an effective separation of timescales of nearly~$500$. 
\begin{figure}
    \center
    \includegraphics[width=1\linewidth]{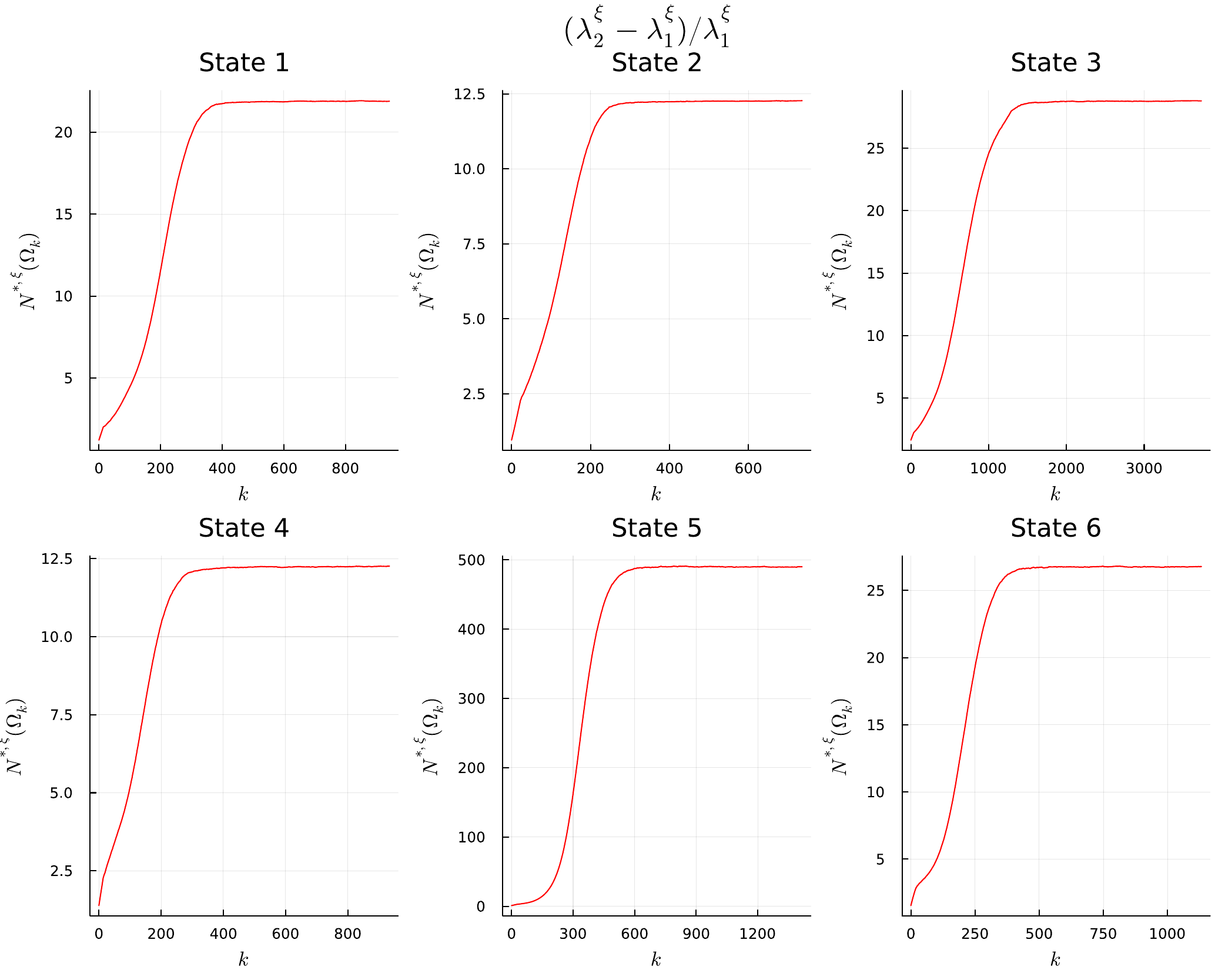}
    \caption{Effective separation of timescales throughout six runs of Algorithm~\ref{alg:ascent}, initialized with coresets around the six free-energy minima depicted in Figure~\ref{fig:pmf}.}
    \label{fig:obj_conv_diala}
\end{figure}

In Figure~\ref{fig:lambdas_diala}, we plot the evolution of the first four Dirichlet eigenvalues of the effective generator during the six runs of Algorithm~\ref{alg:ascent}, showcasing frequent eigenvalue crossings. In all cases, we observe that the second and third Dirichlet eigenvalues coalesce during an early phase of the optimization process, which suggests that encountering degenerate eigenvalues is the rule rather than the exception. For the purpose of fixing a timescale in Figure~\ref{fig:lambdas_diala}, we (somewhat arbitrarily) set~$\gamma=5\,\mathrm{ps}^{-1}$ in the definition of the effective generator~\eqref{eq:eff_generator_num}.
\begin{figure}
    \center
    \includegraphics[width=0.95\linewidth]{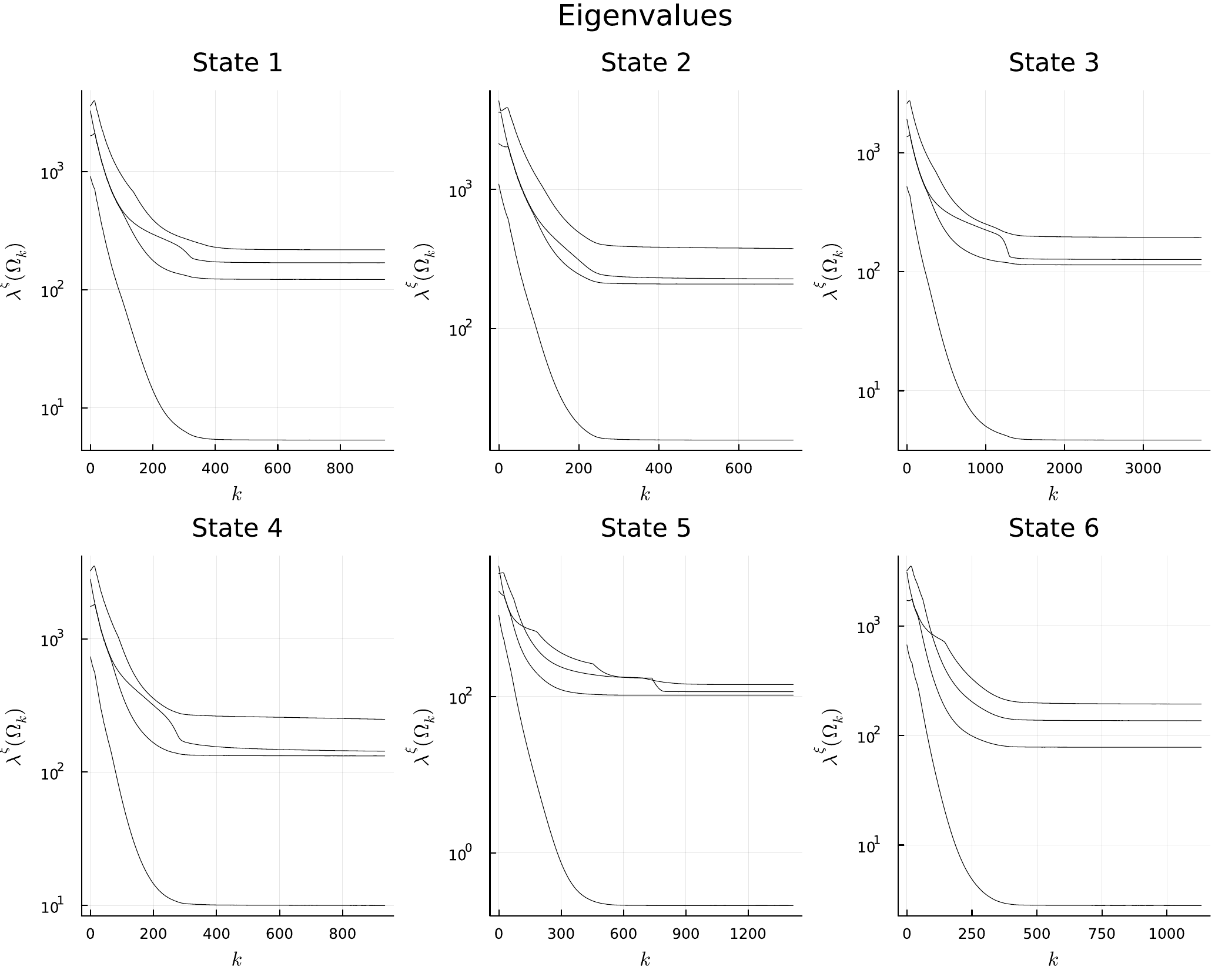}
    \caption{Behavior of the four smallest Dirichlet eigenvalues throughout the six runs of Algorithm~\ref{alg:ascent} depicted in Figure~\ref{fig:obj_conv_diala}, displaying frequent eigenvalue coalescence and crossings.}
    \label{fig:lambdas_diala}
\end{figure}

In Figure~\ref{fig:oscillations}, we illustrate the usefulness of the choice of ascent direction in Algorithm~\ref{alg:ascent} using the numerical degeneracy parameter~$\varepsilon_{\mathrm{degen}}$. Omitting the numerical degeneracy parameter and trusting the non-degenerate shape gradients may lead to oscillations in the objective function, due to non-differentiable features of the objective landscape near points of near-degeneracy. The algorithm adapting the choice of ascent direction in the case of approximately degenerate eigenvalues leads to a significant improvement in the speed of increase of the objective function, and successfully suppresses eigenvalue crossings and oscillations in the value of the objective.
\begin{figure}
    \begin{subfigure}{0.49\linewidth}
        \center
            \includegraphics[width=\linewidth]{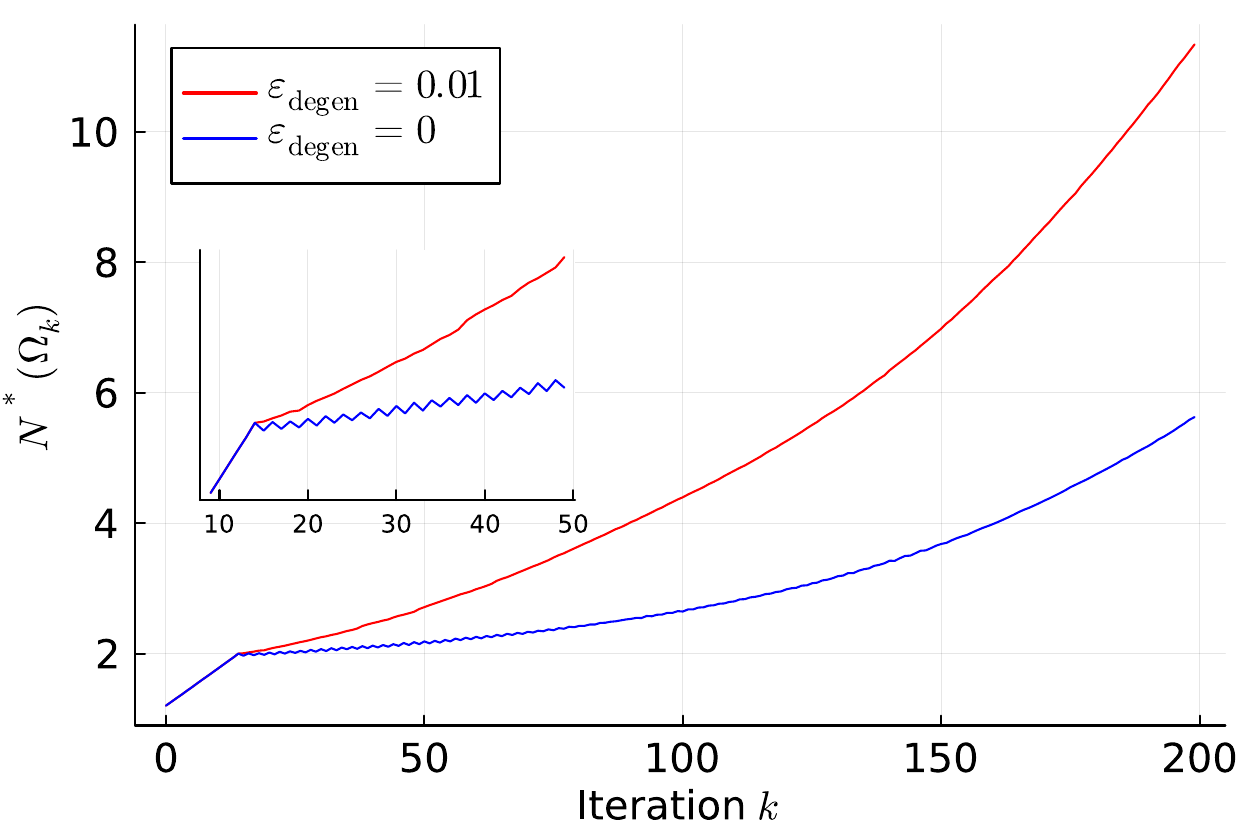}
            \caption{Value of the objective function versus number of iterations, with zoom on iterations~$10$--$50$.}
            \label{fig:obj_mix}
        \end{subfigure}
        \begin{subfigure}{0.49\linewidth}
        \center
            \includegraphics[width=\linewidth]{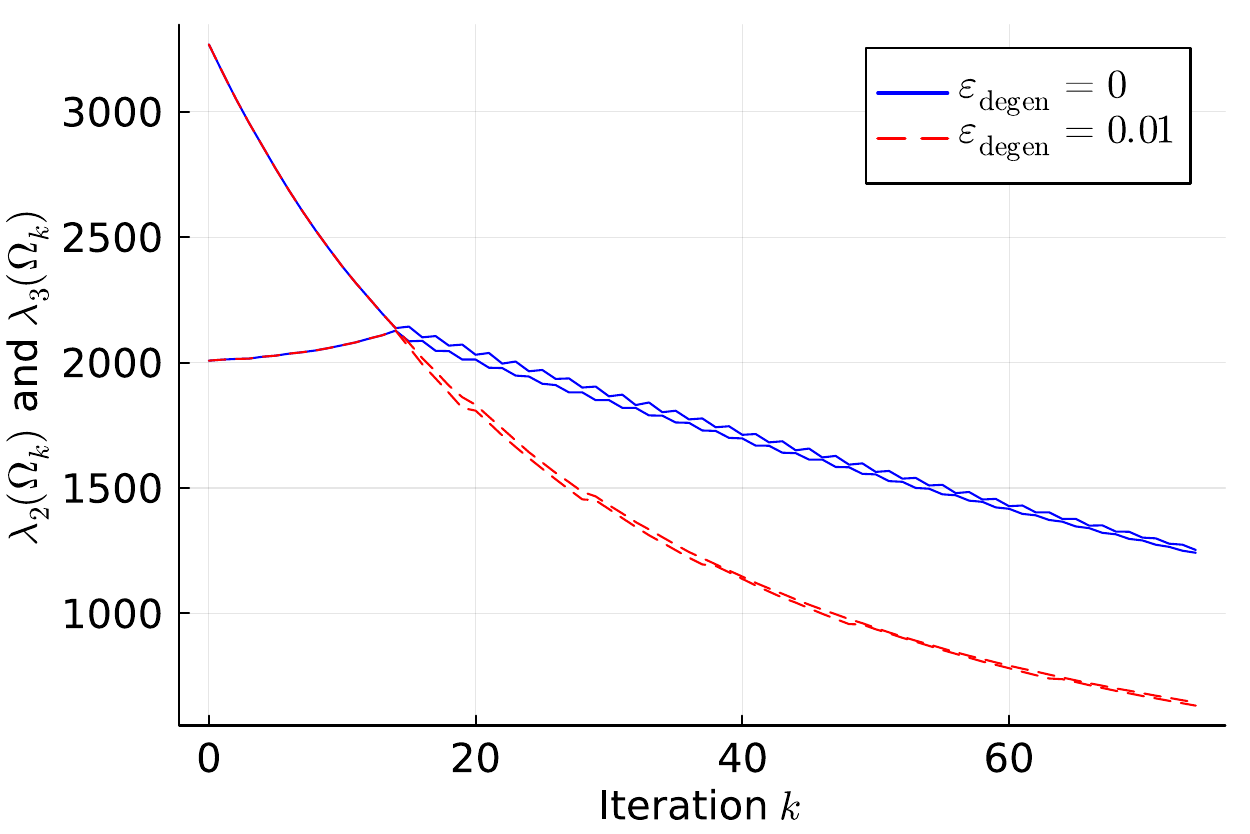}
            \caption{Second and third Dirichlet eigenvalue versus number of iterations.}
            \label{fig:oscillations_mix}
        \end{subfigure}

    \caption{Effect of the numerical degeneracy parameter~$\varepsilon_{\mathrm{degen}}$ in Algorithm~\ref{alg:ascent}, during the optimization of state 1 (see Figure~\ref{fig:pmf}). Setting~$\varepsilon_{\mathrm{degen}}>0$ ensures local ascent of the objective function, leading to an overall improvement in the convergence behavior (Figure~\ref{fig:obj_mix}), and effectively suppresses eigenvalue crossings (Figure~\ref{fig:oscillations_mix}).}
    \label{fig:oscillations}
\end{figure}

\paragraph{Parametrization of the states.}
The boundary vertices~$(\phi_i,\psi_i)_{1\leq i\leq N_{\mathrm{V}}}$ of the optimized mesh for state 5 were transformed from~$(\phi,\psi)$-space into~$(r_i,\theta_i)_{1\leq i\leq N_{\mathrm{V}}}$ for a system~$(r,\theta)$ of polar coordinates centered at the free-energy minimum inside state 5. A finite Fourier series
\begin{equation}
    R(t):=\sum_{k=0}^{N_{\mathrm{modes}}}\left[a_k \cos(kt) + b_k\sin(kt)\right]
\end{equation}
was then fitted to these points via ordinary least squares with~$N_{\mathrm{modes}}=20$, and the final definition we took for the optimized state was
\begin{equation}
    \Omega =\left\{(r,\theta):r<R(\theta)\right\}.
\end{equation}
This definition assumes that the domain is star-shaped around the minimum, which is indeed the case here.
The boundary of the free-energy basin, which was computed using finite-difference gradient descent on the estimated free-energy (see Figure~\ref{fig:pmf}), was similarly fitted with a Fourier series.

\paragraph{The Fleming--Viot process.}
To quantify the performance gained from using optimized definitions of metastable states at the level of the original high-dimensional dynamics, we must quantify the separation of timescale directly.
We focus on state 5, the most locally metastable domain according to the effective dynamics, and argue numerically that the state optimized with the surrogate coarse-grained objective leads to a significant improvement in the separation of timescales, when compared with a reference domain given by the basin of attraction of the local minimum in state~5, for a steepest-descent dynamics on the free-energy landscape.

We achieve this by using a Fleming--Viot process~(see for instance~\cite{DMM00}), which allows to infer both metastable timescales of interest, namely the exit rate starting from the QSD and the convergence rate to the QSD. 

\begin{algorithm}[Discrete-time Fleming--Viot process]
    \label{alg:fv}
    One step of the Fleming--Viot process with hard-killing, given a domain~$\Omega\subset\R^d$, a stride length~$\Delta t_{\mathrm{FV}}$, and given a number~$N_{\mathrm{proc}}\geq 1$ of replicas in state $(X_{t_0}^{(i)})_{1\leq i\leq N_{\mathrm{proc}}}$, consists in iterating the following procedure from $k=0$.
    \begin{enumerate}[A.]
        \item{At step $k$: evolve each replica with independent Brownian motions for a physical time~$\Delta t_{\mathrm{FV}}$ using a discretization of the underdamped Langevin dynamics.}
        \item{For any~$1\leq i\leq N_{\mathrm{proc}}$, if~$X_{t_0 + k\Delta t_{\mathrm{FV}}}^{(i)}\not\in\Omega$, kill this replica, and branch it in the next step from the state of a replica chosen uniformly at random among the survivors~(that is, the set of replicas which did not exit~$\Omega$ in step $k$).}
        \item{Set~$k\leftarrow k+1$ and proceed from step A.}
    \end{enumerate}
\end{algorithm}
This algorithm corresponds to the Fleming--Viot process with hard-killing for the discrete-time Markov chain obtained by subsampling the numerical trajectories in time at integer multiples of~$\Delta t_{\mathrm{FV}}$.
Algorithm~\ref{alg:fv} should be understood as a particle approximation of the dynamics conditioned on remaining inside~$\Omega$, in the sense that the empirical distribution of replicas at time~$t$ converges to the conditional distribution~$\mu_{t,X_0}$ (recall~\eqref{eq:yaglom_limit}) as~$N_{\mathrm{proc}}\to \infty$, see~\cite[Theorem 2.2]{V14}. This convergence can in some cases be controlled uniformly in time, see~\cite[Theorem 3.1]{R06} for an early approach, and~\cite[Theorem 2]{JM22} for a recent result in the overdamped case.
In particular, the empirical stationary distribution of the Fleming--Viot process approaches the QSD as~$N\to\infty$.

The time evolution of a single particle from the Fleming--Viot process (say~$X^{(1)}$), resembles a~$\nu$-return process (where~$\nu$ is the QSD): it evolves according to the dynamics until it reaches the boundary of the state, and is then instantly resurrected according to the empirical distribution of the Fleming--Viot process whose invariant measure approximates~$\nu^{\otimes N_{\mathrm{proc}}}$.
This approximation underpins the estimation of the exit rate from~$\Omega$ starting from~$\nu$, and also step C. of Algorithm~\ref{alg:parrep}.

For each value of the friction parameter~$\gamma$ and the two definitions of the state, we sample~$N_{\gamma}$ independent trajectories of the Fleming--Viot process (starting from a random initial condition~$X_0$ which we make precise below), lasting~$t_{\mathrm{sim}}=60\,\mathrm{ps}$ in total. The first~$t_{\mathrm{eq}}=30\,\mathrm{ps}$ were used to probe the decorrelation behavior to the QSD, and the last~$30\,\mathrm{ps}$ were used to sample the QSD (or an approximation thereof), and stationary exit events.

\paragraph{Estimation of the exit rate.}
To estimate the metastable exit rate~$\lambda_1(\Omega)$, we compute the empirical stationary exit rate for the Fleming--Viot process by counting the number~$N_{\mathrm{exit},\Omega}(t)$ of branching events recorded after time~$t$.
The exit rate is estimated (for each value of~$\gamma$ and definition of the state) as
\begin{equation}
    \label{eq:empirical_exit_rate}
    \widehat{\lambda}_1(\Omega) = \frac{N_{\mathrm{exit},\Omega}(t_{\mathrm{sim}})-N_{\mathrm{exit},\Omega}(t_{\mathrm{eq}})}{N_{\mathrm{proc}}\left(t_{\mathrm{sim}}-t_{\mathrm{eq}}\right)}.
\end{equation}
Under the approximation that the stationary Fleming--Viot process is a collection of~$N_{\mathrm{proc}}$ independent~$\nu$-return processes, the counting process~$N_{\mathrm{exit},\Omega}$ is a Poisson process with rate measure~$\lambda_1(\Omega)N_{\mathrm{proc}}\,\d t$, which motivates the choice of estimator~\eqref{eq:empirical_exit_rate}.
Confidence intervals for this exit rate were constructed from the independent realizations of the Fleming--Viot process.

\paragraph{Estimation of the convergence rate to the QSD.}
We assess the convergence of the conditional measure~$\mu_{t,X_0}$ (see~\eqref{eq:yaglom_limit}) to the QSD~$\nu$ at the level of convergence in total variation distance for their~$\phi$ and~$\psi$ marginals (and not their~$(\phi,\psi)$-marginals, due to data scarcity). This choice is motivated by the assumption that the CVs~$\phi$ and~$\psi$ correspond to the ``slow'' variables in the system, meaning that other degrees of freedom should have relaxed to their quasi-stationary state by the time~$\phi$ and~$\psi$ do. We first approximate the conditional law~$\mu_{t,X_0}$ and the QSD~$\nu$ with empirical approximations $\widehat{\mu}_{t,X_0}$ and~$\widehat{\nu}$.
The approximations~$\widehat{\nu}$ (or rather their~$\phi$ and~$\psi$ marginal histograms) were constructed by aggregating samples of the CV values recorded over all realizations of the Fleming--Viot process and the last~$60\,\mathrm{ps}$ of their trajectories.
The approximations~$\widehat{\mu}_{k\Delta t_{\mathrm{hist}}}$ (rather, their histograms) were constructed for~$k\geq 1$ at regular time intervals of length~$\Delta t_{\mathrm{hist}}=0.2\,\mathrm{ps}$ by aggregating samples of the CV values across realizations, and on the time interval~$\left((k-1)\Delta t_{\mathrm{hist}},k\Delta t_{\mathrm{hist}}\right]$.

We estimate the total variation distances between marginals (where~$f\sharp\mu$ denotes the pushforward of the measure~$\mu$ by the function~$f$):
$$\|\phi\sharp\mu_{k\Delta t_{\mathrm{hist}}}-\phi\sharp\nu\|_{\mathrm{TV}},\qquad\|\psi\sharp\mu_{k\Delta t_{\mathrm{hist}}}-\psi\sharp\nu\|_{\mathrm{TV}}$$
by considering the~$L^1$-distances between the one-dimensional histograms, constructed using~$50$ regular bins. We denote by~$e_{k\Delta t_{\mathrm{hist}}}(\phi)$,~$e_{k\Delta t_{\mathrm{hist}}}(\psi)$ the corresponding estimators, and define~$e_t(\phi),\,e_t(\psi)$ for any~$t\geq \Delta t_{\mathrm{hist}}$ by linear interpolation.

Values of the ``mixing-time'' at level~$\varepsilon=0.05$, defined as~$\MT_{\varepsilon}(f) = \inf\,\{t\geq \Delta t_{\mathrm{hist}}:\, e_t(f)<\varepsilon\}$, for~$f\in\{\phi,\psi\}$ were computed. Additionally, we inferred a ``decorrelation rate'', by performing an affine fit on~$\log\e_t(f)$ on~$t\in [1,\MT_{0.1}(f)]\,\mathrm{ps}$ if~$\MT_{0.1}-1\geq 4\,\mathrm{ps}$. Otherwise, no fit was performed.
We give an example of convergence curves for the value of the friction parameter~$\gamma = 10\,\mathrm{ps}^{-1}$ in Figures~\ref{fig:decorr_basin} and~\ref{fig:decorr_opt_domain}, for the free-energy basin and the optimized state, respectively. The horizontal line correspond to the value~$\varepsilon=0.05$ of the tolerance threshold for the mixing time. The regression line corresponding to the smallest decorrelation rate is also plotted. Error curves are color-coded according to the procedure with which the initial configuration is sampled, as made precise in the next paragraph.
\begin{figure}
        \centering
        \begin{subfigure}{0.49\textwidth}
        \includegraphics[width=\textwidth]{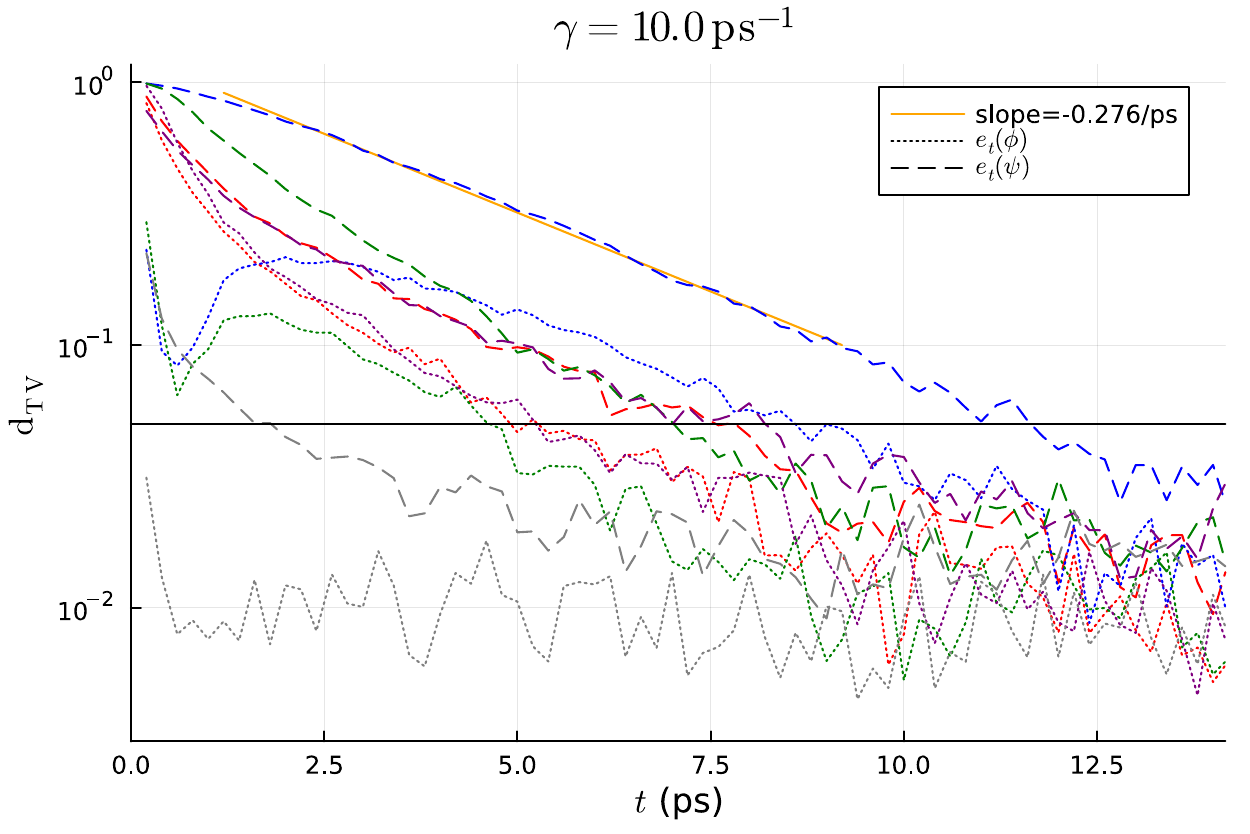}
            \caption{Free-energy basin.}
            \label{fig:decorr_basin}
        \end{subfigure}
        \begin{subfigure}{0.49\textwidth}
            \includegraphics[width=\textwidth]{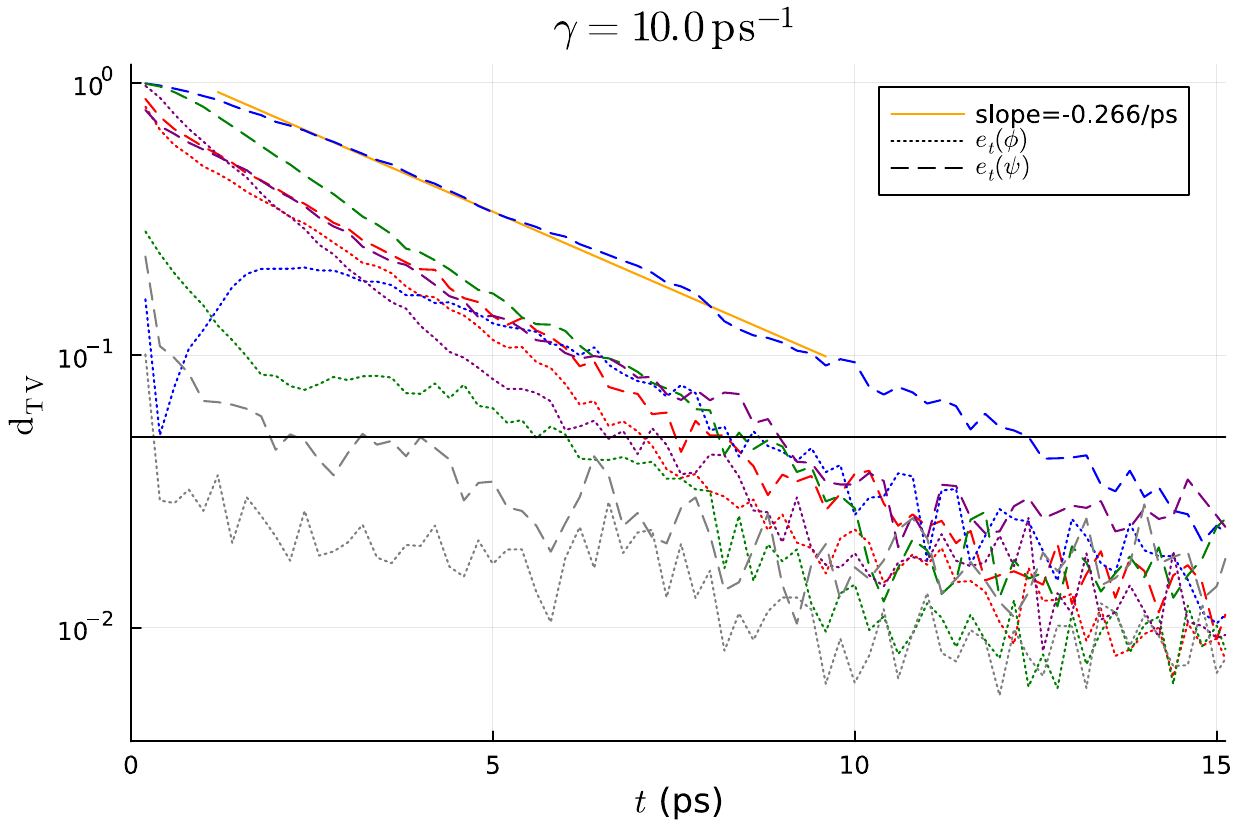}
                \caption{Optimized domain.}
                \label{fig:decorr_opt_domain}
            \end{subfigure}
            \caption{Convergence of the marginals of the Fleming--Viot process to the corresponding quasi-stationary marginals for~$\gamma=10\,\mathrm{ps}^{-1}$. The dependence on the initial condition is color-coded as in Figure~\ref{fig:qsds}, except for the gray curves which correspond to initial conditions close to the free-energy minimum. We observe, after a short transient phase, exponential convergence to the quasi-stationary marginals, and also a slight decrease in the decorrelation rate for the optimized state.}
        \label{fig:decorrelation_fv}
\end{figure}

\paragraph{Sampling of initial configurations.}
To assess the dependence of the decorrelation errors~$e_t(\phi)$,~$e_t(\psi)$ on the initial configuration of the system, we compute a realization of~$e_t(\phi),\,e_t(\psi)$ for various distributions of initial configurations~$X_0$, each one corresponding to a critical point of the free-energy.
\begin{itemize}
    \item{Four distributions corresponding to the four free-energy saddle points surrounding state 5~(see Figure~\ref{fig:pmf}). First, a steered MD simulation was performed to bring the system close to the target critical point, following which a harmonically restrained simulation was performed, with a biasing potential centered at the critical point. Only initial conditions with ``entering'' velocities were considered. We mean by this that, for the purposes of this experiment, we discarded samples which moved away from the free-energy minimum in CV space during an equilibrium MD simulation of~$2\,\mathrm{fs}$, or which had a final configuration outside of the free-energy basin.}
    \item{One distribution corresponding to the local free-energy minimum in state 5. Again, a steered MD simulation was performed to bring the system close to the free-energy minimum, followed by a harmonically restrained simulation. However, no ``velocity check'' was performed in this case.}
\end{itemize}
In both cases, the steering phase was performed for~$1\,\mathrm{ps}$, and the harmonically restrained phase for~$5\,\mathrm{ps}$, both with an inverse force constant of~$\eta=40\,\mathrm{mol}\cdot\mathrm{kcal}^{-1}$.
Timesteps of~$0.5\,\mathrm{fs}$ and~$1\,\mathrm{fs}$ were used respectively for the steering phase and the harmonically-restrained equilibration phase.

These two families of initial conditions correspond roughly to two natural definitions of the core-set~$\mathcal C$ from Algorithm~\ref{alg:fv} associated with~$\Omega$. Initial conditions associated with free-energy saddle points correspond to a coreset~$\mathcal C=\basin{z_5}\cap\Omega$, where~$z_5$ is the free-energy minimum associated with $\Omega$, and~$\basin{z_5}$  denotes the corresponding free-energy basin. Initial conditions steered towards the free-energy minimum correspond to the definition~$\mathcal C = B(z_5,r_{\mathrm{c}})$ for some small~$r_{\mathrm{c}}>0$ in CV space.
In the case where~$\Omega =\basin{z_5}$, the first core-set corresponds to the state itself: $\mathcal C=\Omega$, which is the standard situation in ParRep.

In Figure~\ref{fig:qsds}, we show the empirical stationary~$\xi$-marginal~$\xi_*\widehat{\nu}$ of the Fleming--Viot process, for the two state definitions we compare, and the value of the friction parameter~$\gamma=5\,\mathrm{ps}^{-1}$. Additionally, the sampled initial values of the collective variable are scattered, and color-coded according to the associated free-energy saddle point. The color coding is the same as in Figures~\ref{fig:decorr_basin} and~\ref{fig:decorr_opt_domain}.
\begin{figure}
\centering
\includegraphics[width=0.49\linewidth]{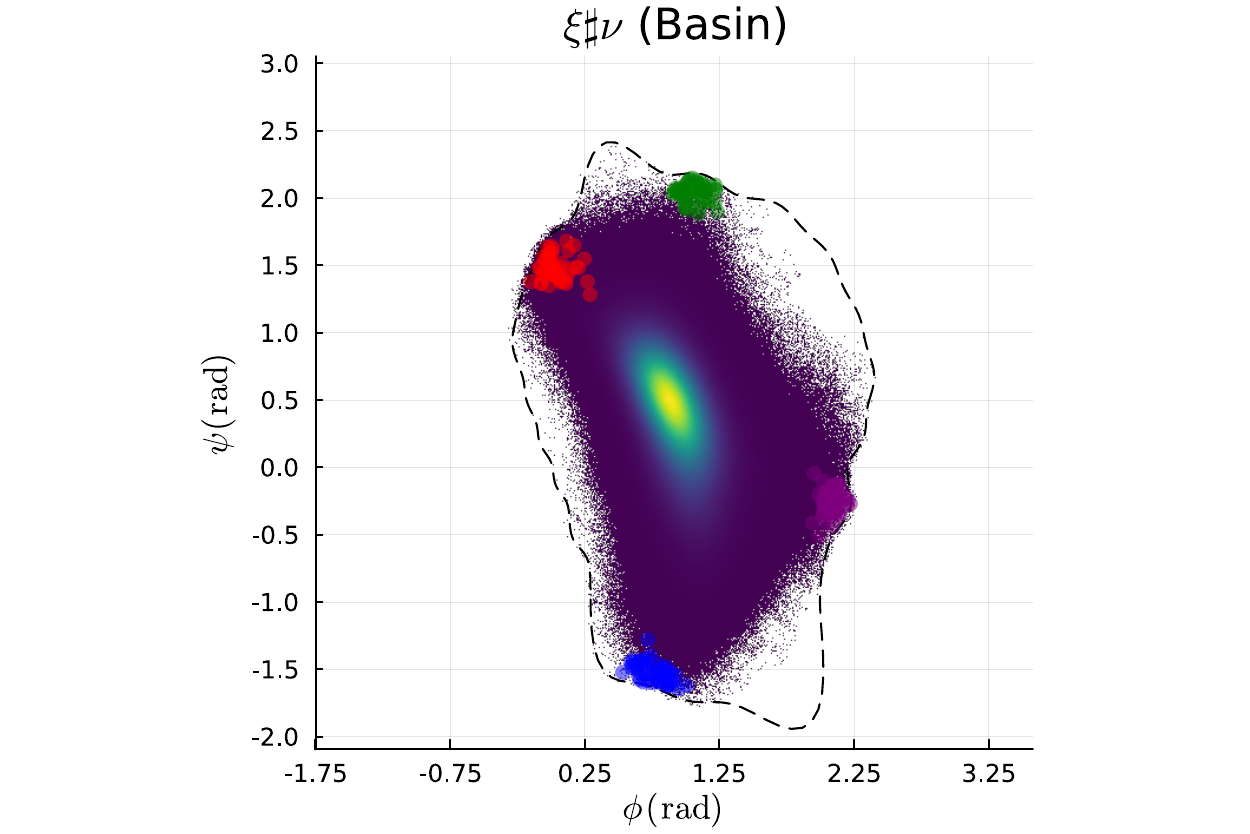}
\includegraphics[width=0.49\linewidth]{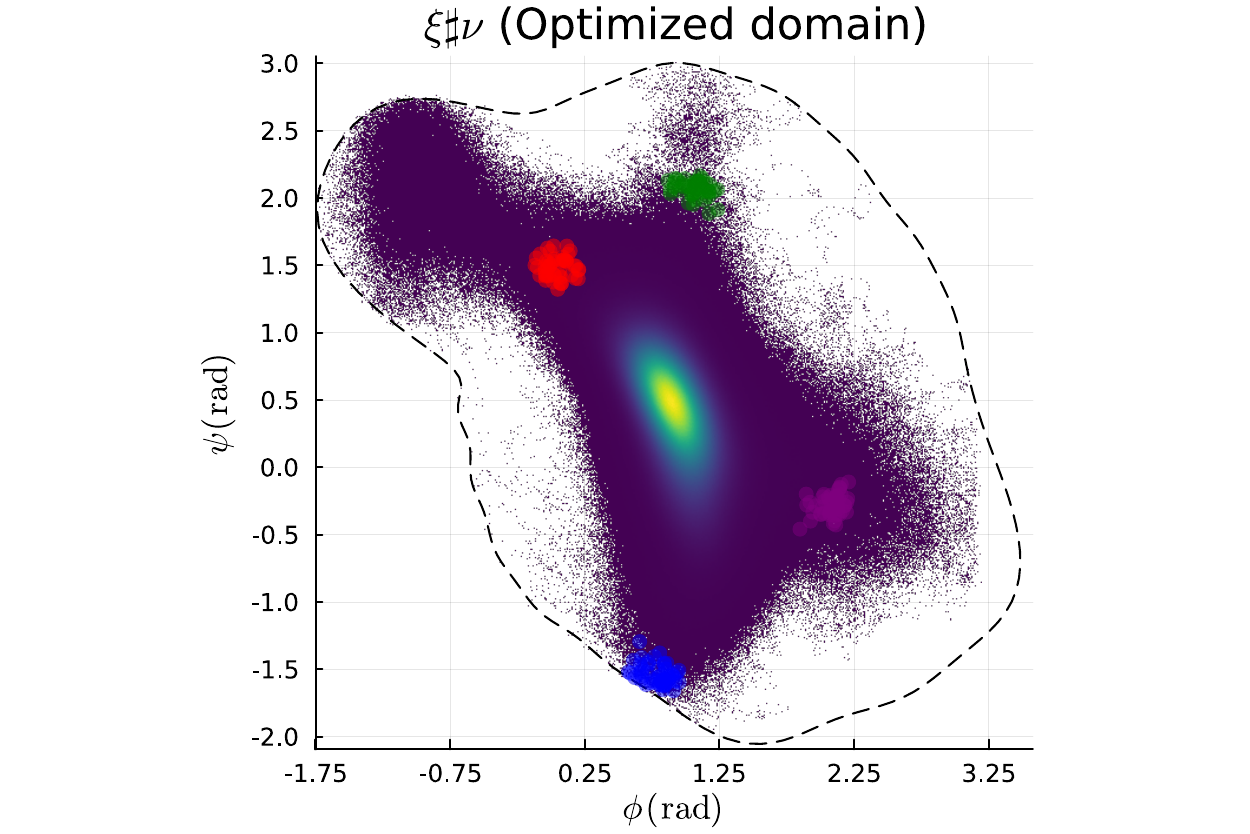}
\caption{Empirical $\xi$-marginal for the stationary Fleming--Viot process for~$\gamma=10\,\mathrm{ps}^{-1}$. Left: free-energy basin. Right: numerically optimized domain. On both figures, sampled initial configurations for the Fleming--Viot process are overlaid on the stationary histogram, and distinguished by color according to the corresponding free-energy saddle point. Initial conditions sampled around the free-energy minimum are not depicted.}
\label{fig:qsds}
\end{figure}

\paragraph{Results.}
We present the results in Table~\ref{fig:benchmark_opt}: for each of the states and values of~$\gamma\in\{1,2,5,10,20\}\,\mathrm{ps}^{-1}$, we report the estimated exit rate ($\ER$)~$\widehat{\lambda}_1(\Omega)$, in~$\mathrm{ps}^{-1}$, as well as various metrics quantifying the speed of convergence to the QSD.
\begin{itemize}
    \item{The decorrelation rate ($\DR$) in~$\mathrm{ps}^{-1}$, defined as the least infered decorrelation rate among the observables~$\phi,\psi$ and ensembles of initial configurations.}
    \item{The mixing time from saddle points ($\MT^{\mathrm{s}}$) in~$\mathrm{ps}$, defined as the largest mixing time~$\MT_{0.05}(f)$ for~$f\in\{\phi,\psi\}$ and initial conditions steered towards one of the four free-energy saddle points according to the procedure described in the previous paragraph.}
    \item{The mixing time from the minimum ($\MT^{\mathrm{m}}$) in~$\mathrm{ps}$, defined as the largest mixing time~$\MT_{0.05}(f)$ for~$f\in\{\phi,\psi\}$, and initial conditions steered towards the free-energy minimum.}
\end{itemize}
To each of these metrics, we associate a corresponding measure of the separation of metastable timescales, namely the respective ratios~$\DR/\ER$,~$1/(\MT^{\mathrm{s}}\cdot\ER)$ and~$1/(\MT^{\mathrm{m}}\cdot\ER)$ (where the inverse mixing times are interpreted as ``mixing rates'').
The full results are given in Tables~\ref{tab:free_energy_basin} and~\ref{tab:optimized_state}, for the free-energy basin and optimized state, respectively. The timescale ratios are also plotted for visual comparison in Figure~\ref{fig:benchmark_opt}. We consistently observe a gain in timescale separation when using the optimized state, especially for higher values of the friction parameter, where the gain is estimated to be about~$\times 3$ for the optimized state, across all measures of timescale separation. At lower values of the friction parameter, the gain is less pronounced, but still substantial.
The improvements in timescale separation are reported in Figure~\ref{tab:ratios}. The various timescale separation metrics are generally in agreement about this improvement.
\begin{figure}[ht]
  \centering
  \small
  \begin{subtable}{0.9\textwidth}
    \centering
    \caption{Free-energy basin.}
    \begin{tabular}{l|lllllll}
      \hline
      $\gamma$ & $\ER$ & $\DR$ &  $\DR/\ER$ & $\MT^{\mathrm{s}}$ & $(\ER\cdot\MT^{\mathrm{s}})^{-1}$ & $\MT^{\mathrm{m}}$  & $(\ER\cdot\MT^{\mathrm{m}})^{-1}$ \\
      \hline
$1$ & $(4.31 \pm 0.20) \times 10^{-3}$ & $0.48$ & $111.7$ & $6.8$ & $34.1$ & $0.8$ & $290.0$ \\
$2$ & $(4.21 \pm 0.21) \times 10^{-3}$ & $0.47$ & $110.5$ & $7.2$ & $33.0$ & $1.0$ & $237.0$ \\
$5$ & $(3.69 \pm 0.21) \times 10^{-3}$ & $0.36$ & $97.1$ & $9.4$ & $28.8$ & $1.8$ & $150.6$ \\
$10$ & $(3.26 \pm 0.19) \times 10^{-3}$ & $0.28$ & $84.9$ & $11.8$ & $26.0$ & $2.0$ & $153.6$ \\
$20$ & $(2.52 \pm 0.16) \times 10^{-3}$ & $0.19$ & $72.4$ & $16.8$ & $23.6$ & $3.0$ & $132.1$ \\
      \hline
    \end{tabular}
    \label{tab:free_energy_basin}
  \end{subtable}
  \hfill
  \begin{subtable}{0.9\textwidth}
    \centering
    \caption{Optimized domain.}
    \begin{tabular}{l|lllllll}
      \hline
      $\gamma$ & $\ER$ & $\DR$ &  $\DR/\ER$ & $\MT^{\mathrm{s}}$ & $(\ER\cdot\MT^{\mathrm{s}})^{-1}$ & $\MT^{\mathrm{m}}$  & $(\ER\cdot\MT^{\mathrm{m}})^{-1}$ \\
      \hline
$1$ & $(1.86 \pm 0.15) \times 10^{-3}$ & $0.25$ & $134.0$ & $8.6$ & $62.5$ & $1.6$ & $336.0$ \\
$2$ & $(1.70 \pm 0.13) \times 10^{-3}$ & $0.40$ & $237.0$ & $7.6$ & $77.6$ & $1.2$ & $491.0$ \\
$5$ & $(1.43 \pm 0.12) \times 10^{-3}$ & $0.26$ & $184.0$ & $13.6$ & $51.4$ & $1.4$ & $500.0$ \\
$10$ & $(1.01 \pm 0.11) \times 10^{-3}$ & $0.27$ & $265.0$ & $12.6$ & $78.9$ & $2.0$ & $497.0$ \\
$20$ & $(7.55 \pm 0.92) \times 10^{-4}$ & $0.17$ & $228.0$ & $17.0$ & $77.9$ & $3.0$ & $442.0$ \\
      \hline
    \end{tabular}
    \label{tab:optimized_state}
  \end{subtable}

    \begin{subtable}{0.9\textwidth}
    \centering
    \caption{Improvement of the optimized domain over the free-energy basin in timescale separation metrics.}
    \begin{tabular}{l|lll}
      \hline
      $\gamma$ & $\DR/\ER$ & $(\ER\cdot\MT^{\mathrm{s}})^{-1}$ & $(\ER\cdot\MT^{\mathrm{m}})^{-1}$ \\
      \hline 
$1$ & $1.2$ &  $1.83$ &  $1.16$\\
$2$ &  $2.14$ &   $2.35$ &  $2.07$\\
$5$ &  $1.89$ &$1.78$ &$3.32$\\
$10$ & $3.12$ & $3.03$ & $3.24$\\
$20$ &  $3.15$ &  $3.3$ & $3.35$\\
    \hline
\end{tabular}
\label{tab:ratios}
\end{subtable}
  \caption{Results of the Fleming--Viot simulations, showing that the optimized state consistently outperforms the free-energy basin. Reported errors are at the level~$\pm1.96\sigma$.}
  \label{fig:benchmark_opt}
  \normalsize
\end{figure}

\begin{figure}
\centering
\includegraphics[width=0.75\linewidth]{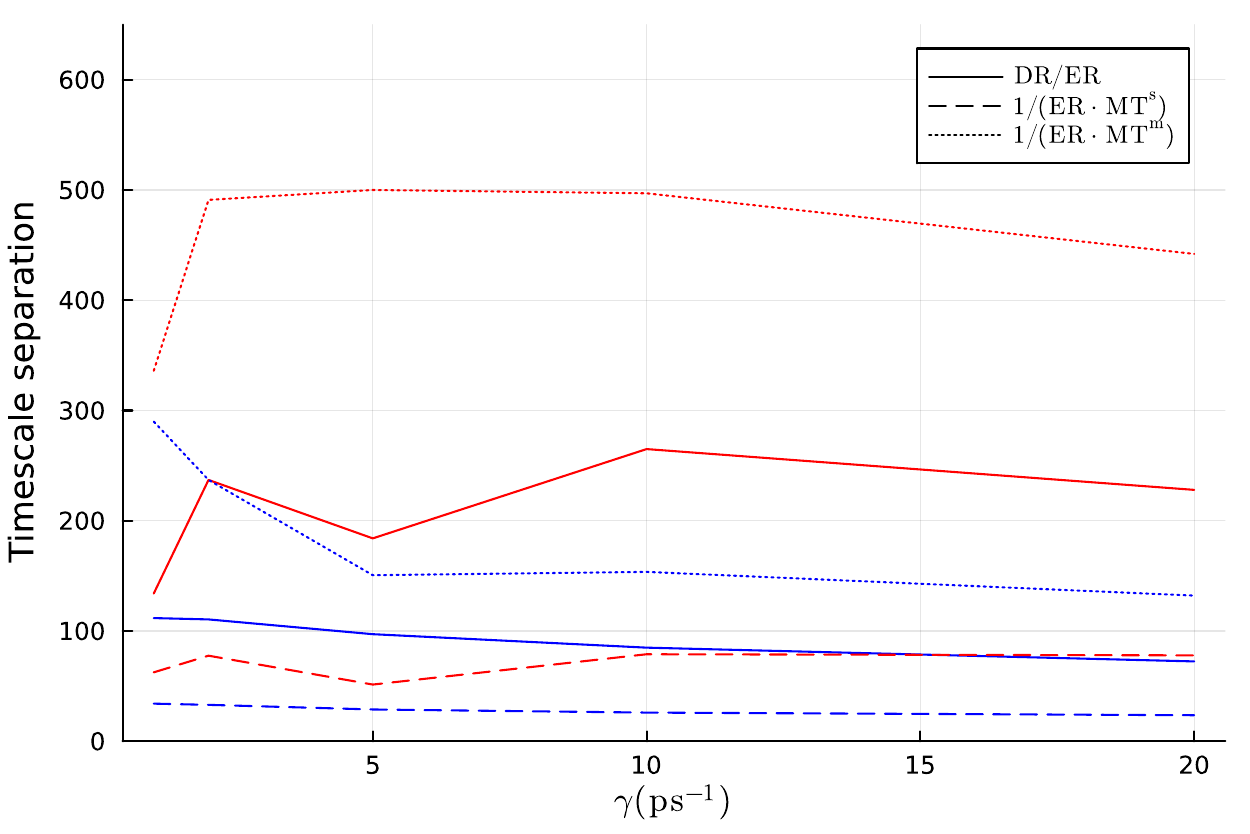}
\caption{Timescale separation ratios. Blue lines correspond to the free-energy basin, and red lines to the optimized domain. Across  all metrics, the optimized domain outperforms the free-energy basin.}
\label{fig:benchmark_visual}
\end{figure}

\section{Conclusions and perspectives}
This work raises a number of perspectives which could prove interesting for future research.
\begin{itemize}
    \item The most salient point is the extension of the shape perturbation results of Theorem~\ref{thm:gateaux_differentiability} to the case of non-reversible and/or hypoelliptic diffusions. We expect that, due to the non-symmetry and/or non-ellipticity of the generator~\eqref{eq:generator}, this represents a significant theoretical endeavour.
    \item A standing question would be how to systematically optimize the definition of the core-sets in Algorithm~\ref{alg:parrep} (see Appendix~\ref{sec:parrep} below), either numerically or in some limiting asymptotic regime. This question is related to the search for quantitative estimates of the prefactor~$C(x)$ in the error estimate~\eqref{eq:decorrelation_rate}. Another idea would be to jointly optimize a collection of states to maximize a weighted sum of separation of timescales within each state.
    \item At this point, a convergence proof for the method described in Algorithm~\ref{alg:ascent} is lacking. It would be interesting to obtain consistency results with respect to the various approximation parameters.
    \item The direct shape optimization method, due to the FEM discretization, is limited to settings where low-dimensional ($m=2$ or $3$) representations of the dynamics (i.e. good CVs) are available. To go beyond this limitation, a natural approach would be to follow a parametric approach, setting~$\Omega_\theta=\Phi_\theta(\mathcal C)$ for some reference domain~$\mathcal C\subset \R^d$, where~$\theta\mapsto \Phi_\theta$ is a parametric family of homeomorphisms, represented for instance using a neural network. The main question becomes how to define a neural architecture for which the Dirichlet eigenvalue problem associated with the transported operator~\eqref{eq:transported_operator} is solvable, and for which perturbations of the eigenvalues are tractable.
    \item Instead of computing the thermodynamic coefficients~$F_\xi$,~$a_\xi$ (see Equations~\eqref{eq:free_energy},~\eqref{eq:diffusion_tensor}) associated with the effective dynamics~\eqref{eq:effective_dynamics}, one could seek kinetically-tuned effective dynamics. One approach would be to train a parametric model of a dynamics of the form~\eqref{eq:overdamped_langevin} on CV trajectory data, using EDMD-like techniques or neural SDEs in the hope of obtaining a dynamical description which is more robust to a suboptimal choice of CV. Note that the results of Theorem~\ref{thm:gateaux_differentiability}, since they are currently limited to the case of reversible dynamics, place a constraint on the class of allowable models.
    \item Recent techniques (see for example~\cite{MPWN18,WN20,ZS23,TLK25}) target the learning of collective variables approximating the dominant eigenvectors of Markov processes. In view of Proposition~\ref{prop:rates_estimates}, a similar strategy is also natural in our setting. However, the problem is more subtle in the case of Dirichlet eigenvalues, since the collective variable defines the state, which in turn defines the eigenvectors. This leads to a bilevel optimization problem in~$(\xi,\Omega)$, for which various methods could be considered. A key practical challenge in this setting is to avoid a costly recomputation of the free energy and diffusion tensor at each update of the collective variable.
\end{itemize}
\label{sec:ccl}

\appendix
\section{Proof of Theorem~\ref{thm:gateaux_differentiability}}
\label{sec:proof}
We prove Theorem~\ref{thm:gateaux_differentiability} in this appendix. The proof relies on the transport of the variational formulation of the generalized eigenvalue problem on~$\Omega_\theta$ to the reference domain~$\Omega$.
This leads to the definition of a family of generalized eigenproblems associated with bilinear forms parametrized by~$\theta$. The corresponding eigenvalues are exactly the Dirichlet eigenvalues of~$-\cL_\beta$ on~$\Omega_\theta$.
One then proves the Fréchet-differentiability of these bilinear forms, or equivalently, by polarization, of the associated quadratic forms. Since the first-order perturbations are themselves unbounded quadratic forms, this regularity only holds with respect to the topology of relatively bounded perturbations of the reference quadratic forms.
Once this has been established, classical results of perturbation theory from~\cite{K13} can be leveraged to show the Fréchet differentiability of the inverse operator, and finally the Gateaux-differentiability of multiple eigenvalues.
\begin{remark}
    By adapting the approach based on the implicit function theorem discussed in~\cite[Section 5.7]{HP05} for the Dirichlet Laplacian (which corresponds to the special case~$a\equiv\Id$ and~$V\equiv 0$ in our setting), one can also show that the map~$\theta\mapsto \lambda_k(\Omega_\theta)$ is~$\mathcal C^1$ around~$\theta=0$ in a somewhat less technical manner. 
    However, this approach is only adapted to the case of simple eigenvalues. Since the main purpose of~Theorem~\ref{thm:gateaux_differentiability} is to identify ascent directions for functionals of the Dirichlet eigenvalues of~$\cL_\beta$ with respect to the perturbation~$\theta$,
    and since multiple eigenvalues have been noted to occur in eigenvalue shape optimization problems (see e.g.~\cite[Theorem 2.5.10]{H06} or~\cite[Section 4.5]{O04}), including in our own numerical experiments (see Figure~\ref{fig:lambdas_diala} above), it is of practical interest to devise numerical strategies adapted to this reality.
\end{remark}

\begin{proof}[Proof of Theorem~\ref{thm:gateaux_differentiability}]
    At various places, we assume that
    $$\|\theta\|_{\Winf} < h_0$$
    for some~$h_0>0$ whose value will be reduced several times.
    We also fix a regular open and bounded set~$\mathcal D\subset\R^d$, sufficiently large so that~$\bigcup_{x\in \Omega} B_{\R^d}(x,h_0)\subset \mathcal D$. This ensures in particular that~$\Omega_\theta\subset \mathcal D$ for all~$\|\theta\|_{\Winf}< h_0$.

    We say that an estimate of the form
    \begin{equation}
        J(\Omega,\theta) \leq C(\mathcal D)h(\theta),
    \end{equation}
    with~$J : \mathcal B(\R^d)\times \Winf \to \R$,~$h:\Winf\to\R$ and $C(\mathcal D)>0$, holds ``uniformly inside~$\mathcal D$'' if it holds for all pairs~$(\Omega,\theta)$ with~$\Omega\subset \mathcal D$ an open Lipschitz set and~$\theta\in B_{\Winf}(0,h_0)$.

    \paragraph{Transport of the variational formulation.}
    Introduce, for~$\theta\in\Winf$, the bilinear forms
    \begin{equation}
        \label{eq:bilinear_forms_0}
        \forall\,u,v\in H_0^1(\Omega_\theta),\qquad \fa_0(u,v;\Omega_\theta) = \frac1\beta\int_{\Omega_\theta}\nabla u^\top a \nabla v \,\e^{-\beta V},\qquad \fb_0(u,v;\Omega_\theta) = \int_{\Omega_\theta}uv\,\e^{-\beta V}.
    \end{equation}
    For~$\|\theta\|_{\Winf}<h_0$ sufficiently small, the map~$\Phi_\theta(x) = x + \theta(x)$ is a bi-Lipschitz homeomorphism of~$\R^d$, and using the Lebesgue change of variables formula, it holds
    \begin{equation}
        \begin{aligned}
        \fa_0(u,v;\Omega_\theta) &= \frac1\beta \int_{\Omega}\left(\nabla u^\top a \nabla v\,\e^{-\beta V}\right)\circ \Phi_\theta \left|\det\,\nabla \Phi_\theta\right|\\
        &=\frac1\beta\int_\Omega \nabla\left(u\circ \Phi_\theta\right)^\top \nabla \Phi_\theta^{-\top}a\circ \Phi_\theta \nabla \Phi_\theta^{-1}\nabla\left(v\circ\Phi_\theta\right)\e^{-\beta V\circ\Phi_\theta}\left|\det\,\nabla\Phi_\theta\right|\\
        &:=\fa_\theta(u\circ \Phi_\theta,v\circ\Phi_\theta;\Omega),
        \end{aligned}
    \end{equation}
    where we used~$\nabla\left(u\circ\Phi_\theta\right)=\nabla \Phi_\theta \left(\nabla u\right)\circ\Phi_\theta$ in the penultimate line, and~$\nabla \Phi_\theta^{-1}$ denotes the matrix inverse of~$\nabla \Phi_\theta$.
    Similarly,
    \begin{equation}
        \fb_0(u,v;\Omega_\theta) = \int_\Omega \left(u\circ\Phi_\theta\right) \left(v\circ\Phi_\theta\right) \e^{-\beta V\circ \Phi_\theta}\left|\det\,\nabla \Phi_\theta\right| := \fb_\theta(u\circ\Phi_\theta,v\circ\Phi_\theta;\Omega).
    \end{equation}
    From now on, all bilinear forms act on the fixed domain~$\Omega$, which we therefore omit in the notation for the bilinear forms, i.e. we define for all~$u,v\in H_0^1(\Omega)$,
    \begin{equation}
        \label{eq:bilinear_forms}
        \fa_\theta(u,v) = \frac1\beta\int_\Omega \nabla u^\top \nabla \Phi_\theta^{-\top}a\circ \Phi_\theta \nabla \Phi_\theta^{-1}\nabla v\e^{-\beta V\circ\Phi_\theta}\left|\det\,\nabla\Phi_\theta\right|,\qquad \fb_\theta(u,v) = \int_\Omega uv\e^{-\beta \circ \Phi_\theta}\left|\det\,\nabla\Phi_\theta\right|.
    \end{equation}

    \paragraph{Spectral properties.}
    Now, considering an eigenpair~$(\lambda_\theta,u_\theta)$ for~$-\cL_\beta$ on~$L^2_\beta(\Omega_\theta)$, it holds, for all~$v\in H_{0,\beta}^1(\Omega_\theta)=H_0^1(\Omega_\theta)$, that
    \[\fa_\theta\left(u_\theta\circ\Phi_\theta,v\circ\Phi_\theta\right) = \lambda_\theta \fb_\theta(u\circ\Phi_\theta,v\circ\Phi_\theta).\]
    Since, using the isomorphism~\eqref{eq:composition}, any function~$v\in H_0^1(\Omega)$ can be written under the form~$v\circ \Phi_{\theta}^{-1}\circ \Phi_\theta$ with~$v\circ\Phi_{\theta}^{-1}\in H_0^1(\Omega_\theta)$, the transported eigenvector~$w_\theta = u_\theta\circ \Phi_\theta$ satisfies
    \begin{equation}
        \fa_\theta(w_\theta,v) = \lambda_\theta \fb_\theta(w_\theta,v)\qquad \forall v\in H_0^1(\Omega).
    \end{equation}
    In other words, $w_\theta$ is a generalized eigenvector for $(\fa_\theta,\fb_\theta)$. Let us make this statement precise.

    We first introduce the following estimates, which hold, for~$h_0<1$, uniformly inside~$\mathcal D$ for some~$C_1(\mathcal D),C_2(\mathcal D)>0$:
    \begin{equation}
        \label{eq:grad_phi_estimates}
        \begin{aligned}
            \|\nabla\Phi_\theta^{-1} - \Id\|_{L^\infty(\Omega;\mathcal M_d)} &\leq \frac{\|\theta\|_{\Winf}}{1-\|\theta\|_{\Winf}} \leq C_1(\mathcal D) \|\theta\|_\Winf,\\
            \|\nabla\Phi_\theta^{-1}-\left(\Id-\nabla \theta\right)\|_{L^\infty(\Omega;\mathcal M_d)}&\leq \frac{\|\theta\|^2_{\Winf}}{1-\|\theta\|_{\Winf}} \leq C_2(\mathcal D)\|\theta\|^2_{\Winf}.
        \end{aligned}
    \end{equation}
    These follow by expanding~$\nabla\Phi_\theta(x)^{-1}=\left(\Id+\nabla\theta(x)\right)^{-1}$ into a Neumann series and estimating the (submultiplicative)~$\mathcal M_d$-norm of the first and second partial remainders respectively.
    In fact we can take $C_1(\mathcal D)=C_2(\mathcal D)=(1-h_0)^{-1}$ here, but we nevertheless distinguish these constants for the sake of clarity.
    
    Secondly, by Jacobi's formula for the Fréchet derivative of the determinant of a $d\times d$ matrix, it holds almost everywhere in~$\R^d$ (by Rademacher's theorem) that
    \begin{equation}
        \left|\det \nabla \Phi_\theta(x)\right| = \det(1+\nabla \theta(x)) = 1 + \Tr\,\nabla \theta(x) + \bigo(|\nabla\theta(x)|^2_{\mathcal M_d});
    \end{equation}
    whence, uniformly inside~$\mathcal D$ for some constants~$C_3(\mathcal D),C_4(\mathcal D)>0$, it holds
    \begin{equation}
        \label{eq:divergence_estimate}
        \begin{aligned}
            \left\|\det \nabla\Phi_\theta-1\right\|_{L^\infty(\Omega)} &\leq C_3(\mathcal D)\left\|\theta\right\|_{\Winf},\\
            \left\|\det \nabla\Phi_\theta-1-\div\,\theta\right\|_{L^\infty(\Omega)} &\leq C_4(\mathcal D)\left\|\theta\right\|_{\Winf}^2.
        \end{aligned}
    \end{equation}
    Note that we used~$\Tr\nabla \theta = \div\,\theta$ and~$|\Tr M|\leq d|M|_{\mathcal M_d}$ for all~$M\in \mathcal M_d$.

    From the estimates~\eqref{eq:grad_phi_estimates} and~\eqref{eq:divergence_estimate}, we deduce that the symmetric bilinear form~$\fa_\theta$, with domain~$H_0^1(\Omega)\subset L^2(\Omega)$, satisfies the following upper bound uniformly inside~$\mathcal D$:
    \begin{equation}
        \label{eq:quadratic_form_upper_bound}
        \fa_\theta(u,u) \leq \frac1\beta\|a\|_{L^\infty\left(\mathcal D;\mathcal M_d\right)}\|\e^{-\beta V}\|_{L^\infty\left(\mathcal D\right)}(1+C_3(\mathcal D) h_0)(1+C_1(\mathcal D)h_0)^2\|\nabla u\|^2_{L^2(\Omega)},
    \end{equation}
     as well as the lower bound
    \begin{equation}
        \label{eq:quadratic_form_lower_bound}
        \begin{aligned}
            \fa_\theta(u,u) &\geq \frac1\beta\varepsilon_a(\overline{\mathcal D}) m_V(\mathcal D) (1-C_1(\mathcal D)h_0)^2(1-C_3(\mathcal D)h_0)\|\nabla u\|_{L^2(\Omega)}^2\\
            &\geq \frac1\beta\varepsilon_a(\overline{\mathcal D}) m_V(\mathcal D) (1-C_1(\mathcal D)h_0)^2(1-C_3(\mathcal D)h_0) \mu_0(\Omega)\|u\|^2_{L^2(\Omega)}\\
            &\geq \frac1\beta\varepsilon_a(\overline{\mathcal D}) m_V(\mathcal D) (1-C_1(\mathcal D)h_0)^2(1-C_3(\mathcal D)h_0) \mu_0(\mathcal D)\|u\|^2_{L^2(\Omega)},
        \end{aligned}
    \end{equation}
    with~$\mu_0>0$ is the principal Dirichlet eigenvalue of the Laplacian, and where we use~$\mu_0(\Omega)\geq \mu_0(\mathcal D)$ in the last line~(see for instance~\cite[Proposition 16]{BLS24}), and recall the definition~\eqref{eq:a_ellipticity} of~$\varepsilon_a(\overline{\mathcal D})$.
    In~\eqref{eq:quadratic_form_lower_bound}, we define
    $$ m_V(\mathcal D) := \exp\left(-\beta \left[\underset{\mathcal D}{\mathrm{ess}\sup}\, V\right]\right)>0.$$
    Note that the lower bound in~\eqref{eq:quadratic_form_lower_bound} is positive for~$h_0$ sufficiently small, therefore~$\fa_{\theta}$ is~$H^1(\Omega)$-coercive and $L^2(\Omega)$-bounded from below, uniformly inside~$\mathcal D$.
    Moreover, it follows from~\eqref{eq:quadratic_form_upper_bound} and~\eqref{eq:quadratic_form_lower_bound} that~$\fa_\theta$ is closed, since the squared form norm~$\|u\|_{\fa_\theta}^2 = \fa_\theta(u,u) + \|u\|^2_{L^2(\Omega)}$ on~$H_0^1(\Omega)$ is equivalent to the~squared $H^1(\Omega)$-norm.
    Therefore, by a representation result for positive symmetric closed forms~\cite[Theorem VI.2.6]{K13}, there exists a self-adjoint operator~$A_\theta$ satisfying
    \begin{equation}
        \label{eq:def_a_theta}
        \fa_\theta(u,v) = \left\langle A_\theta u,v\right\rangle_{L^2(\Omega)},\qquad \forall (u,v)\in \mathcal D(A_\theta)\times L^2(\Omega),
    \end{equation}
    with~$\mathcal D(A_\theta)\subset H_0^1(\Omega)$ and $A_\theta$ being $L^2(\Omega)$-bounded from below, with the same lower bound as~$a_\theta$:
    \begin{equation}
        \label{eq:A_lower_bound}
        \langle A_\theta u,u\rangle_{L^2(\Omega)} \geq \frac1\beta\varepsilon_a(\overline{\mathcal D}) m_V(\mathcal D) (1-C_1(\mathcal D)h_0)^2(1-C_3(\mathcal D)h_0) \mu_0(\mathcal D)\|u\|^2_{L^2(\Omega)}.
    \end{equation}
    In particular, the resolvent~$A_\theta^{-1}$ is bounded and compact in view of the compact embedding~$H_0^1(\Omega)\subset L^2(\Omega)$ given by the Rellich--Kondrachov theorem.

    Note that, by integration by parts,~$A_\theta$ extends the positive (for~$h_0$ sufficiently small) operator
    \begin{equation}
        \label{eq:transported_operator}
        -{\widetilde\cL}_{\beta,\theta} \varphi = -\frac1\beta \div\left(\left|\det \nabla \Phi_\theta\right|\e^{-\beta V\circ \Phi_\theta} \nabla \Phi_\theta^{-\top}a\circ \Phi_\theta \nabla \Phi_\theta^{-1}\nabla \varphi\right)\qquad \forall\varphi\in \mathcal D({\widetilde\cL}_{\beta,\theta}) = \mathcal \testfuncs(\Omega),
    \end{equation}
    therefore, the closed operator~$A_\theta$, which extends~$-{\widetilde\cL}_{\beta,\theta}$, also corresponds to its Friedrichs extension (see~\cite[Section VI.2.3]{K13}).
    For the sake of consistency, we also write~$A_0$ for the operator~$-\cL_\beta$. 
    Similarly,~$\fb_\theta$ has the representation~$\fb_\theta(u,v) = \left\langle B_\theta u,v\right\rangle_{L^2(\Omega)}$, where $B_\theta$ is the bounded, positive linear operator given by multiplication by~$\e^{-\beta V\circ \Phi_\theta}\left|\det\,\nabla \Phi_\theta\right|$, which is both bounded from above and from below uniformly inside~$\mathcal D$:
    \begin{equation}
        \label{eq:b_bound}
        m_{V}(\mathcal D)(1-C_3(\mathcal D)h_0)\|u\|^2_{L^2(\Omega)}\leq \fb_\theta(u,u) = \left\langle B_\theta u,u\right\rangle_{L^2(\Omega)}\leq \|\e^{-\beta V}\|_{L^\infty(\mathcal D)}(1+C_3(\mathcal D)h_0)\|u\|^2_{L^2(\Omega)}.
    \end{equation}
    It follows (see~\cite[Proposition 1]{HR80b} or the discussion in~\cite[Section VII.6.1]{K13}) that the reciprocals of the eigenvalues of the compact, positive operator~$A_\theta^{-1}B_\theta$ on~$L^2(\Omega)$ (which is also self-adjoint for the topologically equivalent scalar product~$\left\langle B_\theta\cdot,\cdot\right\rangle_{L^2(\Omega)}$) are the solutions to
    \begin{equation}
        \label{eq:eigenvalue_problem_operator_form}
        A_\theta w_\theta = \lambda_\theta B_\theta w_\theta,\quad \lambda_\theta >0,\quad w_\theta \in \mathcal D(A_\theta).
    \end{equation}
    In fact it is more convenient than solving the latter generalized eigenvalue problem to consider the spectrum of the compact operator~$A_\theta^{-1}B_\theta$, which is composed of positive, isolated eigenvalues of finite multiplicity.

    \paragraph{Perturbation estimates.}
    Let us define the first-order perturbations of the linear forms~\eqref{eq:bilinear_forms}. More precisely, we define, for~$u,v\in H_0^1(\Omega)$, the symmetric bilinear forms
    \begin{equation}
        \label{eq:perturbations}
        \begin{aligned}
        \d \fa_0(\theta)(u,v) &= \frac1\beta\int_\Omega \nabla u^\top \left(\nabla a^\top \theta -a\nabla \theta -\nabla\theta^\top a\right)\nabla v\,\e^{-\beta V}+ \frac1\beta \int_\Omega \nabla u^\top a \nabla v\,\div\left(\theta\e^{-\beta V}\right),\\
        \d \fb_0(\theta)(u,v) &= \int_\Omega uv\,\div(\theta \e^{-\beta V}).
        \end{aligned}
    \end{equation}
    Writing, for~$u\in H_0^1(\Omega)$,
    \begin{equation}
        \begin{aligned}
            \label{eq:remainders}
            r_{\fa}(\theta,u) &= \fa_\theta(u,u) - \fa_0(u,u)-\d \fa_0(\theta)(u,u),\\
            r_{\fb}(\theta,u) &= \fb_\theta(u,u) - \fb_0(u,u) - \d \fb_0(\theta)(u,u),
        \end{aligned}
    \end{equation}
    we next show that the following bounds hold for all~$u\in H_0^1(\Omega)$ and~$\theta\in B_{\Winf}(0,h_0)$:
    \begin{equation}
        \label{eq:perturbation_estimates}
        \begin{aligned}
        |\d \fa_0(\theta)(u,u)| &\leq C_{\fa,1}(\mathcal D) \|\theta\|_{\Winf}\fa_0(u,u), &|\d \fb_0(\theta)(u,u)| &\leq C_{\fb,1}(\mathcal D) \|\theta\|_{\Winf}\fb_0(u,u),\\
        |r_{\fa}(\theta,u)|&\leq C_{\fa,2}(\mathcal D)\|\theta\|^2_{\Winf}\fa_0(u,u), &|r_{\fb}(\theta,u)&\leq C_{\fb,2}(\mathcal D)\|\theta\|_{\Winf}^2 \fb_0(u,u),
        \end{aligned}
    \end{equation}
    where~$C_{\fa,1}(\mathcal D),C_{\fa,2}(\mathcal D),C_{\fb,1}(\mathcal D),C_{\fb,2}(\mathcal D)$ are positive constants.
    These estimates, together with the linearity of the maps~$\theta\mapsto \d \fa_0(\theta)$ and~$\theta\mapsto \d \fb_0(\theta)$, establish the Fréchet differentiability of the bilinear forms~$\fa_\theta,\fb_\theta$, in the topology of relative~$\fa_0$-form-boundedness and~$\fb_0$-form-boundedness respectively, at~$\theta=0$. Note that the Kato--Rellich theorem~(see for instance~\cite[Theorem 6.4]{T14}) then implies that~$\mathcal D(A_\theta) = \mathcal D(A_0)$ in a~$\Winf$-neighborhood of~$\theta=0$.
    Therefore, we may assume that~$h_0$ is sufficiently small so that~$\mathcal D(A_\theta)=\mathcal D(A_0)$ for all~$\theta\in B_{\Winf}(0,h_0)$.

    The expressions~\eqref{eq:perturbations} are motivated by formal first-order expansions in~$\theta$ in the expressions~\eqref{eq:bilinear_forms}.
    In order to establish them, we first note that the following estimates hold uniformly inside~$\mathcal D$:
    \begin{equation}
        \label{eq:taylor_expansions}
        \begin{aligned}
            \left\|\nabla \Phi_\theta^{-1}\right\|_{L^\infty(\Omega;\mathcal M_d)} &\leq 1+C_1(\mathcal D)h_0,\\
            \left\|\det \nabla \Phi_\theta\right\|_{L^\infty(\Omega)} &\leq 1+C_3(\mathcal D)h_0,\\
            \left\|a\circ\Phi_\theta\right\|_{L^\infty(\Omega;\mathcal M_d)}&\leq \left\|a\right\|_{L^\infty(\mathcal D;\mathcal M_d)},\\
            \left\|\e^{-\beta V\circ\Phi_\theta}\right\|_{L^2\infty(\Omega)}&\leq \left\|\e^{-\beta V}\right\|_{L^\infty(\mathcal D)},\\
            \left\|\nabla\Phi_\theta^{-1} - (\Id - \nabla\theta)\right\|_{L^\infty(\Omega;\mathcal M_d)}&\leq C_2(\mathcal D)\|\theta\|^2_{\Winf},\\
            \left\|\det\nabla\Phi_\theta - 1 - \div\,\theta\right\|_{L^\infty(\Omega)}&\leq C_4(\mathcal D)\|\theta\|^2_{\Winf},\\
            \left\|a\circ\Phi_\theta - a- \nabla a^\top\theta\right\|_{L^\infty(\Omega;\mathcal M_d)} & \leq \frac12\left\|\nabla^2 a\right\|_{L^\infty(\mathcal D;\mathcal M_d\otimes \mathcal M_d)}\left\|\theta\right\|^2_{L^\infty(\R^d;\R^d)},\\
            \left\|\e^{-\beta V\circ\Phi_\theta}-\left(\e^{-\beta V}- \beta\nabla V^\top \theta \e^{-\beta V}\right) \right\|_{L^\infty(\Omega)} &\leq \frac12\left\|\nabla^2 \left(\e^{-\beta V}\right)\right\|_{L^\infty(\mathcal D;\mathcal M_d)}\left\|\theta\right\|^2_{L^\infty(\R^d;\R^d)},\\
            \|\theta\|_{L^\infty(\Omega;\R^d)},\|\nabla\theta\|_{L^\infty(\Omega;\mathcal M_d)}&\leq h_0,\\
            \left\|\div\,\theta\right\|\leq d h_0 .
        \end{aligned}
    \end{equation}
    The two first estimates in~\eqref{eq:taylor_expansions} follow immediately from~\eqref{eq:grad_phi_estimates} and~\eqref{eq:divergence_estimate}. The third and fourth follow from the inclusion~$\Omega_\theta\subset \mathcal D$, the fifth and sixth are already given in~\eqref{eq:grad_phi_estimates} and~\eqref{eq:divergence_estimate}. The seventh and eighth follow from the regularities~$V\in \cW^{2,\infty}(\mathcal D)$,~$a\in \cW^{2,\infty}(\mathcal D;\mathcal M_d)$ given by Assumption~\eqref{eq:coeff_regularity}, and the inclusion~$\Omega_\theta\subset\mathcal D$.
    The penultimate estimate is clear, and the last one follows from~$|\Tr M|\leq d|M|_{\mathcal M_d}$.

    From the estimates~\eqref{eq:grad_phi_estimates},~\eqref{eq:divergence_estimate} and~\eqref{eq:taylor_expansions}, we obtain, by the Leibniz rule in the Banach algebra~$L^\infty(\Omega;\mathcal M_d)$, that the map
    \begin{equation}
        \label{eq:matrix_expr}
        \alpha:\left\{\begin{aligned}
            \Winf&\rightarrow  L^\infty(\Omega;\mathcal M_d),\\
            \theta & \mapsto\nabla \Phi_\theta^{-\top} a\circ \Phi_\theta \nabla \Phi_\theta^{-1}|\det \nabla \Phi_\theta| \e^{-\beta V\circ \Phi_\theta},
        \end{aligned}\right.
    \end{equation}
    is Fréchet differentiable at~$\theta = 0$, with
    \begin{equation}
        D\alpha(0)\theta = -\nabla\theta^\top a \e^{-\beta V} + \nabla a^\top\theta \e^{-\beta V} -a\nabla\theta\e^{-\beta V} + a\,\div\left(\theta\e^{-\beta V}\right).
    \end{equation}
    Moreover, there exist~$M_{\fa,1}(\mathcal D),M_{\fa,2}(\mathcal D)>0$ such that, uniformly inside~$\mathcal D$, it holds
    \begin{equation}
        \label{eq:dalpha_estimates}
        \begin{aligned}
            \|D\alpha(0)\theta\|_{L^\infty(\Omega;\mathcal M_d)} &\leq M_{a,1}(\mathcal D)\|\theta\|_{\Winf},\\
            \|\alpha(\theta)- \alpha(0)-D\alpha(0)\theta\|_{L^\infty(\Omega;\mathcal M_d)} & \leq M_{a,2}(\mathcal D)\|\theta\|_{\Winf}^2.
        \end{aligned}
    \end{equation}

    By a similar argument, the map
    \begin{equation}
        \label{eq:jacob_expr}
        \gamma:\left\{\begin{aligned}
            \Winf&\rightarrow L^\infty(\Omega),\\
            \theta & \mapsto |\det \nabla \Phi_\theta| \e^{-\beta V\circ \Phi_\theta},
        \end{aligned}\right.
    \end{equation}
    is Fréchet differentiable at~$\theta = 0$, with
    \begin{equation}
        D\gamma(0)\theta = \div\left(\theta\e^{-\beta V}\right),
    \end{equation}
    and the estimates
    \begin{equation}
        \label{eq:dgamma_estimates}
        \begin{aligned}
            \|D\gamma(0)\theta\|_{L^\infty(\Omega)} &\leq M_{b,1}(\mathcal D)\|\theta\|_{\Winf},\\
            \|\gamma(\theta)- \gamma(0)-D\gamma(0)\theta\|_{L^\infty(\Omega)} & \leq M_{b,2}(\mathcal D)\|\theta\|_{\Winf}^2,
        \end{aligned}
    \end{equation}
    hold uniformly inside~$\mathcal D$ for some positive constants~$M_{\fb,1}(\mathcal D),M_{\fb,2}(\mathcal D)>0$.

    We now show~\eqref{eq:perturbation_estimates}. It holds
    \begin{equation}
        \begin{aligned}
            \d \fa_0(\theta)(u,u) &= \frac1\beta\int_\Omega \nabla u^\top D\alpha(0)\theta \nabla u,\\
            r_{\fa}(\theta,u) &= \frac1\beta \int_{\Omega} \nabla u^\top\left(\alpha(\theta) - \alpha(0) - D\alpha(0)\theta\right)\nabla u,\\
            \d {\fb}_0(\theta)(u,u) &= \int_\Omega u^2 D\gamma(0)\theta,\\
            r_{\fb}(\theta,u) &= \int_\Omega u^2 \left(\gamma(\theta)-\gamma(0)-D\gamma(0)\theta\right),
        \end{aligned}
    \end{equation}
    whence, uniformly inside~$\mathcal D$:
    \begin{equation}
        \begin{aligned}
            |\d {\fa}_0(\theta)(u,u)| &\leq \frac{M_{{\fa},1}(\mathcal D)}{\beta}\|\theta\|_{\Winf}\|\nabla u\|^2_{L^2(\Omega;\R^d)},\\
            |r_{\fb}(\theta,u)| & \leq\frac{M_{{\fa},2}(\mathcal D)}{\beta}\|\theta\|^2_{\Winf}\|\nabla u\|^2_{L^2(\Omega;\R^d)},\\
            |\d {\fb}_0(\theta)(u,u)| & \leq M_{{\fb},1}(\mathcal D)\|\theta\|_{\Winf}\|u\|^2_{L^2(\Omega)},\\
            |r_{\fb}(\theta,u)|&\leq M_{{\fb},2}(\mathcal D)\|\theta\|_{\Winf}^2\|u\|^2_{L^2(\Omega)}.
        \end{aligned}
    \end{equation}
    Using~\eqref{eq:quadratic_form_lower_bound} and likewise the lower bound in~\eqref{eq:b_bound}, it follows that~\eqref{eq:perturbation_estimates} holds with constants
    \begin{equation}
        \begin{aligned}
            C_{{\fa},1}(\mathcal D) &= \frac{M_{{\fa},1}(\mathcal D)}{\varepsilon_a(\overline{\mathcal D}) m_V(\mathcal D)(1-C_1(\mathcal D)h_0)^2(1-C_3(\mathcal D)h_0)},&C_{{\fb},1}(\mathcal D) &= \frac{M_{{\fb},1}(\mathcal D)}{m_V(\mathcal D)(1-C_3(\mathcal D)h_0)},\\
            C_{{\fa},2}(\mathcal D) &= \frac{M_{{\fa},2}(\mathcal D)}{\varepsilon_a(\overline{\mathcal D}) m_V(\mathcal D)(1-C_1(\mathcal D)h_0)^2(1-C_3(\mathcal D)h_0)},&C_{{\fb},2}(\mathcal D) &= \frac{M_{{\fb},2}(\mathcal D)}{m_V(\mathcal D)(1-C_3(\mathcal D)h_0)}.
        \end{aligned}
    \end{equation}

    At this point, we have obtained the necessary estimates casting the problem in the form treated in~\cite{HR80a,HR80b}, using abstract arguments of perturbation theory. We next largely follow the approach of these works, but nevertheless give a full proof below, not only for the sake of self-completeness but also because we require stronger intermediate regularity results than those obtained in~\cite{HR80a} in order to prove the third item in Theorem~\ref{thm:gateaux_differentiability}.
    \paragraph{Continuous Fréchet differentiability of the inverse operator.}
    As previously noted,~$\lambda_k(\Omega_\theta)$ is the reciprocal of the~$k$-th largest eigenvalue of the operator
    $$S(\theta) := A_\theta^{-1}B_\theta.$$
    To obtain the results of Theorem~\ref{thm:gateaux_differentiability}, it is then sufficient to study the regularity of the eigenvalues of~$\theta\mapsto S(\theta)$.
    Assuming these are Gateaux-semi-differentiable, in order to obtain the second item in Theorem~\ref{thm:gateaux_differentiability}, we may indeed write, for~$0\leq \ell<m$:
    \begin{equation}
        \label{eq:lambda_quotient_rule}
    \left.\frac{\d}{\d t}\lambda_{k+\ell}(\Omega_{t\theta})\right|_{t=0^+} = -\lambda_k(\Omega)^2 \left.\frac{\d}{\d t}\frac{1}{\lambda_{k+\ell}(\Omega_{t\theta})}\right|_{t=0^+},
    \end{equation}
    where one recognizes right-Gateaux-derivatives of the eigenvalues of~$S$ at~$0$ on the right-hand side of this equality. A similar observation holds for Fréchet-differentiability.

    The first step is to show that~$\theta\mapsto A_\theta^{-1}B_\theta$ is~$\mathcal C^1$ in a~$\Winf$-neighborhood of~$\theta=0$ for the~$L^2_\beta(\Omega)$ operator norm. 
    
    From the estimates~\eqref{eq:perturbation_estimates}, it holds from the representation result~\cite[Lemma VI.3.1]{K13} that there exists~$L^2(\Omega)$-bounded operator-valued maps~$\theta \mapsto D_{A_0}^{(1)}\theta,R_{A_0}(\theta),D_{B_0}^{(1)}\theta,R_{B_0}(\theta) \in \mathcal B(L^2(\Omega))$ such that
    \begin{equation}
        \label{eq:representation_formulae}
        \begin{aligned}
            \d \fa_0(\theta)(u,v) &= \left\langle D_{A_0}^{(1)}\theta A_0^{1/2}u,A_0^{1/2}v\right\rangle,&\d \fb_0(\theta)(u,v) &= \left\langle D_{B_0}^{(1)}\theta B_0^{1/2}u,B_0^{1/2}v\right\rangle,\\
            r_{\fa}(\theta,u,v) &= \left\langle R_{A_0}(\theta)A_0^{1/2}u,A_0^{1/2}v\right\rangle,&r_{\fb}(\theta,u,v) &=  \left\langle R_{B_0}(\theta)B_0^{1/2}u,B_0^{1/2}v\right\rangle,
        \end{aligned}
    \end{equation}
    where~$A_0^{1/2}$ is the positive self-adjoint operator defined on~$\mathcal D(A_0^{1/2}) = H_0^1(\Omega)$ (the form domain of~$A_0$) by functional calculus, such that~$A_0^{1/2}A_0^{1/2} = A_0$ on~$\mathcal D(A_0)$,
    and where the bilinear forms~$r_a(\theta,\cdot,\cdot),\,r_b(\theta,\cdot,\cdot)$ are defined by polarization from the expressions~\eqref{eq:remainders}.
    Moreover, the operators~$D_{A_0}^{(1)},D_{B_0}^{(1)}$ are clearly linear, and the bounds
    \begin{equation}
        \label{eq:perturbation_estimates_ops}
        \begin{aligned}
        \left\|D_{A_0}^{(1)}\theta\right\|_{\mathcal B(L^2(\Omega))} &\leq C_{\fa,1}(\mathcal D) \|\theta\|_{\Winf},&\left\|D_{B_0}^{(1)}\theta\right\|_{\mathcal B(L^2(\Omega))} &\leq C_{\fb,1}(\mathcal D) \|\theta\|_{\Winf},\\
        \left\|R_{A_0}(\theta)\right\|_{\mathcal B(L^2(\Omega))} &\leq C_{\fa,2}(\mathcal D)\|\theta\|^2_{\Winf},&\left\|R_{B_0}(\theta)\right\|_{\mathcal B(L^2(\Omega))} &\leq C_{\fb,2}(\mathcal D)\|\theta\|^2_{\Winf}
        \end{aligned}
    \end{equation}
    are satisfied uniformly inside~$\mathcal D$, with the constants appearing in~\eqref{eq:perturbation_estimates}.

    Let~$f\in L^2(\Omega)$, and set
    \begin{equation}
        \label{eq:inv_A_ansatz}
        u := A_0^{-1/2}(\Id+D_{A_0}^{(1)}+R_{A_0}(\theta))^{-1}A_0^{-1/2}f\in \mathcal D(A_0^{1/2}),
    \end{equation}
    where the inverse is well-defined for~$\|\theta\|_{\Winf}<h_0$ sufficiently small, by the estimates~\eqref{eq:perturbation_estimates_ops}.
    By the representation result~\cite[Theorem VI.3.1]{K13}, it holds for any~$v\in\cD(A_0^{1/2})=H_0^1(\Omega)$, that
    \begin{equation}
        a_\theta(u,v) = \left\langle (\Id + D_{A_0}^{(1)}\theta + R_{A_0}(\theta))A_0^{1/2}u,A_0^{1/2}v\right\rangle_{L^2(\Omega)} = \left\langle A_0^{-1/2}f,A_0^{1/2}v\right\rangle_{L^2(\Omega)} = \left\langle f,v\right\rangle_{L^2(\Omega)}.
    \end{equation}
    By the representation theorem for symmetric positive closed forms~\cite[Theorem VI.2.1]{K13}, it holds~$u\in\cD(A_\theta)$ and~$A_\theta u = f$, whence
    \begin{equation}
        A_\theta^{-1} = A_0^{-1/2}\left(\Id + D_{A_0}^{(1)}\theta+R_{A_0}(\theta)\right)^{-1}A_0^{-1/2},
    \end{equation}
    and writing the Neumann series expansion, it then holds that
    \begin{equation}
        A_\theta^{-1} = A_0^{-1/2}\left(\Id-D_{A_0}^{(1)}\theta\right){A_0}^{-1/2} + \widetilde{R}_{A_0^{-1}}(\theta),
    \end{equation}
    with quadratically bounded remainder:~$\|\widetilde{R}_{A_{0}^{-1}}(\theta)\|_{\mathcal B(L^2(\Omega))} \leq C_{A_0,2}(\mathcal D) \|\theta\|_{\Winf}^2$ uniformly inside~$\mathcal D$ for some constant~$C_{A_0,2}(\mathcal D)>0$ and some operator-valued map~$\widetilde{R}_{A_0^{-1}}:\Winf\to\mathcal B(L^2(\Omega))$.
    Since~$D_{A_0}^{(1)}$ is controlled uniformly inside~$\mathcal D$ in the~$L^2(\Omega)$-operator norm, we only need to check that this is also the case for~$A_0^{-1/2}$, but this follows from the lower bound~\eqref{eq:quadratic_form_lower_bound}, which is uniform inside~$\mathcal D$.

    Therefore,~$\theta\mapsto A_\theta^{-1}$ is Fréchet-differentiable at~$\theta=0$, with
    \[D A_0^{-1}\theta = - A_0^{-1/2}D_{A_0}^{(1)}\theta A_0^{-1/2}.\]
    Since~$\theta\mapsto B_\theta$ is Fréchet-differentiable at~$\theta=0$ from~\eqref{eq:perturbation_estimates_ops}, the inverse operator~$\theta\mapsto S(\theta)$ is also Fréchet-differentiable at~$\theta=0$, with
    \begin{equation}
        \label{eq:derivative_solution_operator}
        D S(0)\theta = -A_0^{-1/2}D_{A_0}^{(1)}\theta A_0^{-1/2}B_0 + A_0^{-1}B_0^{1/2}D_{B_0}^{(1)}\theta B_0^{1/2},
    \end{equation}
    and with quadratically bounded remainder:
    \begin{equation}
        \label{eq:solution_expansion}
        S(\theta) = S(0) + D S(0)\theta + R_{S(0)}(\theta),\qquad \|R_{S(0)}(\theta)\|_{\mathcal B(L^2(\Omega))} \leq C_{S_0,2}(\mathcal D)\|\theta\|^2_{\Winf}
    \end{equation}
    uniformly inside~$\mathcal D$ for some~$C_{S_0,2}(\mathcal D)>0$ and some operator-valued map~$R_{S(0)}:\Winf\to\mathcal B(L^2(\Omega))$.

    At this point, we note that, due to the uniformity of the estimates~\eqref{eq:quadratic_form_lower_bound} and~\eqref{eq:taylor_expansions} inside~$\mathcal D$, the same analysis shows that, for~$\|\theta\|_{\Winf}$ sufficiently small,
    $S$ is Fréchet differentiable at~$\theta$ and the expansion
    \begin{equation}
        \label{eq:solution_expansion_nonzero}
        S(\theta+\delta\theta) = S(\theta) + D S(\theta)\delta\theta + R_{S(\theta)}(\delta\theta),\qquad \|R_{S(\theta)}(\delta\theta)\|_{\mathcal B(L^2(\Omega))} \leq C_{S_0,2}(\mathcal D)\|\delta\theta\|^2_{\Winf},
    \end{equation}
    is valid, with crucially the same constant as in~\eqref{eq:solution_expansion}, and some other operator-valued map~$R_{S(\theta)}:\Winf\to\mathcal B(L^2(\Omega))$.
    
    Indeed, the previous argument applies upon replacing~$\Omega$ by the bounded Lipschitz domain~$\Omega_\theta \subset \mathcal D$ as long as~$\|\Phi_{\delta\theta}\circ \Phi_\theta - \Id\|_{\Winf}<h_0$. A simple computation shows that~$\|\Phi_{\delta\theta}\circ \Phi_\theta - \Id\|_{\Winf}\leq \|\theta\|_{\Winf}+\|\delta\theta\|_{\Winf}+\|\theta\|_{\Winf}\|\delta\theta\|_{\Winf}$, so that taking~$\|\theta\|_{\Winf},\|\delta\theta\|_{\Winf}<\sqrt{1+h_0}-1$ suffices.

    Therefore, upon further reducing~$h_0$, we assume from now on that~$S$ is Fréchet-differentiable inside~$B_{\Winf}(0,h_0)$.
    In fact, the uniformity with respect to~$\theta$ of the remainder in~\eqref{eq:solution_expansion_nonzero} implies that~$S$ is~$\mathcal C^1$ in a $\Winf$-neighborhood~$\mathcal N_S$ of~$\theta=0$ for the~$L^2(\Omega)$-operator norm, which we now show.

    Let~$\theta_1,\theta_2\in\Winf(\R^d)$ be sufficiently small, and write~$\theta_2 = \theta_1 + \delta\theta_1$. Additionally, take~$w\in \Winf$ with~$\|w\|_{\Winf}=1$, and write~$\delta\theta_2 = \|\delta\theta_1\|_{\Winf}w$.
    By the expansion~\eqref{eq:solution_expansion_nonzero}, it holds
    \begin{equation}
        \label{eq:solution_expansion_2pts}
        \begin{aligned}
            S(\theta_2+\delta\theta_2) &= S(\theta_2) + DS(\theta_2)\delta\theta_2 + R_{S(\theta_2)}(\delta\theta_2),\\
            S(\theta_2+\delta\theta_2) &= S(\theta_1) + DS(\theta_1)(\delta\theta_1+\delta\theta_2) + R_{S(\theta_1)}(\delta\theta_1+\delta\theta_2).
        \end{aligned}
    \end{equation}
    Substituting the further expansion
    \begin{equation}
        S(\theta_2)=S(\theta_1)+DS(\theta_1)\delta\theta_1 + R_{S(\theta_1)}(\delta\theta_1)
    \end{equation}
    in the first line of~\eqref{eq:solution_expansion_2pts}, we find after simplification
    \begin{equation}
    \left[DS(\theta_2)-DS(\theta_1)\right]\delta\theta_2 = R_{S(\theta_1)}(\delta\theta_1+\delta\theta_2)-R_{S(\theta_1)}(\delta\theta_1)-R_{S(\theta_2)}(\delta\theta_2).
    \end{equation}
    Estimating the~$L^2(\Omega)$-operator norm using~\eqref{eq:solution_expansion_nonzero}, we find
    \begin{equation}
        \begin{aligned}
            \|\left[DS(\theta_2)-DS(\theta_1)\right]w\|_{L^2(\Omega)}\|\delta\theta_1\|_{\Winf} &\leq C_{S_0,2}\left(\|\delta\theta_1+\delta\theta_2\|_{\Winf}^2+\|\delta\theta_1\|_{\Winf}^2+\|\delta\theta_2\|_{\Winf}^2\right)\\
        &\leq 6C_{S_0,2}\|\delta\theta_1\|_{\Winf}^2,
        \end{aligned}
    \end{equation}
    since~$\|\delta\theta_1\|_{\Winf}=\|\delta\theta_2\|_{\Winf}$ and therefore~$\|\delta\theta_1+\delta\theta_2\|_{\Winf}\leq 2\|\delta\theta_1\|_{\Winf}$. Therefore,
    \begin{equation}
        \|[DS(\theta_2)-DS(\theta_1)]w\|_{L^2(\Omega)} \leq 6C_{S_0,2}\|\theta_2-\theta_1\|_{\Winf},
    \end{equation}
    which, upon taking the supremum over~$\{w\in\Winf,\|w\|_{\Winf}=1\}$, shows that~$DS$ is Lipschitz (and in particular continuous) for the~$L^2(\Omega)$-operator norm in a~$\Winf$-neighborhood of~$\theta=0$.

    We now show the first item in Theorem~\ref{thm:gateaux_differentiability}. We have already proved that~$\theta\mapsto S(\theta)$ and~$\theta\mapsto B(\theta)$ are~$C^1$ in a~$\Winf$-neighborhood of~$\theta=0$. From the bounds~\eqref{eq:b_bound}, the same regularity holds for~$\theta\mapsto B(\theta)^{\pm1/2}$. Therefore, the map
    \begin{equation}
        \left\{\begin{aligned}\Winf(\R^d)&\to \mathcal K_{\mathrm{sa}}(L^2(\Omega))\\
            \theta&\mapsto B_\theta^{1/2}S(\theta)B_\theta^{-1/2}\end{aligned}\right.
    \end{equation}
    is~$\mathcal C^1$, at~$\theta=0$, hence Lipschitz on some neighborhood~$\widetilde{\mathcal N}_{S}\subset \Winf$ of $0$, where~$\mathcal K_{\mathrm{sa}}(L^2(\beta))$ denotes the subspace of compact self-adjoint operators in~$\mathcal B(L^2(\Omega))$.

    A well-known consequence of the Courant--Fischer principle (the so-called Weyl perturbation inequality, see for example~\cite[Section 1.3.3]{T12} for the analogous case of Hermitian matrices) implies that, given a Hilbert space~$\mathcal H$, for any~$j\geq 1$, the eigenvalue map
    \begin{equation}
        \left\{\begin{aligned} \mathcal K_{\mathrm{sa}}(\mathcal H)&\to \R\\
            A&\mapsto \mu_j(A)\end{aligned}\right.
    \end{equation}
    is $1$-Lipschitz in the~$\mathcal H$-operator norm, where~$\mu_k(A)$ denotes the~$k$-th largest eigenvalue of~$A$ (counted with multiplicity). By composition, for any~$j\geq 1$, the map~$\theta\mapsto \mu_j(B_\theta^{1/2}S(\theta)B_\theta^{-1/2}) = \mu_j(S(\theta))$ is also Lipschitz on~$\widetilde{\mathcal N}_S$.
    The claim then easily follows since~$\theta\mapsto\left(\lambda_{k+\ell}(\Omega_\theta)\right)_{0\leq \ell <m} = \left(1/\mu_{k+\ell}(S(\theta))\right)_{0\leq \ell <m}$ with~$\mu_{k+\ell}(S(\theta))>0$ for any~$\theta$, and the map~$x\mapsto \left(1/x_i\right)_{1\leq i\leq m}$ is locally Lipschitz on $\left(0,+\infty\right)^m$.
    
    From now on we view~$S(\theta)$, for~$\theta\in\Winf$ sufficiently small, as an operator on~$L_\beta^2(\Omega)$, stressing that, in this setting,~$S(0)$ is a compact self-adjoint operator, although for~$\theta\neq 0$, these operators are generally non-symmetric, but still compact with real spectrum (since the~$S(\theta)$ are conjugate to self-adjoint operators on~$L^2(\Omega)$).
    
    \paragraph{Finite-dimensional reduction around eigenvalues clusters.}
    We recall that, by assumption,~$\lambda_k(\Omega)$ has multiplicity~$m\geq 1$.
    By compactness of the family~$S(\theta)$ and the continuity of its eigenvalues, there exists a complex, positively oriented contour~$\Gamma:[0,1]\to \mathbb C$ separating~$1/\lambda_k(\Omega)$ from the eigenvalues of~$S(0)$ different from~$1/\lambda_k(\Omega)$, and~$h_0>0$ such that, for any~$\theta\in B_{\Winf}(0,h_0)$,~$S(\theta)$ has exactly~$m$ eigenvalues inside~$\Gamma$, counted with multiplicity.
    We denote the Riesz projector by
    \begin{equation}
        \label{eq:spectral_projector}
        \Pi_\theta = -\frac1{2{\mathrm{i}}\pi}\int_{\Gamma} \mathcal R_{\zeta}(\theta)\,\d\zeta,
    \end{equation}
     where we define the resolvent of~$S(\theta)$ as
    \begin{equation}
        \label{eq:resolvent}
        \mathcal R_\zeta(\theta) = \left(S(\theta) - \zeta\right)^{-1} = B_\theta^{-1}(A_\theta^{-1}-\zeta B_\theta^{-1})^{-1}.
    \end{equation}
    Note that~$\Pi_\theta$ is a projector onto the~$S(\theta)$-invariant subspace
    $$\mathrm{Span}\left\{u_{k+\ell}(\Omega_\theta),0\leq \ell < m\right\},$$
    and is~$L^2_\beta(\Omega)$-orthogonal when~$\theta=0$.
    We next show that~$\theta\mapsto\Pi_\theta$ is~$\mathcal C^1$ in a~$\Winf$-neighborhood of~$\theta=0$ for the~$L^2_\beta(\Omega)$-operator norm.

    By continuity of the eigenvalues of~$S(\theta)$ with respect to~$\theta$, we may choose~$h_0$ sufficiently small and~$C(\mathcal D)>0$ so that, uniformly inside~$\mathcal D$ and for all~$\zeta\in \Gamma$, it holds
    \begin{equation}
        \label{eq:resolvent_uniform_bound}
        \|\mathcal R_\zeta(\theta)\|_{\mathcal B(L^2(\Omega))} \leq C(\mathcal D).
    \end{equation}
    We furthermore assume~$h_0$ to be sufficiently small so that the expansion~\eqref{eq:solution_expansion_nonzero} holds.
    Let~$\theta,\delta\theta \in B_{\Winf}(0,h_0)$.
    The second resolvent identity states that, for any~$\zeta\in \Gamma$,
    \begin{equation}
        \mathcal R_\zeta(\theta+\delta\theta) - \mathcal R_\zeta(\theta) = \mathcal R_\zeta(\theta+\delta\theta)(S(\theta)-S(\theta+\delta\theta))\mathcal R_\zeta(\theta),
    \end{equation}
    so that, rearranging, we obtain
    \begin{equation}
        \mathcal R_\zeta(\theta) = \mathcal R_\zeta(\theta+\delta\theta)\left(\Id +\left[S(\theta+\delta\theta) - S(\theta)\right]\mathcal R_\zeta(\theta)\right),
    \end{equation}
    whence, for~$\left\|S(\theta+\delta\theta)-S(\theta)\right\|_{\mathcal B(L^2(\Omega))}\leq \left\|\mathcal R_\zeta(\theta)\right\|_{\mathcal B(L^2(\Omega))}^{-1}$, we have the expression
    \begin{equation}
        \label{eq:neumann_series}
        \mathcal R_\zeta(\theta+\delta\theta) = \mathcal R_\zeta(\theta)\left(\sum_{k=0}^\infty (-1)^k\left[\left(S(\theta+\delta\theta) - S(\theta)\right)\mathcal R_\zeta(\theta)\right]^k \right).
    \end{equation}
    Then, by the expansion~\eqref{eq:solution_expansion_nonzero} and the uniform bound~\eqref{eq:resolvent_uniform_bound}, one can find~$h_0,K(\mathcal D)>0$ such that, uniformly inside~$\mathcal D$, and for any~$\zeta\in\Gamma,\,\delta\theta\in\Winf$ with~$\|\theta+\delta\theta\|_{\Winf}<h_0$, it holds
     \begin{equation}
        \label{eq:neumann_series_expansion}
        \mathcal R_\zeta(\theta+\delta\theta) = \mathcal R_\zeta(\theta) - \mathcal R_\zeta(\theta) D S(\theta)\delta\theta\mathcal R_\zeta(\theta) + Q(\theta,\delta\theta,\zeta),\quad\|Q(\theta,\delta\theta,\zeta)\|\leq K(\mathcal D)\|\delta\theta\|_{\Winf}^2.
    \end{equation}
    Therefore~$\mathcal R_\zeta$ is Fréchet-differentiable at~$\theta$, and its Fréchet derivative is given by~$D \mathcal R_\zeta(\theta)\delta\theta = -\mathcal R_\zeta(\theta)DS(\theta)\delta\theta\mathcal R_\zeta(\theta)$, which is continuous in~$\theta$ in a~$\Winf$-neighborhood of~$\theta=0$ for the~$L^2_\beta(\Omega)$-operator norm, owing to the~$\mathcal C^1$-regularity of $S$ and the continuity of~$\mathcal R_\zeta$.
    By dominated convergence in~\eqref{eq:spectral_projector} using the bound~\eqref{eq:resolvent_uniform_bound} and the fact that~$D S(\theta)$ is bounded in the~$L^2_\beta(\Omega)$-operator norm uniformly inside~$\mathcal D$, it follows that~$\theta\mapsto \Pi_\theta$ is also~$\mathcal C^1$ in the~$L^2_\beta(\Omega)$-operator norm in a~$\Winf$-neighborhood of~$\theta=0$. We denote its Fréchet derivative by~$\delta\theta\mapsto D \Pi_\theta\delta\theta$.

    The last key step is to connect the invariant~$m$-dimensional subspaces~${\Pi_\theta L^2_\beta(\Omega)}$ and~${\Pi_0 L^2_\beta(\Omega)}$ via a linear isomorphism which is Fréchet differentiable with respect to~$\theta$ at~$\theta=0$.
    This will allow to relate eigenvalue variations of~$S(\theta)$ to those of a conjugated operator~$\widehat S(\theta)$ acting on the fixed $m$-dimensional Hilbert space~${\Pi_0 L^2_\beta(\Omega)}$ on which perturbation results are readily available.
    This follows the general construction discussed in~\cite[Section I.4.6]{K13} for continuous families of projectors, to which we refer for additional details.
    Introduce the bounded operators
    \begin{equation}
        Q(\theta) = (\Pi_\theta-\Pi_0)^2,\qquad U(\theta) = \left(\Pi_\theta\Pi_0+(\Id-\Pi_\theta)(\Id-\Pi_0)\right)(\Id-Q(\theta))^{-1/2},
    \end{equation}
    where~$(\Id-Q(\theta))^{-1/2}$ can be defined via the following expansion for~$\|\Pi_\theta-\Pi_0\|<1$:
    $$(\Id-Q(\theta))^{-1/2} = \sum_{k=0}^\infty {{-1/2}\choose{k}} (-Q(\theta))^k.$$

    Note that~$U(0)=\Id$. The definition of~$U(\theta)$ is motivated by the observation
    \begin{equation}
        C_{\theta,0}D_{0,\theta} = C_{\theta,0}D_{0,\theta} = \Id-Q(\theta),
    \end{equation}
    where
    $$C_{\theta,0} = \Pi_0\Pi_\theta + (\Id-\Pi_0)(\Id-\Pi_\theta):\Pi_\theta L^2_\beta(\Omega)\to \Pi_0 L^2_\beta(\Omega),$$
    and~$D_{0,\theta}$ is the analogous operator obtained by exchanging the roles of~$\theta$ and~$0$ in the definition of~$C_{\theta,0}$.
    Simple computations (see also the discussion in~\cite[Section I.4.6]{K13}) then show that~$U(\theta):\Pi_\theta L_\beta^2(\Omega)\to\Pi_0 L^2_\beta(\Omega)$ is an isomorphism,
    and that~$\Pi_\theta,\Pi_0$ are conjugated via
    \begin{equation}
        \label{eq:derivative_conjugation}
        \Pi_\theta = U(\theta)^{-1}\Pi_0 U(\theta),\qquad\text{where }U(\theta)^{-1} = \left(\Pi_0\Pi_\theta+(\Id-\Pi_0)(\Id-\Pi_\theta)\right)(\Id-Q(\theta))^{-1/2}.
    \end{equation}
    Setting
    \begin{equation}
        \label{eq:conjugation}
        \widehat{S}(\theta)=U(\theta) S(\theta)U(\theta)^{-1},
    \end{equation}
    it holds, since~$\Pi_\theta$ commutes with~$S(\theta)$ and~$U(\theta)$ is bijective, that
    \begin{equation}
        \widehat S(\theta){\Pi_0 L^2_\beta(\Omega)} \subset {\Pi_0 L^2_\beta(\Omega)},
    \end{equation}
    so that~$\widehat S(\theta)|_{{\Pi_0 L^2_\beta(\Omega)}}$ is a well-defined linear map. The bounded operator~$S(\theta)|_{{\Pi_\theta L^2_\beta(\Omega)}}$ is diagonalizable, as it is conjugate to the operator
    $$\left.B_\theta^{1/2}S(\theta)B_\theta^{-1/2}\right|_{B_\theta^{1/2}{\Pi_\theta L^2_\beta(\Omega)}},$$
    which is self-adjoint for the~$L^2(\Omega)$ inner product. Therefore, the conjugate operator~$\widehat S(\theta)|_{{\Pi_0 L^2_\beta(\Omega)}}$ is also diagonalizable, and the spectra of these two operators are identical, counting with multiplicity.

    Moreover, due to the~$\mathcal C^1$ regularity of~$\theta\mapsto \Pi_\theta$, the map~$U(\theta)$ is also~$\mathcal C^1$ in a~$\Winf$-neighborhood of~$\theta=0$, and since~$DQ(0)= 0$, it also holds~$\left.D(\Id-Q(\theta))^{-1/2}\right|_{\theta=0}=0$, whence
    \[D U(0)\theta = \left(D\Pi_0\theta\right)\Pi_0 - \left(D\Pi_0\theta\right)(\Id - \Pi_0) = 2\left(D\Pi_0\theta\right)\Pi_0 - D\Pi_0\theta = 0,\]
    since the last expression is the Fréchet differential of~$\Pi_\theta^2-\Pi_\theta=0$ at~$\theta=0$. Similarly,~$D U^{-1}(0)\theta=0$, so that
    $\theta\mapsto \widehat S(\theta)$ is~$\mathcal C^1$ as a map~$\Winf(\R^d;\R^d)\to\mathcal B({\Pi_0 L^2_\beta(\Omega)})$ in a~$\Winf$-neighborhood of~$\theta=0$, with
    \begin{equation}
        D\widehat{S}(0)\theta = D S(0)\theta.
    \end{equation}
    We stress that~$\Pi_0 L^2_\beta(\Omega)$ is a~$m$-dimensional vector space, on which~$\widehat S(\theta)$ defines a $\mathcal C^1$-family of diagonalizable endomorphisms.
    In particular, for fixed~$\theta\in \Winf$, there exists~$t_\theta>0$ such that the map~$t\mapsto \widehat S(t\theta)|_{{\Pi_0 L^2_\beta(\Omega)}}$ is differentiable on~$(-t_\theta,t_\theta)$, so that from finite-dimensional perturbation theory (see~\cite[Section II.5.4, Theorem 5.4 and Remark 5.5 and Section II.5.5, Theorem 5.6]{K13}), and since~$1/\lambda_k(\Omega)$ is semisimple in the sense of~\cite[Section I.4]{K13} (as~$\widehat S(0)$ is diagonalizable on~$\Pi_0 L_\beta^2(\Omega)$), there exist~$m$ maps~$\left(\mu_\ell\right)_{1\leq \ell\leq m}$, differentiable on~$(-t_\theta,t_\theta)$ and satisfying~\eqref{eq:multiset}, such that
    \begin{equation}
        \label{eq:gateaux_derivative_eigenvalue}
        \forall\,1\leq \ell\leq m,\qquad \mu_\ell'(0) \in \mathrm{Spec}\left(\Pi_0 D S(0) \theta \Pi_0\right),
    \end{equation}
    where~$\Pi_0 D S(0)\theta\Pi_0$ is viewed as a linear map on~${\Pi_0 L^2_\beta(\Omega)}$.
    \paragraph{Computation of the Gateaux derivatives.}
    It remains to show the formula~\eqref{eq:derivative_volume_expression}. This reduces to computing the components of the matrix representation of~$\Pi_0 D S(0) \theta \Pi_0$ for the~$L^2_\beta(\Omega)$ scalar product, in the given~$L^2_\beta(\Omega)$-orthonormal basis~$\left\{u_k^{(\ell)}(\Omega),\, 1\leq \ell \leq m\right\}$.
    For convenience, we denote by
    \[\forall 1\leq \ell\leq m,\qquad u_\ell = u_k^{(\ell)}(\Omega),\qquad\text{ and}\qquad \lambda = \lambda_k(\Omega).\]
    Recall the relation~\eqref{eq:lambda_quotient_rule}. Setting
    \begin{equation}
        M_{ij}(\theta) = -\lambda^2\left\langle \Pi_0 D S(0) \theta \Pi_0 u_i,u_j\right\rangle_{L^2_\beta(\Omega)},
    \end{equation}
    and using~\eqref{eq:derivative_solution_operator}, we find, since~$\Pi_0$ is~$L^2_\beta(\Omega)$-self-adjoint and~$\Pi_0 u_i = u_i$ for each~$1\leq i\leq m$,
    \begin{equation}
        \begin{aligned}
            \left\langle \Pi_0 D S(0) \theta \Pi_0 u_i,u_j\right\rangle_{L^2_\beta(\Omega)} &= \left\langle D S(0) \theta \Pi_0 u_i,\Pi_0 u_j\right\rangle_{L^2_\beta(\Omega)}\\
            &=\left\langle B_0 D S(0) \theta u_i,u_j\right\rangle_{L^2(\Omega)}\\
             &= \left\langle B_0\left( -A_0^{-1/2} D_{A_0}^{(1)}\theta A_0^{-1/2}B_0 + A_0^{-1}B_0^{1/2}D_{B_0}^{(1)}\theta B_0^{1/2}\right)u_i, u_j\right\rangle_{L^2(\Omega)}\\
            &= \left\langle \left(-A_0^{1/2}D_{A_0}^{(1)}\theta A_0^{1/2}A_0^{-1}B_0 + B_0^{1/2}D_{B_0}^{(1)}\theta B_0^{1/2}\right) u_i,A_0^{-1} B_0 u_j\right\rangle_{L^2(\Omega)}\\
            &= -\lambda^{-2}\left\langle A_0^{1/2}D_{A_0}^{(1)}\theta A_0^{1/2}u_i,u_j\right\rangle_{L^2(\Omega)} + \lambda^{-1}\left\langle B_0^{1/2}D_{B_0}^{(1)}\theta B_0^{1/2} u_i,u_j\right\rangle_{L^2(\Omega)},
        \end{aligned}
    \end{equation}
    taking adjoints of the~$L^2(\Omega)$-self-adjoint operators~$A_0^{-1},B_0$ in the fourth line, and using the eigenrelation~$A_0^{-1}B_0 u_i = u_i/\lambda$ for all~$1\leq i\leq m$ in the last line. It follows that
    \begin{equation}
        \begin{aligned}
            M_{ij}(\theta) &= \left\langle A_0^{1/2}D_{A_0}^{(1)}\theta A_0^{1/2} u_i,u_j\right\rangle_{L^2(\Omega)} - \lambda\left\langle B_0^{1/2}D_{B_0}^{(1)}\theta B_0^{1/2} u_i,u_j\right\rangle_{L^2(\Omega)}\\
            &= \d \fa_0(\theta)(u_i,u_j) - \lambda\d \fb_0(\theta)(u_i,u_j),
        \end{aligned}
    \end{equation}
    where we used the representation formulas~\eqref{eq:representation_formulae}. Substituting in the expressions for the first-order perturbations~\eqref{eq:perturbations} finally yields~\eqref{eq:derivative_volume_expression}.
    \paragraph{Fréchet differentiability for simple eigenvalues.}
    We now assume that~$\lambda_k(\Omega)$ is a simple eigenvalue.
    For~$\|\theta\|_{\Winf}$ sufficiently small, it holds from the conjugation~\eqref{eq:derivative_conjugation} that~$\rank\Pi_\theta=1$, and
    \begin{equation}
        \frac1{\lambda_k(\Omega_\theta)} = \left\langle \widehat{S}(\theta)u_k(\Omega),u_k(\Omega)\right\rangle_{L^2_\beta(\Omega)}.
    \end{equation}
    Since~$\theta\mapsto \widehat{S}(\theta)$ is~$\mathcal C^1$ in a~$\Winf$-neighborhood of~$\theta=0$ and~$\lambda_k(\Omega)>0$, the map~$\theta\mapsto\lambda_k(\Omega_\theta)$ is also~$\mathcal C^1$ in a~$\Winf$-neighborhood of~$\theta=0$ by the chain rule.
    A closed-form for the Fréchet derivative~$D\lambda_k(\Omega_0)\theta$ at~$\theta=0$ is then given by~$M_{11}(\theta)$, or by the boundary form
    \begin{equation}
        D\lambda_k(\Omega_0)\theta = -\frac1\beta\int_{\partial\Omega} \left(\frac{\partial u_k(\Omega)}{\partial\n}\right)^2\n^\top a \n \theta^\top \n\,\e^{-\beta V}
    \end{equation}
    in the case~$\partial\Omega$ is~$\mathcal C^{1,1}$ or if~$\Omega$ is convex, as in Corollary~\ref{cor:boundary_expression}.
\end{proof}
\section{The Parallel Replica algorithm and its efficiency}
\label{sec:parrep}

In this appendix, we motivate the shape-optimization objective~\ref{eq:shape_optimization_problem} by discussing its relevance to a class of accelerated MD methods, the so-called Parallel Replica class of algorithms.

    The maximization of~\eqref{eq:separation_of_timescales} is motivated by algorithms in accelerated molecular dynamics, where the separation of timescales is key in ensuring the efficiency of the Parallel Replica method (ParRep); see~\cite[Section 6.2]{SL13} or~\cite[Section 2.7]{PUV15}, where the authors already discuss the influence of the domain definition on the metric~\eqref{eq:separation_of_timescales}.
    In this context, the quantity defined in~\eqref{eq:separation_of_timescales} is called the scalability metric, and is directly related to the efficiency of ParRep~\cite{V98}.
    While many ParRep-like methods have been proposed (see for instance~\cite{USV07,BLS15,A19,PCWKV16}), we present in this section one of the simplest versions, for which the objective of maximizing~\eqref{eq:separation_of_timescales} with respect to~$\Omega$ is most easily motivated.
    
    At its core, ParRep provides a way, given a metastable domain~$\Omega\subset\R^d$, to trade some details of the dynamics inside~$\Omega$ against a kinetically correct sample of the exit from~$\Omega$ (in the sense that both the exit time and the exit point are unbiased), coming at a lower cost in wall-clock time, using parallel computing resources.
    Given a good coverage of the configuration space by a set of good metastable states~$(\Omega_\alpha)_{\alpha\in I}$, one can then effectively parallelize in time the sampling of a long, spatially coarse-grained trajectory.

    A major advantage of ParRep compared to other accelerated MD methods (see~\cite{V97,SV00}) is that it is largely agnostic to the form of the dynamics and therefore applies to a wide range of Markov processes.
    The theoretical justification of the method, however, requires proving the existence and uniqueness of the QSD, see~\cite{LBLLP12,RLR22} for results on the Langevin dynamics in the overdamped and underdamped settings, respectively.
    
    We now describe the Parallel Replica method. While the original formulation~\cite{V98} of the algorithm used disjoint metastable states, defined as basins of attraction for the steepest descent dynamics on the energy landscape,
    we formulate a variant which is more general, in the sense that it accommodates metastable states which may overlap, and whose union does not necessarily cover the whole configuration space.
    
    We first introduce a number of hyperparameters.
    \noindent
{
\begin{center}
\begin{tabular}{ll}
\toprule
\textbf{Parameter} & \textbf{Description} \\
\midrule
$N_{\mathrm{proc}}\in\N^*$ & the number of replicas,\\
$(\Omega_i)_{i\in I}$,& a set of metastable states, and for each~$i\in I$: \\
$\mathcal C_i\subset \Omega_i,$ & an associated core-set, \\
$t_{\corr}(\Omega_i)>0,$ & a decorrelation timescale, and\\
$t_{\phase}(\Omega_i)>0\,$ & a dephasing timescale. \\
\bottomrule
\end{tabular}

\noindent
Input parameters for Algorithm~\ref{alg:parrep}.
\end{center}
}
We assume that the core-sets are pairwise disjoint:
\begin{equation}
    \label{eq:disjoint_coresets}
    \forall\,i,j\in I,\,i\neq j,\qquad \mathcal C_i \cap \mathcal C_j =\varnothing.
\end{equation}
This condition ensures that there is no ambiguity as to which state is entered in step~{\bf{A}} of the algorithm below.
    \begin{algorithm}[ParRep with rejection and core-sets.]
        \label{alg:parrep}
        The algorithm proceeds by iterating the following steps:
        \begin{enumerate}[A.]
            \item[\bf{A}]{Initialization: run the dynamics until it enters a core-set~$\mathcal C_i$, at time $t_0$, for some $i\in I$. Denote~$\tau =\inf\{t\geq t_0: X_t\not\in\Omega_i\}$ the next exit time from the corresponding state~$\Omega_i$.}
            \item[\bf{B1}]{Decorrelation (successful case): if the dynamics remains for a time $t_{\corr}(\Omega_i)$ inside~$\Omega_i$, it is presumed to be distributed according to the QSD~$\nu^i$ in~$\Omega_i$. This introduces a bias, but which decays quickly with~$t_{\corr}(\Omega_i)$ according to~\eqref{eq:decorrelation_rate}, provided~$\lambda_2(\Omega_i)-\lambda_1(\Omega_i)$ is large.}
            \item[\bf{B2}]{Decorrelation (unsuccessful case): if the dynamics exits at $\tau < t_0 + t_{\corr}(\Omega_i)$, record the exit event $(\tau,X_\tau)$, and proceed from step $\text{\bf A}$.}
            \item[\bf{C}]{Dephasing: Simulate~$N_{\mathrm{proc}}$ independent copies~$\left(X^{(i)}\right)_{1\leq i\leq N_{\mathrm{proc}}}$ of the dynamics starting from $X_0^{(i)}=X_{t_0+t_{\corr}(\Omega_i)}$, for a time~$t_{\phase}(\Omega_i)>0$.}
            \item[\bf{D}]{Conditioning: reject replicas which exited $\Omega_i$ during step $\text{\bf C}$. Denote by~$N \leq N_{\mathrm{proc}}$ the random variable counting the number of surviving replicas. The~$(X_{t_{\phase}(\Omega_i)}^{(i)})_{1\leq i\leq N}$ are now presumed to be i.i.d. according to~$\nu^i$. Again, this is correct up to some bias decaying quickly with~$t_{\phase}(\Omega_i)$.}
            \item[\bf{E}]{Parallel exit: evolve the~$N$ replicas independently until the first exits from~$\Omega_i$, say~$X_{\tau^{(j)}}^{(j)}\not\in\Omega_i$, i.e.~$\tau^{(j)} = \underset{1\leq i\leq N}{\min}\,\tau^{(j)}$. According to the property~\eqref{eq:exit_event}, the equality~$$\left(t_0+t_{\corr}(\Omega_i) + N\left[\tau^{(j)}-t_{\phase}(\Omega_i)\right],X^{(j)}_{\tau^{(j)}}\right)\overset{\mathrm{law}}{=}(\tau,X_{\tau})$$ holds in law.}
            \item[\bf{F}]{Set~$X_{t_0+t_{\corr}(\Omega_i) + N(\tau^{(j)}-t_{\phase}(\Omega_i))} = X_{\tau^{(j)}}^{(j)}$ and proceed from step {$\text{\bf A}$}.}
        \end{enumerate}
    \end{algorithm}
    Let us make a few remarks about Algorithm~\ref{alg:parrep}. Steps~{\bf{C}} and~{\bf{E}} can be run on a parallel computer with~$N_{\mathrm{proc}}$ processors. Assuming synchronized MD engines, these two steps therefore only cost~$t_{\phase}(\Omega_i)$ and $\tau^{(j)}-t_{\phase}(\Omega_i)$ respectively in wall-clock time. Since~$\tau^{(j)}-t_{\phase}(\Omega_i)\sim\mathcal E(N/\lambda_1(\Omega_i))$ conditionally on~$N$, this provides to a large acceleration if~$N$ is large, at the cost of an overhead~$t_{\phase}(\Omega_i)$ in step~{\bf{C}}, which does not correspond to a physical time evolution. 

    Because exit events sampled during step~{\bf{B2}} are driven by the original dynamics, they are unbiased. Therefore, ParRep differs from other accelerated MD methods in that it correctly samples the full distribution of exit events, including those corresponding to short, correlated exit times.
    
    Step~{\bf{B}} can also be performed in parallel to step~{\bf{C}}, and this is often done in practice. In this variant of Algorithm~\ref{alg:parrep}, the replicas are initialized at~$X_0^{(i)}=X_{t_0}$ for~$1\leq i\leq N_{\mathrm{proc}}$, and one usually chooses~$t_{\corr}(\Omega_i)=t_{\phase}(\Omega_i)$. Moreover, in the case~{\bf{B2}}, the exit of the reference dynamics $\Omega_i$ kills the replicas running in step~{\bf{C}}.
    
    The path obtained by concatenating~ the segments
    $$(X_t)_{t_0\leq t< t_0+t_{\corr}(\Omega_i)},\left\{(X_t^{(i)})_{t\leq t_{\phase}(\Omega_i)<\tau^{(j)}},\,1\leq j \leq N\right\}\text{ and }(X_t^{(j)})_{t_{\phase}(\Omega_i)\leq t<\tau^{(j)}}$$
    has, in law, the same length as the path~$(X_t)_{t_0\leq t < \tau}$ under the sequential dynamics (neglecting the bias in steps~{\bf{B}} and~{\bf{C}}). Therefore, Algorithm~\ref{alg:parrep} can be understood as sampling a discontinuous modification to the original dynamics, which jumps $N$ times from~$\nu^i$ to~$\nu^i$ in the quasistationary portion~$(X_t)_{t_0+t_{\corr}(\Omega_i)\leq t < \tau}$ of the trajectory.
    
    The core-set~$\mathcal C_i$ encode how one defines an \textit{entrance} into~$\Omega_i$, while the set~$\Omega_i$ encodes how one defines an \textit{exit} from~$\Omega_i$. We argue that the latter is the most important parameter as it impacts all the steps~{\bf B}--{\bf E}, with~$\mathcal C_i$ is only involved in step~{\bf{A}}. The sets~$\mathcal C_i$ can be defined using physical intuition. In our numerical experiments (see~Section~\ref{subsec:diala} below), we consider two natural definitions of these core-sets, namely small balls around free-energy minima, or the intersection of the associated free-energy basin with the state~$\Omega_i$.
    An outstanding question, which we leave for future work, is whether one can \textit{optimize} the definition of the core-sets~$\mathcal C_i$, given definitions for the states~$\Omega_i$, to make Algorithm~\ref{alg:parrep} efficient. Heuristically, the set~$\bigcup_{i\in I}\mathcal C_i$ should be visited often by the dynamics, and starting from~$\partial\mathcal C_i$, convergence to~$\nu^i$ should be both likely and fast (so as to minimize the time spent in step~{\bf B}). This question is related to the minimization of the pre-exponential factor in the error estimate~\eqref{eq:decorrelation_rate}.

    A pathology may occur in the event no replica survives in step {\bf{C}}. This possibility can be assumed to be rare provided~$\Omega_i$ is locally metastable and~$N_{\mathrm{proc}}$ is large, for reasonable choices of~$\mathcal C_i$.
    Nevertheless, the rejection sampling performed in step~{\bf{D}} can be replaced by a branching mechanism known as the Fleming--Viot process (see Algorithm~\ref{alg:fv} below), which has the advantage of ensuring~$N = N_{\mathrm{proc}}$ replicas survive, at the cost of introducing additional (small) correlations between replicas at the end of step~{\bf{D}}, which therefore induces some bias in step~{\bf{E}}.
\begin{figure}
\begin{tikzpicture}[x=0.75pt,y=0.75pt,yscale=-0.75,xscale=0.75]

\draw  [color={rgb, 255:red, 255; green, 0; blue, 0 }  ,draw opacity=1 ][line width=1.5]  (126.33,93) .. controls (222.33,112) and (329.33,33) .. (386.33,139) .. controls (443.33,245) and (381.33,416) .. (247.33,394) .. controls (113.33,372) and (226.67,297) .. (112,284) .. controls (-2.67,271) and (30.33,74) .. (126.33,93) -- cycle ;
\draw  [color={rgb, 255:red, 0; green, 118; blue, 255 }  ,draw opacity=1 ][line width=1.5]  (381.33,79.33) .. controls (383.33,-23.67) and (674.33,134.33) .. (592.33,231.33) .. controls (510.33,328.33) and (627.33,384) .. (475.33,386.33) .. controls (323.33,388.67) and (344.33,364.33) .. (285,273) .. controls (225.67,181.67) and (379.33,182.33) .. (381.33,79.33) -- cycle ;
\draw  [color={rgb, 255:red, 255; green, 0; blue, 0 }  ,draw opacity=1 ] (202.33,131.67) .. controls (261.67,98.67) and (289.33,81.67) .. (311.33,114.67) .. controls (333.33,147.67) and (334.33,185) .. (289.33,235) .. controls (244.33,285) and (174,276) .. (154,246) .. controls (134,216) and (143,164.67) .. (202.33,131.67) -- cycle ;
\draw  [color={rgb, 255:red, 0; green, 114; blue, 255 }  ,draw opacity=1 ] (434.33,97.33) .. controls (471.33,67.67) and (531.33,101.67) .. (555.33,123.33) .. controls (579.33,145) and (587,166.33) .. (541.33,223.67) .. controls (495.67,281) and (468,252) .. (448,222) .. controls (428,192) and (397.33,127) .. (434.33,97.33) -- cycle ;
\draw  [dash pattern={on 4.5pt off 4.5pt}]  (141.52,213.91) .. controls (218.99,273.34) and (185.5,208.06) .. (218.33,219) .. controls (251.33,230) and (219.33,114) .. (266.33,127) .. controls (313.33,140) and (275.33,162) .. (337.33,191) .. controls (399.02,219.86) and (386.46,294.25) .. (255.32,289.08) ;
\draw [shift={(253.33,289)}, rotate = 2.58] [fill={rgb, 255:red, 0; green, 0; blue, 0 }  ][line width=0.08]  [draw opacity=0] (8.93,-4.29) -- (0,0) -- (8.93,4.29) -- cycle    ;
\draw [shift={(142.67,214.79)}, rotate = 217.67] [fill={rgb, 255:red, 0; green, 0; blue, 0 }  ][line width=0.08]  [draw opacity=0] (8.93,-4.29) -- (0,0) -- (8.93,4.29) -- cycle    ;
\draw    (127.59,155.03) .. controls (79.02,222.27) and (113.67,249.15) .. (179.33,279) .. controls (245.33,309) and (232.33,384) .. (257.33,359) .. controls (282.33,334) and (294.33,187) .. (332.33,243) .. controls (369.19,297.32) and (348.66,317.77) .. (372.95,334.46) ;
\draw [shift={(375.33,336)}, rotate = 211.26] [fill={rgb, 255:red, 0; green, 0; blue, 0 }  ][line width=0.08]  [draw opacity=0] (8.93,-4.29) -- (0,0) -- (8.93,4.29) -- cycle    ;
\draw [shift={(126.87,156.03)}, rotate = 306.14] [fill={rgb, 255:red, 0; green, 0; blue, 0 }  ][line width=0.08]  [draw opacity=0] (8.93,-4.29) -- (0,0) -- (8.93,4.29) -- cycle    ;
\draw  [dash pattern={on 0.84pt off 2.51pt}]  (-22.67,-13) .. controls (17.33,-43) and (88.33,185) .. (140.33,213) ;
\draw  [dash pattern={on 0.84pt off 2.51pt}]  (375.33,336) .. controls (437.7,374.61) and (485.37,355.39) .. (495.05,257.98) ;
\draw [shift={(495.33,255)}, rotate = 95.14] [fill={rgb, 255:red, 0; green, 0; blue, 0 }  ][line width=0.08]  [draw opacity=0] (8.93,-4.29) -- (0,0) -- (8.93,4.29) -- cycle    ;
\draw  [dash pattern={on 4.5pt off 4.5pt}]  (529.82,92.2) .. controls (509.95,132.43) and (483.28,121.98) .. (481.33,160) .. controls (479.33,199) and (505.33,189) .. (495.33,255) ;
\draw [shift={(531.33,89)}, rotate = 114.44] [fill={rgb, 255:red, 0; green, 0; blue, 0 }  ][line width=0.08]  [draw opacity=0] (8.93,-4.29) -- (0,0) -- (8.93,4.29) -- cycle    ;
\draw  [dash pattern={on 0.84pt off 2.51pt}]  (531.33,89) .. controls (571.33,59) and (627.33,87) .. (667.33,57) ;

\draw (110,60) node [anchor=north west,scale=1.2][inner sep=0.75pt]    {$\partial \Omega_1 $};
\draw  (400,20) node [anchor=north west,scale=1.2][inner sep=0.75pt]    {$\partial \Omega_2 $};
\draw (150,120) node [anchor=north west,scale=1.2][inner sep=0.75pt]    {$\partial \mathcal C_1 $};
\draw (430,100) node [anchor=north west,scale=1.2][inner sep=0.75pt]    {$\partial \mathcal C_2 $};

\end{tikzpicture}

\caption{A trajectory sampled using Algorithm~\ref{alg:parrep}. Dotted lines correspond to step~{\bf{A}}, dashed lines to step~{\bf{B}}: a successful decorrelation~{\bf{B1}} in~$\Omega_1$, followed by a failed decorrelation~{\bf{B2}} in~$\Omega_2$. The solid line corresponds to the trajectory~$(X_t^{(j)})_{t_{\phase}(\Omega_1)\leq t <\tau^{(j)}}$ in step~{\bf{E}}. The discontinuity hides a (parallel) time evolution of length~$(N-1)(\tau^{(j)}-t_{\phase}(\Omega_1))$ in step~{\bf{D}}.}

\end{figure}
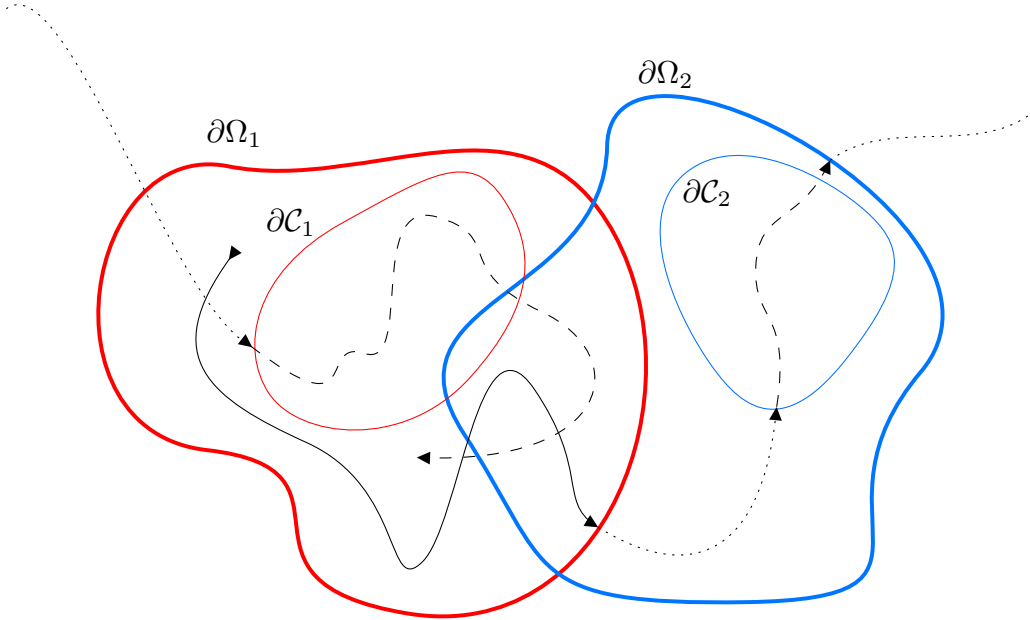

    Crucial hyperparameters are the decorrelation times~$t_{\corr}(\Omega_i)$, and dephasing times~$t_{\phase}(\Omega_i)$ for~$i\in I$. These should be valid, in the sense that the bias introduced in step {\bf B} and the correlations between replicas in step {\bf C} should be small, but setting~$t_{\corr}(\Omega_i),~t_{\phase}(\Omega_i)$ to large values will lead to excessive spending of wall-clock time in these two steps, leading to an overall decrease in the achieved speedup.
    A simple choice is to set
    \begin{equation}
        \label{eq:decorrelation_time}
        t_{\corr}(\Omega_i)=t_{\phase}(\Omega_i) = -\log\varepsilon_{\corr}/(\lambda_2(\Omega_i)-\lambda_1(\Omega_i))
    \end{equation} where~$0<\varepsilon_{\corr}<1$ is a small, domain-independent tolerance parameter.
    This choice, which has already been suggested (see~\cite{SL13,PUV15}), is motivated by taking logarithms in the estimate~\eqref{eq:decorrelation_rate}, and neglecting the contribution~$|\log C(x)|/(\lambda_2(\Omega_i)-\lambda_1(\Omega_i))$ to the bias coming from the prefactor, which depends on the initial condition~$x$.
    The choice~\eqref{eq:decorrelation_time} also motivates the need for quantitative estimates of the spectral gap~$\lambda_2(\Omega_i)-\lambda_1(\Omega_i)$, for which various strategies have been proposed, see~\cite[Section 3.3]{BLS24} for recent results in this direction.

    Let us fix~$i\in I$, and compare the expected wall clock-time to sample a metastable excursion inside~$\Omega_i$ using Algorithm~\ref{alg:parrep} to the expected wall-clock time using a sequential simulation.
    We assume successful decorrelation in step~{\bf B}, rejection sampling in step {\bf C} and the choice~\eqref{eq:decorrelation_time} where~$0<\varepsilon_{\corr}<1$ is sufficiently small so that the bias and correlations introduced in steps {\bf B1} and {\bf C} can be safely neglected.
    By~\eqref{eq:exit_event}, we have~$\E_\nu[N] = \e^{-\lambda_1(\Omega_i) t_{\phase}(\Omega_i)}N_{\mathrm{proc}}$ expected replicas at the end of step {\bf C}, i.i.d. according to the QSD.
    Replacing~$N$ by its expected value under~$\nu$ (and making a so-called annealed approximation in doing so), the expected wall-clock time in step {\bf E} is given by~$\e^{\lambda_1(\Omega_i)t_{\phase}(\Omega_i)}/(\lambda_1(\Omega_i)N_{\mathrm{proc}})$, by standard properties of exponential random variables. Therefore, the combined wall-clock time in steps {\bf B1}--{\bf E} is given by
    \begin{equation}
        \label{eq:wall_clock_time}
        t_{\mathrm{PR}}^{\text{{\bf B1}--{\bf E}}}(\Omega_i) = t_{\corr}(\Omega_i) + t_{\phase}(\Omega_i) + \e^{\lambda_1(\Omega_i)t_{\phase}(\Omega_i)}/(\lambda_1(\Omega_i)N_{\mathrm{proc}}).
    \end{equation}
    The second term accounts for the overhead of simulating~$N$ trajectories in step~B., which can be done in parallel.
    By contrast, the expected wall-clock time to simulate the same process using direct simulation is given by
    \begin{equation}
        \label{eq:wall_clock_time_direct}
        t_{\mathrm{DNS}}^{\text{{\bf B1}--{\bf E}}}(\Omega_i) = t_{\corr}(\Omega_i) + 1/\lambda_1(\Omega_i).
    \end{equation}
    Recalling the definition~\eqref{eq:separation_of_timescales} of~$N^*(\Omega_i)$, substituting in the definition~\eqref{eq:decorrelation_time} and rearranging, we find
    \begin{equation}
        \label{eq:efficiency_ratio}
        \frac{t_{\mathrm{DNS}}^{\text{{\bf B1}--{\bf E}}}(\Omega_i)}{t_{\mathrm{PR}}^{\text{{\bf B1}--{\bf E}}}(\Omega_i)} = \frac{N^*(\Omega_i)-\log\varepsilon_{\corr}}{(N^*(\Omega_i)/N_{\mathrm{proc}})\e^{-(\log\varepsilon_{\corr})/N^*(\Omega_i)}-2\log\varepsilon_\corr}.
    \end{equation}
    One can check that the right-hand side of~\eqref{eq:efficiency_ratio} is an increasing function of~$N^*(\Omega_i)$ for~$N_{\mathrm{proc}}>0$ and~$0<\varepsilon_{\corr}<1$.
    Therefore,~$N^*(\Omega_i)$ should be maximized to maximize the effectiveness of the ParRep algorithm. This objective is only reasonable if the bulk of the simulation time is captured by steps of type~{\bf B1},~{\bf C},~{\bf D} and~{\bf E} in Algorithm~\ref{alg:parrep}.
    That is, trajectories drawn from~\eqref{eq:overdamped_langevin} should spend most of the time inside metastable states, and not in excursions between them. This constraint is related to the intrinsic metastability of the system as a whole: in systems for which a significant portion of time is spent in non-metastable regions of phase space, accelerated MD methods are not expected to be efficient, regardless of the choice of states.

    We stress that the previous discussion is one of a number of possible ways to present the efficiency of ParRep and its variants, but the conclusion is always the same: one should maximize~$N^*(\Omega)$ with respect to~$\Omega$ to obtain maximal benefits from the algorithm inside the metastable state~$\Omega$.
    
    In Figure~\ref{fig:parrep}, we depict the objective~\eqref{eq:efficiency_ratio} as a function of~$N^*(\Omega)$ and the number~$N_{\mathrm{proc}}$ of available processors, as well as the parallel efficiency metric~$t_{\mathrm{DNS}}^{\text{{\bf B1}--{\bf E}}}(\Omega)/\left(N_{\mathrm{proc}}t_{\mathrm{PR}}^{\text{{\bf B1}--{\bf E}}}(\Omega)\right)$.
    This metric measures the wall-clock time speedup per number of processors, and therefore how effective Algorithm~\ref{alg:parrep} utilizes parallel computing resources for the purpose of acceleration. A simple estimate show that, in the regime~$N^*(\Omega)\gg 1$, ~$N_{\mathrm{proc}}$ should be chosen of the order of~$\bigo\left(N^*(\Omega)\right)$ to reach a target parallel efficiency~$0<\alpha<1$ for Algorithm~\ref{alg:parrep} inside~$\Omega$.
    In materials science applications, the separation of timescales~\eqref{eq:separation_of_timescales} is often much larger than the number of available processors, and parallel efficiency upwards of~$\alpha=0.95$ are often reported, see~\cite{PUV15}. The contour line of parallel efficiency~$\alpha=0.5$ is depicted on the right-hand side of Figure~\ref{fig:parrep}.

    \begin{remark}
        It would be somewhat more satisfactory, owing to~\eqref{eq:decorrelation_rate}, to take the spatially-dependent prefactor~$C(x)$ into account in the choice~\eqref{eq:decorrelation_time} of decorrelation time. An unresolved step in this direction is to obtain quantitative estimates of this prefactor, at least in limiting regimes or simple analytic cases. We leave this point for future work.
        At any rate, we expect the corresponding shape-optimization problem to be substantially more difficult to handle.
    \end{remark}

    Another family of methods (see~\cite{HL19}) attempt to estimate~$t_{\corr}(\Omega)+t_{\phase}(\Omega)$ ``on the fly'' using statistical information generated by a Fleming--Viot process in a combined step~({\bf B},{\bf C},{\bf D}).
    In the work~\cite{HL19}, this strategy is implemented using a Gelman--Rubin (non)-convergence diagnostic to estimate the decorrelation time. This opens up the possibility of applying ParRep to situations in which little a priori information is available on the timescales at hand, such as biological systems.
    However, some questions remain on how to optimally balance reliability and efficiency concerns in this context.

    \begin{figure}
    \centering
    \includegraphics[width=0.49\linewidth]{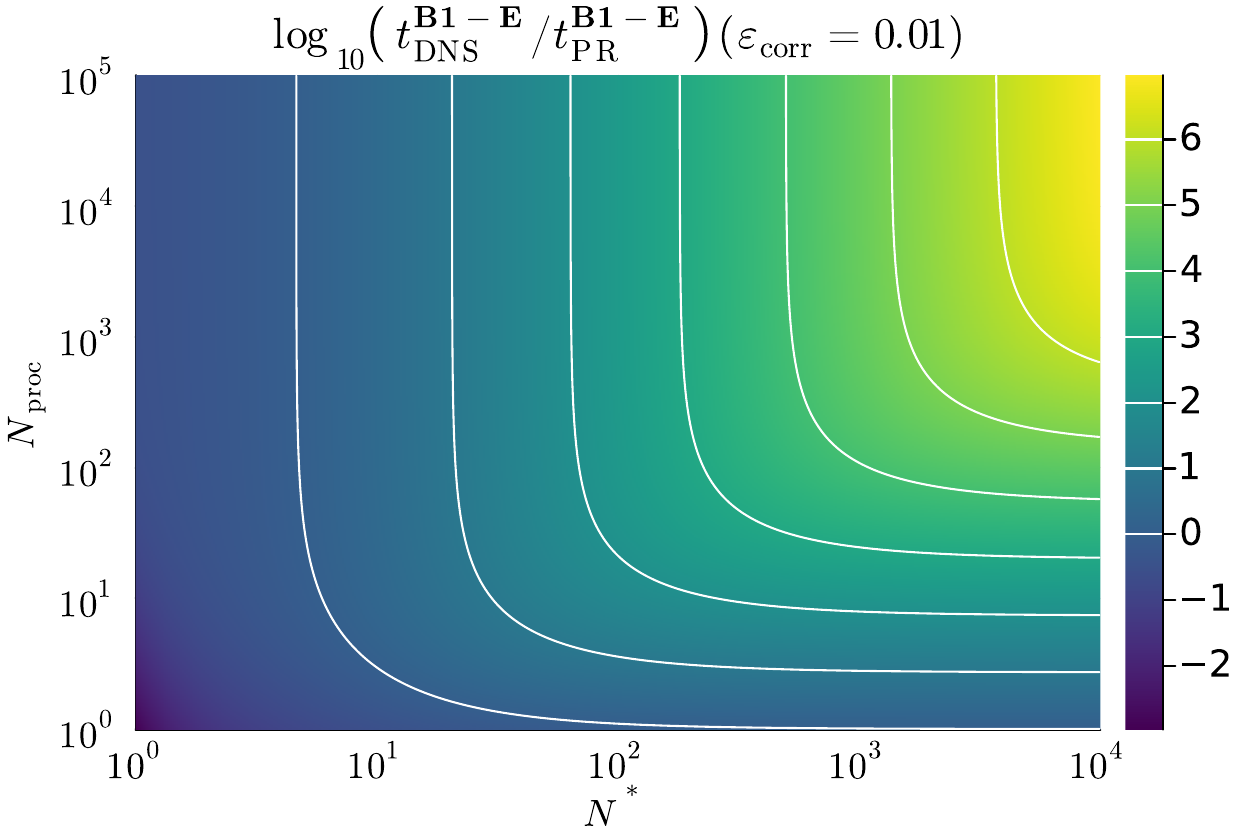}
    \includegraphics[width=0.49\linewidth]{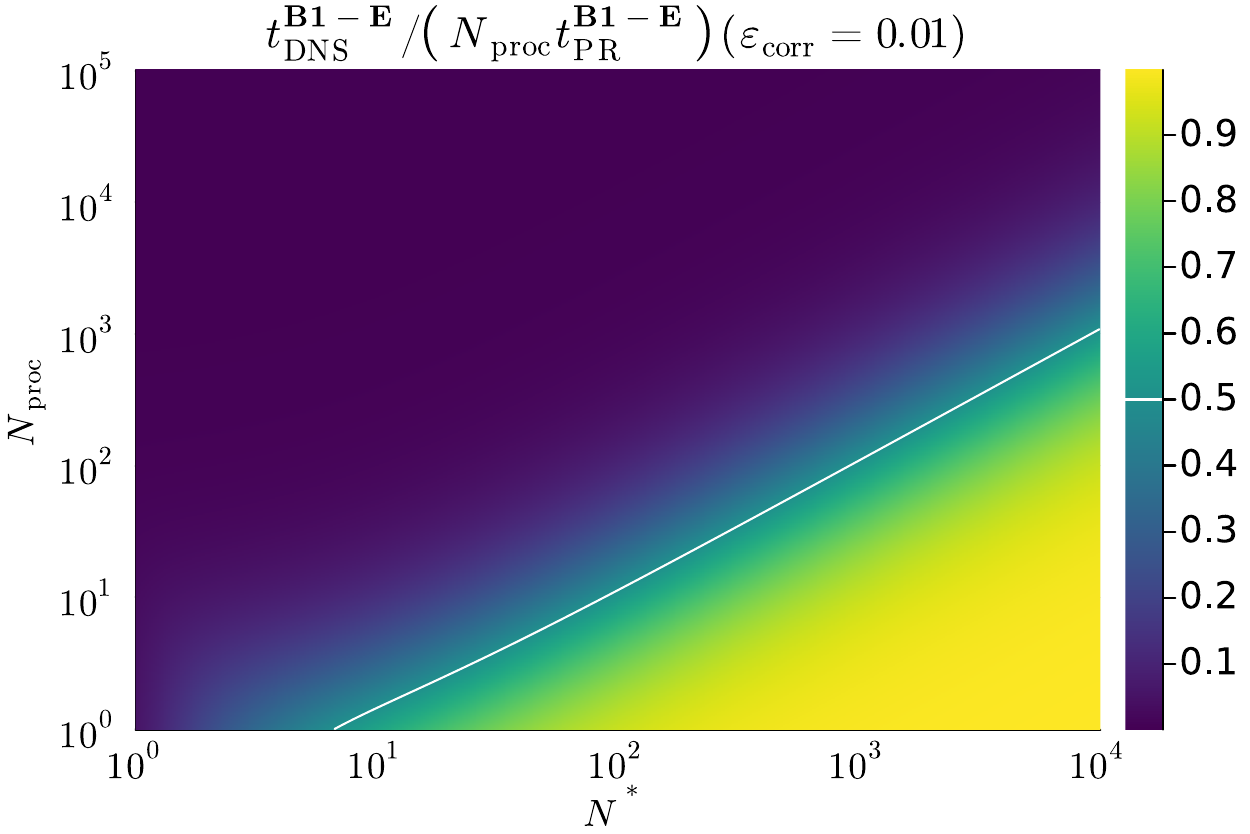}
    \caption{Effect of the separation of timescales~\eqref{eq:separation_of_timescales} and number~$N_{\mathrm{proc}}$ of processors on Algorithm~\ref{alg:parrep}. Left: wall-clock time speedup over direct simulation (Equation~\eqref{eq:efficiency_ratio}, with values on a $\log_{10}$-scale). Contours corresponding to ten-fold decreases in wall-clock time using ParRep are plotted in white, starting from the break-even contour below which direct simulation is faster. Right: parallel efficiency metric, with contour line~$\alpha=0.5$.}
    \label{fig:parrep}
    \end{figure}

\section{Shallow optima and penalized objective functions}
\label{sec:shallow_states}

As noted in Section~\ref{sec:intro} below the optimization problem~\eqref{eq:shape_optimization_problem}, this problem is not well-posed in general, and we seek in this work local maxima of~$N^*$ around candidate energy wells rather than a global optimum. Here, we show numerically that such local maxima may select wells separated by shallow energy barriers, which may or may not correspond to algorithmically relevant metastable states, in the sense of being useful in the Parallel Replica algorithm (see~Appendix~\ref{sec:parrep}).

We consider a one-dimensional toy model decribed by the probability measure corresponding to the following Gaussian mixture
\begin{equation}
    \label{eq:gaussian_mixture}
    \mu_h = \frac{1}{3}\left(\mathcal{N}(-h,1) + \mathcal{N}(0,1) + \mathcal{N}(h,1)\right),
\end{equation}
where~$\mathcal{N}(m,\sigma^2)$ denotes the Gaussian density with mean~$m$ and variance~$\sigma^2$. The associated potential energy reads~$V(x,h) = -\log \mu_h(x)$. For small~$h$, the potential has a single well centered at the origin. As~$h$ increases, three wells emerge, separated by increasingly pronounced energy barriers. Here, we restrict the design space to symmetric domains~$\Omega = (-\alpha, \alpha)$ surrounding the central well, and define~$\alpha^*(h)$ to be a local maximizer of~$\alpha \mapsto N^*((-\alpha,\alpha))$, when a local maximum exists.

In Figure~\ref{fig:shallow_bifurcation}~(top), we plot the potential landscape together with the locally optimal state boundaries~$\pm\alpha^*(h)$ and the loci of saddle points of~$x\mapsto V(x,h)$, as functions of~$h$. We observe that locally optimal domains appear as soon as the energy barrier emerges, even when the barrier height is small compared to the thermal energy scale~$\beta^{-1}=1$.

This numerical observation has algorithmic implications: in a complex free-energy landscape, Algorithm~\ref{alg:ascent} will typically identify local optima around every well. The resulting states, while locally optimal in the sense of~$N^*$, may have mean metastable exit times which are too short to be of physical or algorithmic interest.

A simple remedy is to penalize states with large exit rates, by replacing the objective with
\begin{equation}
    \label{eq:penalized_objective}
    N^*_\tau(\Omega) = N^*(\Omega) - \tau \lambda_1(\Omega),
\end{equation}
where~$\tau > 0$ is a user-defined timescale parameter. Since~$N^*_\tau$ is a smooth function of~$(\lambda_1(\Omega), \lambda_2(\Omega))$, with~$\partial_{\lambda_2} N^*_\tau = 1/\lambda_1 > 0$, the results of Section~\ref{sec:ascent} apply without modification. The parameter~$\tau$ is a timescale which can be chosen using some a priori information about the molecular system of interest.
More general penalized objectives of the form
\begin{equation}
    N^*_\ell(\Omega)=N^*(\Omega) - \ell(\lambda_1(\Omega))
\end{equation}
where~$\ell:\R_+\to\R_+$ is smooth can also be considered to target metastable states within a specific range of metastable exit timescales.

In Figure~\ref{fig:shallow_bifurcation}~(bottom), we plot the analogous bifurcation diagram for the penalized objective~$N^*_\tau$ with~$\tau = 50$. The shallow optima are suppressed: locally optimal domains only appear once the energy barrier is sufficiently high.

\begin{figure}[!h]
    \centering
    \includegraphics[width=0.75\linewidth]{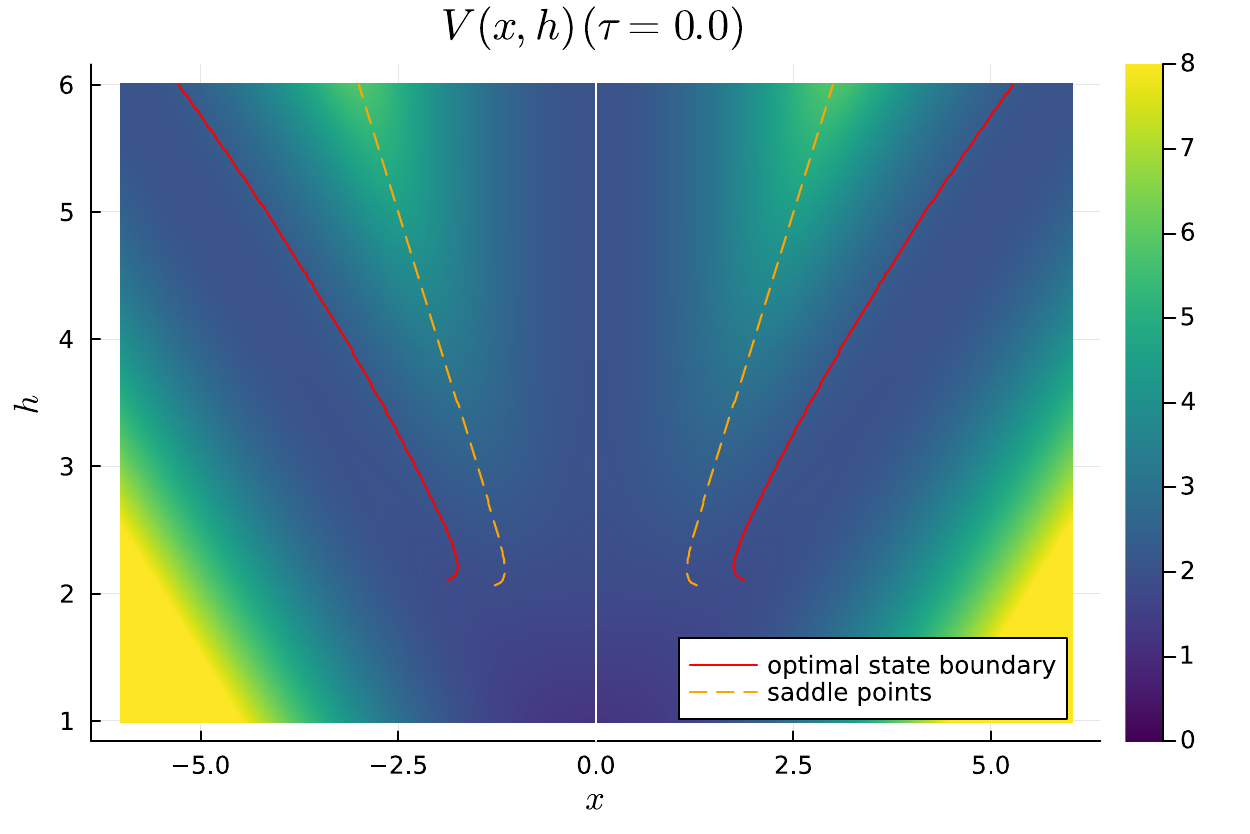}
    \includegraphics[width=0.75\linewidth]{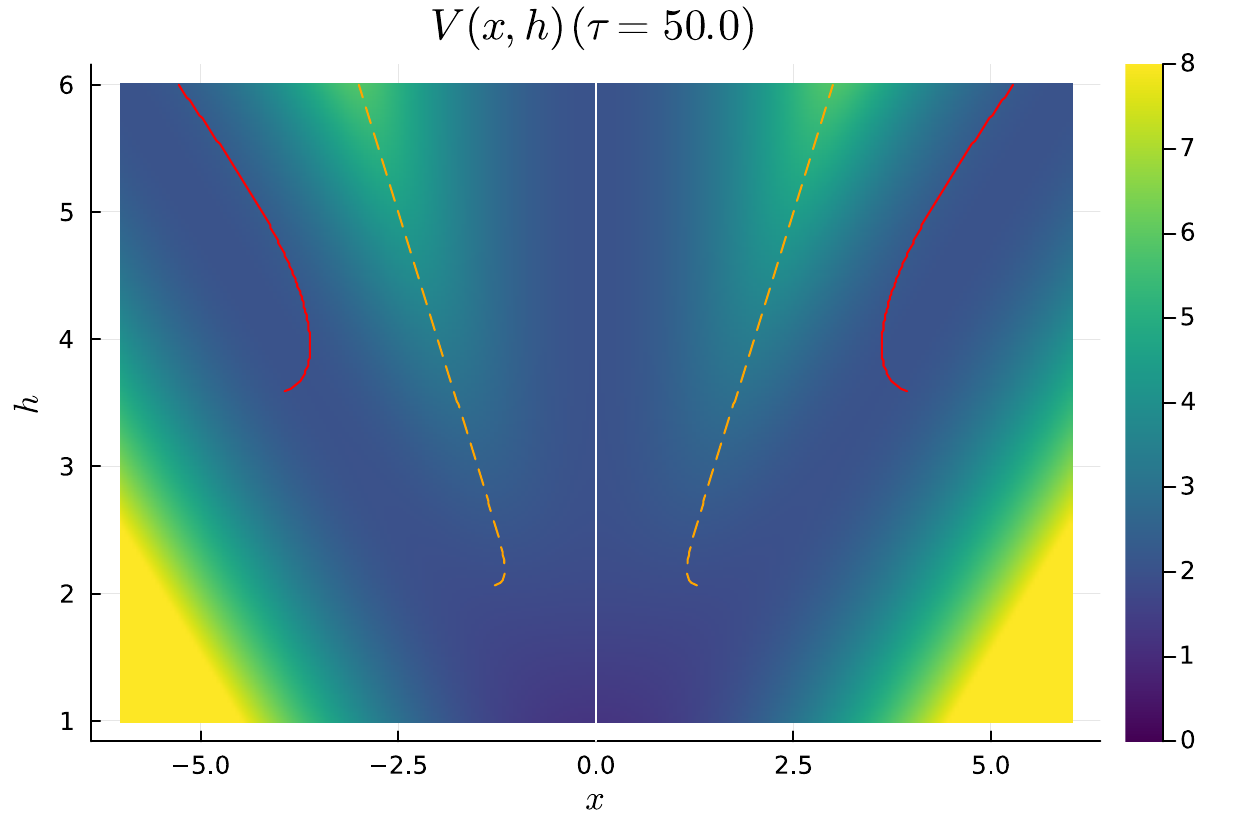}
    \caption{Locally optimal state boundaries~$\pm\alpha^*(h)$ (solid red) and loci of saddle points of~${x\mapsto V(x,h)}$ as functions of the mode separation parameter~$h$, for the Gaussian mixture potential~\eqref{eq:gaussian_mixture}. Top: unpenalized objective~$N^*$. Bottom: penalized objective~$N^*_\tau$ with~$\tau = 50$. The penalization suppresses local optima corresponding to shallow metastable states.}
    \label{fig:shallow_bifurcation}
\end{figure}

\section{Properties of the coefficients of the effective dynamics}
\label{sec:tech_reg}
    We give sufficient conditions for the regularity assumptions of Proposition~\ref{prop:directional_derivative_xi} using the following identities, proven for example in~\cite[Lemma 3.10]{LRS10} for a~$\mathcal C^\infty$ function~$\varphi:\R^d\to\R$, but which are still valid (with the same proof) under weaker regularity assumptions on~$\varphi$.
        Define the partial integration operator with respect to~$\xi$:
        \begin{equation}
            P_\xi\varphi(z) := \int_{\Sigma_z} \varphi\,\det(G_\xi)^{-1/2}\,\d \mathcal{H}_{\Sigma_z},
        \end{equation}
        which is continuous, for instance from~$L^1(\R^d)$ to~$L^1(\R^m)$ by the coarea formula. Then it holds
        \begin{equation}
            \nabla \left(P_\xi \varphi\right) = P_\xi \left(\nabla_\xi \varphi\right),\qquad \nabla_\xi\varphi := \div\left(\varphi\,G_\xi^{-1}\nabla\xi^\top\right),
        \end{equation}
        where in the last line,~$\div$ denotes the row-wise divergence applied to the~$m\times d$ matrix field~$\varphi\,G_\xi^{-1}\nabla\xi^\top$.
        In particular, for~$1\leq \alpha,\gamma\leq m$, it holds formally that
        \begin{equation}
            \partial^2_{\alpha\gamma} P_\xi \varphi = P_\xi\left[\left[M_\xi\right]_{\gamma}^\top\nabla\left(\left[M_\xi\right]_{\alpha}^\top \nabla \varphi + \varphi\,\div\,\left[M_\xi\right]_\alpha\right) + \div\left[M_\xi\right]_{\gamma}\left(\left[M_\xi\right]_{\alpha}^\top \nabla \varphi + \varphi\,\div\,\left[M_\xi\right]_\alpha\right)\right],
        \end{equation}
        where~$\left[M_\xi\right]_\alpha$ denotes the~$\alpha$-th row of the matrix~$G_\xi^{-1}\nabla\xi^\top$.
        From this identity, it follows that the mapping~$P_\xi:\mathcal W^{2,\infty}(\R^d)\to \mathcal W^{2,\infty}(\R^m)$ is continuous when~$M_\xi \in \cW^{2,\infty}(\R^m;\R^{m\times d})$,.
        In turn, this property is satisfied if~$\xi\in \cW^{3,\infty}(\R^d;\R^m)$ and if the condition~$\underset{x\in\R^d}{\inf} G_\xi(x) > 0$ holds in the sense of symmetric matrices.

        If this condition on~$\xi$ is satisfied, it is then easy to show that the conditions of~Proposition~\ref{prop:directional_derivative_xi} hold for instance if
        \begin{equation}
            V\in \cW^{2,\infty}(\R^d),\quad a\in \cW^{2,\infty}(\R^d;\mathcal M_d),\quad \exists\,\varepsilon_a>0:\,u^\top a(x)u\geq\varepsilon_a |u|^2\text{ for almost every }x\in\R^d,
        \end{equation}
        which are uniform versions of Assumptions~\eqref{eq:a_ellipticity} and~\eqref{eq:coeff_regularity}, accounting for the fact that~$\Sigma_z$ may not be compact for all~$z\in \R^m$.

        In practice, it is however often the case that both the dynamics~\eqref{eq:overdamped_langevin} and the CV~$\xi$ take values in compact manifolds, typically the tori~$L(\mathbb R/\mathbb Z)^d$ and $L_\xi(\mathbb R/\mathbb Z)^m$, corresponding respectively to a periodic simulation domain and a set of angular CVs.
        In this case, the regularity of~$F_\xi$ and~$a_\xi$ follows immediately from the smoothness of~$\xi$ and the condition~$\rank\,G_\xi=m$ everywhere.
\section*{}
\paragraph{Data availability.}
Code and trajectory data used in the production of the numerical results of Section~\ref{sec:numerical} are publicly available in the repositories~\cite{github} and~\cite{data} respectively.
\paragraph{Acknowledgments.}
    We thank Olivier Adjoua, Samuel Amstutz, Benjamin Bogosel, Nicola\"i Gouraud, Louis Lagard\`ere, Laurent Michel, Pierre Monmarch\'e, Feliks N\"uske, Danny Perez, Thomas Pl\'e and Julien Reygner for helpful discussions and insightful comments.
    We are grateful to the OPAL infrastructure from Universit\'e C\^ote d'Azur for providing resources and support.
    Some of the experiments presented in this paper were carried out using the Grid'5000 testbed, supported by a scientific interest group hosted by Inria and including CNRS, RENATER and several universities as well as other organizations (see \url{https://www.grid5000.fr}).
    This work received funding from the European Research Council (ERC) under the
    European Union's Horizon 2020 research and innovation program (project
    EMC2, grant agreement No 810367), and from the Agence Nationale de la
    Recherche, under grants ANR-19-CE40-0010-01 (QuAMProcs) and
    ANR-21-CE40-0006 (SINEQ).

\bibliographystyle{siam}
\bibliography{bibliography.bib}

\end{document}